\titlespacing{\section}{0pt}{1.5ex}{0ex}
\titlespacing{\subsection}{0pt}{1.5ex}{0ex}
\titlespacing{\subsubsection}{0pt}{1ex}{0ex}
\titlespacing{\paragraph}{0pt}{1.5ex}{1ex}
\definecolor{bg}{gray}{0.95}
\definecolor{softgray}{HTML}{fff3d6}
\definecolor{softblue}{HTML}{c1cfa1}
\definecolor{softred}{HTML}{ffe3e3}
\newcommand{\pbar}{\overline{p}}
\newcommand{\ks}{k^{\star}}
\pgfplotsset{compat=1.18}
  \newcommand{\cAAAI}[1]{AAAI\ Conference\ on\ Artificial (AAAI)}
\newif\ifarxiv
\begin{document}
\author[1]{Feyza Duman Keles}
\author[1]{Lisa Hellerstein}
\author[2]{Kunal Marwaha}
\author[1]{Christopher Musco}
\author[3]{Xinchen Yang}
\affil[1]{New York University\thanks{\texttt{\{fd2135,lisa.hellerstein,cmusco\}@nyu.edu}}}
\affil[2]{University of Chicago\thanks{\texttt{kmarw@uchicago.edu}}}
\affil[3]{University of Maryland\thanks{\texttt{xcyang@cs.umd.edu}}}

\title{An Exact Algorithm for the Unanimous Vote Problem}
\date{\vspace{-2em}}
\maketitle
\pagenumbering{arabic}
\begin{abstract}
Consider $n$ independent, biased coins, each with a known probability of heads.  
Presented with an ordering of these coins, flip (i.e., toss) each coin once, in that order, until we have observed both a \emph{head} and a \emph{tail}, or flipped all coins.
The Unanimous Vote problem asks us to find the ordering that minimizes the expected number of flips. 
Gkenosis et al.~\cite{gkenosis2018stochastic} 
gave a polynomial-time $\phi$-approximation algorithm for this problem, where $\phi \approx 1.618$ is the golden ratio.  
They left open whether the problem was $\NP$-hard.  
We answer this question by giving an exact algorithm that runs in time $O(n \log n)$.  The Unanimous Vote problem is an instance of the more general Stochastic Boolean Function Evaluation problem: it thus becomes one of the only such problems known to be solvable in polynomial time.
Our proof uses simple interchange arguments to show that the optimal ordering must be close to the ordering produced by a natural greedy algorithm.
Beyond our main result, we compare the optimal ordering with the best adaptive strategy, proving a tight adaptivity gap  of $1.2\pm o(1)$ for the Unanimous Vote problem.

\end{abstract}



\thispagestyle{empty}
\clearpage
\newpage

\section{Introduction}
\label{sec:introduction}
\setcounter{page}{1}
Suppose you have $n$ independent biased coins with probability of heads $p_1, \ldots, p_n \in [0,1]$.   
You arrange the coins in some order, and then flip (that is, toss) each coin, in that order, until either you have observed 
a \emph{head} or you have flipped all coins.  In what order should you flip the coins to minimize your expected number of flips?  The answer is obvious: flip the coins in decreasing order of their $p_i$ values.

Next, consider a slight variation of this problem, where you flip coins until you have observed \emph{both} a head and a tail, or until you have flipped all coins.  Again, in what order should you flip the coins to minimize your expected number of flips?  Now the answer is not obvious at all.  This problem, introduced by Gkenosis et al.~\cite{gkenosis2018stochastic}, is the one we consider in this paper.
Viewing the coin flips as yes/no votes of $n$ stochastic voters, Gkenosis et al.\ framed the problem as 
seeking to determine whether the $n$ votes are unanimously yes, unanimously no, or not unanimous. We call it the \emph{Unanimous Vote problem}.

Gkenosis et al.\ gave a simple \emph{2-approximation} algorithm for the problem, which produces an ordering whose expected number of coin flips is at most \emph{twice} the optimal \cite{gkenosis2018stochastic}.
They also gave a more involved $\phi$-approximation algorithm for the problem, where $\phi \approx 1.618$ is the golden ratio. 
They left as an open question whether the Unanimous Vote problem is $\NP$-hard.
Indeed, this problem is a \emph{non-adaptive} Stochastic Boolean Function Evaluation (SBFE) problem (see e.g., ~\cite{IbarakiKameda:1984,KrishnamurthyBoralZaniolo:1986,HellersteinKletenikLiu:2022,grammel2022algorithms,GhugeGuptaNagarajan:2024,HarrisNagarajanTan:2025,Unluyurt:2025,nielsen2025nonadaptive}).  Very few of these problems are known to be solvable exactly in polynomial time.

Our main contribution is a simple, exact, $O(n \log n)$-time algorithm for the Unanimous Vote problem, which resolves the question of Gkenosis et al. 
We obtain our result by showing that the structure of an optimal ordering must be very similar to a natural \emph{greedy} solution to the problem. 
To produce the optimal ordering, our algorithm constructs this greedy solution, and rearranges the positions of at most two coins.

In addition to our main result, we prove that the greedy solution itself is near optimal, flipping at most one more coin in expectation than the optimal ordering. We also prove a result on \emph{adaptive strategies}, which can choose the next coin based on the outcomes of previous flips.
For the Unanimous Vote problem, there is a simple polynomial time optimal adaptive strategy \cite{gkenosis2018stochastic}. The algorithm presented in this paper, in contrast, is \emph{non-adaptive}.
In general, the non-adaptive ordering for an SBFE problem requires at least as many flips as the best adaptive strategy, and there has been interest in quantifying the benefit of adaptivity \cite{HellersteinKleteniketal:2022,grammel2022algorithms}. 
To that end,
\ifarxiv
\else
in the full version of this paper (available at \cite{fullversion}),
\fi
we prove that the  \emph{adaptivity gap}, i.e., the worst possible ratio of the expected number of flips of the optimal non-adaptive ordering to the expected number of flips of the optimal adaptive 
strategy, is $1.2 \pm o(1)$. 
\ifarxiv
\else
In this conference version of the paper, we provide a shorter and simpler proof that it lies between $1.2 - o(1)$ and $1.5$.
\fi

\begin{figure}[h!]
    \centering
    \vspace{.5em}
\begin{center}
\begin{tikzpicture}[scale=0.6, every node/.style={transform shape}]

\def\circleRadius{0.6}
\def\spacing{2.3}
\def\rowgap{2.5}

\foreach \i/\dir/\len in {
    0/up/2.0,
    1/down/2.0,
    2/up/0.8,
    3/down/1.8
} {

\node at (-\rowgap,0) {\Huge cost};
\draw[thick] (-1.4,1) -- (-1.4,-1); 
\draw[thick] (-1.4,1) -- (-1.1,1); 
\draw[thick] (-1.4,-1) -- (-1.1,-1); 

\draw[thick] (4*\spacing+1.4,1) -- (4*\spacing+1.4,-1); 
\draw[thick] (4*\spacing+1.4,1) -- (4*\spacing+1.1,1); 
\draw[thick] (4*\spacing+1.4,-1) -- (4*\spacing+1.1,-1); 

\node at (4*\spacing+2.7,0) {\Huge $=\,$2.87};

\node at (-\rowgap,0-\rowgap) {\Huge cost};
\draw[thick] (-1.4,1-\rowgap) -- (-1.4,-1-\rowgap); 
\draw[thick] (-1.4,1-\rowgap) -- (-1.1,1-\rowgap); 
\draw[thick] (-1.4,-1-\rowgap) -- (-1.1,-1-\rowgap); 

\draw[thick] (4*\spacing+1.4,1-\rowgap) -- (4*\spacing+1.4,-1-\rowgap); 
\draw[thick] (4*\spacing+1.4,1-\rowgap) -- (4*\spacing+1.1,1-\rowgap); 
\draw[thick] (4*\spacing+1.4,-1-\rowgap) -- (4*\spacing+1.1,-1-\rowgap); 

\node at (4*\spacing+2.7,-\rowgap) {\Huge $=\,$2.75};

\draw[fill=Goldenrod!90!orange, thick] (0*\spacing,0) circle (1cm);
\draw[fill=Goldenrod!60!white] (0*\spacing,0) circle (0.8cm);
\node at (0*\spacing,0) {\Huge 0.9};
\fill[white, opacity=0.3] (-0.4,0.4) circle (0.2cm);

\draw[fill=Maroon!50!white, thick] (1*\spacing,0) circle (1cm);
\draw[fill=Maroon!20!white] (1*\spacing,0) circle (0.8cm);
\node at (1*\spacing,0) {\Huge 0.4};
\fill[white, opacity=0.2] (1*\spacing-0.4,0.4) circle (0.2cm);

\draw[fill=gray!60!white, thick] (2*\spacing,0) circle (1cm);
\draw[fill=gray!30!white] (2*\spacing,0) circle (0.8cm);
\node at (2*\spacing,0) {\Huge 0.8};
\fill[white, opacity=0.2] (2*\spacing-0.4,0.4) circle (0.2cm);

\draw[fill=Fuchsia!50!white, thick] (3*\spacing,0) circle (1cm);
\draw[fill=Fuchsia!20!white] (3*\spacing,0) circle (0.8cm);
\node at (3*\spacing,0) {\Huge 0.5};
\fill[white, opacity=0.2] (3*\spacing-0.4,0.4) circle (0.2cm);

\draw[fill=olive!90!green, thick] (4*\spacing,0) circle (1cm);
\draw[fill=olive!50!white] (4*\spacing,0) circle (0.8cm);
\node at (4*\spacing,0) {\Huge 0.6};
\fill[white, opacity=0.15] (4*\spacing-0.4,0.4) circle (0.2cm);


\draw[fill=Goldenrod!90!orange, thick] (0*\spacing, -\rowgap) circle (1cm);
\draw[fill=Goldenrod!60!white] (0*\spacing, -\rowgap) circle (0.8cm);
\node at (0*\spacing, -\rowgap) {\Huge 0.9};
\fill[white, opacity=0.3] (0*\spacing-0.4,0.4-\rowgap) circle (0.2cm);

\draw[fill=Maroon!50!white, thick] (1*\spacing, -\rowgap) circle (1cm);
\draw[fill=Maroon!20!white] (1*\spacing, -\rowgap) circle (0.8cm);
\node at (1*\spacing, -\rowgap) {\Huge 0.4};
\fill[white, opacity=0.2] (1*\spacing-0.4,0.4-\rowgap) circle (0.2cm);

\draw[fill=Fuchsia!50!white, thick] (2*\spacing,-\rowgap) circle (1cm);
\draw[fill=Fuchsia!20!white] (2*\spacing,-\rowgap) circle (0.8cm);
\node at (2*\spacing,-\rowgap) {\Huge 0.5};
\fill[white, opacity=0.2] (2*\spacing-0.4,0.4-\rowgap) circle (0.2cm);

\draw[fill=olive!90!green, thick] (3*\spacing, -\rowgap) circle (1cm);
\draw[fill=olive!50!white] (3*\spacing, -\rowgap) circle (0.8cm);
\node at (3*\spacing, -\rowgap) {\Huge 0.6};
\fill[white, opacity=0.15] (3*\spacing-0.4,0.4-\rowgap) circle (0.2cm);

\draw[fill=gray!60!white, thick] (4*\spacing, -\rowgap) circle (1cm);
\draw[fill=gray!30!white] (4*\spacing, -\rowgap) circle (0.8cm);
\node at (4*\spacing, -\rowgap) {\Huge 0.8};
\fill[white, opacity=0.2] (4*\spacing-0.4,0.4-\rowgap) circle (0.2cm);

}
\end{tikzpicture}
\vspace{-.5em}
\end{center}
    \caption{\small Two example orderings for an instance of the Unanimous Vote problem. The second ordering has a smaller number of expected flips (cost). One intuitive strategy is to alternate between the highest and lowest bias coins, as in the first sequence. As we can see, this is not always optimal. The second sequence is the one returned by a natural greedy algorithm described below. For this instance, it happens to be optimal.}
    \label{fig:problem_example}
\end{figure}

\subsection{Overview of our algorithm}

The Unanimous Vote problem asks for a fixed ordering of the coins that minimizes the expected number of flips required until both heads and tails are seen, or all coins have been flipped. 
\Cref{fig:problem_example} compares two orderings on a small example instance. In \Cref{sec:structural,sec:optimal_algo} we prove our main result:
\begin{theorem}\label{thm:mainresult_intro}
    There is an algorithm (\Cref{alg:fastermodifiedgreedy}) that solves the Unanimous Vote problem (i.e., computes the minimum cost ordering) in time $O(n \log n)$.
\end{theorem}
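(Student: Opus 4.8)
The plan is to (i) put the objective in closed form, (ii) use adjacent‑transposition interchange arguments to constrain the shape of an optimal ordering, (iii) observe that a natural greedy rule already meets those constraints, so that it is ``locally optimal'' and within one flip of the true optimum, and (iv) prove the structural heart of the theorem — that the true optimum is obtained from the greedy ordering by moving at most two coins — so that, after a single sort, only a short list of candidate orderings needs to be evaluated.

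First I would fix an ordering $q_1,\dots,q_n$ of the biases and observe that the process halts exactly when the prefix of outcomes is first non‑constant, so $\Pr[\mathrm{cost}\ge t]=\prod_{i<t}q_i+\prod_{i<t}(1-q_i)$ for $t\ge 2$, giving
\[
\mathrm{cost}(q)\;=\;1+\sum_{s=1}^{n-1}\Bigl(\textstyle\prod_{i=1}^{s}q_i+\prod_{i=1}^{s}(1-q_i)\Bigr)
\]
(and extending the sum to $s=n$ only adds the order‑independent constant $\prod_i p_i+\prod_i(1-p_i)$). A one‑line computation then shows that exchanging positions $j$ and $j+1$ changes $\mathrm{cost}(q)$ by exactly $(A_j-B_j)(q_j-q_{j+1})$, where $A_j=\prod_{i<j}q_i$ and $B_j=\prod_{i<j}(1-q_i)$ — all higher‑order prefix products cancel. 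Hence in any optimal ordering $q_j\le q_{j+1}$ wherever $A_j>B_j$ and $q_j\ge q_{j+1}$ wherever $A_j<B_j$; writing $w_i=\ln\frac{q_i}{1-q_i}$ and $W_j=\sum_{i\le j}w_i$ (so that $\mathrm{sign}(A_j-B_j)=\mathrm{sign}(W_{j-1})$), this says the ordering must be increasing at every position where the running log‑odds is currently positive and decreasing where it is negative.

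Next I would spell out the greedy rule behind \Cref{alg:fastermodifiedgreedy}: maintain $W$ (initially $0$) and repeatedly append the largest remaining bias while $W\le 0$ and the smallest remaining bias while $W>0$, updating $W$ each time. Using only the fact that deleting the minimum (resp.\ maximum) of a finite set leaves a set all of whose elements are at least (resp.\ at most) the deleted value, a short case analysis shows the greedy ordering satisfies every sign condition above, so it is locally optimal; a direct comparison with the optimum additionally gives that it uses at most one extra flip in expectation. Producing it costs $O(n\log n)$: sort once, then sweep with two pointers.

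The crux — and the step I expect to be the main obstacle — is to push ``locally optimal, and the greedy ordering is locally optimal'' up to ``the global optimum is the greedy ordering with at most two coins relocated''. Because the interchange inequalities only control adjacent pairs, this needs a genuinely global argument: the sign conditions leave a locally optimal ordering very little effective freedom (once the running log‑odds has a stable sign the remaining suffix is forced to be sorted, so the real freedom lives in a short prefix — morally, which coin is placed first, where $W_0=0$ gives the greedy rule no guidance — plus a bounded correction), and I would compare the optimum with the greedy ordering inside this small space, bounding the ``rotation'' needed to pass between them. Granting this structural lemma, the algorithm and its running time are immediate: sort ($O(n\log n)$); build the greedy ordering ($O(n)$); enumerate the $O(n)$ candidate orderings obtained by the permitted relocations, evaluating $\mathrm{cost}$ for each — directly in $O(n)$, or incrementally in $O(1)$ per candidate from precomputed prefix products — and return the cheapest. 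The total is $O(n\log n)$.
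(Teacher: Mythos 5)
Your overall scaffolding matches the paper's: express the cost as a sum of prefix products, show by adjacent interchange that any optimal ordering is "sorted" at each position according to the sign of the running log-odds, observe the greedy ordering satisfies these conditions, then argue the optimum differs from greedy at a bounded number of positions and brute-force those. The cost formula, the adjacent-swap calculation $(A_j-B_j)(q_j-q_{j+1})$, and the greedy rule are all exactly as in the paper.

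The gap is the step you yourself flag as "the main obstacle," and it is a real gap, not a routine verification. Adjacent interchanges give you nothing beyond within-block monotonicity; indeed, the paper's open-questions section explicitly remarks that adjacent interchanges alone admit locally-optimal-but-not-globally-optimal orderings, so no amount of massaging the sign conditions closes the argument. The paper's actual structural proof requires \emph{non-adjacent} swaps (the analogue of your $(A_j-B_j)(q_j-q_{j+1})$ for exchanging positions $s<t$ with intermediate coins in between — their Claim~\ref{claim:multiblock_monotonic}), then a chain of claims establishing: coins in non-final $0$-biased blocks have bias $>\frac12$ and $1$-biased $<\frac12$; unbiased non-final blocks can be collapsed to a single coin; monotonicity holds \emph{across} all non-final blocks; and finally (Claim~\ref{claim:optimalordering_onespotfromgreedy}) that the sole possible deviation from greedy sits at the last position of the second-to-last block. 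That last claim is proved by taking the last two positions in $S_+$ that violate cross-block monotonicity and exhibiting a non-adjacent swap that strictly decreases cost, via a delicate two-part estimate involving a geometric-series telescoping. Your intuition that "the real freedom lives in a short prefix — morally, which coin is placed first, where $W_0=0$ gives the greedy rule no guidance" does not match what the paper proves: the single free position is at the end of the second-to-last block, which can be arbitrarily deep in the ordering, and pinning it down is the whole point. Without a concrete proof of this structural lemma (and the unbiased-block bookkeeping that accompanies it), the claim "optimum = greedy with at most two coins relocated" is unsupported, and the $O(n)$-candidate enumeration has no guarantee of correctness. The near-optimality of greedy ("at most one extra flip") is likewise asserted but not proved, though it is not needed for Theorem~\ref{thm:mainresult_intro}.
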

To understand our approach to \Cref{thm:mainresult_intro}, it is helpful to define \emph{biased blocks} of an ordering, a concept that we introduce, and which is central to our analysis. To define biased blocks, 
fix an ordering, and consider some position $k$.
Suppose we flip coins according to that ordering, terminating as soon as we observe a head and a tail, or have flipped all coins.  We will only flip the $k^\text{th}$ coin if the first $k-1$ coins all come up heads, or all come up tails.  
Using the convention that $\text{heads}=1$, and $\text{tails}=0$, we call the
$k^\text{th}$ position \emph{1-biased} if the probability that the first $k-1$ coins all come up heads is greater than the probability that they all come up tails, \emph{0-biased} if the opposite relationship holds, and \emph{unbiased} if the probabilities are equal.

The $n$ positions in the ordering can be partitioned into ``blocks'', according to their biases.  Each block is a maximal set of contiguous positions of the same bias.  See~\Cref{fig:biasedblocksunsorted} for an example.

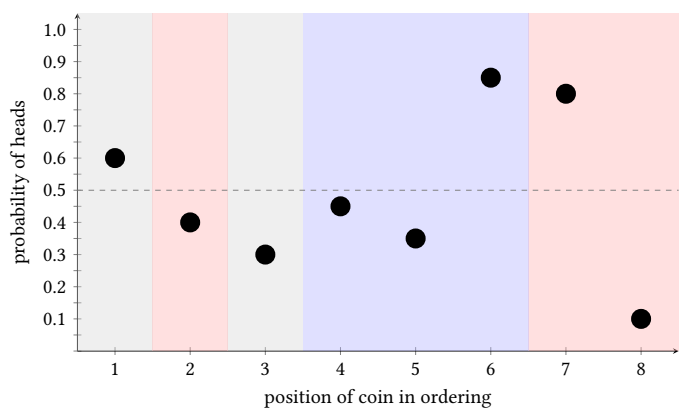
\begin{figure}[h!]
    \centering    
\begin{tikzpicture}[scale=.7]
\begin{axis}[
    scaled y ticks=false,
    width=13cm,
    height=8cm,
    ymin=0, ymax=1.05,
    ytick={0, 0.05, 0.10, 0.15, 0.20, 0.25, 0.30, 0.35, 0.40, 0.45, 0.50, 0.55, 0.60, 0.65, 0.70, 0.75, 0.80, 0.85, 0.90, 0.95, 1.0},
    yticklabels={ , ,0.1 , ,0.2 , ,0.3 , , 0.4, , 0.5, ,0.6 , ,0.7 , , 0.8, ,0.9 , ,1.0 },
    xtick={1,2,3,4,5,6,7,8},
    xticklabels={1, 2, 3, 4, 5, 6, 7, 8},
    enlargelimits=0.1,
    axis lines=left,
    ylabel={probability of heads},
    xlabel={position of coin in ordering},
    scatter/classes={
        a={mark=o,draw=Gray,fill=Gray,mark size=2.7pt,line width=5.4pt},
        c={mark=o,draw=MidnightBlue,fill=MidnightBlue,mark size=2.7pt,line width=5.4pt},
        b={mark=o,draw=BrickRed,fill=BrickRed,mark size=2.7pt,line width=5.4pt},
        d={mark=none},
        e={mark=o,draw=black,fill=black,mark size=2.7pt,line width=5.4pt}
    }
]
\addplot[
    only marks,
    scatter,
    scatter src=explicit symbolic
]
coordinates {
    (0.5,0) [d]
    (1,0.6) [e]
    (2,0.4) [e]
    (3,0.3) [e]
    (4,0.45) [e]
    (5,0.35) [e]
    (6,0.85) [e]
    (7,0.8) [e]
    (8,0.1) [e]
    (8.5,0) [d]
};
\addplot [
    draw=none,
    fill=gray!30,
    opacity=0.4
] coordinates {
    (0.5,0) (0.5,1.05) (1.5,1.05) (1.5,0)
};
\addplot [
    draw=none,
    fill=gray!30,    
    opacity=0.4
] coordinates {
    (2.5,0) (2.5,1.05) (3.5,1.05) (3.5,0)
};
\addplot [
    draw=none,
    fill=red!30,
    opacity=0.4
] coordinates {
    (1.5,0) (1.5,1.05) (2.5,1.05) (2.5,0)
};
\addplot [
    draw=none,
    fill=red!30,
    opacity=0.4
] coordinates {
    (6.5,0) (6.5,1.05) (8.5,1.05) (8.5,0)
};
\addplot [
    draw=none,
    fill=blue!30,
    opacity=0.4
] coordinates {
    (3.5,0) (3.5,1.05) (6.5,1.05) (6.5,0)
};
\addplot [
    domain=0.5:8.5,
    samples=2,
    thin,
    dashed,
    gray
] {0.5};
\end{axis}
\end{tikzpicture}
\vspace{-.5em}
    \caption{\small Plot of the probability of heads of a coin versus its position in an ordering. 
    Unbiased blocks are in gray, $0$-biased blocks are in blue, and $1$-biased blocks are in red. For example, the fourth coin in the ordering has probability of heads equal to 0.45 and is in a 0-biased block.
    }
    \label{fig:biasedblocksunsorted}
\end{figure}

At a high level, our algorithm works by executing the following three steps:
\begin{enumerate} 
\item Generate an initial ordering of the coins using a greedy rule.
\item Generate $O(n)$ alternative orderings by making small modifications to the initial greedy ordering.
\item Calculate the expected number of flips for each generated ordering and return the best one. 
\end{enumerate}

Our initial ordering (formalized in \Cref{alg:greedy}) is constructed greedily from position $1$ through position $n$. The coin in the $k^\text{th}$ position is chosen from the remaining coins according to the following rule:
\begin{quote}
    \textit{Maximize the probability of terminating on the $k^\text{th}$ flip, assuming that termination has not yet occurred.}
\end{quote}
We show in \Cref{sec:greedy_prelim} that if position $k$ is 1-biased, the greedy rule chooses the remaining coin with smallest $p_i$ value.
Symmetrically, if position $k$ is 0-biased, then the greedy rule chooses the coin with largest $p_i$ value. If the position is unbiased (for example, when $k=1$), \emph{all} coins satisfy the greedy rule.
As a convention, our algorithm chooses the remaining coin with the largest $p_i$ value when position $k$ is unbiased. 

In general, the greedy ordering is not optimal; see \Cref{sub:greedyvsoptimal} for an example.
Nonetheless, we prove that, surprisingly, the optimal ordering can be produced by following the greedy rule \emph{except at one special position}: the last position in the second-to-last block.
This is the \emph{only} position where it may be better to choose a different coin.
With this key fact in place, it is easy to obtain \Cref{thm:mainresult_intro}:
after sorting $p_1, \ldots, p_n$ in $O(n \log n)$ time, our algorithm generates the greedy ordering and all alternatives that result from making a non-greedy choice at that position. There are at most $n-1$ such alternatives.
We  can evaluate the expected cost of all of these orderings in linear time and then choose the one with minimum cost. 

The key challenge in our work is proving that the optimal ordering is only ``non-greedy'' at one particular location. 
To do so, we prove in \Cref{sec:structural,sec:optimal_algo}  that any optimal ordering must obey a set of ``monotonicity'' properties that hold for the greedy ordering. For example, the first such property we prove is that coins within a single $1$-biased (resp. $0$-biased) block have increasing (resp. decreasing) probability of heads. 
Each property is proven using elementary \emph{interchange} arguments, which consider the effect on the expected number of flips if you interchange (swap) a certain pair of coins in an ordering. 

\subsection{Additional results}
Beyond our main result, we prove two additional results on the Unanimous Vote problem. First, we show in \Cref{sub:greedyvsoptimal} that, 
even though the greedy ordering is not optimal, it is always \emph{close} to optimal:
\begin{claim}[Greedy is near-optimal]
    \label{claim:greedy_vs_optimal_intro}
    For any instance of the Unanimous Vote problem, the greedy ordering (\Cref{alg:greedy}) flips at most $1$ more coin in expectation than the optimal ordering. There is a family of instances where this claim is asymptotically tight.
\end{claim}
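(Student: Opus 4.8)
The starting point is the cost identity: for an ordering whose in-order head-probabilities are $q_1,\dots,q_n$, write $H_k=\prod_{i\le k}q_i$ and $T_k=\prod_{i\le k}(1-q_i)$ for the probabilities that the first $k$ flips are all heads, resp.\ all tails. Then the cost (expected number of flips) equals $1+\sum_{k=1}^{n-1}(H_k+T_k)$, since the $(k{+}1)$-st coin is flipped exactly when the first $k$ outcomes are unanimous. So comparing two orderings reduces to comparing these prefix-product sums position by position. To bound $\mathrm{cost}(\mathrm{greedy})-\mathrm{cost}(\mathrm{opt})$ I would fix an optimal ordering $\sigma^\star$ and invoke the structural characterization of \Cref{sec:structural,sec:optimal_algo}: $\sigma^\star$ is the greedy ordering of \Cref{alg:greedy} with one non-greedy choice made at a single position $k^\star$ (the last position of the second-to-last block), so greedy and $\sigma^\star$ agree on positions $1,\dots,k^\star-1$ and differ only by interchanging two coins among positions $k^\star,\dots,n$, one of them being the coin greedy placed at $k^\star$.

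Writing $q_{k^\star}$ for that coin's head-probability, $q'$ for the head-probability of the coin it is swapped with, and $p>k^\star$ for the latter's position, the terms $H_k+T_k$ agree with their $\sigma^\star$-counterparts except for $k^\star\le k\le p-1$, and on that range $H_k,T_k$ just have their $q_{k^\star}$-factor replaced by a $q'$-factor. This gives the closed form
$$\mathrm{cost}(\mathrm{greedy})-\mathrm{cost}(\sigma^\star)=\sum_{k=k^\star}^{p-1}\bigl(q_{k^\star}-q'\bigr)\Bigl(\tfrac{H_k}{q_{k^\star}}-\tfrac{T_k}{1-q_{k^\star}}\Bigr),$$
with all $H_k,T_k$ those of the greedy ordering. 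The heart of the proof — and the step I expect to be the main obstacle — is showing this sum is at most $1$. Here I would lean on the monotonicity facts proved for the greedy ordering. Because the second-to-last block has the opposite bias from the last block, $q_{k^\star}$ is the largest (resp.\ smallest) coin not yet used at step $k^\star$, which fixes the sign $q_{k^\star}-q'\ge 0$ (resp.\ $\le 0$) and lets one of the two products in each summand be dropped with the correct sign, leaving a bound of the form $\sum_k H_k$ (or $\sum_k T_k$) over the last block. The last block being sorted means the dominant one of $H_k,T_k$ decays geometrically across it; and the fact that the bias flips exactly at $k^\star$ forces $H_{k^\star}$ (resp.\ $T_{k^\star}$) below $\tfrac12$. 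Combining the geometric decay with this boundary estimate should give a total of at most $1$. Making this estimate airtight — in particular controlling how the flipped bias at $k^\star$ constrains the head-probabilities appearing in the last block, so that the geometric tail really is summable to $1$ — is the delicate part.

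For asymptotic tightness I would exhibit a family realizing equality in the limit: an instance with one coin just above $\tfrac12$, one coin near $0$, one coin of intermediate value, and a long run of coins on the appropriate side of $\tfrac12$. Greedy, minimizing the probability that the next coin is flipped, places the near-$0$ coin immediately after the $>\tfrac12$ coin; this drives the all-tails probability up and then keeps it elevated as it decays by factors close to $\tfrac12$ through the run. The optimal ordering instead saves the intermediate coin for that slot, paying a negligible extra amount there while substantially lowering the all-tails probability propagating through the run. Summing the resulting geometric series, the cost gap tends to $1$ as the run length grows, matching the upper bound. I expect the cost computation for this family to be routine, with the term-by-term bound of the preceding paragraph remaining the only genuinely delicate ingredient of the proof.
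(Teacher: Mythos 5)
Your upper-bound argument starts from a premise that is not true: after positions $1,\dots,k^\star-1$ (where greedy and the optimal ordering $\sigma^\star$ agree), the two orderings do \emph{not} differ by a single transposition of two coins. Both orderings apply the greedy rule from position $k^\star+1$ onward, but they do so with \emph{different} remaining coin sets, since $\sigma^\star$ took one coin at position $k^\star$ and greedy took another. Consequently, positions $k^\star+1,\dots,n$ of the two orderings are two different sorted lists (each missing one element of the other), and they can differ at many positions — a cyclic rotation of a block, not a swap. This is also what \Cref{alg:fastermodifiedgreedy} does: it moves the coin at position $x'$ to position $x+1$ and moves the coin at position $x$ to the end, which is a rotation. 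The transposition cost formula you invoke (essentially \Cref{claim:multiblock_monotonic}) therefore does not describe $\mathrm{cost}(\mathrm{greedy})-\mathrm{cost}(\sigma^\star)$, and the subsequent geometric-decay bound — which you already flag as the ``delicate part'' and do not complete — rests on a comparison that never holds. The paper avoids all of this with a cleaner structural trick: insert a \emph{duplicate} of the appropriate coin at position $k^\star$ in $\sigma^\star$ (adding at most $1$ to the cost, since every sample path gains at most one flip), then sort the tail (which only decreases cost, by \Cref{claim:swap_order_by_one}); the resulting length-$(n+1)$ ordering has greedy as its length-$n$ prefix, so $\mathrm{cost}(\mathrm{greedy})\le \mathrm{cost}(\sigma^\star)+1$ with no estimates at all.

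On tightness, the paper's family is simpler and more extreme than yours: a single coin with $p=1$ and $n$ coins with $p=1-\epsilon$, $\epsilon=\ln n/n$. Greedy places the $p=1$ coin first (it has the largest probability and the first position is unbiased), which forces the ordering to be $1$-biased forever and pays $\sum_j(1-\epsilon)^j\approx 1$, while the optimal ordering delays the $p=1$ coin and gains nearly all of that back. Your four-part family (one coin near $0$, one just above $\tfrac12$, one intermediate, a long run) is considerably more complicated, not clearly aligned with greedy's tie-breaking convention at unbiased positions (it always picks the \emph{largest} remaining $p$, not the smallest), and you do not carry out the computation. You should check whether your family actually realizes the gap; the paper's two-value instance is easy to verify and is what the claim's statement is calibrated against.
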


Finally, in \Cref{sec:adaptivity_gap} we prove a new bound on the adaptivity gap of the Unanimous Vote problem:
\ifarxiv
\begin{theorem}[Bound on adaptivity gap]
\label{thm:adaptivitygap_intro}
    The adaptivity gap of the Unanimous Vote problem is at least $1.2 - o(1)$ and at most $1.2 + o(1)$.
\end{theorem}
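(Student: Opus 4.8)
The plan is to give a one-parameter family $I_m$ whose adaptivity gap tends to $6/5$: let $I_m$ have $m$ coins of bias $\tfrac12$ and $m$ coins of bias $0$, so $n=2m$. For the adaptive side, the optimal strategy flips a bias-$\tfrac12$ coin first; once its outcome is revealed the goal is fixed, so on heads it flips a bias-$0$ coin (terminating at once) and on tails it flips bias-$\tfrac12$ coins until a head appears. Summing the resulting series and checking no other first coin is better gives an adaptive cost that tends to $\tfrac12\cdot 2+\tfrac12\cdot(1+2)=\tfrac52$ as $m\to\infty$. For the non-adaptive side I invoke the structural results of \Cref{sec:structural,sec:optimal_algo}: the greedy ordering (\Cref{alg:greedy}) on $I_m$ puts the $m$ bias-$\tfrac12$ coins first and then the $m$ bias-$0$ coins; positions $1,\dots,m+1$ form one unbiased block and positions $m+2,\dots,2m$ one $0$-biased block, so the only position where an optimal ordering may deviate from greedy --- the last position of the second-to-last block --- has only interchangeable bias-$0$ coins available, hence greedy is optimal. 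Its cost is $1+\sum_{k\ge 2}\Pr[\text{the first }k-1\text{ coins all agree}]$, which a direct computation shows equals $3+(m-3)2^{-m}\to 3$. Thus the gap of $I_m$ tends to $3/(5/2)=6/5$, proving the lower bound $\tfrac65-o(1)$. (If one uses bias $p$ instead of $\tfrac12$, the gap of the family becomes $\tfrac{1-p+p^2}{(1-p)(1+p^2)}$ when $p\le\tfrac12$ and $\tfrac{1+p}{1+p^2}$ when $p\ge\tfrac12$; both are maximized at $p=\tfrac12$, so this choice is the right one.)

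\textbf{Upper bound --- reduction.} Fix an instance; let $c$ be the coin the optimal adaptive strategy flips first, $Q$ the multiset of the other $n-1$ biases, and $w=p_c$. For an ordering $\pi$ of $Q$ write $T(\pi)=\sum_{k\ge1}\prod_{i<k}p_{\pi_i}$ and $H(\pi)=\sum_{k\ge1}\prod_{i<k}(1-p_{\pi_i})$. Since the adaptive strategy is forced once $c$'s outcome is known, $\mathrm{OPT}_{\mathrm{adapt}}=1+w\alpha+(1-w)\beta$ with $\alpha=\min_\pi T(\pi)$ (increasing order) and $\beta=\min_\pi H(\pi)$ (decreasing order). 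On the non-adaptive side the ordering ``$c$, then $\pi$'' has cost exactly $1+w\,T(\pi)+(1-w)H(\pi)$, so $\mathrm{OPT}_{\mathrm{na}}\le 1+V$ with $V=\min_\pi\bigl(w\,T(\pi)+(1-w)H(\pi)\bigr)$, and the theorem reduces to the combinatorial inequality
\[ 1+\min_{\pi}\bigl(w\,T(\pi)+(1-w)H(\pi)\bigr)\;\le\;\tfrac65\bigl(1+w\alpha+(1-w)\beta\bigr)+o(1). \]
A much simpler argument already gives the weaker bound $\tfrac32$: taking $\pi$ to be the ``zig-zag'' $(r_1,r_{n-1},r_2,r_{n-2},\dots)$ with $r_1\le\cdots\le r_{n-1}$ the sorted biases of $Q$, every dropped factor is at most $1$, so $T(\pi)\le 2\alpha$ and $H(\pi)\le 2\beta$; hence $\mathrm{OPT}_{\mathrm{na}}\le 2\,\mathrm{OPT}_{\mathrm{adapt}}-1$, and since $\mathrm{OPT}_{\mathrm{adapt}}\ge 2$ always, the gap is at most $2-1/\mathrm{OPT}_{\mathrm{adapt}}\le \tfrac32$.

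\textbf{Upper bound --- the crux.} To improve $\tfrac32$ to $\tfrac65+o(1)$ one must analyze the optimal compromise ordering $\pi$. A short interchange argument, exactly of the flavor used in \Cref{sec:structural}, shows that swapping $\pi_j$ with $\pi_{j+1}$ changes $w\,T(\pi)+(1-w)H(\pi)$ by $(p_{\pi_{j+1}}-p_{\pi_j})\bigl(w\prod_{i<j}p_{\pi_i}-(1-w)\prod_{i<j}(1-p_{\pi_i})\bigr)$; the bracketed factor is the ``bias'' of the prefix $c,\pi_1,\dots,\pi_{j-1}$, so the minimizing $\pi$ is again greedy --- after the pinned coin $c$, take the smallest remaining bias while the prefix is $1$-biased and the largest while it is $0$-biased. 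This expresses $V$ as a sum over monotone blocks, and the remaining problem is to maximize $(1+V)/\mathrm{OPT}_{\mathrm{adapt}}$ over all multisets $Q$ and all $w\in[0,1]$. I expect this last optimization to be the main obstacle: one needs to show the extremum is attained in the limit by the lower-bound instance $I_m$ --- morally, a block of ``fair'' coins that the compromise ordering must cross in its bad branch while the adaptive strategy slips out into the bias-$0$ coins half the time --- and then verify the value there is exactly $6/5$. I would attack it by successive reductions to extremal configurations (pushing biases toward $0$, $\tfrac12$, or $1$ while tracking their effect on both $V$ and $\mathrm{OPT}_{\mathrm{adapt}}$), rather than hoping for one clean inequality.
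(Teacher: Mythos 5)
Your lower bound is correct. The family $I_m$ (half bias-$\tfrac12$, half bias-$0$) is a valid alternative to the paper's instance (one bias-$0$ coin plus $n-1$ bias-$\tfrac12$ coins), and your arithmetic checks out, including the appeal to the structural results to pin down the optimal non-adaptive cost. The paper's instance is marginally cleaner: there every non-adaptive ordering has the same cost, so no structural theorem is needed.

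The upper bound, however, does not go through, for two reasons. First, your warm-up $\tfrac32$ argument has an inequality running the wrong way. The zig-zag bound $T(\pi)\le 2\alpha$, $H(\pi)\le 2\beta$ gives $\mathrm{OPT}_{\mathrm{na}}\le 2\,\mathrm{OPT}_{\mathrm{adapt}}-1$, hence a ratio of at most $2-1/\mathrm{OPT}_{\mathrm{adapt}}$. That quantity is \emph{increasing} in $\mathrm{OPT}_{\mathrm{adapt}}$, so $\mathrm{OPT}_{\mathrm{adapt}}\ge 2$ gives $2-1/\mathrm{OPT}_{\mathrm{adapt}}\ge\tfrac32$, not $\le\tfrac32$; what you actually get is only a bound of $2-o(1)$. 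The paper's \Cref{claim:adaptivitygap_1.5} obtains $\tfrac32$ by a different mechanism: the non-adaptive cost is split into a constant part ($\le 3$) plus a geometric tail, the tail is matched term by term against a tail of the adaptive cost, and then $\frac{a+b}{c+d}\le\max(\frac ac,\frac bd)$ is applied with $a/c=3/2$ and $b/d\le 1$. This crossover-point decomposition is essential; a global zig-zag bound combined with $\mathrm{OPT}_{\mathrm{adapt}}\ge 2$ cannot deliver $\tfrac32$.

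Second, and more fundamentally, the $1.2+o(1)$ bound is simply not proved. The reduction to $1+\min_\pi\bigl(wT(\pi)+(1-w)H(\pi)\bigr)\le\tfrac65\bigl(1+w\alpha+(1-w)\beta\bigr)+o(1)$ and the interchange characterization of the minimizing $\pi$ are both correct, but you then explicitly defer the extremal optimization over $(Q,w)$, and that optimization is the entire content of the theorem. The paper's \Cref{apx:adaptivity_gap_upperbound} carries it out with a heavy case analysis: rather than pinning the first coin to $\ks$, it pins the first $\ell$ coins (for $\ell$ up to $5$) to the greedy choices $p_1,p_n,\dots$, introduces the crossover point $t$ of the resulting ordering, partitions the parameter space over $(p_1,p_n,p_{\ks},t)$ into boxes, applies a rescaling trick to collapse the tails, and on each box reduces the ratio to an explicit low-degree polynomial inequality that is verified by hand. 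It is plausible that your ``pin $\ks$, then greedy'' ordering also attains $1.2+o(1)$, but that is unestablished, and even granting it, the reduction from arbitrary $(Q,w)$ to the extremal configuration is where all the difficulty lies. As written, the proposal correctly identifies the obstacle without surmounting it.
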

The family of instances that give the lower bound is simple to describe and analyze: one coin has $p_i = 0$, and the remaining coins all have $p_i = \frac{1}{2}$. 
The proof of the upper bound is much more involved, analyzing a sequence of specific non-adaptive orderings and showing that at least one ordering always matches the optimal adaptive strategy up to a $1.2 + o(1)$ factor. To illustrate the technique, we first prove a $1.5$ upper bound, and use \Cref{apx:adaptivity_gap_upperbound} to show the $1.2 + o(1)$ upper bound.
\else
\begin{theorem}[Bound on adaptivity gap]
\label{thm:adaptivitygap_intro}
    The adaptivity gap of the Unanimous Vote problem is at least $1.2 - o(1)$ and at most $1.5$.
\end{theorem}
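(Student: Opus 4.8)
For the lower bound I would use the stated instance: one coin of probability $0$ and $n-1$ coins of probability $\tfrac12$. On the adaptive side, a short interchange/case argument shows an optimal strategy first flips a fair coin: if it lands heads (probability $\tfrac12$) the strategy flips the deterministic coin and stops after $2$ flips, and if it lands tails (probability $\tfrac12$) it flips fair coins until a head appears, which takes $2-o(1)$ further flips in expectation. Thus $\mathrm{OPT}_A=\tfrac52-o(1)$. On the non-adaptive side I would compute the cost of an \emph{arbitrary} ordering directly from $\mathrm{cost}(\sigma)=1+\sum_{j=1}^{n-1}\big(\Pr[\text{first }j\text{ flips all heads}]+\Pr[\text{first }j\text{ flips all tails}]\big)$: writing the deterministic coin in position $m$, the contributions of the fair prefix, of the forced tail at position $m$, and of the fair suffix sum to a quantity that is -- perhaps surprisingly -- \emph{independent of} $m$ and equal to $3-o(1)$. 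Hence $\mathrm{OPT}_{NA}=3-o(1)$ on this family, giving an adaptivity gap of $\tfrac{3-o(1)}{5/2-o(1)}=1.2-o(1)$.

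For the upper bound, first dispose of $n\le 1$ (gap $=1$) and note that for $n\ge 2$ every realization of every strategy uses at least two flips -- after one flip only one symbol has been observed -- so $\mathrm{OPT}_A\ge 2$. The adaptivity gap is $\mathrm{OPT}_{NA}/\mathrm{OPT}_A$, so it suffices to prove the additive bound $\mathrm{OPT}_{NA}\le\mathrm{OPT}_A+1$: combined with $\mathrm{OPT}_A\ge 2$ this yields $\mathrm{OPT}_{NA}/\mathrm{OPT}_A\le 1+1/\mathrm{OPT}_A\le \tfrac32$.

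To exhibit a non-adaptive ordering within $+1$ of the adaptive optimum I would exploit the structure of the optimal adaptive strategy, which (by a single-swap interchange argument of the type already used for the greedy ordering) is determined by its first coin: once it has seen a head it flips the remaining coins in increasing order of $p_i$ until a tail appears, and once it has seen a tail it flips them in decreasing order of $p_i$ until a head appears. Using the heads/tails symmetry, assume WLOG that $p_1+p_n\ge 1$ after sorting $p_1\ge\cdots\ge p_n$. I would then search the short family of non-adaptive orderings $\sigma_j=(p_1,\dots,p_j,\,p_n,p_{n-1},\dots,p_{j+1})$ for $0\le j\le n-1$ -- a decreasing prefix followed by an increasing suffix -- and argue the best of them costs at most $\mathrm{OPT}_A+1$. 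Its extreme members $\sigma_1=(p_1,p_n,\dots,p_2)$ and $\sigma_{n-1}=(p_1,\dots,p_n)$ reproduce the ``seen-a-head'' and ``seen-a-tail'' continuations of the adaptive strategy that starts with $p_1$, so on one branch such a $\sigma_j$ matches the adaptive strategy exactly and on the other it pays only the gap between a monotone flipping order and its reversal in a one-sided search -- and $p_1+p_n\ge 1$ is exactly what bounds this gap by $1$.

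The main obstacle is this last estimate. The optimal adaptive strategy need not begin with the largest or smallest coin (e.g.\ on $(0.9,0.8,0.1)$ the best first flip is the middle coin), so $\sigma_1$ and $\sigma_{n-1}$ are generally not refinements of the \emph{optimal} adaptive strategy, and one must show that some $\sigma_j$ nevertheless stays within $+1$ of $\mathrm{OPT}_A$. I expect this to come down to comparing the partial-product expansion of $\mathrm{cost}(\sigma_j)$ against that of the adaptive cost term by term, and bounding -- via a telescoping interchange argument -- the penalty on whichever branch the fixed ordering mismatches; the remaining pieces (the $\mathrm{OPT}_A\ge 2$ bound, the symmetry reduction, and the lower-bound calculation) are routine.
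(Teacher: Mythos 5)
Your lower bound matches the paper's argument exactly — same instance, same computation of $\mathrm{OPT}_A \approx 2.5$ and $\mathrm{OPT}_{NA} \approx 3$ — so that half is fine.

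Your upper bound takes a genuinely different route: you reduce $\mathrm{OPT}_{NA}/\mathrm{OPT}_A \le 1.5$ to the additive bound $\mathrm{OPT}_{NA} \le \mathrm{OPT}_A + 1$ together with $\mathrm{OPT}_A \ge 2$, and then propose to realize the additive bound by a decreasing-prefix-plus-increasing-suffix family $\sigma_j$. The paper instead bounds the ratio directly: it constructs two ``biased-greedy'' orderings (one picking the smallest remaining coin whenever non-$1$-biased, the other symmetrically), shows at least one of them can always be completed, introduces the \emph{crossover point} $t$ to split the cost into a $\le 3$ part and a tail $\sum_{j>t+1} z_j^1(a)$, and bounds each piece against a one-sided lower bound $2+\sum_{j=2}^{n-1}\prod_{i\le j}p_i$ on $\mathrm{OPT}_A$. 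The paper's route has the advantage that the same crossover-point machinery is what generalizes in the appendix to the sharp $1.2+o(1)$ bound; your additive route is conceptually cleaner if it works, but it does not obviously extend to $1.2$.

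However, your proposal has a genuine gap that you yourself flag: the claim that some $\sigma_j$ costs at most $\mathrm{OPT}_A + 1$ is not proven. Your identification $\sigma_1 = (p_1, p_n, \dots, p_2)$ and $\sigma_{n-1} = (p_1, \dots, p_n)$ with the two branches of the adaptive strategy is only valid when the adaptive strategy begins with $p_1$, and (as you note with the $(0.9,0.8,0.1)$ example) the optimal adaptive first coin $\ks$ can be an interior coin. In that case neither extreme $\sigma_j$ is a refinement of the optimal strategy, and the intermediate $\sigma_j$'s do not obviously interpolate to within $+1$ of $\mathrm{OPT}_A$: on the branch that $\sigma_j$ gets ``wrong'' you pay a one-sided search in the reverse order, and bounding that penalty by $1$ uniformly over $j$ (while simultaneously using the other branch to track the adaptive cost) is exactly the non-trivial content, not a routine telescoping. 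Note also that the family $\sigma_j$ is unimodal (a single decreasing prefix then increasing suffix) while the paper's greedy ordering can alternate biases (cf.\ the example ordering $(0.1,0.85,0.8,0.3,0.35,0.6,0.45,0.4)$), so it is not even clear this family contains a near-optimal non-adaptive ordering. Until the $+1$ estimate is actually established — which would require a concrete argument comparing the partial-product expansions, not a plan for one — the upper-bound half of the proposal is incomplete.
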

The family of instances that give the lower bound is simple to describe and analyze: one coin has $p_i = 0$, and the remaining coins all have $p_i = \frac{1}{2}$. 
The proof of the upper bound analyzes a specific non-adaptive ordering (which is not necessarily optimal) that mimics properties of the optimal adaptive strategy. In the full version of this paper (available at \cite{fullversion}), we use an extended version of this approach to sharpen the upper bound to $1.2 + o(1)$, which essentially settles the adaptivity gap for the Unanimous Vote problem.
\fi

\subsection{Related work}
\label{sec:additional_related}

The Unanimous Vote problem is a Stochastic Boolean Function Evaluation (SBFE) problem.
For a comprehensive survey on SBFE problems, which are also referred to as ``sequential testing'' problems in the Operations Research literature, we refer the reader to the surveys of {\"U}nl{\"u}yurt ~\cite{Unluyurt:2004,Unluyurt:2025}.

In \emph{unit-cost} SBFE problems, the goal is to \emph{exactly evaluate} a given Boolean function $f:\{0,1\}^n \to \{0,1\}$, while observing as few of the inputs bits, $x_1,\ldots, x_n$, as possible.  Each input bit $x_i$ is assumed to be drawn from an independent Bernoulli distribution with given parameter $p_i$.   The problem is to determine the optimal order in which to observe the bits, so as to minimize the expected number of observations.\footnote{In arbitrary-cost SBFE problems, there is a cost $c_i$ associated with observing a bit $x_i$, and the expected total cost of the observations must be minimized.
In what follows, any previous result said to hold for the ``SBFE problem'' rather than for the ``unit-cost SBFE problem'' should be understood to apply to the arbitrary cost version as well.}

SBFE problems are studied in the \emph{adaptive setting}, 
which allows adaptive strategies, and 
in the \emph{non-adaptive setting}, where the observation order must be fixed in advance. One motivation for non-adaptive strategies is that they can always be represented compactly as a permutation and deployed efficiently (we simply read the ID of the next bit to reveal). Adaptive strategies typically require more computation at each step. 
The Unanimous Vote problem is equivalent to the unit-cost, non-adaptive SBFE problem when $f$ is the not-all-equal function, i.e., $f$ evaluates to $1$ iff $x_i \neq x_j$ for some $i,j$. Our work shows that this SBFE problem can be solved in polynomial time.

For some other unit-cost, non-adaptive SBFE problems, designing a polynomial-time algorithm is trivial.
For example, for the Boolean OR function, the problem is equivalent to the coin-flipping problem mentioned at the start of the introduction, where you flip coins until you see the first occurrence of heads. As another example, consider the parity function: evaluating this function requires observing all bits, so the expected number of observations is the same no matter which ordering you use.  
Our work appears to be the first to give a polynomial-time algorithm solving a  unit-cost, non-adaptive SBFE problem that is \emph{not} trivial.

There is, however, interesting prior work on {\em approximately} solving non-adaptive SBFE problems.  For example, constant-factor approximation algorithms have been studied for a problem called ``Stochastic Score Classification’'~\cite{GhugeGuptaNagarajan:2024,plank2024simple,Liu:2022,grammel2022algorithms}. These methods yield constant factor approximation algorithms for the non-adaptive SBFE problems for any symmetric Boolean function and for any linear threshold function. Recall that a Boolean function is symmetric if its output depends only on the number of 1's in the input, so these results cover the Unanimous Vote problem (the not-all-equal function is symmetric). Before our work, the best previous polynomial time algorithm for the Unanimous Vote problem achieved an approximation factor of $\phi \approx 1.618$ using a method specialized to that problem \cite{gkenosis2018stochastic}.

Specialized approximation algorithms have also been developed for 
the \emph{$k$-of-$n$ function}, the symmetric Boolean function whose output is 1 iff at least $k$ of its inputs are 1.  
Recent work gives a PTAS for the unit-cost, non-adaptive SBFE problem for the $k$-of-$n$ function~\cite{nielsen2025nonadaptive}.
There is also a very simple 1.5 approximation algorithm for this problem~\cite{grammel2022algorithms}.  
Beyond symmetric functions, an 8-approximation algorithm is known for the unit-cost, non-adaptive SBFE problem for Boolean read-once formulas~\cite{HappachHellersteinetal:2022}. 

\paragraph{Adaptive Methods.}
In the \emph{adaptive} setting, the Unanimous Vote problem can be easily solved by noting the following: once you have flipped the first coin and observed its outcome, it will be optimal to flip the remaining coins in either increasing or decreasing probability of heads, depending on whether your first flip is heads or tails~\cite{gkenosis2018stochastic}.  Thus the only difficulty is to determine the first coin. To do so, we can just compute the expected cost associated with each of the $n$ choices of the first coin, and choose the best one.  

Other adaptive SBFE problems are more interesting. There is an elegant polynomial time algorithm that solves the 
adaptive SBFE problem for the $k$-of-$n$ function~\cite{Ben-Dov:1981,changetal:1990,SalloumBreuer:1997}, and also works for the exactly-$k$ function, whose output is 1 iff exactly $k$ of the inputs are $1$~\cite{GkenosisGrammelHellerstein:2022,AcharyaJafarpourOrlitsky:2011}. 
The adaptive SBFE problem has also been studied for functions represented by read-once CNF (dually, DNF) formulas~\cite{Boros2000} and for linear threshold functions and symmetric functions~\cite{DeshpandeHellersteinKletenik:2016,GhugeGuptaNagarajan:2024,AcharyaJafarpourOrlitsky:2011,DasJafarpourOrlitsky:2012,KowshikKumar:2013}. 

There has also been interest in the gap between the best adaptive and non-adaptive solutions to SBFE problems.
The unit-cost SBFE problem for arbitrary symmetric functions has an adaptivity gap that is between 1.5 and 2~\cite{grammel2022algorithms,plank2024simple}. 
The unit-cost SBFE problem for the $k$-of-$n$ function has an adaptivity gap of  1.5~\cite{grammel2022algorithms,plank2024simple}. 
For some other Boolean functions, the problem has non-constant gaps~\cite{HellersteinKleteniketal:2022}.

\paragraph{Hardness.}
Few $\NP$-hardness results are known for unit-cost SBFE problems.  
The unit-cost SBFE problem for functions represented by arbitrary CNF formulas is trivially $\NP$-hard, in both the adaptive and non-adaptive settings: if the formula is not satisfiable, the optimal evaluation strategy does not observe any bits at all.  By a simple reduction from vertex cover, $\NP$-hardness also holds even if the CNF formula has no negations and is restricted to have 2 literals per clause~\cite{AllenHellersteinetal:2017}. Analogous results hold for DNF formulas by duality.  
We note that $\NP$-hardness of the unit-cost SBFE problem for linear-threshold functions is still an open question.  
$\NP$-hardness was shown for arbitrary costs in~\cite{heuristicLeastCostCox}, but contrary to what was stated in~\cite{DeshpandeHellersteinKletenik:2016},
that result did not apply to the unit-cost case.


\paragraph{Related Problems.}
Finally, we note that there is a large body of work on Boolean function evaluation problems in both adversarial and probabilistic models with other assumptions or goals; see for example \cite{KrishnamurthyBoralZaniolo:1986,IbarakiKameda:1984,charikarPricedInfo00,kaplanMansour-Stoc05,cicalese2011competitive,blanc2021query,Unluyurt:2025,HarrisNagarajanTan:2025}.
There has also been interest in functions over larger alphabets, like voting problems with more than two choices~\cite{HellersteinLiuSchewior:2024,BradacSinglaZuzic:2019}. 
More generally, stochastic probing problems (where each input is $1$ with some known probability) are studied beyond function evaluation. For example, there is interest in solving encoded optimization problems, where input bits can be queried sequentially; see e.g. \cite{GuptaNagarajan:2013,GuptaNagarajanSingla:2017,Singla:2018,PattonRussoSingla:2023,SegevSingla:2021}.

\section{Preliminaries}
\label{sec:preliminaries}
In this section we review notation and terminology used throughout the paper. We also formalize a greedy algorithm for the Unanimous Vote problem that plays a central role in our main result, \Cref{thm:mainresult_intro}.
\subsection{Notation and terminology}
\label{sec:notation}

Throughout, we use the term ``increasing'' (resp. ``decreasing'') as a synonym for ``non-decreasing'' (resp. ``non-increasing''). 
We use ``ordering'' to refer to a permutation $a$ on $n$ elements, with order $a(1), a(2), \dots, a(n)$.
  The outcome ``heads'' is synonymous with 1, and ``tails'' with 0. The input to our problem is a set of $n$ coins with biases $p_1, \ldots, p_n \in [0,1]$. Without loss of generality, we assume $p_1 \le \dots \le p_n$. The first step in all of our algorithms is to sort the probabilities in $O(n \log n)$ time.  We denote $\pbar_i \defeq 1 - p_i$. 

The \emph{cost} of an ordering is the expected number of flipped coins to determine whether or not a vote is unanimous. The cost can be written as the expectation of a sum of $\{0,1\}$ indicator random variables $\{X_j\}_{j \ge 1}$, where $X_j=1$ if we reach the $j^{\text{th}}$ coin in the ordering without terminating (i.e., we flip the $j^\text{th}$ coin). We never terminate before the second coin.
Moreover, for all $j\geq 3$, we only fail to terminate if the coins prior to the $j^{\text{th}}$ coin all come up heads or all come up tails. Thus,
    \begin{align*}
        cost(a) = \sum_{j \ge 1} \E[X_j] 
        = 2 + \sum_{j \ge 3}\left( \prod_{i=1}^{j-1} p_{a(i)} + \prod_{i=1}^{j-1} \pbar_{a(i)}\right) \,.
    \end{align*}
Since these quantities will be important later, we introduce the notation $z_j^1(a) \defeq \prod_{i=1}^{j-1} p_{a(i)}$ to be the chance that the first $j-1$ coins are \emph{heads} in ordering $a$, and  $z_j^0(a) \defeq \prod_{i=1}^{j-1} \pbar_{a(i)}$ to be the chance that the first $j-1$ coins are \emph{tails} in ordering $a$. As above, for $j \geq 3$, $\E[X_j] = z_j^1(a) + z_j^0(a)$.
It follows that
    \begin{align}
    \label{eq:cost_express}
        cost(a) 
        = 2 + \sum_{j \ge 3} z_j^0(a) + z_j^1(a) = 1 + \sum_{j \ge 2} z_j^0(a) + z_j^1(a)\,. 
    \end{align}
We call an ordering, $a$, \emph{$0$-biased} at position $j$ if $z_j^0(a) > z_j^1(a)$; i.e., it is more likely that the first $j-1$ coins flipped are all tails (0) than all heads (1). If $z_j^0(a) < z_j^1(a)$, we call the ordering \emph{$1$-biased} at position $j$. If $z_j^0(a) = z_j^1(a)$, we call the ordering \emph{unbiased} at position $j$.

Using this language, we can uniquely partition the sequence $[a(1), \dots, a(n)]$ into contiguous \emph{blocks}
$[B_1, \dots, B_q]$ so that for each $j$, $a(j)$ and $a(j+1)$ belong to the same block if and only if $a$ has the same bias type (0-biased, 1-biased, or unbiased) at positions $j$ and $j+1$.
See \Cref{fig:biasedblocksunsorted} for an example partitioning of an ordering into its \emph{biased blocks}.

\subsection{The greedy algorithm}
\label{sec:greedy_prelim}
A natural greedy algorithm for the Unanimous Vote Problem is to always choose the coin that, if flipped, has the highest probability of terminating the algorithm. This probability can be expressed as follows:
\begin{fact}[Probability of terminating the sequence]
\label{fact:prob_termination}
    Fix an ordering $a$ and position $x > 1$. Assuming we flip $x-1$ coins without terminating, the probability of terminating after flipping the coin $a(x)$ is 
    \begin{align*}
       \frac{\pbar_{a(x)} z_x^1(a) + p_{a(x)} z_x^0(a)}{z_{x}^1(a) + z_{x}^0(a)} = 1 -  \frac{z_{x+1}^1(a) + z_{x+1}^0(a)}{z_{x}^1(a) + z_{x}^0(a)}\,.
    \end{align*}
\end{fact}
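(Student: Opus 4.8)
The plan is a direct computation by conditioning on the event that we reach position $x$ at all. First I would spell out that event precisely: since $x-1 \ge 1$, having flipped $x-1$ coins without terminating is exactly the event that those first $x-1$ coins all came up heads or all came up tails. These two outcomes are mutually exclusive, so by the definitions $z_x^1(a) = \prod_{i=1}^{x-1} p_{a(i)}$ and $z_x^0(a) = \prod_{i=1}^{x-1} \pbar_{a(i)}$, this event has probability $z_x^1(a) + z_x^0(a)$; and conditioned on it, the first $x-1$ coins are all heads with conditional probability $z_x^1(a)/(z_x^1(a)+z_x^0(a))$ and all tails with the complementary probability $z_x^0(a)/(z_x^1(a)+z_x^0(a))$.

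Next I would split into these two sub-cases. Flipping $a(x)$ terminates the process iff it disagrees with the common value of the first $x-1$ coins: if they were all heads, termination occurs iff $a(x)$ is a tail, which happens with probability $\pbar_{a(x)}$; if they were all tails, termination occurs iff $a(x)$ is a head, with probability $p_{a(x)}$. Because $a(x)$ is independent of the earlier flips, the law of total probability gives the conditional termination probability
\[
\frac{z_x^1(a)}{z_x^1(a)+z_x^0(a)}\,\pbar_{a(x)} \;+\; \frac{z_x^0(a)}{z_x^1(a)+z_x^0(a)}\,p_{a(x)} \;=\; \frac{\pbar_{a(x)}\,z_x^1(a) + p_{a(x)}\,z_x^0(a)}{z_x^1(a)+z_x^0(a)},
\]
which is the left-hand side of the claim.

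For the second equality I would argue symmetrically rather than algebraically: conditioned on reaching position $x$, the process fails to terminate after flipping $a(x)$ precisely when the first $x$ coins are all heads or all tails, an event of unconditional probability $z_{x+1}^1(a)+z_{x+1}^0(a)$; dividing by $z_x^1(a)+z_x^0(a)$ and taking the complement yields the right-hand side. (Equivalently, one checks $z_x^1(a)+z_x^0(a) - z_{x+1}^1(a) - z_{x+1}^0(a) = \pbar_{a(x)}z_x^1(a) + p_{a(x)}z_x^0(a)$ using $z_{x+1}^1(a) = p_{a(x)}z_x^1(a)$ and $z_{x+1}^0(a) = \pbar_{a(x)}z_x^0(a)$.) There is no substantive obstacle here; the only things worth a sentence are the degenerate case $z_x^1(a)+z_x^0(a)=0$, where position $x$ is in fact unreachable (some earlier coin is deterministically a head and another deterministically a tail) so the conditional probability is vacuous, and the boundary case $x=2$, where the conditioning event has probability $z_2^1(a)+z_2^0(a)=1$ and the formula reduces to the obvious fact that the second flip always terminates.
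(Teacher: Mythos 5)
Your proposal is correct. The paper states this as a \emph{Fact} without supplying a proof, so there is no paper argument to compare against; your conditioning argument (decompose the ``reached position $x$'' event into the mutually exclusive sub-events ``first $x-1$ all heads'' and ``first $x-1$ all tails'', use independence of $a(x)$, and then verify the second equality via $z_{x+1}^1 = p_{a(x)} z_x^1$, $z_{x+1}^0 = \pbar_{a(x)} z_x^0$) is exactly the intended one-line derivation, and your remarks on the degenerate case $z_x^1+z_x^0=0$ and the boundary case $x=2$ are appropriate.
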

The value of $p_{a(x)}$ that maximizes the expression in  \Cref{fact:prob_termination} depends on the \emph{bias type} of $a$ at position $x$. If it is $0$-biased, the expression is increasing with $p_{a(x)}$; if it is $1$-biased, the expression is \emph{decreasing} with $p_{a(x)}$; if $a$ is unbiased at position $x$, then the expression always equals $\frac{1}{2}$.
So, in a $0$-biased block, the choice that maximizes the probability of termination is to select the remaining coin with the \emph{largest} heads probability. In a $1$-biased block, it is to select the coin with the \emph{smallest} heads probability.  For unbiased blocks, selecting any coin terminates the sequence with probability exactly $\frac{1}{2}$, so the greedy choice is not unique. By convention, we always choose the remaining coin with \emph{largest} heads probability, although our analysis would also works if we chose the coin with smallest heads probability. We may now construct a ``greedy'' algorithm:

\begin{tabular}{|p{6.5in}}{\underline{\textsc{Greedy Algorithm (\Cref{alg:greedy})}}}
\alglabel{alg:greedy}

Assume coins are in increasing order; i.e. $p_1 \le \dots \le p_{n}$.

Start with an empty ordering. Repeat the following rule until all coins are chosen:

\quad If the ordering is $1$-biased, choose the remaining coin with smallest probability of heads.

\quad Otherwise, choose the remaining coin with largest probability of heads.
\end{tabular}
\vspace{.25em}

We assume the algorithm can break ties arbitrarily (i.e. if two coins have the same probability of heads). See \Cref{fig:biasedblockssorted} for an example output of the algorithm.

\begin{figure}[ht]
\vspace{.5em}
    \centering
\begin{center}
\begin{tikzpicture}[scale=.7]
\begin{axis}[
    scaled y ticks=false,
    width=13cm,
    height=8cm,
    ymin=0, ymax=1.05,
    ytick={0, 0.05, 0.10, 0.15, 0.20, 0.25, 0.30, 0.35, 0.40, 0.45, 0.50, 0.55, 0.60, 0.65, 0.70, 0.75, 0.80, 0.85, 0.90, 0.95, 1.0},
    yticklabels={ , ,0.1 , ,0.2 , ,0.3 , , 0.4, , 0.5, ,0.6 , ,0.7 , , 0.8, ,0.9 , ,1.0 },
    xtick={1,2,3,4,5,6,7,8},
    xticklabels={$a(1)$, $a(2)$, $a(3)$, $a(4)$, $a(5)$, $a(6)$, $a(7)$, $a(8)$},
    enlargelimits=0.1,
    axis lines=left,
    ylabel={probability of heads},
    xlabel={position of coin in ordering},
    scatter/classes={
        a={mark=o,draw=Gray,fill=Gray,mark size=2.7pt,line width=5.4pt},
        c={mark=o,draw=MidnightBlue,fill=MidnightBlue,mark size=2.7pt,line width=5.4pt},
        b={mark=o,draw=BrickRed,fill=BrickRed,mark size=2.7pt,line width=5.4pt},
        d={mark=none},
        e={mark=o,draw=black,fill=black,mark size=2.7pt,line width=5.4pt}
    }
]
\addplot[
    only marks,
    scatter,
    scatter src=explicit symbolic
]
coordinates {
    (0.5,0) [d]
    (1,0.10) [e]
    (2,0.85) [e]
    (3,0.80) [e]
    (4,0.30) [e]
    (5,0.35) [e]
    (6,0.60) [e]
    (7,0.45) [e]
    (8,0.40) [e]
    (8.5,0) [d]
};
\addplot [
    draw=none,
    fill=red!30,  
    opacity=0.4
] coordinates {
    (3.5,0) (3.5,1.05) (5.5,1.05) (5.5,0)
};
\addplot [
    draw=none,
    fill=gray!30,
    opacity=0.4
] coordinates {
    (0.5,0) (0.5,1.05) (1.5,1.05) (1.5,0)
};
\addplot [
    draw=none,
    fill=blue!30,
    opacity=0.4
] coordinates {
    (5.5,0) (5.5,1.05) (8.5,1.05) (8.5,0)
};
\addplot [
    draw=none,
    fill=blue!30,
    opacity=0.4
] coordinates {
    (1.5,0) (1.5,1.05) (3.5,1.05) (3.5,0)
};
\addplot [
    domain=0.5:8.5,
    samples=2,
    thin,
    dashed,
    gray
] {0.5};
\end{axis}
\end{tikzpicture}
\end{center}
\vspace{-1em}
    \caption{\small Plot of the probability of heads of a coin versus its position in an example greedy ordering. This is the same instance as in \Cref{fig:biasedblocksunsorted}. Unbiased blocks are in gray, $0$-biased blocks are in blue, and $1$-biased blocks are in red.}
    \label{fig:biasedblockssorted}
\end{figure}
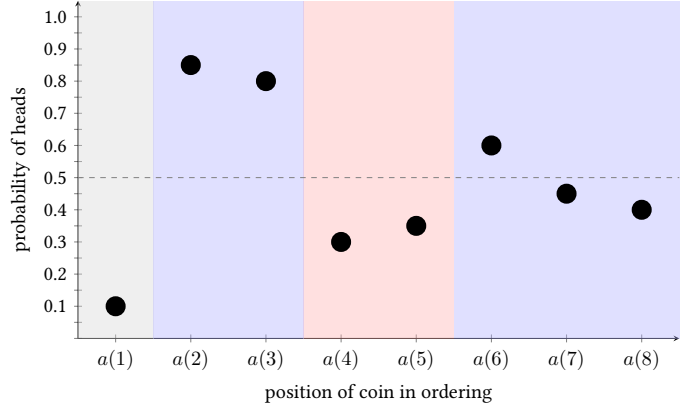

\section{Structural results}
\label{sec:structural}
In this section, we show that any optimal ordering  for the Unanimous Vote problem must share several natural characteristics of the \emph{greedy} ordering (\Cref{alg:greedy}).

\subsection{Monotonicity within a block}
Consider the coins in a single $0$-biased biased block of the greedy solution. Since, in a $0$-biased block,  \Cref{alg:greedy} chooses the next coin with largest probability of heads, the coins in this block will have \emph{decreasing} probability of heads. Similarly, the coins in any $1$-biased block of the greedy solution will be sorted so that the probability of heads is \emph{increasing} in the block. Our first structural result is that the same fact holds for \emph{any optimal ordering}.

The proof proceeds via an intuitive ``swap'' argument: if we have not terminated by the time we reach a $0$-biased block, this is most likely because we have observed all tails ($0$s) up until that point. This remains true even if we swap the positions of coins in the block. Accordingly, we can increase our probability of termination by reordering the coins to flip a heads as quickly as possible, i.e., by moving coins with larger probability of heads to the front of the block. 
Formally, we require the following expression for how the cost of an ordering changes if adjacent coins are swapped:

\begin{claim}[Cost of swapping adjacent positions]
\label{claim:swap_order_by_one}
    Choose an ordering $a$ and position $x \in [1,n-1]$. Let $b$ be the ordering starting from $a$ but swapping position $x$ and $x+1$. Then
    \begin{align*}
        cost(b) - cost(a) =  \left( z_x^1(a) - z_x^0(a) \right) \left( p_{a(x+1)} - p_{a(x)} \right)\,.
    \end{align*}
\end{claim}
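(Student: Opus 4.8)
The plan is to compute $cost(b) - cost(a)$ directly from the cost formula \eqref{eq:cost_express}, namely $cost(a) = 1 + \sum_{j \ge 2} \bigl(z_j^0(a) + z_j^1(a)\bigr)$. First I would observe that swapping positions $x$ and $x+1$ only changes the set $\{a(x), a(x+1)\}$ by reordering it, and leaves the set of coins in positions $1, \dots, x-1$ and in positions $x+1, \dots, n$ (as sets) unchanged — in $b$, positions $1,\dots,x-1$ and $x+2,\dots,n$ hold exactly the same coins as in $a$. Consequently, for every $j \le x$ and every $j \ge x+2$, the prefix products $z_j^1$ and $z_j^0$ are identical for $a$ and $b$, because each such prefix product depends only on the \emph{set} of coins in the first $j-1$ positions (products commute). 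So the entire difference $cost(b) - cost(a)$ comes from the single term $j = x+1$.

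Next I would evaluate that one term. In ordering $a$, we have $z_{x+1}^1(a) = z_x^1(a)\, p_{a(x)}$ and $z_{x+1}^0(a) = z_x^0(a)\, \pbar_{a(x)}$, since $z_x^1(a) = \prod_{i=1}^{x-1} p_{a(i)}$. In ordering $b$, the coins $a(x)$ and $a(x+1)$ are swapped, while the prefix through position $x-1$ is unchanged, so $z_x^1(b) = z_x^1(a)$ and $z_x^0(b) = z_x^0(a)$, and hence $z_{x+1}^1(b) = z_x^1(a)\, p_{a(x+1)}$ and $z_{x+1}^0(b) = z_x^0(a)\, \pbar_{a(x+1)}$. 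Therefore
\[
cost(b) - cost(a) = \bigl(z_{x+1}^1(b) + z_{x+1}^0(b)\bigr) - \bigl(z_{x+1}^1(a) + z_{x+1}^0(a)\bigr)
= z_x^1(a)\bigl(p_{a(x+1)} - p_{a(x)}\bigr) + z_x^0(a)\bigl(\pbar_{a(x+1)} - \pbar_{a(x)}\bigr).
\]
Since $\pbar_i = 1 - p_i$, we have $\pbar_{a(x+1)} - \pbar_{a(x)} = p_{a(x)} - p_{a(x+1)} = -\bigl(p_{a(x+1)} - p_{a(x)}\bigr)$, and substituting gives $cost(b) - cost(a) = \bigl(z_x^1(a) - z_x^0(a)\bigr)\bigl(p_{a(x+1)} - p_{a(x)}\bigr)$, as claimed. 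One should also handle the boundary case $x = 1$: then $z_1^1(a) = z_1^0(a) = 1$ (empty product), the $j=2$ term is $p_{a(2)} + \pbar_{a(2)} + \dots$ — actually the $j=2$ term is $z_2^0 + z_2^1 = p_{a(1)} + \pbar_{a(1)} = 1$ for both orderings, so the difference is $0$, consistent with $z_1^1(a) - z_1^0(a) = 0$.

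There is no real obstacle here; the only thing to be careful about is the bookkeeping that prefix products for $j \le x$ and $j \ge x+2$ are genuinely unaffected by the swap — this is where the commutativity of multiplication does all the work, and it is worth stating explicitly rather than leaving implicit. The argument is entirely a finite algebraic manipulation once that observation is in place.
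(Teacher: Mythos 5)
Your proof is correct and follows essentially the same route as the paper's: isolate the single changed term $j = x+1$ in the cost sum, expand $z_{x+1}^1$ and $z_{x+1}^0$ for both orderings in terms of $z_x^1(a)$ and $z_x^0(a)$, and factor. The only additions are the explicit appeal to commutativity of the prefix products and the $x=1$ boundary check, both of which the paper leaves implicit.
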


\begin{proof}
    The orderings $a$ and $b$ are identical except at positions $x$ and $x+1$. So we have that $z_j^0(a) = z_j^0(b)$ and $z_j^1(a) = z_j^1(b)$ for all $j\in [1,\dots,n]\setminus\{x+1\}$. Using the cost expression \eqref{eq:cost_express}, we then have:
    \begin{align*}
        cost(b) - cost(a) 
        &= 
        \left(
        1 + \sum_{j \ge 2} \left( z_j^1(b) + z_j^0(b) \right)
        \right)
        -
        \left(
        1 + \sum_{j \ge 2} \left( z_j^1(a) + z_j^0(a) \right)
        \right)
        \\
        &=
        z_{x+1}^1(b) - z_{x+1}^1(a) + z_{x+1}^0(b) - z_{x+1}^0(a)
        \\
        &= z_x^1(a) \left( p_{a(x+1)} - p_{a(x)} \right) + z_x^0(a) \left( \pbar_{a(x+1)} - \pbar_{a(x)} \right)
        \\
        &= \left( z_x^1(a) - z_x^0(a) \right) \left( p_{a(x+1)} - p_{a(x)} \right)\,. \qedhere
    \end{align*}
\end{proof}

\begin{corollary}[Optimal ordering is sorted within each block]
\label{cor:sorted_within_block}
    Consider any optimal ordering $a$ with block sequence $[B_1, \dots, B_q]$. Within each $0$-biased block, the coins have \emph{decreasing} probability of heads. Within each $1$-biased block, they have \emph{increasing} probability of heads.
\end{corollary}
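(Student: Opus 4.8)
The plan is to derive this as an immediate consequence of \Cref{claim:swap_order_by_one} via a single adjacent-swap argument, with no induction or case analysis beyond the two bias types. Suppose $a$ is optimal and, for contradiction, that some $1$-biased block contains two adjacent positions $x, x+1$ with $p_{a(x)} > p_{a(x+1)}$ (a strict ``descent'' within the block). Since both positions lie in the same $1$-biased block, position $x$ is $1$-biased, so $z_x^1(a) > z_x^0(a)$ by definition. Let $b$ be the ordering obtained from $a$ by swapping positions $x$ and $x+1$. Then \Cref{claim:swap_order_by_one} gives $cost(b) - cost(a) = \left( z_x^1(a) - z_x^0(a) \right)\left( p_{a(x+1)} - p_{a(x)} \right)$, a product of a strictly positive factor and a strictly negative factor, hence strictly negative. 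This contradicts the optimality of $a$, so within any $1$-biased block the sequence of heads-probabilities is non-decreasing, i.e., increasing.

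The $0$-biased case is symmetric. If a $0$-biased block contained adjacent positions $x, x+1$ with $p_{a(x)} < p_{a(x+1)}$, then position $x$ being $0$-biased gives $z_x^1(a) - z_x^0(a) < 0$, while $p_{a(x+1)} - p_{a(x)} > 0$, so again $cost(b) - cost(a) < 0$, contradicting optimality. Hence the heads-probabilities are non-increasing (decreasing) within each $0$-biased block, which is exactly the claim.

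Two small points will be worth stating explicitly but present no real difficulty: (i) both endpoints of the swap automatically share the block's bias type, since a block is by definition a maximal run of positions of the same bias — in any case the cost identity only references the bias at position $x$; and (ii) we do not need the swapped ordering $b$ to have the same block decomposition as $a$, because it suffices that the swap strictly decreases the cost, which already contradicts $a$ being a cost-minimizer. There is essentially no obstacle: all the content is carried by the exact cost-of-swap identity \Cref{claim:swap_order_by_one}, and the corollary follows in one line from the sign of its two factors.
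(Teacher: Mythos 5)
Your proof is correct and takes essentially the same approach as the paper: both derive the corollary directly from the sign analysis of the cost-of-swap identity in \Cref{claim:swap_order_by_one} applied to adjacent positions in the same block. The paper states it as the contrapositive (optimality forces $cost(b)-cost(a)\ge 0$, hence the ordering constraint) rather than as an explicit contradiction, but the content is identical, and your parenthetical remarks (that only the bias at position $x$ matters, and that $b$'s block structure is irrelevant) are accurate.
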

\begin{proof}
Since $a$ is optimal, we have $cost(b) - cost(a) \ge 0$ for all orderings $b$.
Suppose $x$ is a position within a  $0$-biased block, i.e. $z_x^0(a) > z_x^1(a)$. By \Cref{claim:swap_order_by_one}, we must have $p_{a(x)} \ge p_{a(x+1)}$; i.e. the next coin must have an equal or smaller probability of heads. The argument is symmetric for $1$-biased blocks. 
\end{proof}

\subsection{Monotonicity across non-final blocks}

The greedy solution obeys a stronger property than  monotonicity within each block. \emph{Every} coin in a $1$-biased block has a smaller (or equal) probability of heads than all future coins. Similarly, \emph{every} coin in a $0$-biased or unbiased block (because of our convention) has a larger or equal probability of heads than all future coins.
We show there is an optimal solution that also obeys this ``across block'' monotonicity, \emph{at least among all coins in non-final blocks}.
To prove this statement, we first require the following basic fact:

\begin{claim}[Optimal ordering uses coins on one side of $\frac{1}{2}$ in non-final blocks]
\label{claim:nonfinal_probatleasthalf}
        Consider any optimal ordering $a$ with block sequence $[B_1, \dots, B_q]$. For all $1 \leq k < q$, if $B_k$ is $0$-biased, then it only contains coins with probability of heads $> \frac{1}{2}$. If $B_k$ is  $1$-biased, then it only contains coins with heads probability of  $< \frac{1}{2}$.
\end{claim}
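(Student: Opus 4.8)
The plan is to push everything down to the \emph{last} coin of the block and then invoke \Cref{cor:sorted_within_block}. Fix an optimal ordering $a$ with blocks $[B_1,\dots,B_q]$, let $B_k$ be $0$-biased with $k<q$, and say it occupies positions $[s,\dots,e]$. Since $B_k$ is not the final block, position $e+1$ exists and lies in $B_{k+1}$, which has a different bias type; hence position $e+1$ is \emph{not} $0$-biased, i.e. $z^0_{e+1}(a)\le z^1_{e+1}(a)$. At the same time $e$ itself is $0$-biased, so $z^0_e(a) > z^1_e(a)\ge 0$, and in particular $z^0_e(a)>0$.

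The key step is to deduce from this that $p_{a(e)} > \frac{1}{2}$; intuitively, a $0$-biased run can only end once we flip a coin that is more likely heads than tails. I would use the one-step recursions $z^0_{e+1}(a) = z^0_e(a)\,\pbar_{a(e)}$ and $z^1_{e+1}(a) = z^1_e(a)\,p_{a(e)}$ to rewrite $z^0_{e+1}(a)\le z^1_{e+1}(a)$ as $z^0_e(a)\,\pbar_{a(e)} \le z^1_e(a)\,p_{a(e)}$. A quick sanity check rules out $p_{a(e)}=0$ (it would force $z^0_e(a)\le 0$), so $p_{a(e)}>0$; combined with $z^1_e(a) < z^0_e(a)$ this gives $z^1_e(a)\,p_{a(e)} < z^0_e(a)\,p_{a(e)}$, hence $z^0_e(a)\,\pbar_{a(e)} < z^0_e(a)\,p_{a(e)}$, and dividing by $z^0_e(a)>0$ yields $\pbar_{a(e)} < p_{a(e)}$, i.e. $p_{a(e)} > \frac{1}{2}$.

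Finally, by \Cref{cor:sorted_within_block} the coins inside the $0$-biased block $B_k$ appear in decreasing order of heads probability, so $a(e)$ is the coin of $B_k$ with the smallest heads probability; therefore every coin of $B_k$ has heads probability at least $p_{a(e)} > \frac{1}{2}$. The $1$-biased case is entirely symmetric under exchanging heads and tails (equivalently $p_{a(i)}\leftrightarrow\pbar_{a(i)}$): a non-final $1$-biased block must end with a coin having $p_{a(e)} < \frac{1}{2}$, and \Cref{cor:sorted_within_block} makes that coin the \emph{largest} in the block.

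I do not expect a genuine obstacle here; the only points requiring care are (i) obtaining the \emph{strict} bound $p_{a(e)}>\frac{1}{2}$ rather than $\geq$, which is why the boundary case $p_{a(e)}=0$ and the strictness $z^1_e(a)<z^0_e(a)$ must be handled explicitly, and (ii) remembering that propagating the bound from the last coin of the block to the whole block is exactly where optimality of $a$ enters, via \Cref{cor:sorted_within_block}.
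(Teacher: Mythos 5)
Your proposal is correct and follows the same route as the paper: identify the last position of a non-final $0$-biased block, observe that the bias flips at the next position, unwind the one-step recursions for $z^0$ and $z^1$ to deduce $p_{a(e)} > \frac{1}{2}$, and then extend to the whole block via \Cref{cor:sorted_within_block}. The only difference is that you spell out the edge case $p_{a(e)} = 0$ and the role of strictness, which the paper leaves implicit; this is a harmless (and arguably clarifying) elaboration rather than a different argument.
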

\begin{proof}
    Fix a block $B_k = [a(i), \dots, a(j)]$ with $k \ne q$. If $B_k$ is $0$-biased, then in the last position $z_j^0(a) > z_j^1(a)$ but $z_{j+1}^0(a) \le z_{j+1}^1(a)$. Since $z_{j+1}^1(a) = p_{a(j)} z_j^1(a)$ and $z_{j+1}^0(a) = \pbar_{a(j)} z_j^0(a)$, this implies $p_{a(j)} > \pbar_{a(j)}$; i.e. $p_{a(j)} > \frac{1}{2}$. By \Cref{cor:sorted_within_block}, all other coins in block $B_k$ have probability of heads at least $p_{a(j)}$, which is greater than $\frac{1}{2}$. (The argument is symmetric for $1$-biased blocks.)
\end{proof}

\Cref{claim:nonfinal_probatleasthalf} only considers $0$-biased and $1$-biased blocks. However, it is possible that the optimal ordering has unbiased blocks. This makes the optimal ordering non-unique: by \Cref{claim:swap_order_by_one}, swapping a coin in an unbiased block with the following coin in the ordering does not change the cost. 
Nonetheless, we show there is an optimal ordering where all coins in non-final blocks have probability \emph{not} equal to $\frac{1}{2}$: 

\begin{claim}
    \label{claim:unbiasedblockform}
    There is an optimal ordering $a$ with block sequence $[B_1, \dots, B_q]$ where every unbiased block that is not $B_q$ has just one coin, and that coin has probability of heads \emph{not equal} to $\frac{1}{2}$. Moreover, if $B_q$ is unbiased and contains more than one coin, all coins in the block have probability of heads equal to $\frac{1}{2}$.
\end{claim}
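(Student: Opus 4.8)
The plan is to start from an arbitrary optimal ordering $a$ and reach the desired form through a sequence of \emph{cost-neutral} adjacent swaps, each preserving optimality. The engine is \Cref{claim:swap_order_by_one}: swapping the coins in positions $x$ and $x+1$ changes the cost by $(z_x^1(a)-z_x^0(a))(p_{a(x+1)}-p_{a(x)})$, which vanishes whenever position $x$ is \emph{unbiased}, so a coin may be slid rightward past an unbiased position for free. The first ingredient is a structural description of unbiased blocks. Suppose $[a(i),\dots,a(j)]$ is an unbiased block whose common value $z_i^1(a)=z_i^0(a)$ at the first position is \emph{nonzero} (this fails only when some coin before position $i$ has probability $0$ and another has probability $1$, in which case every position from $i$ on is unbiased, so the block is $B_q$). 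A one-line induction then gives $p_{a(\ell)}=\frac{1}{2}$ for $i\le\ell\le j-1$: as long as position $\ell+1$ is unbiased and $z_\ell^1(a)=z_\ell^0(a)\ne 0$, the identity $p_{a(\ell)}z_\ell^1(a)=\pbar_{a(\ell)}z_\ell^0(a)$ forces $p_{a(\ell)}=\frac{1}{2}$ and keeps $z_{\ell+1}^1(a)=z_{\ell+1}^0(a)\ne 0$; and if $j<n$, position $j+1$ is not unbiased, so the same identity forces $p_{a(j)}\ne\frac{1}{2}$.

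I would then process the non-final unbiased blocks one at a time, left to right. Let $[a(i),\dots,a(j)]$ be such a block with $m\defeq j-i+1\ge 2$; by the previous paragraph its unique coin of probability $\ne\frac{1}{2}$ is $d\defeq a(j)$. Perform the adjacent swaps $(j-1,j),(j-2,j-1),\dots,(i,i+1)$ in that order, sliding $d$ down to position $i$. At the moment the swap at $(x,x+1)$ is performed, positions $1,\dots,x$ have not yet been touched, so $z_x^1(a)=z_x^0(a)$ still holds (position $x$ lay in the block) and the swap is cost-neutral. Afterwards: position $i$ is still unbiased and holds $d$; positions $i+1,\dots,j$ hold the $\frac{1}{2}$-probability coins and, since $p_d\ne\frac{1}{2}$, now all share a single $0$- or $1$-bias; and since the multiset of coins on positions $1,\dots,j$ is unchanged, every $z$-value, hence every bias, at positions $>j$ is unchanged. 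So the block has collapsed to the singleton $\{d\}$ (still non-final and unbiased, with $p_d\ne\frac{1}{2}$), no new unbiased block was created, and neither the earlier positions nor the later blocks were disturbed. Iterating over all non-final unbiased blocks establishes the first assertion.

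Finally, consider $B_q$. If it is unbiased with at least two coins and $z_i^1(a)=z_i^0(a)\ne 0$ at its first position $i$, the induction above yields $p_{a(i)}=\dots=p_{a(n-1)}=\frac{1}{2}$; if moreover $p_{a(n)}=\frac{1}{2}$ we are done, and otherwise the same sliding argument brings $a(n)$ to the front of $B_q$, turning $B_q$ into a $0$- or $1$-biased block of $\frac{1}{2}$-coins preceded by a fresh non-final unbiased singleton of probability $\ne\frac{1}{2}$, which again satisfies the claim. The remaining possibility is $z_i^1(a)=z_i^0(a)=0$ at the first position of $B_q$, forcing a probability-$0$ coin and a probability-$1$ coin to be present; here the optimal cost is $2$ and the arrangement of coins inside $B_q$ is irrelevant to it, so this degenerate case is handled separately. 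I expect the bulk of the effort to be in the second step — carefully verifying that the swaps performed for one block leave earlier positions and all later biases intact, whose crux is that permuting coins within a block does not change the product of probabilities over any prefix ending at or beyond the block — and in reconciling the degenerate $z\equiv 0$ case with the statement.
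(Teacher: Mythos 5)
Your proof follows the paper's route: show that every non-final unbiased block consists of $\frac{1}{2}$-coins followed by one coin of probability $\ne\frac{1}{2}$, slide that last coin to the front via the cost-neutral adjacent swaps of \Cref{claim:swap_order_by_one}, observe that the displaced $\frac{1}{2}$-coins now form a single biased block that (by \Cref{claim:nonfinal_probatleasthalf}) can only be $B_q$, and treat $B_q$ by the same sliding argument. Your bookkeeping is sound: each swap at $(x,x+1)$ is performed while positions $1,\dots,x$ are still untouched so the needed $z$-equality holds, and permuting a prefix leaves every later $z$-value and bias intact. The one genuine difference is that you explicitly flag the degenerate case $z_i^1(a)=z_i^0(a)=0$ at the start of $B_q$, which occurs exactly when a $0$-coin and a $1$-coin both precede it. The paper never mentions this, even though its own inference that $p_{a(\ell)}=\frac{1}{2}$ silently divides by that shared $z$-value; for a \emph{non-final} block the quantity is automatically nonzero (otherwise the block could not end before position $n$), but for $B_q$ it need not be.

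Your instinct to isolate this case is correct, but saying it is \emph{handled separately} is not a proof, and in fact the claim as stated fails there. Take $\{p_i\}=\{0,1,0.3,0.4\}$: every optimal ordering has cost exactly $2$ and places the $0$- and $1$-coins in positions $1$ and $2$, so in every optimal ordering $B_q$ is the unbiased two-coin block $\{0.3,0.4\}$, contradicting the clause that all coins in an unbiased $B_q$ with more than one coin have probability $\frac{1}{2}$. The same oversight is present in the paper's proof (the step asserting every coin but the last in $B_q$ is a $\frac{1}{2}$-coin). The gap is benign for the headline result — in any such degenerate instance the greedy ordering already attains the optimal cost $2$, so \Cref{alg:fastermodifiedgreedy} remains exact — but a careful writeup should either restrict the claim (and its use in \Cref{claim:optimalordering_onespotfromgreedy}) to instances lacking both a $0$-coin and a $1$-coin, or dispatch those instances with a short direct argument rather than a deferral.
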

\begin{proof}
We begin with an optimal ordering $a$ with block sequence $[B_1, \dots, B_q]$, and argue that we can swap adjacent coins, without changing the cost of the ordering, in such a way that the properties above are eventually satisfied.
Note that the last coin of any non-final unbiased block \emph{must} have probability of heads not equal to $\frac{1}{2}$, since $a$ is unbiased at $j$ and \emph{not} unbiased at position $j+1$.

Let $k$ be the first block $B_k = [a(i), \dots, a(j)]$ in $a$ such that $B_k$ is unbiased,  has more than one coin, and $k \ne q$. For any $i \le x < j$, $a$ is unbiased at positions $x$ and $x+1$; i.e., both $z_x^1(a) = z_x^0(a)$ and $p_{a(x)} z_x^1(a) = z_{x+1}^1(a) = z_{x+1}^0(a) = \pbar_{a(x)} z_x^0(a)$.
    So the probability $p_{a(x)}$ \emph{must} be equal to $\frac{1}{2}$.

By \Cref{claim:swap_order_by_one}, for any $i \le x \le j$, we can swap position $x$ and $x+1$ without changing the cost. Suppose we do this with position $x = j-1$. Since the new coin at position $j-1$ has probability of heads \emph{not} equal to $\frac{1}{2}$, the coin \emph{initially} at position $j-1$ moves to the block $B_{k+1}$. We may now swap position $j-2$ and $j-1$, moving the coin \emph{initially} at position $j-2$ to the block $B_{k+1}$. We continue this process until we swap position $i$ and $i+1$. At this point, the block $B_k$ has one coin in it. Moreover, since $B_k$ is unbiased, $B_{k+1}$ must be $0$-biased or $1$-biased. Since the coins previously in block $B_k$ have probability of heads equal to $\frac{1}{2}$, by \Cref{claim:nonfinal_probatleasthalf}, $B_{k+1}$ must be the final block. So, by making these swaps, we satisfy the first part of \Cref{claim:unbiasedblockform}.

Suppose at this point, the final block $B_q = [a(x), \dots, a(n)]$ is unbiased and has more than one coin. By the same argument as before, every coin except possibly the last coin in $B_q$ has probability of heads equal to $\frac{1}{2}$. If $p_{a(n)} = \frac{1}{2}$, we are done. Otherwise, we may repeat the process from before until $B_q$ has one coin in it. In this case, a new block $B_{q+1}$ is created which is either $0$-biased or $1$-biased. 
\end{proof}

By \Cref{claim:nonfinal_probatleasthalf} and \Cref{claim:unbiasedblockform}, there is an optimal ordering where the coins in non-final blocks can be partitioned into $S_+$ and $S_-$, such that all coins in $S_+$ have probability  of heads greater than $\frac{1}{2}$, and all coins in $S_-$ have probability of heads \emph{less} than $\frac{1}{2}$.  We show that all coins in $S_+$ are sorted in decreasing probability of heads, and all coins in $S_-$ are sorted in \emph{increasing} probability of heads. It is in this sense we show that there is an optimal ordering with monotonicity across non-final blocks.
To show this, we give an expression for the change in cost of an ordering when coins are swapped across blocks:
\begin{claim}[Cost of swapping across a block]
\label{claim:multiblock_monotonic}
    Choose an ordering $a$.
    Fix positions $s,t \in [n]$ where $s < t$.
    Let $b$ be the ordering obtained from swapping positions $s$ and $t$ in $a$. Suppose $cost(a) \le cost(b)$. Then:
    \begin{itemize}
        \item If $a$ is $0$-biased at $s$ and $p_{a(x)} \le \frac{1}{2}$ for all $s < x < t$, then $p_{a(s)} \ge p_{a(t)}$.
        \item If $a$ is $1$-biased at $s$ and $p_{a(x)} \ge \frac{1}{2}$ for all $s < x < t$, then $p_{a(s)} \le p_{a(t)}$.
        \item If $a$ is unbiased at $s$ and $p_{a(x)} \le \frac{1}{2}$ for all $s < x < t$ and $p_{a(y)} < \frac{1}{2}$ for some $s < y < t$,   $p_{a(s)} \ge p_{a(t)}$.
        \item If $a$ is unbiased at $s$ and $p_{a(x)} \ge \frac{1}{2}$ for all $s < x < t$ and $p_{a(y)} > \frac{1}{2}$ for some $s < y < t$,  $p_{a(s)} \le p_{a(t)}$.
    \end{itemize}
\end{claim}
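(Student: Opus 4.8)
The plan is to derive a closed form for $cost(b)-cost(a)$ in the spirit of \Cref{claim:swap_order_by_one}, but for the non-adjacent swap of positions $s$ and $t$, and then read off the sign of $p_{a(t)}-p_{a(s)}$ from the sign of a single sum. Since the multiset of coins occupying the first $j-1$ slots is the same in $a$ and $b$ whenever $j\le s$ or $j\ge t+1$, we have $z_j^1(a)=z_j^1(b)$ and $z_j^0(a)=z_j^0(b)$ for all such $j$; only for $s<j\le t$ do the two orderings differ, and there they differ only in slot $s$, since slot $t$ is not among the first $j-1$. Writing $P_j\defeq\prod_{i\in[1,j-1]\setminus\{s\}}p_{a(i)}=z_s^1(a)\prod_{i=s+1}^{j-1}p_{a(i)}$ and $Q_j\defeq\prod_{i\in[1,j-1]\setminus\{s\}}\pbar_{a(i)}=z_s^0(a)\prod_{i=s+1}^{j-1}\pbar_{a(i)}$, we get $z_j^1(b)-z_j^1(a)=(p_{a(t)}-p_{a(s)})P_j$ and $z_j^0(b)-z_j^0(a)=-(p_{a(t)}-p_{a(s)})Q_j$ for $s<j\le t$ (and both differences vanish otherwise), so by the cost formula \eqref{eq:cost_express},
\[
cost(b)-cost(a)=(p_{a(t)}-p_{a(s)})\,\Sigma,\qquad \Sigma\defeq\sum_{j=s+1}^{t}\bigl(P_j-Q_j\bigr).
\]
The $j=s+1$ term is $P_{s+1}-Q_{s+1}=z_s^1(a)-z_s^0(a)$, so for $t=s+1$ this recovers \Cref{claim:swap_order_by_one}.

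Given the hypothesis $cost(a)\le cost(b)$, i.e., $(p_{a(t)}-p_{a(s)})\,\Sigma\ge 0$, it suffices to show $\Sigma<0$ in the first and third bullets (forcing $p_{a(s)}\ge p_{a(t)}$) and $\Sigma>0$ in the second and fourth (forcing $p_{a(s)}\le p_{a(t)}$). In the first bullet, the hypothesis $p_{a(x)}\le\frac{1}{2}$ for $s<x<t$ says $p_{a(i)}\le\pbar_{a(i)}$ for those $i$, so $\prod_{i=s+1}^{j-1}p_{a(i)}\le\prod_{i=s+1}^{j-1}\pbar_{a(i)}$ for every $j$ (all factors lie in $[0,1]$); together with $z_s^1(a)\le z_s^0(a)$ (valid since $a$ is $0$-biased at $s$) and nonnegativity throughout, this gives $P_j\le Q_j$ for all $j$, while the $j=s+1$ term equals $z_s^1(a)-z_s^0(a)<0$ strictly, so $\Sigma<0$. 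In the third bullet $a$ is unbiased at $s$, so the $j=s+1$ term is $0$; here I would instead exploit the promised index $y$ with $p_{a(y)}<\frac{1}{2}$: every $\pbar_{a(i)}\ge\frac{1}{2}>0$ for $s<i\le y$ and $p_{a(y)}<\pbar_{a(y)}$ strictly, hence $\prod_{i=s+1}^{y}p_{a(i)}<\prod_{i=s+1}^{y}\pbar_{a(i)}$, which makes the $j=y+1$ term $P_{y+1}-Q_{y+1}=z_s^1(a)\bigl(\prod_{i=s+1}^{y}p_{a(i)}-\prod_{i=s+1}^{y}\pbar_{a(i)}\bigr)$ strictly negative (when $z_s^1(a)>0$), and all other $P_j-Q_j\le 0$, so again $\Sigma<0$. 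The second and fourth bullets are the mirror images, with the roles of $p$ and $\pbar$, and of $0$-biased and $1$-biased, interchanged, yielding $\Sigma>0$.

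The one step requiring care is the factoring identity for $cost(b)-cost(a)$: what makes it collapse to $(p_{a(t)}-p_{a(s)})$ times a quantity built solely from nonnegative data is that for $s<j\le t$ exactly one of the first $j-1$ slots, namely slot $s$, is touched by the swap; everything after that is monotonicity of products of numbers in $[0,1]$. One degenerate subcase of the two unbiased bullets deserves a remark: if $z_s^1(a)=z_s^0(a)=0$, then position $s$ and all later positions are reached with probability $0$, so $cost(b)=cost(a)$ and $\Sigma=0$; but then the arrangement of positions $\ge s$ is irrelevant to the cost, so the statement is only ever invoked when $z_s^1(a)=z_s^0(a)>0$, which is exactly the case handled above.
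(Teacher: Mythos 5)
Your proof is correct and takes essentially the same route as the paper's: you expand $cost(b)-cost(a)$ over $j\in(s,t]$, factor out $(p_{a(t)}-p_{a(s)})$, and compare the two resulting sums of products using $p_{a(i)}\lessgtr\pbar_{a(i)}$ on the intermediate positions; your $\Sigma=\sum_{j=s+1}^t(P_j-Q_j)$ is exactly the paper's factored expression $z_s^1(a)\sum\prod p - z_s^0(a)\sum\prod\pbar$, and your term-by-term comparison $P_j\le Q_j$ plays the same role as the paper's observation that the first aggregate sum is at most the second. For the unbiased bullets you argue directly (one strictly negative summand) whereas the paper argues by contrapositive ($\Sigma=0$ forces every $p_{a(x)}=\tfrac12$), but these are equivalent. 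One small point in your favor: you flag the degenerate subcase $z_s^1(a)=z_s^0(a)=0$, in which $\Sigma=0$ unconditionally and the stated conclusion can actually fail; the paper's proof silently passes over this (its implication ``second term $=0$ only if all $p_{a(x)}=\tfrac12$'' requires $z_s^1(a)>0$), and while the lemma is only ever invoked downstream in situations where $z_s^1(a)>0$, your explicit remark is a genuine improvement in rigor.
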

\begin{proof}
    By construction, $z_j^0(a) = z_j^0(b)$ and  $z_j^1(a) = z_j^1(b)$ for all $j \leq s$ and $j \geq t+1$. The cost difference is
    \begin{align*}
         0 &\le cost(b) - cost(a) 
        = 
        \left(
        1 + \sum_{j \ge 2} \left( z_j^1(b) + z_j^0(b) \right)
        \right)
        -
        \left(
        1 + \sum_{j \ge 2} \left( z_j^1(a) + z_j^0(a) \right)
        \right)
        \\
        &=
        \sum_{j = s+1}^{t}
        z_{j}^1(b) - z_{j}^1(a) + z_{j}^0(b) - z_{j}^0(a)
        \\
        &= 
        \sum_{j = s+1}^t
        \left(
        z_s^1(a)
        \cdot 
        (
        p_{a(t)} - p_{a(s)}
        )
       \left( \prod_{x = s+1}^{j-1}
        p_{a(x)}\right)
        +
           z_s^0(a)
        \cdot 
        (
        \pbar_{a(t)} - \pbar_{a(s)}
        )
        \left(\prod_{x = s+1}^{j-1}
        \pbar_{a(x)} \right) 
        \right)
        \\
        &= (p_{a(t)} - p_{a(s)})
        \left(
        z_s^1(a) \cdot 
         \left(\sum_{j=s+1}^{t} \prod_{x = s+1}^{j-1} p_{a(x)} \right) 
        -
        z_s^0(a) \cdot \left(\sum_{j=s+1}^{t} \prod_{x = s+1}^{j-1} \pbar_{a(x)} \right) 
        \right)\,.
    \end{align*}
Suppose $p_{a(x)} \le \frac{1}{2}$ for all $s < x < t$. Then for each $x$, $p_{a(x)} \le \pbar_{a(x)}$, and so the first summation of products is at most the second summation of products. We can conclude then:
\begin{itemize}
\item If $a$ is $0$-biased at $s$, then $z_s^0(a) > z_s^1(a)$, and so the second term is strictly less than $0$. Since the difference in cost must be non-negative, we have $p_{a(s)} \ge p_{a(t)}$. 
\item If instead $a$ is unbiased at $s$, then  $z_s^1(a) = z_s^0(a)$, and so the second term is at most $0$. 
Since the difference in cost must be non-negative, it is possible that $p_{a(s)} < p_{a(t)}$ only if the second term is equal to $0$. This can only occur if $p_{a(x)} = \frac{1}{2}$ for all $s < x < t$. 
\end{itemize}
The parts of the claim when $p_{a(x)} \ge \frac{1}{2}$ for all $s < x < t$ follow from a symmetric argument.
\end{proof}

\begin{corollary}[Optimal ordering is sorted across blocks]
\label{cor:sorted_across_block}
There exists an optimal ordering $a$ with block sequence $[B_1, \dots, B_q]$ where the following holds:
\begin{itemize}
    \item The coins in blocks $[B_1, \dots, B_{q-1}]$ can be partitioned into a set $S_+$ and a set $S_-$, where all coins in $S_+$ have probability of heads more than $\frac{1}{2}$, and all coins in $S_-$ have probability of heads less than $\frac{1}{2}$.
    \item The ordering $a$ restricted to the coins in $S_+$ are sorted in $a$ by decreasing probability of heads. 
    \item The ordering $a$ restricted to the coins in $S_-$ are sorted in $a$ by \emph{increasing} probability of heads.
\end{itemize}
\end{corollary}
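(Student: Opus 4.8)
The plan is to take the optimal ordering $a$ with block sequence $[B_1,\dots,B_q]$ furnished by \Cref{claim:unbiasedblockform}, and show that \emph{this} ordering already satisfies all three bullets; no further modification is needed, because optimality will force the monotonicity via \Cref{claim:multiblock_monotonic}. For the first bullet, I would combine \Cref{claim:nonfinal_probatleasthalf} and \Cref{claim:unbiasedblockform}: in $a$, every coin of a non-final $0$-biased block has probability of heads $>\frac{1}{2}$, every coin of a non-final $1$-biased block has probability of heads $<\frac{1}{2}$, and every non-final unbiased block is a singleton whose single coin has probability $\neq\frac{1}{2}$. Hence defining $S_+$ to be the coins in $[B_1,\dots,B_{q-1}]$ with probability of heads $>\frac{1}{2}$ and $S_-$ those with probability of heads $<\frac{1}{2}$ gives a partition of all coins appearing in non-final blocks.

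For the second bullet, list the coins of $S_+$ in the order they occur in $a$, say at positions $s_1<s_2<\dots<s_m$, and prove $p_{a(s_i)}\ge p_{a(s_{i+1})}$ for each $i$; transitivity then yields the claimed decreasing order. To deduce $p_{a(s_i)}\ge p_{a(s_{i+1})}$ I would apply \Cref{claim:multiblock_monotonic} with $s=s_i$ and $t=s_{i+1}$, using optimality of $a$ to get $cost(a)\le cost(b)$ for the swapped ordering $b$. Two hypotheses must be checked. First, every position $x$ with $s_i<x<s_{i+1}$ lies in a non-final block (it precedes $s_{i+1}$, which precedes or equals the last non-final position), so $a(x)\in S_+\cup S_-$; since $s_i,s_{i+1}$ are consecutive among $S_+$ positions, in fact $a(x)\in S_-$, so $p_{a(x)}<\frac{1}{2}$. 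Second, the block containing position $s_i$ holds the coin $a(s_i)$ with $p_{a(s_i)}>\frac{1}{2}$, so by \Cref{claim:nonfinal_probatleasthalf} it is not $1$-biased; thus $a$ is $0$-biased or unbiased at $s_i$. If $0$-biased, the first bullet of \Cref{claim:multiblock_monotonic} immediately gives $p_{a(s_i)}\ge p_{a(s_{i+1})}$. If unbiased, then by \Cref{claim:unbiasedblockform} that block is exactly $\{a(s_i)\}$; when there is at least one intermediate position the third bullet of \Cref{claim:multiblock_monotonic} applies (an intermediate coin lies in $S_-$, so has probability $<\frac12$), again giving the inequality, and when $s_{i+1}=s_i+1$ I rule this out: $a$ unbiased at $s_i$ with $p_{a(s_i)}>\frac12$ forces position $s_i+1$ to be $1$-biased, yet $a(s_{i+1})\in S_+$ has probability $>\frac12$ and lies in a non-final block, contradicting \Cref{claim:nonfinal_probatleasthalf}. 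The third bullet is proved symmetrically: list $S_-$ in order of occurrence, note each such position is not $0$-biased (by \Cref{claim:nonfinal_probatleasthalf}), so it is $1$-biased or a singleton unbiased block, all intermediate coins lie in $S_+$ (probability $\ge\frac12$), and the second/fourth bullets of \Cref{claim:multiblock_monotonic} give $p_{a(s_i)}\le p_{a(s_{i+1})}$, with the degenerate adjacent case again excluded by a $0$-biased-versus-\Cref{claim:nonfinal_probatleasthalf} contradiction.

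The main obstacle I anticipate is the bookkeeping around unbiased blocks: ensuring that the strict condition in the unbiased cases of \Cref{claim:multiblock_monotonic} (``$p_{a(y)}<\frac12$ for some $s<y<t$'', resp.\ ``$>\frac12$'') is genuinely met, and cleanly disposing of the degenerate case where two consecutive $S_+$ (or $S_-$) coins are adjacent in $a$, which is precisely where the $1$-biased/$0$-biased contradiction with \Cref{claim:nonfinal_probatleasthalf} is needed. Apart from that, the argument is a routine case analysis feeding the hypotheses of \Cref{claim:multiblock_monotonic}, since optimality supplies $cost(a)\le cost(b)$ for every single swap.
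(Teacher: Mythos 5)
Your proposal is correct and takes essentially the same route as the paper: start from the optimal ordering of \Cref{claim:unbiasedblockform}, use \Cref{claim:nonfinal_probatleasthalf} to get the $S_+/S_-$ partition, and deduce monotonicity from \Cref{claim:multiblock_monotonic}. The paper's proof is terse (three sentences), and you have correctly filled in the bookkeeping it leaves implicit, including the careful treatment of unbiased singleton blocks and the exclusion of the degenerate adjacent case via a bias contradiction with \Cref{claim:nonfinal_probatleasthalf}.
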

\begin{proof}
Take the ordering guaranteed by \Cref{claim:unbiasedblockform}.
    By \Cref{claim:nonfinal_probatleasthalf}  we can partition the coins in blocks $[B_1, \dots, B_{q-1}]$ into $S_+$ and $S_-$. Note that no coins in $S_+$ or $S_-$ have probability of heads equal to $\frac{1}{2}$. The monotonicity of coins in $S_+$ and $S_-$ then follows by \Cref{claim:multiblock_monotonic}.
\end{proof}

\section{Constructing an exact algorithm}
\label{sec:optimal_algo}
In this section, we show how to generate an optimal ordering from small modifications to the greedy ordering. This relies on the structural results from \Cref{sec:structural}.

\subsection{Monotonicity almost everywhere}

In the greedy ordering, ``monotonicity across blocks'' holds \emph{even for the final block}. A coin in a $0$-biased block has probability of heads at least that of every following coin. Similarly, a coin in a $1$-biased block has probability of heads at most that of every following coin. Surprisingly, it turns out that the optimal ordering guaranteed by \Cref{claim:unbiasedblockform} \emph{also} has this property, except possibly at one location.

\begin{claim}[There exists an optimal ordering that is almost greedy]
\label{claim:optimalordering_onespotfromgreedy}
    There is an optimal ordering $a$ with block sequence $[B_1, \dots, B_q]$ where the following statement holds at each position $x$ except for the position of the last coin of $B_{q-1}$:
    \begin{itemize}
        \item Suppose $a$ is $0$-biased at position $x$. Then for all positions $x' > x$, $p_{a(x)} \ge p_{a(x')}$ .
        \item Suppose $a$ is $1$-biased at position $x$. Then for all positions $x' > x$, $p_{a(x)} \le p_{a(x')}$.
        \item Suppose $a$ is unbiased at position $x$ and $p_{a(x)} > \frac{1}{2}$. Then for all positions $x' > x$, $p_{a(x)} \ge p_{a(x')}$. 
        \item Suppose $a$ is unbiased at position $x$ and $p_{a(x)} < \frac{1}{2}$. Then for all positions $x' > x$, $p_{a(x)} \le p_{a(x')}$.
    \end{itemize}
Furthermore, if the last block $B_q$ is unbiased, then the above statement holds at all positions.
\end{claim}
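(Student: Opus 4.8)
The plan is to start from the optimal ordering $a$ furnished by \Cref{cor:sorted_across_block}: its non-final coins split into $S_+$ (heads probability $>\frac12$, occurring in $a$ in decreasing order) and $S_-$ (heads probability $<\frac12$, occurring in increasing order), and within each block the coins are sorted by \Cref{cor:sorted_within_block}. I would first reduce the statement to a single inequality. Positions inside $B_q$ need no argument: in a $0$- or $1$-biased $B_q$ the conclusion is \Cref{cor:sorted_within_block}, while if $B_q$ is unbiased with more than one coin then by \Cref{claim:unbiasedblockform} every coin of $B_q$ has heads probability exactly $\frac12$, so none of the four hypotheses is ever triggered there, and if $B_q$ is a single coin there is no later position. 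For two positions $x<x'$ both in non-final blocks the inequality is also automatic: by \Cref{claim:nonfinal_probatleasthalf,claim:unbiasedblockform} every non-final coin lies in $S_+\cup S_-$, so if $a$ is $0$-biased or unbiased-$>\frac12$ at $x$ then $p_{a(x)}>\frac12$ and $p_{a(x')}$ is either $<\frac12<p_{a(x)}$ (if $a(x')\in S_-$) or $\le p_{a(x)}$ by the ``$S_+$ decreasing'' clause of \Cref{cor:sorted_across_block}; the ascent case is symmetric. Hence the whole claim reduces to: for every non-final position $x$ other than the last coin $\ell$ of $B_{q-1}$, if $a$ is $0$-biased or unbiased-$>\frac12$ at $x$ then $p_{a(x)}\ge M\defeq\max_{i\in B_q}p_{a(i)}$, and symmetrically $p_{a(x)}\le m\defeq\min_{i\in B_q}p_{a(i)}$ in the two ``ascent'' cases. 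I treat the descent statement; the ascent one is symmetric.

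Next I would dispose of the ``furthermore'' clause together with the trivial case $|B_q|=1$. If $B_q$ is unbiased then, since adjacent blocks have different bias types, $B_{q-1}$ is $0$- or $1$-biased. If $|B_q|\ge 2$ all its coins equal $\frac12$, so the last coin $c$ of $B_{q-1}$ has $p_c>\frac12$ (if $B_{q-1}$ is $0$-biased) or $p_c<\frac12$ (if $1$-biased), and monotonicity holds at $c$ as well. If $B_q=\{a(n)\}$, then the last coin of $B_{q-1}$ is $a(n-1)$, and \Cref{claim:multiblock_monotonic} applied to the swap of positions $n-1$ and $n$ — between which there are no coins, so the ``$\le\frac12$'' side-condition is vacuous — forces $p_{a(n-1)}\ge p_{a(n)}$ (or $\le$, when $B_{q-1}$ is $1$-biased) from $cost(a)\le cost(b)$; here $B_{q-1}$ cannot itself be unbiased when $B_q$ is (that would force $p_{a(n-1)}=\frac12$, contradicting \Cref{claim:unbiasedblockform}), so one of the first two bullets of \Cref{claim:multiblock_monotonic} does apply. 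Thus when $B_q$ is unbiased there is no exception, and from now on $B_q$ is $0$- or $1$-biased.

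For the main case I would argue by contradiction. Suppose some non-final position $x\ne\ell$ is $0$-biased or unbiased-$>\frac12$ yet $p_{a(x)}<M$; choose $x$ earliest such, and let $x'\in B_q$ realize $p_{a(x')}=M$. Since $S_+$ occurs in decreasing order (\Cref{cor:sorted_across_block}), every non-final descent position $i\ge x$ has $p_{a(i)}\le p_{a(x)}<M$; moreover the earliest violator cannot be $\ell$ itself (it would then be the \emph{only} non-final descent position with heads probability $<M$, and the reduced inequality would already hold), so $x$ lies strictly before $\ell$. Let $b$ be obtained from $a$ by extracting the coin $a(x')$ and reinserting it at position $x$ (shifting $a(x),\dots,a(x'-1)$ one place later). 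A short telescoping computation generalizing \Cref{claim:swap_order_by_one} gives
\[
  cost(b)-cost(a)=\sum_{i=x}^{x'-1}\bigl(p_{a(x')}-p_{a(i)}\bigr)\bigl(z_i^1(a)-z_i^0(a)\bigr),
\]
which is $\ge 0$ by optimality of $a$. I would then show this sum is in fact $<0$. Every index $i\ge x$ that is a $0$-biased position of a non-final block contributes a strictly negative term, because $p_{a(i)}\le p_{a(x)}<M=p_{a(x')}$ while $z_i^1(a)-z_i^0(a)<0$, and at least one such index (namely $x$) occurs; non-final unbiased singletons contribute $0$. The danger is the remaining indices — the $1$-biased positions of non-final blocks between $x$ and $B_q$, and the positions of $B_q$ strictly before $x'$ — where $z_i^1(a)-z_i^0(a)\ge 0$ and the first factor is also $\ge 0$; these contribute a non-negative total that must be shown to be dominated. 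The key structural input is that every \emph{non-final} block already has all its coins on one side of $\frac12$ (\Cref{claim:nonfinal_probatleasthalf,claim:unbiasedblockform}), so the only coins with heads probability in $(\frac12,p_{a(x)})$ that can occur after $x$ lie inside $B_q$, hence strictly after $\ell$; combined with \Cref{claim:multiblock_monotonic} applied across the $1$-biased stretches (all of whose coins are $\le\frac12$), this should localize the single possible failure at $\ell$ and prevent it from propagating backward, yielding the desired strict inequality $cost(b)<cost(a)$.

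The step I expect to be the main obstacle is precisely this last one: amortizing the non-negative contributions of the $1$-biased non-final blocks and of the ``head'' of $B_q$ against the negative contributions of the $0$-biased non-final blocks, perhaps after refining the choice of which coin of $B_q$ to move and how far. Unlike the situation in \Cref{cor:sorted_across_block}, where \Cref{claim:multiblock_monotonic} could always clear an intermediate block because all its coins lay on one side of $\frac12$, the final block $B_q$ may genuinely begin with coins strictly between $\frac12$ and $p_{a(\ell)}$, which defeats the direct swap argument exactly at $\ell$; showing that this lone irregularity cannot leak to any earlier position is the crux.
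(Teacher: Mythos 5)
Your reductions (handling unbiased $B_q$, locating the possible exception at the last coin of $B_{q-1}$, restricting to violations between a non-final $S_+$ position and a coin of $B_q$) are sound and match the paper's setup. The gap is exactly where you flag it. Your cyclic-shift cost
\begin{align*}
cost(b)-cost(a)=\sum_{i=x}^{x'-1}\bigl(p_{a(x')}-p_{a(i)}\bigr)\bigl(z_i^1(a)-z_i^0(a)\bigr)
\end{align*}
has genuinely sign-alternating terms (negative on $0$-biased positions of $S_+$, positive on $1$-biased positions of $S_-$ and on the head of $B_q$), and because each term carries its own prefactor $p_{a(x')}-p_{a(i)}$, nothing can be factored out of the sum; the amortization you identify as the crux has no straightforward resolution from the lemmas you cite, and you do not supply one.

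The paper's proof avoids the problem by two choices you did not make. It uses the \emph{pairwise swap} cost formula from the proof of \Cref{claim:multiblock_monotonic} (a swap of positions $k$ and $x'$, not a reinsertion), so the cost change factors as the single scalar $p_{a(x')}-p_{a(k)}$ times one bracket, and the per-term prefactors that obstruct your sum vanish. And instead of the \emph{earliest} violator, it supposes for contradiction that \emph{two} positions of $S_+$ violate; since $S_+$ is decreasing, these may be taken to be the \emph{last two} coins $k<\ell$ of $S_+$. Then every coin strictly between $k$ and $\ell$ lies in $S_-$ (heads probability $<\tfrac12$), and every coin at a position in $\{\ell,\dots,x'-1\}$ has heads probability at most $p\defeq p_{a(k)}$, where $x'$ is the earliest violating position in $B_q$. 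Splitting the bracket at $\ell$: the part over $j\in[k+1,\ell-1]$ is nonpositive because there $\prod p_{a(\cdot)}\le\prod\pbar_{a(\cdot)}$ while $z_k^1\le z_k^0$; the part over $j\in[\ell,x']$ is bounded by a geometric series in $p>\tfrac12$, yielding a quantity proportional to $z_\ell^1\bigl(1-p^{x'-\ell+1}\bigr)-z_\ell^0\bigl(1-\pbar^{\,x'-\ell+1}\bigr)$, which is strictly negative since $z_\ell^1\le z_\ell^0$ and $p^t>\pbar^{\,t}$. As $p_{a(x')}>p_{a(k)}$, the swap strictly lowers cost, a contradiction, so at most one position in $S_+$ violates and it must be $\ell$, the last coin of $B_{q-1}$. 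The pairwise swap together with the choice of the last two $S_+$ positions is the idea that closes your gap.
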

\begin{proof}
Consider an optimal ordering $a$ guaranteed by
\Cref{cor:sorted_across_block}. Suppose it has block sequence  $[B_1, \dots, B_q]$, and the coins in the non-final blocks are partitioned into $S_+$ and $S_-$. $S_+$ and $S_-$ are already sorted by probability of heads, and each coin in $S_+$ has probability of heads larger than every coin in $S_-$. So if a violation of the statement occurs, the violating position $x'$ must be in the final block.

Suppose the final block $B_q$ is unbiased. In this case we show that the statement holds at all positions:
\begin{itemize}
    \item Suppose $B_q$ contains more than one coin. Then all coins in $B_q$ have probability of heads $\frac{1}{2}$. The statement is vacuous for any position $x \in B_q$. Moreover, $\frac{1}{2}$ is smaller than the probability of heads of all coins in $S_+$ and larger than the probability of heads of all coins in $S_-$, so there are no violations for any $x \notin B_q$ and $x' \in B_q$. So the statement holds at all positions.
    \item Suppose $B_q$ contains just one coin ($x'$, at position $n$). Suppose the coin at position $n-1$ is in $S_+$. (A symmetric argument holds if it is in $S_-$). Since $B_q$ is unbiased, this coin must belong to a $0$-biased block; by \Cref{claim:multiblock_monotonic}, it has probability of heads at least $p_{a(n)}$.
    By \Cref{cor:sorted_across_block}, all coins in $S_+$ have probability of heads at least $p_{a(n)}$.
    Now consider the last coin in $S_-$, which is in position at most $n-2$. Note that the coin at position $n-1$ has probability of heads at least $\frac{1}{2}$. So then by \Cref{claim:multiblock_monotonic}, the last coin in $S_-$ has probability of heads at most $p_{a(n)}$. By \Cref{cor:sorted_across_block}, all coins in $S_-$ have probability of heads at most $p_{a(n)}$. So, the statement holds for any $x < n = x'$.
\end{itemize} 

Next, we consider when the final block $B_q$ is $1$-biased. (A symmetric argument holds when $B_q$ is $0$-biased.) First of all, by \Cref{cor:sorted_within_block}, the coins in $B_q$ are sorted in increasing probability of heads. Moreover, by \Cref{claim:multiblock_monotonic}, all coins in $S_-$ have probability of heads at most that of any coin in $B_q$. So if there is a violation of the statement, it occurs between a position $x$ in $S_+$ and $x'$ in $B_q$.

Suppose for the sake of contradiction there are \emph{two} positions $k,\ell$ in $S_+$ where the statement does not hold. 
Let $k < \ell$; then $p_{a(k)} \ge p_{a(\ell)}$.
So, there exists a position $x'$ in $B_q$ where $p_{a(x')} > p_{a(k)} \ge p_{a(\ell)}$. 
Since the coins in $S_+$ are sorted in \emph{decreasing} probability of heads, we can take $k,\ell$ to be the last two coins in $S_+$; so $p_{a(x)} < \frac{1}{2}$ for all $k < x < \ell$.
We choose $x'$ to be the earliest position where the violation with $k$ and $\ell$ occurs.

We show that swapping positions $k$ and $x'$ decreases the cost, which contradicts our claim of an optimal ordering. Let $b$ be the ordering starting from $a$ but swapping $k$ and $x'$. Then (as in the proof of \Cref{claim:multiblock_monotonic}),
\begin{align*}
    cost(b) - cost(a) &= 
    (p_{a(x')} - p_{a(k)})
        \left(
        z_k^1(a) \cdot 
         \left(\sum_{j=k+1}^{x'} \prod_{x = k+1}^{j-1} p_{a(x)} \right) 
        -
        z_k^0(a) \cdot \left(\sum_{j=k+1}^{x'} \prod_{x = k+1}^{j-1} \pbar_{a(x)} \right) 
        \right)\,.
\end{align*}
Since $p_{a(x')} > p_{a(k)}$, the sign of the difference in cost depends on the sign of the term in parentheses. We split this term into two parts: the first with summation terms where $j \in [k+1,\dots, \ell-1]$, and the second with summation terms where $j \in [\ell,\dots, x']$. Recall that at positions $k,\ell$, the ordering $a$ is \emph{not} $1$-biased; i.e.  $z^1_k(a) \le z^0_k(a)$ and $z^1_\ell(a) \le z^0_\ell(a)$.
\begin{itemize}
    \item The first part of the expression involves terms $p_{a(x)}$ for $k+1 \le x \le \ell - 2$. For all these terms, $p_{a(x)} < \frac{1}{2}$; so
\begin{align*}
        z_k^1(a) \cdot 
         \left(\sum_{j=k+1}^{\ell-1} \prod_{x = k+1}^{j-1} p_{a(x)} \right) 
        -
        z_k^0(a) \cdot \left(\sum_{j=k+1}^{\ell-1} \prod_{x = k+1}^{j-1} \pbar_{a(x)} \right) 
        \le 0\,.
    \end{align*}
    \item The second part of the expression also involves terms $p_{a(x)}$ for $\ell - 1 \le x \le x' - 1$. Since position $x'$ has the first coin where $x' > \ell$ such that $p_{a(x')} \ge p_{a(k)}$, we have $p_{a(x)} \le p_{a(k)}$ for all these terms.  We upper-bound the second part, setting $p \defeq p_{a(k)}$ for convenience:
 \begin{align*}
         &z_k^1(a) \cdot 
         \left(\sum_{j=\ell}^{x'} \prod_{x = k+1}^{j-1} p_{a(x)} \right) 
        -
        z_k^0(a) \cdot \left(\sum_{j=\ell}^{x'} \prod_{x = k+1}^{j-1} \pbar_{a(x)} \right) 
        \\
        &=
        z_k^1(a) \cdot \left( \prod_{x=k+1}^{\ell-1} p_{a(x)} \right) 
        \left( 1 + \sum_{j=\ell+1}^{x'} \prod_{x=\ell}^{j-1} p_{a(x)} \right)
        -
        z_k^0(a) \cdot \left( \prod_{x=k+1}^{\ell-1} \pbar_{a(x)} \right) 
        \left( 1 + \sum_{j=\ell+1}^{x'} \prod_{x=\ell}^{j-1} \pbar_{a(x)} \right)
        \\
          &\le 
        z_k^1(a) \cdot \left( \prod_{x=k+1}^{\ell-1} p_{a(x)} \right) 
        \left( 1 + \sum_{j=\ell+1}^{x'} \prod_{x=\ell}^{j-1} p \right)
        -
        z_k^0(a) \cdot \left( \prod_{x=k+1}^{\ell-1} \pbar_{a(x)} \right) 
        \left( 1 + \sum_{j=\ell+1}^{x'} \prod_{x=\ell}^{j-1} \pbar \right)
        \\
         &= 
        z_k^1(a) \cdot \left( \prod_{x=k+1}^{\ell-1} p_{a(x)} \right) 
        \cdot \frac{1-p^{x'-\ell+1}}{\pbar}
        -
        z_k^0(a) \cdot \left( \prod_{x=k+1}^{\ell-1} \pbar_{a(x)} \right) 
        \cdot \frac{1-\pbar^{x'-\ell+1}}{p}
        \\
        &= 
        z_{k}^1(a) \cdot \left( \prod_{x=k}^{\ell-1} p_{a(x)} \right) \cdot \frac{1-p^{x'-\ell+1}}{p \cdot \pbar}
        -
        z_{k}^0(a) \cdot \left( \prod_{x=k}^{\ell-1} \pbar_{a(x)} \right) \cdot \frac{1-\pbar^{x'-\ell+1}}{p \cdot \pbar}        \\
        &= \frac{1}{p \cdot \pbar} 
        \left( 
        z_{\ell}^1(a) \cdot (1-p^{x'-\ell+1})
        -
        z_{\ell}^0(a) \cdot (1-\pbar^{x'-\ell+1})
        \right)
        \,.
    \end{align*}
The inequality holds because $p_{a(x)} \le p_{a(k)}$ for all $\ell \le x < x'$. Since $p = p_{a(k)} > \frac{1}{2}$, we have $p^{t} > \pbar^t$ for any $t \ge 1$. Since $z_\ell^1(a) \le z_\ell^0(a)$, the first term is strictly less than the second term, so the expression is negative.
\end{itemize}
The sign of $cost(b) - cost(a)$ is the sign of the sum of these two terms. So $cost(b) - cost(a) < 0$. This contradicts our assumption that $a$ was an optimal ordering.

Finally, we remark on which block the last coin of $S_+$ is in. Since $B_q$ is $1$-biased, $B_{q-1}$ must be $0$-biased or unbiased. If it is $0$-biased, we are done. Otherwise, $B_{q-1}$ is unbiased, and so it contains one coin. If the coin in $B_{q-1}$ has probability of heads less than $\frac{1}{2}$, then $B_q$ is $0$-biased, which is a contradiction. So the coin in $B_{q-1}$ must have probability of heads larger than $\frac{1}{2}$, and so belongs to $S_+$. So the only coin that can violate the statement in the claim is the last coin in $S_+$, which is also the last coin in $B_{q-1}$.
\end{proof}

\subsection{Handling unbiased blocks}
 
\Cref{claim:optimalordering_onespotfromgreedy} implies there is an optimal ordering where at all positions \emph{except one}, the correct choice of coin is the remaining coin with largest or smallest probability of heads. This of course depends on the bias: the largest probability when $0$-biased, and the smallest probability when $1$-biased.

What happens for unbiased blocks? Since this optimal ordering is the same as any one guaranteed by \Cref{claim:unbiasedblockform}, any unbiased block that is not the last block contains only one coin.  
Let the block sequence of the ordering be $[B_1, \dots, B_q]$.
By \Cref{claim:optimalordering_onespotfromgreedy}, unless this block is $B_{q-1}$, it uses either the smallest or largest probability of heads. We argue that \Cref{claim:optimalordering_onespotfromgreedy} still holds if we assume unbiased blocks (that are not next-to-last) always use the coin with \emph{largest} remaining probability of heads. We do this to match our choice of convention in \Cref{alg:greedy}.
\begin{claim}
\label{claim:nearlygreedy_withunbiased}
    There is an optimal ordering $a$ with block sequence $[B_1, \dots, B_q]$ satisfying the properties of \Cref{claim:optimalordering_onespotfromgreedy}, \emph{and} each unbiased block $B_k$ (for $k \ne q-1$) uses the coins with largest probability of heads among coins in $[B_k, \dots, B_q]$.
\end{claim}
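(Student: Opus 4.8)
The plan is to start from an optimal ordering $a$ guaranteed by \Cref{claim:optimalordering_onespotfromgreedy} and repair it, one unbiased block at a time, by cost-neutral adjacent swaps. First I would note which unbiased blocks can possibly fail the desired condition: by \Cref{claim:unbiasedblockform} every unbiased block other than $B_q$ is a singleton $[a(x)]$ with $p_{a(x)} \ne \frac12$, and if $p_{a(x)} > \frac12$ the third bullet of \Cref{claim:optimalordering_onespotfromgreedy} already makes $a(x)$ a largest-probability coin among positions $\ge x$, while if $p_{a(x)} < \frac12$ the fourth bullet makes it a smallest one. So the blocks to repair are exactly the singleton unbiased blocks $B_k$ with $k \notin \{q-1, q\}$ and $p_{a(x)} < \frac12$; I would process these left to right, and since a repair at position $x$ never touches positions $< x$ it suffices to analyze one step.

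Fix such a block $B_k = [a(x)]$ with $k \le q-2$. Unbiasedness at $x$ and $p_{a(x)} < \frac12$ force $z^1_{x+1}(a) < z^0_{x+1}(a)$, so $B_{k+1}$ is $0$-biased; it is non-final since $k+1 \le q-1$, so by \Cref{claim:nonfinal_probatleasthalf} we have $p_{a(x+1)} > \frac12$. Swap positions $x$ and $x+1$ to obtain $b$: since $a$ is unbiased at $x$, \Cref{claim:swap_order_by_one} gives $cost(b) = cost(a)$, so $b$ is still optimal. A short computation of the $z$-values shows they are unchanged for all $j \le x$ and all $j \ge x+2$, that position $x$ stays unbiased (now carrying $a(x+1)$), and that position $x+1$ flips from $0$-biased to $1$-biased; in particular no new unbiased block appears. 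In the generic situation where position $x+1$ is not the exceptional position of \Cref{claim:optimalordering_onespotfromgreedy}, its first bullet gives $p_{a(x+1)} \ge p_{a(x')}$ for all $x' > x+1$, so together with $p_{a(x+1)} > \frac12 > p_{a(x)}$ the coin $a(x+1)$ is a largest-probability coin among all coins at positions $\ge x$, i.e., the repaired block uses the largest remaining coin. The four monotonicity bullets survive in $b$: at positions $< x$ and $\ge x+2$ they are inherited because the multiset of probabilities at later positions is unchanged, at position $x$ the ``unbiased, $>\frac12$'' bullet holds since $p_{b(x)} = p_{a(x+1)}$ dominates everything later, and at position $x+1$ the ``$1$-biased'' bullet holds since $p_{b(x+1)} = p_{a(x)}$ was a smallest probability among positions $\ge x$.

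The remaining, delicate case --- which I expect to be the main obstacle --- is $k = q-2$ with $B_{q-1} = [a(x+1)]$ a singleton, so $a(x+1)$ is precisely the position \Cref{claim:optimalordering_onespotfromgreedy} says nothing about; if moreover $B_q$ is unbiased the ``furthermore'' clause still gives what we want, so assume $B_q$ is $1$-biased (it cannot share the bias of $B_{q-1}$). After the swap, the newly $1$-biased position $x+1$ merges with the old $B_q$, so $b$ has block sequence $[\dots, [a(x+1)], [a(x), a(x+2), \dots, a(n)]]$ and the repaired unbiased singleton $[a(x+1)]$ has itself become the new $B_{q-1}$, hence is exempt from the condition; the monotonicity bullets at every other position hold as before, using that $a(x)$ was a smallest remaining coin. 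Iterating over all problematic blocks gives an optimal ordering with the claimed form, and the only genuine care needed is the bookkeeping of block indices around the final two blocks in this boundary case.
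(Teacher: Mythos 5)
Your proof is correct and takes essentially the same approach as the paper: start from an ordering satisfying \Cref{claim:optimalordering_onespotfromgreedy} and repair each offending unbiased singleton (with $p_{a(x)}<\tfrac12$) by a cost-neutral adjacent swap, with the same case split around $B_{q-2}$ (your ``delicate case'' is exactly the paper's case of $B_{q-1}$ being a singleton). You are, if anything, more explicit than the paper in checking that the monotonicity bullets of \Cref{claim:optimalordering_onespotfromgreedy} survive each swap, though the sub-case where $B_q$ is unbiased could be spelled out a bit more than ``the furthermore clause still gives what we want.''
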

\begin{proof}
Start with an ordering $a$ generated by \Cref{claim:optimalordering_onespotfromgreedy}.
    We prove this in cases:
\begin{itemize}
    \item Suppose the unbiased block $B_i$ has $i < q-2$, and uses a coin with probability of heads less than $\frac{1}{2}$. Then the next block $B_{i+1}$ is $0$-biased, and its first coin has the largest probability of heads among all coins in $[B_{i+1}, \dots, B_q]$.
    By \Cref{claim:swap_order_by_one}, we may swap the coin in $B_i$ with the first coin in $B_{i+1}$ without changing the cost; now the coin in $B_i$ has the largest probability of heads among coins in the rest of the ordering.
    \item Suppose the unbiased block is $B_{q-2}$, \emph{and} $B_{q-1}$ has more than one coin. Then the argument in the last case works here as well. 
    \item Suppose the unbiased block is $B_{q-2}$, \emph{and} $B_{q-1}$ has exactly one coin. If $B_{q-2}$ uses a coin with probability less than $\frac{1}{2}$, then $B_{q-1}$ is $0$-biased and uses a coin with probability more than $\frac{1}{2}$. We may switch the two coins without changing the cost; then the ordering in $B_{q-1}$ is now $1$-biased.
    \begin{itemize}
        \item If $B_q$ is also $1$-biased, then the two blocks ``merge'' into a single block after switching the two coins. Then $B_{q-2}$ is the ``next-to-last'' block, and is exempt from the claim.
        \item If $B_q$ is unbiased, then by \Cref{claim:optimalordering_onespotfromgreedy}, the optimal ordering can be greedily chosen. The number of blocks remains $q$ after switching the two coins, so the new coin in $B_{q-2}$ is larger than all subsequent coins in the ordering.
    \end{itemize}
    \item Suppose the unbiased block $B_i$ has $i = q$. Then it trivially uses the coin with smallest remaining probability of heads, unless it has more than one coin. If so, this is also true, because  by \Cref{claim:unbiasedblockform}, all of its coins have probability of heads equal to $\frac{1}{2}$.\qedhere
\end{itemize}
\end{proof}

\subsection{A polynomial-time algorithm}
Putting all of this together, \Cref{claim:nearlygreedy_withunbiased} guarantees an optimal ordering can be constructed by applying this local rule at every position except one:
\begin{quote}\itshape
    If the ordering is $1$-biased, select the remaining coin with smallest probability of heads. \newline
    Otherwise, select the remaining coin with \emph{largest} probability of heads.
\end{quote}
This is exactly the rule of the greedy algorithm in \Cref{alg:greedy}! As a result, there is a $O(n^3)$ algorithm to find the optimal ordering, which we state as \Cref{alg:modifiedgreedy}:

\begin{tabular}{|p{6.5in}}{\underline{\textsc{Modified Greedy Algorithm (\Cref{alg:modifiedgreedy})}}}
\alglabel{alg:modifiedgreedy}

Assume coins are in increasing order; i.e. $p_1 \le \dots \le p_{n}$.

For all $j \in [1, \dots, n]$, generate the following orderings:

\quad \textsc{(Greedy)} For the first $(j-1)$ coins:

\quad \quad If the ordering is $1$-biased, choose the remaining coin with smallest probability of heads.

\quad \quad Otherwise, choose the remaining coin with \emph{largest} probability of heads.

\quad \textsc{(Brute Force)} For every coin $x$ in the set of remaining coins, let $c_{j,x}$ be the following ordering:

\quad \quad Choose coin $x$. 

\quad \quad \textsc{(Greedy)} For the remaining $(n-j)$ coins:

\quad \quad \quad  If the ordering is $1$-biased, choose the remaining coin with smallest probability of heads.

\quad \quad \quad Otherwise, choose the remaining coin with largest probability of heads.

Compute the expected cost for all $c_{j,x}$, and return an ordering $c$ that minimizes this cost.
\end{tabular}
\begin{theorem}
    \Cref{alg:modifiedgreedy} exactly solves the Unanimous Vote problem and runs in time $O(n^3)$.
\end{theorem}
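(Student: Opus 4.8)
The plan is to derive correctness almost entirely from \Cref{claim:nearlygreedy_withunbiased}, and to bound the running time by direct inspection of \Cref{alg:modifiedgreedy}.

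For correctness, I would first fix, using \Cref{claim:nearlygreedy_withunbiased}, an optimal ordering $a$ with block sequence $[B_1,\dots,B_q]$ that is \emph{almost greedy}: it obeys the local rule ``pick the smallest remaining probability of heads when $1$-biased, otherwise the largest'' at every position except possibly the last position of $B_{q-1}$, and in addition every unbiased block $B_k$ with $k\neq q-1$ uses the largest remaining probability. If $q=1$, the unique block contains position $1$ and is therefore unbiased, so by \Cref{claim:unbiasedblockform} every coin has probability $\tfrac12$ and every ordering (in particular the one returned) is optimal. Otherwise, let $j^\star$ be the position of the last coin of $B_{q-1}$; since $B_1,\dots,B_{q-1}$ and $B_q$ are all nonempty we have $1\le j^\star\le n-1$, so $j^\star$ is among the indices $j$ considered by the algorithm. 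The heart of the argument is to show that the candidate ordering $c_{j^\star,x}$, where $x$ is any coin not used in the greedy prefix whose probability equals $p_{a(j^\star)}$, satisfies $cost(c_{j^\star,x})=cost(a)=\mathrm{OPT}$.

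To show this I would use that, by \eqref{eq:cost_express}, $cost(\cdot)$ is a function of the sequence of probabilities of heads alone, and then argue by induction on the position that $c_{j^\star,x}$ realizes the sequence $\bigl(p_{a(1)},\dots,p_{a(n)}\bigr)$. For a position $x<j^\star$: the probabilities already committed are $p_{a(1)},\dots,p_{a(x-1)}$ (by the inductive hypothesis), so the algorithm sees exactly the bias $a$ has at $x$; and by the monotonicity properties in \Cref{claim:optimalordering_onespotfromgreedy} together with the unbiased-block convention of \Cref{claim:nearlygreedy_withunbiased}, $p_{a(x)}$ is exactly the largest remaining probability when $a$ is $0$-biased or unbiased at $x$, and the smallest remaining probability when $a$ is $1$-biased at $x$ --- precisely the greedy choice. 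At position $j^\star$ the brute-force step places $x$, which has probability $p_{a(j^\star)}$. For a position $x>j^\star$ (so $x$ lies in $B_q$): the same monotonicity argument applies, the only new case being when $a$ is unbiased at $x$, which forces $B_q$ to be unbiased, and then either all remaining coins have probability $\tfrac12$ (so the greedy choice has probability $\tfrac12=p_{a(x)}$) or $x=n$ is the last position (where the single remaining choice reconstructs $p_{a(n)}$). Hence $c_{j^\star,x}$ has the same probability sequence as $a$, so the same cost. Since every ordering generated by the algorithm has cost at least $\mathrm{OPT}$ and the algorithm returns a minimizer, its output has cost exactly $\mathrm{OPT}$.

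For the running time, the key observation is that during any run of the greedy rule the coins not yet placed always form a contiguous range in the sorted array $p_1\le\dots\le p_n$, since each greedy choice removes an endpoint; after the single brute-force choice $x$ they form at most two contiguous ranges, which is still enough for each subsequent greedy pick (a global max or min) to take $O(1)$ time. Maintaining the running products (or their logarithms) that determine the bias at the current position costs $O(1)$ per step. Thus for a fixed $j$ the greedy prefix is built in $O(n)$, and for each of the at most $n$ remaining coins $x$ we build $c_{j,x}$ in $O(n)$ and evaluate its cost via \eqref{eq:cost_express} in $O(n)$; this is $O(n^2)$ per $j$, hence $O(n^3)$ in total, dominating the initial $O(n\log n)$ sort.

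The step I expect to be the main obstacle is the inductive reconstruction: one must verify carefully that the greedy rule, when shown only the probabilities already committed, makes exactly the choices that $a$ makes. The delicate cases are the unbiased positions, where the convention ``take the largest remaining'' agrees with $a$ only because \Cref{claim:nearlygreedy_withunbiased} was established under that same convention, and ties in the probabilities, which produce different orderings but identical probability sequences and hence identical cost. Everything else, including the time bound, is routine.
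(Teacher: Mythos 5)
Your proof is correct and takes essentially the same approach as the paper: both reduce correctness to \Cref{claim:nearlygreedy_withunbiased} and observe that \Cref{alg:modifiedgreedy} enumerates all orderings that deviate from the greedy rule at a single position, with the $O(n^3)$ bound following from $O(n^2)$ candidate orderings each built and evaluated in $O(n)$ time. You supply the details the paper leaves implicit, in particular the inductive reconstruction showing that $c_{j^\star,x}$ realizes the same probability sequence as the near-greedy optimal ordering (including ties and the unbiased-block convention) and the contiguous-range observation giving $O(1)$ per greedy pick.
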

\begin{proof}
\Cref{alg:modifiedgreedy} generates \emph{all} orderings that use the greedy algorithm's rule for at least $(n-1)$ positions. By \Cref{claim:nearlygreedy_withunbiased}, it must find an optimal ordering. \Cref{alg:modifiedgreedy} considers $O(n^2)$ orderings, and the cost of any ordering can be computed in $O(n)$ time.
\end{proof}

\subsection{A faster algorithm}
We can construct a faster algorithm to recover the optimal ordering. This is because one can use the output of the greedy algorithm (\Cref{alg:greedy}) to identify the ``non-greedy position'' in the optimal ordering.

\begin{tabular}{|p{6.5in}}{\underline{\textsc{Faster Modified Greedy Algorithm (\Cref{alg:fastermodifiedgreedy})}}}
\alglabel{alg:fastermodifiedgreedy}

Assume coins are in increasing order; i.e. $p_1 \le \dots \le p_{n}$.

Generate the ordering $b$ from \Cref{alg:greedy}. Let $x$ be the position just prior to the final block.

For each $x' \in [x+1, \dots, n]$, generate the ordering $c_{x'}$ by starting from $b$, moving the coin at position $x'$ to position $x+1$, and moving the coin at position $x$ to the end of the ordering. 

Compute the expected cost for $b$ and for all $c_{x'}$, and return an ordering $c$ that minimizes this cost.
\end{tabular}

\begin{theorem}
\label{thm:intermediatespeed}
    \Cref{alg:fastermodifiedgreedy} exactly solves the Unanimous Vote problem and runs in time $O(n^2)$.
\end{theorem}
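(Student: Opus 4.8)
Correctness will follow from \Cref{claim:nearlygreedy_withunbiased} once we show that the optimal ordering it produces is among the $O(n)$ orderings the algorithm evaluates. Let $a$ be that ordering, with block sequence $[B_1,\dots,B_q]$, and let $x_a$ be the position of the last coin of $B_{q-1}$ — the sole position at which $a$ may disobey the greedy rule. Since $a$ follows the greedy rule at positions $1,\dots,x_a-1$, and that rule is deterministic given the set of remaining coins (the same for $a$ and $b$), the orderings $a$ and $b$ agree on positions $1,\dots,x_a-1$. If in addition $a(x_a)=b(x_a)$, then $a$ and $b$ agree everywhere and $a=b$, which the algorithm evaluates. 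Otherwise $a(x_a)=b(x')$ for some position $x'>x_a$, since $a(x_a)$ is one of the coins $b(x_a),\dots,b(n)$ left unused after position $x_a-1$.

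\textbf{The key claim.} The heart of the proof is that in this remaining case $x_a$ equals $x$ (the position just prior to $b$'s final block) and $a$ coincides with $c_{x'}$. For the first part I would run the bias bookkeeping of \Cref{claim:swap_order_by_one} and \Cref{claim:multiblock_monotonic}: if $a$ is $0$-biased at $x_a$, then $B_{q-1}$ ending at $x_a$ means the bias flips at $x_a+1$, which forces $\pbar_{a(x_a)}/p_{a(x_a)} \le z_{x_a}^1(a)/z_{x_a}^0(a) < 1$ and hence $p_{a(x_a)}>\frac{1}{2}$. Since the greedy coin $b(x_a)$ has $p_{b(x_a)}\ge p_{a(x_a)}$, the same inequality holds with $b$ in place of $a$, so $b$'s block containing $x_a$ also ends there, and $a$ and $b$ share the same block structure on $[1,x_a]$. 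To see that $b$'s block after $x_a$ is actually its \emph{final} block, I would compare the greedy continuations of $a$ and $b$ from position $x_a+1$: their remaining multisets differ only in that $a$'s contains the ``more extreme'' coin $b(x_a)$ where $b$'s has the ``less extreme'' $b(x')$, so $b$ is at least as strongly biased as $a$ throughout the continuation; as $a$'s continuation is a single block (namely $B_q$) and only a \emph{less}-biased prefix could trigger a further flip, $b$'s continuation must be one block too. Thus $x_a=x$. The $1$-biased and unbiased cases at $x_a$ are symmetric (the unbiased case also invokes the tie-breaking convention of \Cref{alg:greedy}).

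\textbf{Finishing and running time.} Given $x_a=x$, the orderings $a$ and $c_{x'}$ agree on $[1,x-1]$ and place $b(x')$ at the boundary position, and the remaining positions of $a$ form the greedy completion of $\{b(x),\dots,b(n)\}\setminus\{b(x')\}$; a short computation — using \Cref{cor:sorted_within_block} for the monotone order of $b$'s final block and the fact that $b(x)$ lies at the extreme end relative to that block — shows this completion is exactly the tail in the definition of $c_{x'}$, so $a=c_{x'}$. Hence an optimal ordering lies in $\{b\}\cup\{c_{x'}:x<x'\le n\}$, and since the algorithm scores all of these and returns the cheapest, it returns an optimal ordering. (If $b$ is a single block there is no penultimate block; then \Cref{claim:optimalordering_onespotfromgreedy} already gives $a=b$.) For the running time, sorting the $p_i$ costs $O(n\log n)$; \Cref{alg:greedy} then runs in $O(n)$ by carrying $z_j^1(b)$ and $z_j^0(b)$ forward to read off the current bias and keeping two pointers into the sorted array for the smallest and largest remaining coin; there are at most $1+(n-x)\le n$ candidate orderings, each built from $b$ in $O(n)$ and scored in $O(n)$ by accumulating the products in \eqref{eq:cost_express}; the total is therefore $O(n^2)$.

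\textbf{Main obstacle.} I expect the main obstacle to be establishing $x_a=x$, and in particular the claim that $b$'s greedy continuation past $x_a$ stays a single block. Making ``a less-biased prefix is more prone to a bias flip'' precise requires tracking the ratios $z_j^1/z_j^0$ for $a$ and $b$ in parallel and comparing them coin by coin, and one must verify that $a$ and $b$ really do place the same coins on the relevant prefix of the continuation even though their remaining multisets are not identical. The verification that the greedy completion reproduces the explicit rearrangement defining $c_{x'}$ is more mechanical but still needs a small case split on the bias type of $b$'s final block.
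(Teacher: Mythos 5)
Your proposal takes essentially the same route as the paper: both show that the unique non-greedy position $x_a$ of the optimal ordering $a$ from \Cref{claim:nearlygreedy_withunbiased} coincides with the position $x$ just before $b$'s final block — by exploiting that $a$ and $b$ agree on $[1,x_a{-}1]$ and comparing their biases thereafter — and then identify $a$ with $c_{x'}$ using the sortedness of $b$'s final block. The step you flag as the main obstacle is the crux, and the paper closes it as follows: taking WLOG the final block of $a$ to be $1$-biased, so $p_{a(n)}>p_{a(x)}>\frac{1}{2}$ and $b(x)=a(n)$, the paper asserts, for all $x+1\le x'\le n$,
\begin{equation*}
z_{x'}^1(b)=\frac{p_{a(n)}}{p_{a(x)}}\,z_{x'}^1(a)>\frac{p_{a(n)}}{p_{a(x)}}\,z_{x'}^0(a)>\frac{\pbar_{a(n)}}{\pbar_{a(x)}}\,z_{x'}^0(a)=z_{x'}^0(b)\,,
\end{equation*}
concluding that $b$ is $1$-biased throughout $[x+1,n]$, so $b$'s final block begins at $x+1$ just like $a$'s.

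Your specific worry — that ``one must verify that $a$ and $b$ really do place the same coins on the relevant prefix of the continuation even though their remaining multisets are not identical'' — is well-founded and in fact catches an imprecision in the displayed equality, which implicitly assumes $\{b(x{+}1),\dots,b(x'{-}1)\}$ and $\{a(x{+}1),\dots,a(x'{-}1)\}$ agree as multisets; this can fail once $b$'s greedy run has to slot the coin $a(x)$ somewhere among $a(x{+}1),\dots,a(n{-}1)$. The robust version, provable by a short induction, is the pair of inequalities $z_{x'}^1(b)\ge z_{x'}^1(a)$ and $z_{x'}^0(b)\le z_{x'}^0(a)$ for all $x+1\le x'\le n$: at each step $b$'s prefix multiset is $a$'s with one coin replaced by the weakly larger coin $a(n)$, which only pushes the bias further toward $1$; since $a$ is $1$-biased at $x'$, so is $b$. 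That is precisely the ``coin-by-coin'' comparison you anticipated. The remaining check that $a=c_{x'}$ — once $b$'s final block starts at $x+1$, the sequence $b(x{+}1),\dots,b(n)$ is sorted increasing and $b(x)$ is the largest remaining coin, so moving $b(x')$ to position $x$ and $b(x)$ to the end reproduces $a$'s sorted tail — is mechanical as you say. So there is a real gap in your write-up, but it is the one gap the paper fills, and the detail you singled out is exactly the one that needs care.
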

\begin{proof}
\Cref{alg:fastermodifiedgreedy} generates \emph{all} orderings that use the greedy algorithm's rule at every position except possibly position $x$. It generates $O(n)$ orderings. The cost of any ordering can be computed in $O(n)$ time. 

We now argue that the algorithm is correct. Consider an optimal ordering $a$ generated by \Cref{claim:nearlygreedy_withunbiased}, and let $x$ be the position just prior to the final block. If $x$ was greedily chosen, then we are done. So we assume the coin at position $x$ was not chosen greedily. By \Cref{claim:optimalordering_onespotfromgreedy}, this can only happen if the final block is $0$-biased or $1$-biased.

We assume the final block is $1$-biased; a symmetric argument holds if it is $0$-biased. Then the coin at position $x$ has probability of heads greater than $\frac{1}{2}$, yet there is a coin in the final block which has probability of heads greater than $p_{a(x)}$. Since the final block is sorted in \emph{increasing} probability of heads, $p_{a(n)} > p_{a(x)}$. Moreover, because the final block is $1$-biased, $z_{x'}^1(a) > z_{x'}^0(a)$ for all $x+1 \le x' \le n$.

Now consider the ordering $b$ output by \Cref{alg:greedy}. Before position $x$, $a$ and $b$ are identical. So $z_x^1(b) = z_x^1(a) \le z_x^0(a) = z_x^0(b)$. But at position $x$, the greedy ordering chooses coin $a(n)$; i.e. the coin with largest remaining probability of heads. So then for all $x+1 \le x' \le n$, 
\begin{align*}
    z_{x'}^1(b) = \frac{p_{a(n)}}{p_{a(x)}} \cdot z_{x'}^1(a) 
    > 
     \frac{p_{a(n)}}{p_{a(x)}} \cdot z_{x'}^0(a) 
     > 
  \frac{\pbar_{a(n)}}{\pbar_{a(x)}} \cdot z_{x'}^0(a) 
  =
  z_{x'}^0(b)\,.
\end{align*}
This implies the greedy ordering is $1$-biased for all positions $x' \in [x+1, \dots, n]$, but not before. Thus, for \emph{both} orderings $a$ and $b$, position $x$ is just prior to the final block. 

As a result, \Cref{claim:nearlygreedy_withunbiased} implies that we can find an optimal ordering if we apply the greedy algorithm's rule to all positions except possibly position $x$. Since \Cref{alg:fastermodifiedgreedy} searches over all such orderings, it will find an optimal ordering.
\end{proof}

In fact, the cost of the all orderings from \Cref{alg:fastermodifiedgreedy} may be simultaneously computed in $O(n)$ time by storing partial sums. This allows us to achieve the runtime claimed in the introduction.
\begin{theorem}[\Cref{thm:mainresult_intro}, restated]
    \Cref{alg:fastermodifiedgreedy} exactly solves the Unanimous Vote problem and runs in time $O(n \log n)$. If the input probabilities are already sorted, the algorithm runs in time $O(n)$. 
\end{theorem}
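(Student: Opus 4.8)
Correctness is not the issue here: \Cref{thm:intermediatespeed} already shows that \Cref{alg:fastermodifiedgreedy} returns an optimal ordering. The only new content is to sharpen the running-time bound from $O(n^2)$ to $O(n\log n)$ (and to $O(n)$ when the $p_i$ are already sorted). The plan is to show that, after the single initial sort, every step of the algorithm runs in linear time; the only non-routine part is to argue that the costs of all $O(n)$ candidate orderings can be computed \emph{together} in $O(n)$, instead of $O(n)$ apiece.

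First I would dispatch the easy parts. Sorting $p_1,\dots,p_n$ costs $O(n\log n)$, and nothing if the input is pre-sorted. The greedy ordering $b$ (\Cref{alg:greedy}) is built in $O(n)$ by keeping two pointers into the sorted array — the smallest and largest still-unused coin — together with the running products $z^1,z^0$: at each of the $n$ steps, the current bias is the sign of $z^1-z^0$, which in $O(1)$ tells us whether to take the smallest unused coin ($1$-biased) or the largest (otherwise); we then advance the relevant pointer and multiply $z^1,z^0$ by the chosen $p$ or $\pbar$. The position $x$ just prior to the final block emerges from this pass (if $b$ consists of a single block there is no such $x$ and the algorithm simply outputs $b$). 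Finally, $cost(b)$ is read off from \eqref{eq:cost_express} in another $O(n)$ pass. After sorting, all of this is $O(n)$.

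The substantive step is to evaluate $cost(c_{x'})$ for all $x'\in[x+1,n]$ in $O(n)$ total. Since $c_{x'}$ agrees with $b$ on positions $1,\dots,x-1$, we have $z_j^1(c_{x'})=z_j^1(b)$ and $z_j^0(c_{x'})=z_j^0(b)$ for all $j\le x$, so by \eqref{eq:cost_express}
\[
  cost(c_{x'}) \;=\; 1 + C_0 + \sum_{j\ge x+1}\bigl(z_j^1(c_{x'})+z_j^0(c_{x'})\bigr),\qquad C_0\defeq\sum_{j=2}^{x}\bigl(z_j^1(b)+z_j^0(b)\bigr),
\]
and $C_0$ is computed once in $O(n)$. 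By construction of \Cref{alg:fastermodifiedgreedy}, positions $x,x+1,\dots,n$ of $c_{x'}$ hold, in order, the coins $b(x'),\,b(x+1),\dots,b(x'-1),\,b(x'+1),\dots,b(n),\,b(x)$; writing $q_1,\dots,q_{n-x+1}$ for their heads-probabilities and $\bar q_i\defeq 1-q_i$, and using $z_{x+m}^1(c_{x'})=z_x^1(b)\prod_{i=1}^{m}q_i$ (and likewise for $z^0$), the remaining sum equals $\sum_{m=1}^{n-x+1}\bigl(z_x^1(b)\prod_{i=1}^m q_i + z_x^0(b)\prod_{i=1}^m\bar q_i\bigr)$. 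Now I would precompute, in $O(n)$, the final-block prefix products $F_m\defeq\prod_{i=x+1}^{x+m}p_{b(i)}$ and $\bar F_m\defeq\prod_{i=x+1}^{x+m}\pbar_{b(i)}$ together with their prefix sums $\Sigma_k\defeq\sum_{m=0}^k F_m$ and $\bar\Sigma_k\defeq\sum_{m=0}^k\bar F_m$. A short case analysis on $m$ — split at the slot $r\defeq x'-x$ of the displaced coin — gives: for $1\le m\le r$, $\prod_{i=1}^m q_i = p_{b(x')}F_{m-1}$; for $r<m\le n-x$, $\prod_{i=1}^m q_i = F_m$; and for $m=n-x+1$, $\prod_{i=1}^m q_i = p_{b(x)}F_{n-x}$ (and symmetrically for the $\bar q$'s). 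Substituting and collapsing via the prefix sums, $\sum_{j\ge x+1}(z_j^1(c_{x'})+z_j^0(c_{x'}))$ becomes a closed-form expression involving only $z_x^1(b),z_x^0(b),p_{b(x)},p_{b(x')},F_{n-x},\bar F_{n-x}$ and a constant number of the $\Sigma_k,\bar\Sigma_k$, hence is evaluated in $O(1)$ for each $x'$. Taking the minimum of $cost(b)$ and these $n-x$ values is a final $O(n)$ step, so the total running time is $O(n\log n)$, dominated by the sort, and $O(n)$ if the $p_i$ are already sorted.

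The hard part is precisely the bookkeeping in that last paragraph: getting the case split right once the inserted coin $b(x')$ shifts every later tail coin by one and the displaced coin $b(x)$ lands at the very end, and verifying that the telescoped formula is still correct at the boundary values (e.g.\ $x'=n$, small $x$, or an empty $F_0$). A secondary, more cosmetic point is numerical: the products $z^1,z^0,F_m,\bar F_m$ can be exponentially small, so in a bit-complexity model the bias comparisons used to build $b$ should instead be performed on the ratio $z^1/z^0$ or in the log domain, which does not change the $O(n)$ bound; under real arithmetic this does not arise. One should also mention the degenerate cases — $b$ a single block, or the final block of $b$ unbiased — where \Cref{claim:optimalordering_onespotfromgreedy} already makes $b$ optimal, so the $c_{x'}$ loop is harmless but unnecessary.
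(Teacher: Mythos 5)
Your proof reaches the same $O(n\log n)$ bound, by a slightly different linear-time aggregation of the candidate costs. The paper observes that $c_{x'}$ and $c_{x'+1}$ differ by a single transposition of positions $x$ and $x'$, expands $cost(c_{x'}) - cost(c_{x'+1})$ via the interchange formula from \Cref{claim:multiblock_monotonic}, and peels off the costs incrementally, maintaining a running prefix sum $\Delta_{x'}$ of the quantities $y_j = z_j^1(b)/p_{b(x)} - z_j^0(b)/\pbar_{b(x)}$. You instead evaluate each $cost(c_{x'})$ from scratch as a closed form in the final-block prefix products $F_m,\bar F_m$ and their prefix sums $\Sigma_k,\bar\Sigma_k$, with a case split at $r = x'-x$. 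Both amount to the same $O(n)$ preprocessing plus $O(1)$ per candidate; the paper's incremental form avoids the case split at the cost of invoking the interchange lemma once more, and yours is a bit more self-contained. Your account of the two-pointer construction of $b$ and the identification of $x$ in the same pass is also correct, as is the remark about degenerate single-block and unbiased-final-block inputs. One small arithmetic slip: the sum $\sum_{m=1}^{n-x+1}$ should stop at $m = n-x$, since the cost \eqref{eq:cost_express} sums only up to $j = n$; the spurious $m = n-x+1$ term equals $z_x^1(b)\prod_{i=x}^n p_{b(i)} + z_x^0(b)\prod_{i=x}^n\pbar_{b(i)}$, which is the same for every candidate, so it cannot change the argmin and the algorithm still returns an optimal ordering. (For what it is worth, the paper's own displayed difference formula also has a parallel off-by-one — the partial sum ought to start at $j=x+1$ rather than $j=x+2$, as one sees by testing $x'=x+1$ against \Cref{claim:swap_order_by_one}, and consequently $\Delta_{x+1}$ should be $z_x^1(b)-z_x^0(b)$ rather than $1$; neither slip affects the claimed complexity.)
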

\begin{proof}
We know that \Cref{alg:fastermodifiedgreedy} is correct by \Cref{thm:intermediatespeed}. It remains to show that the cost of all generated orderings can be simultaneously computed in time $O(n)$.

In \Cref{alg:fastermodifiedgreedy}, we create $c_n$ starting from $b$ and then swapping positions $x$ and $n$. But for any $x' \in [x+1, n-1]$, we may create $c_{x'}$ starting from $c_{x'+1}$ and then swapping positions $x$ and $x'$. So we set $b \defeq c_{n+1}$ (and $b(n+1) \defeq b(x)$), and look at the difference for all $x' \in [x+1,n]$: 
\begin{align*}
    cost(c_{x'}) - cost(c_{x'+1}) = 
    &\left(\prod_{i=1}^{x-1} p_{b(i)} \right)(p_{b(x')}-p_{b(x'+1)})(p_{b(x+1)} + p_{b(x+1)} p_{b(x+2)} + \dots + \prod_{i=x+1}^{x'-1} p_{b(i)})
    \\
    + &\left(\prod_{i=1}^{x-1} \pbar_{b(i)} \right)(\pbar_{b(x')}-\pbar_{b(x'+1)})(\pbar_{b(x+1)} + \pbar_{b(x+1)} \pbar_{b(x+2)} + \dots + \prod_{i=x+1}^{x'-1} \pbar_{b(i)})\,.
\end{align*}
We may factor out $(p_{b(x')}-p_{b(x'+1)})$ from the above expression:
\begin{align*}
    \frac{cost(c_{x'}) - cost(c_{x'+1})}{p_{b(x')}-p_{b(x'+1)}}
    &=
 \left(\prod_{i=1}^{x-1} p_{b(i)} \right) \sum_{j=x+1}^{x'-1} \prod_{i=x+1}^j p_{b(i)} -     \left(\prod_{i=1}^{x-1} \pbar_{b(i)} \right) \sum_{j=x+1}^{x'-1} \prod_{i=x+1}^j \pbar_{b(i)}
    \nonumber \\
    &= \frac{1}{p_{b(x)}} \sum_{j=x+1}^{x'-1} \prod_{i=1}^j p_{b(i)} - \frac{1}{\pbar_{b(x)}} \sum_{j=x+1}^{x'-1} \prod_{i=1}^j \pbar_{b(i)}
    \nonumber \\
    &= \sum_{j=x+2}^{x'} \left(\frac{z_j^1(b) }{p_{b(x)}} - \frac{ z_j^0(b)}{\pbar_{b(x)}} \right)\,.
\end{align*}
We now explain how to compute the cost of all orderings:
\begin{enumerate}
    \item Compute the cost of the greedy ordering $b \defeq c_{n+1}$.
    \item For all $j \in [x+2, \dots, n]$:
    \begin{itemize}
        \item Compute  $z_j^1(b)$ and $z_j^0(b)$ using dynamic programming (e.g., $z_{j+1}^1(b) = p_{b(j)} z_j^1(b)$). 
        \item Compute $y_j \defeq  \frac{z_j^1(b) }{p_{b(x)}} - \frac{ z_j^0(b)}{\pbar_{b(x)}}$.
    \end{itemize} 
    \item Let $\Delta_{x+1} \defeq 1$. For each $x' \in [x+2, \dots, n]$, compute the partial sum $\Delta_{x'} \defeq \sum_{j=x+2}^{x'} y_j$.
    \item  For each $x' \in [n, \dots, x+1]$, compute  $cost(c_{x'}) = cost(c_{x'+1}) + (p_{b(x')} - p_{b(x'+1)}) \cdot \Delta_{x'}$.
\end{enumerate}
Each of these steps takes $O(n)$ time. We can then identify the minimum-cost ordering in $O(n)$ time.\qedhere
\end{proof}
The algorithm of Gkenosis et al.\ \cite{gkenosis2018stochastic} for the \emph{adaptive} variant of the Unanimous Vote problem, described in Section~\ref{sec:additional_related}, also
computes the cost of $n$ orderings and chooses the ordering with lowest cost. We remark that a very similar idea can be used to improve the runtime of their algorithm from $O(n^2)$ to $O(n \log n)$, or $O(n)$ if the input probabilities are already sorted.

\section{Comparing the greedy ordering with the optimal ordering}

\label{sub:greedyvsoptimal}
The greedy ordering constructed by \Cref{alg:greedy} is \emph{not} always optimal.
Here is an explicit example:
\begin{claim}
\label{claim:greedy_not_optimal}
    The greedy ordering for the instance $(p_1, p_2, p_3, p_4) = (0.49, 0.99, 0.99, 1)$ is not optimal.  
\end{claim}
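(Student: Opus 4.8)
The plan is a short explicit computation. First I would trace \Cref{alg:greedy} on $(0.49,0.99,0.99,1)$, which is already sorted. At position $1$ the ordering is unbiased, so by convention greedy takes the coin of bias $1$; after flipping a bias-$1$ coin the ordering is $1$-biased (indeed $z_2^1 = 1 > 0 = z_2^0$), so at position $2$ greedy takes the smallest remaining bias, $0.49$; and since $z_3^1 = 0.49 > 0 = z_3^0$, positions $3$ and $4$ remain $1$-biased, so greedy fills in the two bias-$0.99$ coins. Hence the greedy ordering $b$ has biases $(1,\,0.49,\,0.99,\,0.99)$ in flip order; equivalently, $b$ consists of one unbiased block (the singleton at position $1$) followed by one $1$-biased block of length three.

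Next I would evaluate costs via \eqref{eq:cost_express}. For $b$ the first coin has bias $1$, so $z_j^0(b) = 0$ for all $j$, while $z_2^1(b) = 1$, $z_3^1(b) = 0.49$, and $z_4^1(b) = 0.49\cdot 0.99 = 0.4851$; thus $cost(b) = 1 + 1 + 0.49 + 0.4851 = 2.9751$. Now take the ordering $a$ with biases $(0.99,\,0.49,\,0.99,\,1)$ in flip order --- that is, the greedy ordering with its first coin replaced by a bias-$0.99$ coin and then continued greedily. Here $z_2^1(a) = 0.99$, $z_2^0(a) = 0.01$; $z_3^1(a) = 0.99\cdot 0.49 = 0.4851$, $z_3^0(a) = 0.01\cdot 0.51 = 0.0051$; $z_4^1(a) = 0.4851\cdot 0.99 = 0.480249$, $z_4^0(a) = 0.0051\cdot 0.01 = 0.000051$. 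Therefore $cost(a) = 1 + (0.99+0.01) + (0.4851+0.0051) + (0.480249+0.000051) = 2.9705 < 2.9751 = cost(b)$, so $b$ is not optimal.

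There is essentially no obstacle here beyond arithmetic care, and the competitor $a$ is not pulled from thin air: \Cref{claim:nearlygreedy_withunbiased} says the only position where an optimal ordering may deviate from greedy is the last position of the second-to-last block, which for this instance is position $1$, so one naturally looks for an improvement by making a non-greedy choice there (one can also see directly from \Cref{claim:multiblock_monotonic} that starting a $1$-biased run with the bias-$1$ coin ahead of a bias-$0.99$ coin is suboptimal). The two points requiring care are the greedy tie-break convention at position $1$, which takes the \emph{largest} remaining bias and is precisely why greedy errs on this instance, and not discarding the small cross-terms such as $0.01\cdot 0.51$ and $0.0051\cdot 0.01$ when computing $cost(a)$.
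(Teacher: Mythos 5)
Your proof is correct and follows essentially the same approach as the paper: run \Cref{alg:greedy} to get the ordering with flip-order biases $(1,0.49,0.99,0.99)$ of cost $2.9751$, then exhibit a cheaper alternative of cost $2.9705$. The only cosmetic difference is that your witness $(0.99,0.49,0.99,1)$ swaps the first two coins relative to the paper's $(0.49,0.99,0.99,1)$, which by \Cref{claim:swap_order_by_one} costs the same since position $1$ is always unbiased.
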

\begin{proof}
    Using  \Cref{alg:greedy}, the greedy ordering for this instance is $(1,0.49,0.99,0.99)$, which has cost
    \begin{align*}
        2 + 1 \cdot 0.49 (1 + 0.99) + 0 \cdot 0.51 (1 + 0.01) = 2.9751\,.
    \end{align*}
    Meanwhile, the ordering $[0.49, 0.99, 0.99, 1]$ has cost
    \begin{align*}
        2 + 0.49 (0.99 + 0.99^2) + 0.51 (0.01 + 0.01^2) = 2.9705\,.\tag*{\qedhere}
    \end{align*}
\end{proof}
In fact, there are instances where \textit{every} algorithm using the natural greedy rule is suboptimal. In particular, consider algorithms that may choose any coin that, if flipped, has the largest probability of terminating the algorithm. At unbiased positions, including the first position, such an algorithm is free to choose any remaining coin. It can be checked that any such algorithm is suboptimal on the instance $(p_1, p_2, p_3, p_4, p_5, p_6) = (0.02,0.24,0.7,0.8,0.9,0.993)$.

For any $\delta > 0$, we can construct an instance where the greedy algorithm (\Cref{alg:greedy}) returns an ordering that on average flips $(1-\delta)$ more coins than the optimal ordering:
\begin{claim}[Lower bound of \Cref{claim:greedy_vs_optimal_intro}]
\label{claim:greedy_minus_opt_atleast_1}
    For all $0 < \delta < 1$, there exists a length-$\lceil \frac{4}{\delta^2}\rceil$ instance where \Cref{alg:greedy} returns an ordering with cost greater than $1-\delta$ plus the cost of an optimal ordering.
\end{claim}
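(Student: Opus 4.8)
The plan is to exhibit a single explicit instance and compare greedy against one hand‑picked competitor ordering. Take the instance with $n = \lceil 4/\delta^2\rceil$ coins consisting of one coin with $p=1$ and $n-1$ coins each with probability of heads $1-\tfrac{\delta}{4}$. First I would trace \Cref{alg:greedy} on this instance: position $1$ is unbiased, so by convention greedy picks the coin with $p=1$; from then on every position is $1$-biased (the all‑tails probability is $0$ once the first flip is heads with certainty), so greedy repeatedly picks the smallest remaining coin, which is always $1-\tfrac{\delta}{4}$. Hence the greedy ordering is $[\,1,\ 1-\tfrac{\delta}{4},\ \dots,\ 1-\tfrac{\delta}{4}\,]$, and since $z^1_j = (1-\tfrac{\delta}{4})^{j-2}$ and $z^0_j = 0$ for $j\ge 3$, the cost expression \eqref{eq:cost_express} gives $cost(\mathrm{greedy}) = 2 + \sum_{m=1}^{n-2}(1-\tfrac{\delta}{4})^m$.

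Next I would compute the cost of the competitor $a^\star = [\,1-\tfrac{\delta}{4},\ \dots,\ 1-\tfrac{\delta}{4},\ 1\,]$ that delays the certain coin to the end. The first $n-1$ coins of $a^\star$ are all $1-\tfrac{\delta}{4}$, so $z^1_j(a^\star) = (1-\tfrac{\delta}{4})^{j-1}$ and $z^0_j(a^\star) = (\tfrac{\delta}{4})^{j-1}$ for $3\le j\le n$, whence $cost(a^\star) = 2 + \sum_{m=2}^{n-1}\big[(1-\tfrac{\delta}{4})^m + (\tfrac{\delta}{4})^m\big]$. Subtracting, the two geometric sums in $1-\tfrac{\delta}{4}$ telescope and I am left with
\[
  cost(\mathrm{greedy}) - cost(a^\star) \;=\; \Big(1-\tfrac{\delta}{4}\Big) - \Big(1-\tfrac{\delta}{4}\Big)^{n-1} - \sum_{m=2}^{n-1}\Big(\tfrac{\delta}{4}\Big)^m \;\ge\; \Big(1-\tfrac{\delta}{4}\Big) - \Big(1-\tfrac{\delta}{4}\Big)^{n-1} - \frac{(\delta/4)^2}{1-\delta/4}\,.
\]
Since the optimal cost is at most $cost(a^\star)$, it suffices to show this lower bound exceeds $1-\delta$.

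The remaining step is to check that the two error terms together cost at most $\tfrac{3\delta}{4}$. For the tail term I would use $1-x\le e^{-x}$ and $n-1\ge 4/\delta^2 - 1$ to get $(1-\tfrac{\delta}{4})^{n-1} \le \tfrac{4}{3}e^{-1/\delta}$, and then verify $e^{-1/\delta} < \delta/2$ for all $\delta\in(0,1)$ — equivalently $t > \ln(2t)$ for every $t = 1/\delta > 1$, which holds since $t-\ln t$ is increasing on $(1,\infty)$ with value $1 > \ln 2$ at $t=1$ — so $(1-\tfrac{\delta}{4})^{n-1} < \tfrac{2\delta}{3}$. The second error term satisfies $\frac{(\delta/4)^2}{1-\delta/4} < \frac{\delta^2/16}{3/4} = \frac{\delta^2}{12} < \frac{\delta}{12}$. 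Adding the three deficits, $cost(\mathrm{greedy}) - cost(a^\star) > 1 - \tfrac{\delta}{4} - \tfrac{2\delta}{3} - \tfrac{\delta}{12} = 1-\delta$, which proves the claim.

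The only part that needs a bit of care is picking the constant in the bias $1-\tfrac{\delta}{4}$: it must be small enough that the ``$p$ near $1$'' deficit ($\tfrac{\delta}{4}$) plus the slack needed for the other two error terms still sums to at most $\delta$, yet bounded away from $0$ so that the tail $(1-\tfrac{\delta}{4})^{n-1}$ decays fast enough at the prescribed length $n=\lceil 4/\delta^2\rceil$ (this is exactly where the quadratic-in-$1/\delta$ length is used). The choice $\tfrac{\delta}{4}$ makes the three pieces sum to precisely the budget $\delta$, and every inequality above is strict, so the bound is strict as stated. Everything else is routine algebra on geometric series.
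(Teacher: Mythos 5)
Your proof is correct, and it uses the same instance family and the same comparison as the paper: one coin with $p=1$ plus the rest identical near $1$, greedy burns the certain coin up front, while the competitor delays it to the end and the geometric sums telescope. The only difference is cosmetic: the paper parameterizes the bias as $\epsilon=\frac{\ln n}{n}$ and then verifies $\delta>\frac{2\ln\lceil 4/\delta^2\rceil+1}{\lceil 4/\delta^2\rceil}$, whereas you set $\epsilon=\delta/4$ directly, which makes the three error terms ($\delta/4$, $2\delta/3$, $\delta/12$) sum exactly to the budget $\delta$ and avoids the final numerical verification — a slightly cleaner bookkeeping of the same argument.
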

\begin{proof}
Consider one coin with probability of heads $1$ and $n$ coins with probability of heads $1-\epsilon$. There are $(n+1)$ coins in this instance.
Let $a$ be the optimal ordering and $b$ be the ordering returned by \Cref{alg:greedy}.    

\Cref{alg:greedy} will first select the coin with probability of heads $1$, and then use all other coins. This has cost
\begin{align*}
    cost(b) = 2 + \sum_{j=2}^n \left( \prod_{i=1}^j p_{b(i)} + \prod_{i=1}^j \pbar_{b(i)} \right)
    =
    2 + \sum_{j=1}^{n-1} (1- \epsilon)^j \,.
\end{align*}
Meanwhile, consider the ordering that first uses all coins with probability  of heads $1-\epsilon$. Then
\begin{align*}
    cost(a) = 2 + \sum_{j=2}^n \left( (1- \epsilon)^j + (\epsilon)^j \right)\,.
\end{align*}
Then the difference in cost is at least
\begin{align*}
    cost(b) - cost(a) &= (1- \epsilon) - (1- \epsilon)^n - \sum_{j=2}^n  \epsilon^j 
\ge 1 - \frac{\epsilon}{1-\epsilon} - (1 - \epsilon)^n\,.
\end{align*}
We set $\epsilon \defeq \frac{\ln n}{n}$; then $\frac{\epsilon}{1-\epsilon} \le 2 \epsilon$ and $(1-\epsilon)^n \le e^{-\ln n} = \frac{1}{n}$ for all $n \ge 1$. So the difference in cost is at least $1 - 2\epsilon - \frac{1}{n} = 1 - \frac{2 \ln n + 1}{n}$. So for any desired $\delta$, we set $n$ large enough so that $\delta > \frac{2 \ln n + 1}{n}$, and this instance has a difference in cost at least $1 - \delta$. One can verify that $\delta > \frac{2 \ln \lceil 4/\delta^2\rceil + 1}{\lceil 4/\delta^2 \rceil}$ for all $0 < \delta  < 1$.
\end{proof}

We show that the cost of \Cref{alg:greedy} is always at most $1$ more than the optimal solution.
Thus the instances in \Cref{claim:greedy_minus_opt_atleast_1} are the asymptotically worst possible for \Cref{alg:greedy}.
\begin{claim}[Upper bound of \Cref{claim:greedy_vs_optimal_intro}]
\label{claim:greedy_minus_opt_atmost_1}
    For all instances,  \Cref{alg:greedy}  returns an ordering with cost at most $1$ plus the cost of an optimal ordering.
\end{claim}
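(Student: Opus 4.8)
If $b$ is already optimal there is nothing to prove, so assume it is not; the plan is then to compare $b$ to a near‑greedy optimal ordering by conditioning on the first few flips. Invoking \Cref{claim:nearlygreedy_withunbiased} and the analysis inside the proof of \Cref{thm:intermediatespeed}, fix an optimal ordering $a$ and a position $x$ (the last position of block $B_{q-1}$) that is prior to the final block of \emph{both} $a$ and $b$, with $a$ and $b$ agreeing on positions $1,\dots,x-1$; by symmetry assume the final block is $1$-biased. Let $q_1 \le q_2 \le \cdots \le q_m$ (with $m = n-x+1 \ge 2$) be the coins occupying positions $x,\dots,n$, and write $\bar q_i := 1-q_i$. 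Since position $x$ is not $1$-biased, $b$ places the largest remaining coin there, so $b$ orders positions $x,\dots,n$ as $(q_m, q_1, q_2, \dots, q_{m-1})$; since $a$ is greedy off position $x$ and its final block is $1$-biased (hence increasing), $a$ orders them as $(q_j, q_1, \dots, q_{j-1}, q_{j+1}, \dots, q_m)$ for some $j \le m-1$ with $q_j > \tfrac12$. Put $P := z_x^1(b) = z_x^1(a)$ and $\bar P := z_x^0(b) = z_x^0(a)$; since position $x$ is not $1$-biased, $P \le \bar P$, and of course $P + \bar P \le 1$.

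Because $a$ and $b$ share positions $1,\dots,x-1$, conditioning on those flips and using \eqref{eq:cost_express} gives $cost(b) - cost(a) = P\cdot D_H + \bar P\cdot D_T$, where $D_H$ (resp. $D_T$) is the difference, $b$ minus $a$, in the expected number of flips made \emph{after} position $x-1$ given that the first $x-1$ flips came up all heads (resp. all tails). The first thing I would prove is $D_T \le 0$: given the prefix is all tails, $b$ flips the highest‑heads‑probability coin $q_m$ first, the best possible move for seeing a head quickly, so $b$ finishes no later than $a$; concretely, expanding the relevant $z^0$-terms of the two explicit orderings, every summand of $\bar P D_T$ equals a nonnegative factor times $\bar q_m - \bar q_j$ or $\bar q_m - \bar q_{t+1}$, both $\le 0$ because $q_m = \max_i q_i$. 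Hence $cost(b) - cost(a) \le P\cdot D_H$.

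It then remains to show $P\cdot D_H \le P$, which, being $\le 1$, finishes the proof. Writing it out, $P\cdot D_H = P(q_m - q_j)\sum_{t=0}^{j-1}\prod_{s=1}^{t} q_s + \sum_{t=j}^{m-2} P\bigl(\prod_{s=1}^{t} q_s\bigr)(q_m - q_{t+1})$. In the second sum use $q_m - q_{t+1} \le 1 - q_{t+1}$, which makes the terms telescope and contribute at most $P\prod_{s=1}^{j} q_s$; in the first sum use $q_m - q_j \le 1 - q_j = \bar q_j$. This gives $P\cdot D_H \le P\, g_j$ where $g_k := \bar q_k\sum_{u=0}^{k-1}\prod_{s=1}^{u} q_s + \prod_{s=1}^{k} q_s$. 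Finally $g_1 = \bar q_1 + q_1 = 1$, and for $k \ge 2$, $g_k = \bar q_k\sum_{u=0}^{k-2}\prod_{s\le u} q_s + (\bar q_k + q_k)\prod_{s\le k-1} q_s = \bar q_k\sum_{u=0}^{k-2}\prod_{s\le u} q_s + \prod_{s\le k-1} q_s \le g_{k-1}$ since $\bar q_k \le \bar q_{k-1}$; so $g_j \le g_{j-1} \le \cdots \le g_1 = 1$ and $cost(b)-cost(a) \le P \le 1$. If instead the final block of $a$ is $0$-biased, the mirror‑image argument gives $cost(b)-cost(a) \le \bar P \le 1$; and if the final block is unbiased, or position $x$ is chosen greedily in $a$, then $a$ is the greedy ordering and $cost(b) = cost(a)$.

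The step I expect to be the main obstacle is the bound $P\cdot D_H \le P$. The naive estimate $z_v^1(b) - z_v^1(a) \le 1$ only yields $P\cdot D_H = O(m)\cdot P$, which is far too weak; one really needs (i) that $q_j > \tfrac12$ forces $q_m - q_j \le \bar q_j$, pulling a $\bar q_j$ out of every ``head'' term, and (ii) that the ``tail'' terms telescope — after which the whole sum collapses to $g_j$ and the monotone chain $g_j \le \cdots \le g_1 = 1$ finishes it. Everything else is bookkeeping, but some care is needed to invoke \Cref{claim:nearlygreedy_withunbiased} and the proof of \Cref{thm:intermediatespeed} correctly so that $a$ and $b$ genuinely have the two explicit forms used above.
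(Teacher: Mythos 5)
Your proof is a genuinely different — and, where it works, arguably more transparent — route than the paper's. The paper handles Case 1, subcase 2 by inserting a \emph{virtual} coin with probability $p_{a(n)}$ at position $k$ (costing at most $1$), sorting the extended final block (cost non-increasing), and then matching the first $n$ coins to $b$. You instead decompose $cost(b)-cost(a)=P\,D_H+\bar P\,D_T$ by conditioning on the prefix, show $D_T\le 0$ by term-by-term domination, and bound $D_H\le g_j\le g_1=1$ by a telescoping argument followed by a monotone chain. I checked the algebra: the explicit forms of $a$ and $b$ on positions $x,\dots,n$ (assuming a $1$-biased final block, greedy agreement on $1,\dots,x-1$, and that $x$ is just prior to the final block for both — all correctly sourced from \Cref{claim:nearlygreedy_withunbiased} and the argument inside \Cref{thm:intermediatespeed}) are right; the telescoping $\sum_{t=j}^{m-2}\prod_{s\le t}q_s\,\bar q_{t+1}\le\prod_{s\le j}q_s$ and the chain $g_j\le\cdots\le g_1=1$ are correct; and $D_T\le 0$ follows as you say. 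This even yields the sharper bound $cost(b)-cost(a)\le\min(P,\bar P)$, which is nice. (One stray remark: the aside ``$P+\bar P\le 1$'' fails when $x=1$, where $P=\bar P=1$; it's unused, so no harm.)

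The genuine gap is in the line ``If instead the final block of $a$ is $0$-biased, the mirror-image argument gives $cost(b)-cost(a)\le\bar P$.'' The problem is that the map $p\mapsto 1-p$ that swaps $0$-biased with $1$-biased does \emph{not} commute with \Cref{alg:greedy}'s tie-breaking convention: at unbiased positions the greedy picks the \emph{largest} remaining coin, and the mirror of that rule would pick the smallest. Concretely, when the final block of $a$ is $0$-biased, the block $B_{q-1}$ can be either $1$-biased or unbiased. If $B_{q-1}$ is $1$-biased, position $x$ is $1$-biased, so $b(x)$ is the \emph{smallest} remaining coin, and the substitution $\tilde q_i := 1-q_{m+1-i}$ turns your $D_T$ into exactly the $D_H$ of the $1$-biased case, with $\tilde q_{\tilde j}=\bar q_j>\tfrac12$ as needed — the mirror genuinely works. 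But if $B_{q-1}$ is \emph{unbiased}, then position $x$ is unbiased, so $b(x)$ is the \emph{largest} remaining coin; if that coin has probability $>\tfrac12$, then $b$ becomes $1$-biased from position $x+1$ onward while $a$ is $0$-biased, and neither the two explicit orderings you derived nor the telescoping bound apply. (This configuration does arise for Claim-\ref{claim:nearlygreedy_withunbiased} orderings; e.g.\ take $q_j<\tfrac12<q_m$ with $\bar q_j\bar q_m>q_jq_m$.) This is exactly why the paper's own proof devotes a separate paragraph to ``Case 2, second-to-last block unbiased,'' with a swap-at-$k$-then-recurse argument. To close the gap you would need to either extend your calculation to handle the heterogeneous $b$ in this subcase, or argue (as the paper attempts) that swapping positions $k,k+1$ preserves cost and reduces to a case you've handled — but either way this is real additional content, not just ``the mirror image.''
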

\begin{proof}
Let $a$ be the optimal ordering guaranteed by \Cref{claim:nearlygreedy_withunbiased} and $b$ be the ordering returned by \Cref{alg:greedy}.  We need to prove that $cost(b) \leq cost(a)+1$.  Let $k$ be the position in $a$ just before the last block.  
By \Cref{claim:nearlygreedy_withunbiased}, $a$ and $b$ agree on all coins in positions $< k$.  

Suppose the last block of $a$ is unbiased. Then by \Cref{claim:optimalordering_onespotfromgreedy}, $a = b$ and so $cost(b)=cost(a)$. There are two remaining cases:

{\it \noindent Case 1:} The last block of $a$ is $1$-biased.

Consider the coin in position $k$ of $a$; it is the only coin that might not have been greedily chosen. Since the last block is $1$-biased, this coin must have probability of heads greater than $\frac{1}{2}$.

We split into two subcases: $p_{a(k)} \ge p_{a(n)}$, and $p_{a(k)} < p_{a(n)}$. 

In the first subcase, $p_{a(k)} \ge p_{a(n)}$, and it follows from \Cref{cor:sorted_within_block} that $p_{a(k)} \ge p_{a(x)}$ for all $k < x \le n$. The second-to-last block must be $0$-biased or unbiased.  Either way, $a = b$, so $cost(b) = cost(a)$.

In the second subcase, $p_{a(k)} < p_{a(n)}$, 
we generate two intermediate orderings:
    \begin{enumerate}
    \item Let $a'$ be the length-$(n+1)$ ordering produced by starting from $a$, and then inserting a coin with probability $p_{a(n)}$ at position $k$.
    \item Let $a''$ be the length-$(n+1)$ ordering produced by starting from $a'$, and then sorting all coins in position $\ge k+1$ in increasing probability of heads.
    \end{enumerate}

Since we added one coin to create $a'$, $cost(a') \le cost(a) + 1$. 
Since $p_{a(n)} > p_{a(k)} > \frac{1}{2}$, the last block of $a'$ begins at position $k+1$. 
By \Cref{claim:swap_order_by_one}, sorting the last block of an ordering can only reduce its cost, so $cost(a'') \le cost(a')$.  At position $k$, ordering $a''$ is either $0$-biased or unbiased.  It follows that the
first $n$ coins of $a''$ exactly match $b$, so $cost(b) \leq cost(a'') \leq cost(a') \leq cost(a)+1$.  
\\

{\it Case 2:} The last block of $a$ is $0$-biased.

The second-to-last block can be either $1$-biased or unbiased. If it is $1$-biased, then the claim follows from a symmetric version of the argument used in Case 1.  

Suppose that the second-to-last block is unbiased. 
Let $a'$ be the ordering starting from $a$, but switching the coins at positions $k$ and $k+1$. Since $a$ is unbiased at position $k$, $cost(a') = cost(a)$, and so $a'$ is also optimal. Then $a'$ is unbiased at position $k$, either $0$-biased, unbiased, or $1$-biased at position $k+1$, and $0$-biased from position $k+2$ through position $n$.
We consider each subcase:
\begin{itemize}
    \item If $a'$ is $0$-biased at position $k+1$, then the last block of $a'$ starts at this position. By \Cref{cor:sorted_within_block}, this block is sorted in decreasing probability of heads. Then $b = a'$, and so $cost(b) = cost(a') = cost(a)$.
    \item If $a'$ is unbiased at position $k+1$, then the coin at position $k$ has probability of heads $\frac{1}{2}$. Since $a'$ is $0$-biased at position $k+2$, the coin at position $k+1$ has probability of heads less than $\frac{1}{2}$.  If $k + 1 = n$, then $b = a'$, and so $cost(b) = cost(a') = cost(a)$. Otherwise, we repeat \textit{Case 2} on the ordering $a'$. 
    \item If $a'$ is $1$-biased at position $k+1$, then the claim follows from a symmetric version of the argument used in Case 1.\qedhere
\end{itemize}
\end{proof}

The proof of \Cref{claim:greedy_minus_opt_atmost_1} relies on the convention used by \Cref{alg:greedy} for choosing the coins in unbiased positions.
It is natural to consider an algorithm that also generates a greedy ordering using the opposite convention (smallest $p_i$ rather than largest $p_i$) and outputs the best of the two orderings.
This algorithm is optimal unless $p_1 \le \frac{1}{2} \le p_n$. 
If $p_1 \le \frac{1}{2} \le p_n$, there is at least a $50\%$ chance of terminating in two flips. One can modify the analysis in \Cref{claim:greedy_minus_opt_atmost_1} to show that this algorithm finds an ordering with cost at most $\frac{1}{2}$ more than the cost of the optimal ordering. 
We may observe that this difference is asymptotically tight by adding a coin with probability of heads $0.5 - \epsilon$ to the instance in \Cref{claim:greedy_minus_opt_atleast_1}.

\section{Adaptivity gap}
\label{sec:adaptivity_gap}
The \emph{adaptivity gap} is the worst-case ratio of the cost of the optimal non-adaptive ordering and the cost of the optimal adaptive strategy. We first show this is at least $1.2 - O(\frac{1}{2^n})$, where $n$ is the number of coins.
\begin{claim}[Lower bound of \Cref{thm:adaptivitygap_intro}]
\label{claim:adaptivitygap_lowerbound}
    The adaptivity gap is at least $1.2 - O\left(\frac{1}{2^{n}} \right)$.
\end{claim}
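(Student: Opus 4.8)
The plan is to analyze the instance highlighted in the introduction: $n$ coins, one with $p_i = 0$ and the remaining $n-1$ with $p_i = \frac{1}{2}$. I will compute, up to an additive $O(2^{-n})$ error, the cost of the optimal non-adaptive ordering and of the optimal adaptive strategy, and then divide.

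\emph{Non-adaptive cost.} By symmetry among the $n-1$ fair coins, a non-adaptive ordering is determined, as far as its cost is concerned, by the position $m \in \{1, \dots, n\}$ of the unique $p=0$ coin. I plug this into the cost expression \eqref{eq:cost_express}: for $j > m$ the factor $p = 0$ appears in $z_j^1(a)$, so $z_j^1(a) = 0$; meanwhile $z_j^0(a)$ is a product of $\frac{1}{2}$'s together with the single factor $\pbar = 1$ contributed at position $m$, hence $z_j^0(a) = 2^{-(j-1)}$ for $j \le m$ and $z_j^0(a) = 2^{-(j-2)}$ for $j > m$. Summing these geometric series should give $cost(a) = 3 - 2^{-(n-2)}$ regardless of $m$; since this is then the cost of \emph{every} ordering, the optimal non-adaptive cost is exactly $3 - 2^{-(n-2)}$.

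\emph{Adaptive cost.} I invoke the structural fact from \cite{gkenosis2018stochastic}, recalled in \Cref{sec:additional_related}, that an optimal adaptive strategy is pinned down once its first coin is chosen (afterward one flips the rest in increasing or decreasing order of heads-probability, according to the first outcome). Here there are only two candidate first coins, up to symmetry. If it is the $p=0$ coin, the strategy is then forced to hunt for a head among $n-1$ fair coins, at total cost $1 + \sum_{j=1}^{n-1} 2^{-(j-1)} = 3 - 2^{-(n-2)}$. If instead it is a fair coin, then with probability $\frac{1}{2}$ the outcome is heads and we terminate in two flips (flipping the $p=0$ coin next), while with probability $\frac{1}{2}$ it is tails and we spend $2 - 2^{-(n-2)}$ further flips hunting for a head among the remaining $n-2$ fair coins (and, if all are tails, a final flip of the $p=0$ coin); this totals $\frac{5}{2} - 2^{-(n-1)}$. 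The latter is smaller, so the optimal adaptive cost is $\frac{5}{2} - 2^{-(n-1)}$.

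\emph{Conclusion.} The adaptivity gap is therefore at least $(3 - 2^{-(n-2)}) / (\frac{5}{2} - 2^{-(n-1)})$, and writing $\delta = 2^{-(n-1)}$ (so $2^{-(n-2)} = 2\delta$) this equals $\frac{3 - 2\delta}{\frac{5}{2} - \delta} = \frac{6}{5} - \frac{4\delta}{\frac{25}{2} - 5\delta} = \frac{6}{5} - O(2^{-n})$, which is the claimed bound. I do not expect a substantive obstacle here; the only place needing care is the adaptive analysis — correctly using that the first coin determines the rest, and remembering that the deterministic $p=0$ coin still counts as a flip and should be placed last whenever one is hunting for a head, so that no sub-case is over- or under-counted. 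The geometric-sum bookkeeping on both sides is routine.
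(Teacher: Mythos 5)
Your proposal is correct and takes essentially the same approach as the paper: same instance (one $p=0$ coin, the rest fair), same observation that every non-adaptive ordering has identical cost $3 - 2^{-(n-2)}$, and same adaptive strategy with cost $\frac{5}{2} - 2^{-(n-1)}$, yielding the ratio $1.2 - O(2^{-n})$. The only (immaterial) difference is that you also argue the adaptive strategy is optimal, whereas the paper just exhibits it, since any adaptive strategy upper-bounds the optimal adaptive cost and that is all that's needed for a lower bound on the gap.
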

\begin{proof}
Recall that the cost of a non-adaptive ordering $a$ is
$$
cost(a) = 2 + \sum_{j \ge 3} \left( \prod_{i=1}^{j-1} p_{a(i)} + \prod_{i=1}^{j-1} \overline{p}_{a(i)}) \right)\,.
$$
Consider one coin with $0$ probability of heads and $n-1$ coins with $\frac{1}{2}$ probability of heads.
Any non-adaptive ordering is determined by the position of the coin with $0$ probability of heads. Suppose it is at position $k$. Then the cost is thus
$$
2 + \sum_{j =3}^{k} (0.5^{j-1} + 0.5^{j-1}) + \sum_{j=k+1}^{n} (0.5^{j-2}  + 0)
=
2 + \sum_{j=3}^k 0.5^{j-2} + \sum_{j=k+1}^n 0.5^{j-2}
= 2 + \sum_{j=3}^n 0.5^{j-2}\,.
$$
This sum is independent of $k$; thus, all non-adaptive orderings have the same cost of
$$
2 +  \frac{0.5 - 0.5^{n-1}}{0.5} = 3 - \frac{1}{2^{n-2}}\,.
$$
Consider the adaptive strategy that flips a coin with $\frac{1}{2}$ probability of heads, then chooses all other coins optimally. The strategy terminates unless all observed coins are tails, or we run out of coins to flip. The cost of this strategy is
$$
2 + \sum_{j=3}^n 0.5^{j-1} = 2 + \frac{0.5 - 0.5^{n}}{0.5} = 2.5 - \frac{1}{2^{n-1}}\,.
$$
So the adaptivity gap is at least
\begin{align*}
\frac{3 - 0.5^{n-2}}{2.5 - 0.5^{n-1}} = \frac{3}{2.5} - \frac{0.5^{n-2} - \frac{3}{2.5} \cdot 0.5^{n-1}}{2.5 - 0.5^{n-1}} = 1.2 - \frac{0.8 \cdot 0.5^{n-1}}{2.5 - 0.5^{n-1}} =  1.2 - O\left(\frac{1}{2^n}\right)\,.\tag*{\qedhere}
\end{align*}
\end{proof}

Our proof of the upper bound uses a structure in any ordering we call \emph{crossover point}.
Each coin before or at the crossover point has a good chance of terminating the sequence, but this is not true after this point.
\begin{definition}[Crossover point]
\label{defn:crossoverpt}
    Fix an ordering $a$. The \emph{crossover point} of $a$ is the largest position $t$ where for all $2 \le x \le t$, the probability of terminating after flipping the coin at position $a(x)$, assuming termination has not yet occurred, is at least $\frac{1}{2}$. If there is no such $t$, we say $t = 1$.
\end{definition}
The crossover point $t$ of any ordering $a$ naturally induces a partition of $[a(1), \dots, a(n)]$ with a first \emph{region} $[a(1), \dots, a(t)]$ and a (possibly empty) second region $[a(t+1), \dots, a(n)]$. In the first region,  the sequence is likely to terminate quickly.
By \Cref{claim:nonfinal_probatleasthalf}, the crossover point of an optimal ordering guaranteed by \Cref{claim:unbiasedblockform} occurs in the final block.

{\ifarxiv To illustrate our technique, we now prove\else We now prove\fi} that the adaptivity gap is at most $1.5$.
\begin{claim}[{\ifarxiv Weak upper bound\else Upper bound\fi} of \Cref{thm:adaptivitygap_intro}]
\label{claim:adaptivitygap_1.5}
    The adaptivity gap is at most $1.5$.
\end{claim}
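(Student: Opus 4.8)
The plan is to compare the optimal non-adaptive ordering against a specific adaptive strategy, namely the one that first flips some fixed coin and then proceeds optimally. Concretely, I would fix any non-adaptive ordering $a$, let $t$ be its crossover point, and observe that the cost of $a$ naturally splits into a contribution from the first region $[a(1),\dots,a(t)]$ and the second region $[a(t+1),\dots,a(n)]$. In the first region, by \Cref{fact:prob_termination} and the definition of the crossover point, each flip terminates with probability at least $\frac12$ (conditioned on reaching it), so $z_{x+1}^1(a)+z_{x+1}^0(a) \le \tfrac12\big(z_x^1(a)+z_x^0(a)\big)$ for $2 \le x \le t$; hence $\sum_{2 \le j \le t} \big(z_j^1(a)+z_j^0(a)\big)$ is a geometrically-decaying sum bounded by something like $2$, which is the ``free'' part of the cost that any strategy must essentially pay (since any strategy flips at least two coins).

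The next step is to handle the second region, where flips terminate with probability less than $\frac12$. By \Cref{claim:nonfinal_probatleasthalf} (applied to an optimal ordering from \Cref{claim:unbiasedblockform}), the crossover point lies inside the final block, so from position $t+1$ onward the ordering has a single bias type — say $1$-biased — and the residual mass $z_j^1(a)+z_j^0(a)$ is dominated by $z_j^1(a)$, i.e. the ``all heads so far'' event. The key comparison is then with an adaptive strategy that, once it has seen one head early on, knows it only needs a single tail, and can order the remaining coins to minimize the chance of a long run of heads. I would bound the second-region contribution to $cost(a)$ in terms of the corresponding quantity for the adaptive strategy, showing the ratio of these tail contributions is at most $1.5$ (or even less); combined with the fact that the first-region/``constant $2$'' part is matched exactly (or nearly so) by the adaptive strategy, the overall ratio is at most $1.5$.

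The main obstacle I expect is making the second-region comparison airtight: the adaptive strategy gets to branch on the first-flip outcome, so I need to relate the non-adaptive residual sums $\sum_{j>t}\big(z_j^1(a)+z_j^0(a)\big)$ to an adaptive cost that effectively ``discards'' one of the two terms after the first flip. The clean way is probably to lower-bound the optimal adaptive cost by the cost of the simple strategy ``flip the single coin $a(t+1)$ (or whichever coin pins down the bias), then recurse optimally on each branch'', and to upper-bound that recursion using the monotonicity structure (\Cref{cor:sorted_within_block,cor:sorted_across_block}) that forces the remaining coins into a near-sorted order on each branch. A secondary subtlety is the edge case where the crossover point is at the very end (second region empty) or $t=1$, where the bound should hold trivially since the ordering already terminates quickly in expectation. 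Once the two regions are bounded separately — constant part exact, tail part within a factor $\tfrac32$ — adding them gives $cost(a) \le \tfrac32 \cdot \mathrm{cost}_{\mathrm{adaptive}}$, which is the claim.
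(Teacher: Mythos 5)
Your opening move — split $cost(a)$ at the crossover point, use \Cref{fact:prob_termination} to get geometric decay in the first region, and show the post-crossover residual is dominated by a single bias type — is exactly the decomposition the paper uses. But you diverge in two places, and in both places your version would not close.

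First, the adaptive lower bound. You propose to ``lower-bound the optimal adaptive cost by the cost of the simple strategy `flip the single coin $a(t+1)$, then recurse optimally'.'' That is the wrong direction: a specific strategy's cost is an \emph{upper} bound on the optimal adaptive cost, not a lower bound. The paper instead lower-bounds the adaptive cost by relaxing the problem — any strategy must flip at least two coins and must at least be solving ``see one tail,'' whose cheapest ordering costs $2 + \sum_{j=2}^{n-1}\prod_{i=1}^j p_i$. This relaxation is what lets the denominator be written in terms of $\prod p_i$, which is essential for matching the numerator's tail.

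Second, and more fundamentally, you are trying to bound the \emph{optimal} non-adaptive ordering directly, and you attribute the $1.5$ to the tail comparison (``constant part exact, tail part within a factor $\tfrac{3}{2}$''). The paper does the opposite on both counts. It constructs a \emph{specific} (possibly suboptimal) ordering — smallest-first or largest-first, then greedy restricted to coins on one side of $\tfrac12$ — chosen precisely so that for every $j\ge t$ the ``all heads so far'' mass equals $z^1_{j+1}(a)=\prod_{i=1}^{j}p_i$, i.e. exactly the terms appearing in the adaptive lower bound. This makes the tail ratio $\le 1$, not $\le \tfrac32$. The constant part of the numerator works out to $3$ (the $2$ from the first two flips, plus $\le 1$ from the geometric sum over the first region together with the $\sum_j z_j^0$ contribution after the crossover), while the denominator's constant is $2$; the $\tfrac32$ lives entirely there, via $\frac{\alpha+\beta}{\gamma+\delta}\le\max\bigl(\frac{\alpha}{\gamma},\frac{\beta}{\delta}\bigr)$. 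Your proposal has no mechanism that forces the tail terms of an arbitrary (even optimal) ordering to line up with $\prod p_i$, and the structural results in \Cref{cor:sorted_within_block,cor:sorted_across_block} by themselves do not deliver that alignment — they only order coins, they don't guarantee the first $t$ positions contain the $t$ smallest (or largest) coins. Without that alignment, bounding the tail ratio by $\tfrac32$ is an unproven assertion, and your ``constant part exact'' claim does not hold either, since the non-adaptive side must pay the geometric first-region cost and the residual $z^0$ mass beyond the bare $2$.
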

\begin{proof}
    We try to construct the following two orderings:
\begin{itemize}
    \item The first coin has the smallest probability of heads. For every subsequent position, if the ordering is $0$-biased, use the coin with \emph{smallest} probability of heads at \emph{least} $\frac{1}{2}$; else, use the coin with \emph{smallest} probability of heads.
    \item The first coin has the \emph{largest} probability of heads. For every subsequent position, if the ordering is $1$-biased, use the coin with \emph{largest} probability of heads at \emph{most} $\frac{1}{2}$; else, use the coin with \emph{largest} probability of heads.
\end{itemize}
We explain why one of these orderings can be successfully created. Suppose in the first ordering, we run out of coins with probability at least $\frac{1}{2}$. Since the ordering is $0$-biased at this point, and all other coins have probability of heads less than $\frac{1}{2}$, we have $\prod_{i=1}^n p_i < \prod_{i=1}^n \pbar_i$. However, if in the second ordering, we run out of coins with probability \emph{at most} $\frac{1}{2}$, then $\prod_{i=1}^n p_i > \prod_{i=1}^n \pbar_i$. Only one of these events can occur.

These orderings are constructed to guarantee the following property. Suppose the crossover point is $t$. Then for all $s \ge t$, the smallest (or the largest) $s$ coins are in the first $s$ positions. This makes the last few coins in the ordering look similar to the adaptive strategy.

Choose an ordering $a$ that was successfully created. By definition, the probability of terminating the sequence is at least $\frac{1}{2}$ for any coin in position at most $t$. The cost of this algorithm is thus at most
\begin{align*}
    cost(a) = 2 + \sum_{j=3}^n (z^1_j(a) + z_j^0(a)) \le 2 + \left(\frac{1}{2} + \dots + \frac{1}{2^{t-1}} \right) + \sum_{j = t+2}^n  (z^1_j(a) + z_j^0(a)) \,.
\end{align*}
For the rest of this proof, we assume that we have chosen the first ordering; the argument if we choose the second ordering is symmetric. 
By our assumption, all coins in the second region (if it exists) have probability of heads greater than $\frac{1}{2}$.  So then $z_{j+1}^0(a) = \pbar_{j} \cdot z_{j}^0(a) \le \frac{1}{2} z_{j}^0(a)$ for all $j > t$. So $\sum_{j = t+2}^n z_j^0(a) \le z_{t+1}^0(a) \le \frac{1}{2^{t-1}}$. So the algorithm has cost at most
\begin{align*}
cost(a) \le 1 + \left(1 + \frac{1}{2} + \dots + \frac{1}{2^{t-1}} \right) + \frac{1}{2^{t-1}} + \sum_{j=t+2}^n z_j^1(a) = 3 + \sum_{j=t+2}^n z_j^1(a)\,.
\end{align*}
Meanwhile, the cost of any adaptive strategy for the Unanimous Vote problem is lower-bounded by the cost of the best strategy to see at least one tails while flipping at least two coins, which is $2 + \sum_{j=2}^{n-1} \prod_{i=1}^j p_i$ (i.e. flip the coins in increasing order of $p_i$). So the adaptivity gap is at most
\begin{align*}
    \frac{3 + \sum_{j=t+2}^n z_j^1(a)}{2 + \sum_{j=2}^{n-1} \prod_{i=1}^j p_i} \le \max\left(\frac{3}{2}, \frac{\sum_{j=t+2}^n z_j^1(a)}{ \sum_{j=2}^{n-1} \prod_{i=1}^j p_i}\right)\,.
\end{align*}
By construction, for all $j \ge t$, we have $z_{j+1}^1(a) = \prod_{i=1}^{j} p_i$. So the second ratio equals 
\begin{align*}
    \frac{\sum_{j=t+1}^{n-1} \prod_{i=1}^j p_i}{\sum_{j=2}^{n-1} \prod_{i=1}^j p_i}\,,
\end{align*}
which is at most $1$ since the crossover point $t \ge 1$. So the adaptivity gap is at most $\max(\frac{3}{2}, 1) = \frac{3}{2}$.
\end{proof}
{\ifarxiv
We can generalize the ideas in \Cref{claim:adaptivitygap_1.5} to sharpen the upper bound to $1.2+o(1)$. Our proof of this result is complicated and requires the entirety of \Cref{apx:adaptivity_gap_upperbound}. 
We report the result as \Cref{thm:adaptivitygap_body}.
\else
In the full version of the paper, we use the techniques in \Cref{claim:adaptivitygap_1.5} to sharpen the upper bound to $1.2+o(1)$.
\fi}
\section{Open questions}
We conclude by highlighting several open questions related to the Unanimous Vote problem. 

First, as discussed in \Cref{sec:additional_related}, SBFE problems are often studied with \emph{costs}, and it is natural to consider the problem in that setting, i.e., where it costs $c_i$ to flip coin $i$ and the goal is to minimize the expected cost of coins flipped before termination. It is unclear if our algorithm would generalize to this setting.

It is also interesting to consider the unit-cost setting for other symmetric Boolean functions beyond the Unanimous Vote (i.e., not-all-equal) function. For example, recent work provides a PTAS for
the unit-cost, non-adaptive SBFE problem for the $k$-of-$n$ function~\cite{nielsen2025nonadaptive}.  We conjecture that there is a polynomial-time exact algorithm. 
More generally, 
It is an open question whether there is a polynomial-time algorithm for the unit-cost SBFE problem for arbitrary symmetric functions, in either the adaptive or the non-adaptive setting. Some related work in the adaptive setting appeared in the information theory literature, 
motivated by problems involving distributed sensors
\cite{AcharyaJafarpourOrlitsky:2011,DasJafarpourOrlitsky:2012,KowshikKumar:2013}.

Finally, the interchange arguments used in our analysis suggest that there may be a simple local-search algorithm for the Unanimous Vote problem based on interchanges. It would be interesting to find such an algorithm that converges to an optimal solution in polynomial time.
 We note that the algorithm would need to interchange non-adjacent elements of the ordering; under adjacent interchanges only, there can be locally optimal solutions that are not globally optimal.

\section*{Acknowledgements}
Thanks to R. Teal Witter for collaborating on early stages of this project.
K.M acknowledges
support from AFOSR (FA9550-21-1-0008). This material is based
upon work partially supported by the National Science Foundation Graduate Research Fellowship under
Grant No. 2140001.
C.M. was partially supported by NSF Award No. 2045590.  L.H. was partially supported by NSF Award No. 1909335.

\clearpage
\newpage

\bibliography{_references}
\bibliographystyle{_alpha-betta-url}
\clearpage
\newpage
\appendix
\crefalias{section}{appendix}
\crefalias{subsection}{appendix}

\section{Improved adaptivity gap upper bound}
\label{apx:adaptivity_gap_upperbound}
Here we show the adaptivity gap is at most $1.2 + O(\frac{1}{2^n})$, where $n$ is the number of coins.
To prove this upper bound, we split the set of all instances into a large number of cases based on a structure in the optimal ordering we call \emph{crossover point} (\Cref{defn:crossoverpt}). 
We provide some intuition in \Cref{sec:adaptivity_gap_intuition} and \Cref{fig:heuristicplot}, and outline the proof strategy in \Cref{sub:proofoutline} and \Cref{fig:adaptivitygap_outline}.
We analyze each case in \Cref{sec:deferredlemmas}. 
Each case requires an upper bound of an explicit multivariable expression, which we show in \Cref{sec:deferredmathfacts}. 
 
The optimal \emph{adaptive} strategy is easy to describe~\cite{gkenosis2018stochastic}.
Suppose the first coin comes up heads.
Then the optimal strategy flips all remaining coins in increasing probability of heads. If instead the first coin comes up \emph{tails}, the optimal strategy flips all remaining coins in decreasing probability of heads. One can use brute force to identify the choice of first coin: generate the $n$ strategies corresponding to each possible choice of the first coin, compute their expected costs, and output the strategy with minimum expected cost.

\subsection{Intuition}
\label{sec:adaptivity_gap_intuition}

Fix some constant $k$. Consider the $k! \cdot \binom{n}{k}$ orderings that fix the first $k$ coins arbitrarily,
and then for all subsequent coins, greedily choose the coin that would maximize the probability of the sequence terminating.
To upper bound the adaptivity gap, we show that there is some $k$ where the minimum cost among the $k! \cdot \binom{n}{k}$ orderings is \textit{always} at most $1.2 + o(1)$ times the cost of the best adaptive strategy. 

It turns out that $k=5$ is enough; here we provide a heuristic calculation. Consider the family of instances with $p_1 = 0$ and $p_n = \dots = p_{n-k} = 1-x$, for some $x \le \frac{1}{2}$. Note that this family of instances includes the extremal instance in \Cref{claim:adaptivitygap_lowerbound}.
For each $k \ge 2$, we can upper bound the adaptivity gap as 
\begin{align*}
    \frac{2 + 2x}{2 + \frac{x^2}{1-x}}\,,
    \frac{2 + x + 2x^2}{2 + \frac{x^2}{1-x}}\,,
        \frac{2 + x + x^2+2x^3}{2 + \frac{x^2}{1-x}}\,,
    \frac{2 + x + x^2+x^3+2x^4}{2 + \frac{x^2}{1-x}}\,, \dots
\end{align*}
The numerators are increasingly good Taylor approximations to the value $2 + \frac{x}{1-x}$. All expressions equal $1.2$ at $x = \frac{1}{2}$. The maximum value of these expressions over $0 \le x \le \frac{1}{2}$ converges to $1.2$ as $k \to \infty$, in fact reaching this value when the numerator has at least $k=5$ terms. See \Cref{fig:heuristicplot} or this \href{https://www.desmos.com/calculator/crkh0i10i5}{online plot}.

\begin{figure}[h!]
    \centering
\includegraphics[width=\linewidth]{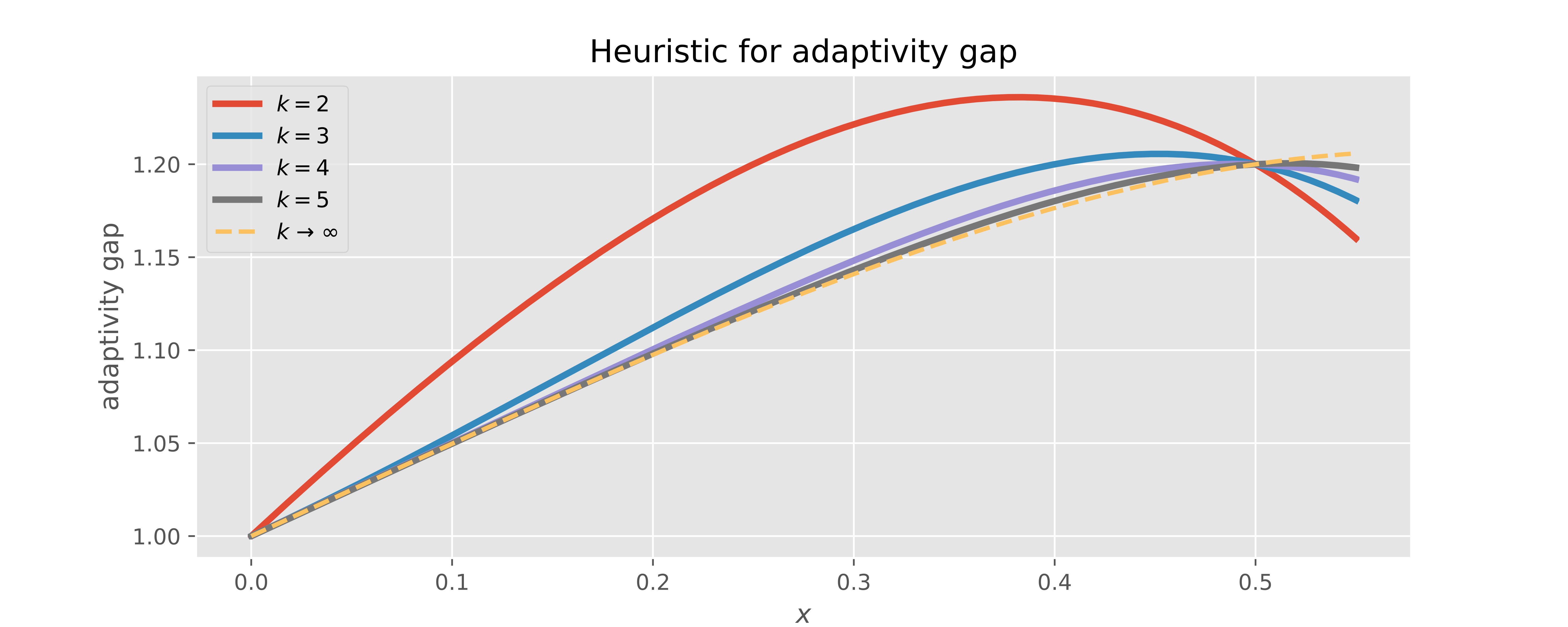}
    \caption{\small Heuristic for the adaptivity gap. The maximum value over $0 \le x \le \frac{1}{2}$ is at least $1.2$ for all $k$, and at most $1.2$ when $k \ge 5$. At $k \to \infty$, the heuristic recovers the adaptivity gap of the instance $\{0, 1-x,1-x,1-x,\dots\}$.
    }
    \label{fig:heuristicplot}
\end{figure}

\subsection{Proof outline}
\label{sub:proofoutline}
We now outline the proof of the upper bound, following \Cref{fig:adaptivitygap_outline}.
We split the set of instances into cases depending on the values $(p_1, p_n)$, the probability of the coin $\ks$ chosen by the optimal adaptive strategy, and the crossover points $t$ of orderings we define below.
Since this proof is quite long, we move some technical parts to \Cref{sec:deferredlemmas} and \Cref{sec:deferredmathfacts}.

\begin{figure}[ht]
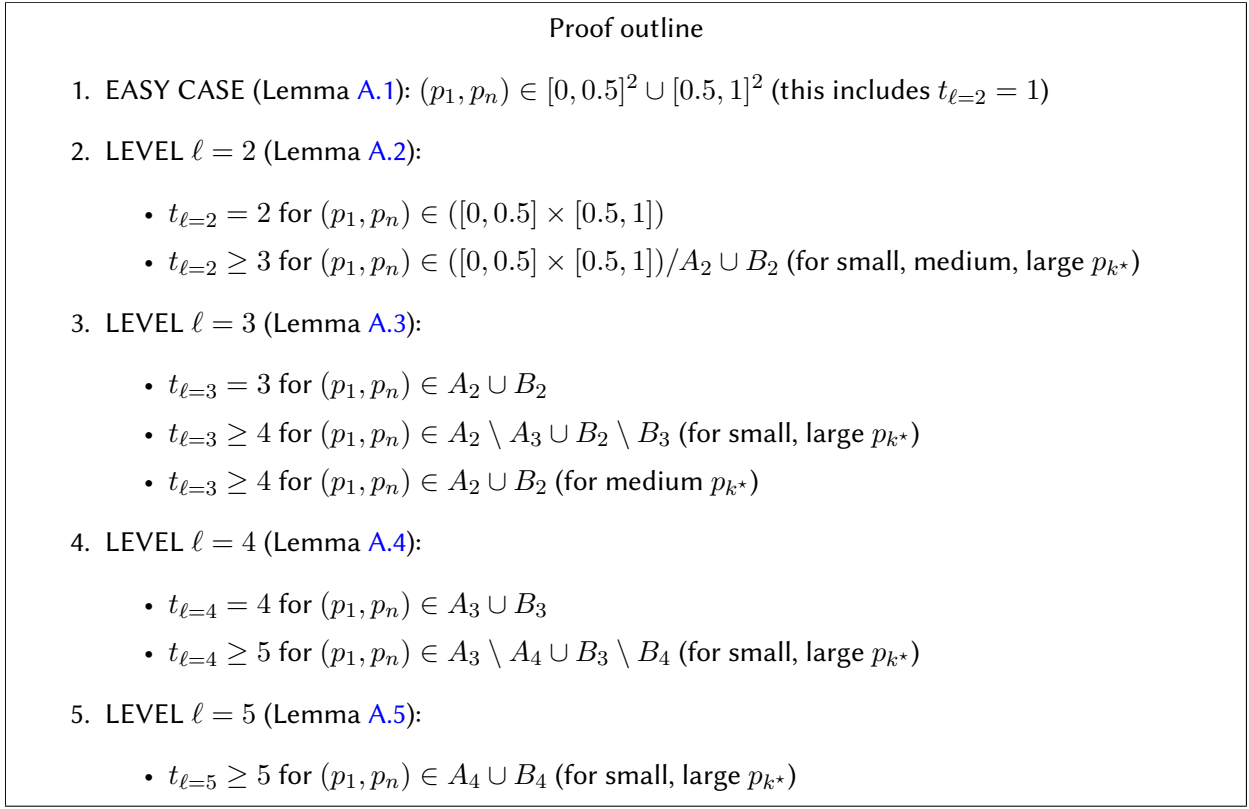

\begin{mdframed}
    {\sffamily
\centering Proof outline
    \begin{enumerate}
        \item \textsf{EASY CASE} (\Cref{lemma:adaptivitygap_1.2_onesideofhalf}): $(p_1, p_n) \in [0, 0.5]^2 \cup [0.5, 1]^2$ (this includes $t_{\ell =2 } = 1$)
        \item \textsf{LEVEL} $\ell = 2$ (\Cref{lemma:adaptivity_ell_2}):
        \begin{itemize}
            \item $t_{\ell = 2} = 2$ for $(p_1, p_n) \in ([0, 0.5] \times [0.5,1])$
        \item $t_{\ell = 2} \ge 3$ for $(p_1, p_n) \in ([0, 0.5] \times [0.5,1]) / A_2 \cup B_2$ (for small, medium, large $p_{\ks}$)
        \end{itemize}
        \item \textsf{LEVEL} $\ell = 3$ (\Cref{lemma:adaptivity_ell_3}):
        \begin{itemize}
            \item $t_{\ell = 3} = 3$ for $(p_1,p_n) \in A_2 \cup B_2$
                \item $t_{\ell = 3} \ge 4$ for $(p_1,p_n) \in A_2\setminus A_3 \cup B_2\setminus B_3$ (for small, large $p_{\ks}$) \item  $t_{\ell = 3} \ge 4$ for $(p_1,p_n) \in A_2 \cup B_2$ (for medium $p_{\ks}$)
        \end{itemize}
        \item \textsf{LEVEL} $\ell = 4$ (\Cref{lemma:adaptivity_ell_4}):
        \begin{itemize}
          \item $t_{\ell = 4} = 4$ for $(p_1, p_n) \in A_3 \cup B_3$
        \item $t_{\ell = 4} \ge 5$ for $(p_1, p_n) \in A_3\setminus A_4 \cup B_3\setminus B_4$ (for small, large $p_{\ks}$)
        \end{itemize}
        \item \textsf{LEVEL} $\ell = 5$ (\Cref{lemma:adaptivity_ell_5}):
        \begin{itemize}
        \item $t_{\ell = 5} \ge 5$ for $(p_1, p_n) \in A_4 \cup B_4$ (for small, large $p_{\ks}$)
        \end{itemize}
    \end{enumerate}
        }
\end{mdframed}
    \caption{ \small
Schematic for the proof that the adaptivity gap is at most $1.2 + o(1)$ (\Cref{thm:adaptivitygap_body}). We consider all choices of $(p_1, p_n, p_{\ks})$ and crossover points $t$ of various orderings. The sets $A_2,B_2,A_3,B_3,A_4,B_4$, and the ranges of $p_{\ks}$ are defined in the proof outline (\Cref{sub:proofoutline}). Each level handles a set of instances the previous level could not handle. Note that if $t_{\ell} \ge \ell + 1$, then $t_{\ell+1} \ge \ell + 1$.}
    \label{fig:adaptivitygap_outline}
\end{figure}

We first claim that if all coins have probability of heads less than $\frac{1}{2}$ or more than $\frac{1}{2}$, the adaptivity gap is at most $1.2$. We defer the proof of this lemma (and all future lemmas) to \Cref{sec:deferredlemmas}.
\begin{restatable}{lemma}{adaptivityeasycase}
\label{lemma:adaptivitygap_1.2_onesideofhalf}
    Suppose the coins all have probability of heads at least $\frac{1}{2}$, or all have probability of heads at most $\frac{1}{2}$. Then the adaptivity gap is at most $1.2$. 
\end{restatable}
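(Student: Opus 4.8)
The plan is to reduce directly to the coin-flipping problem of the introduction. Suppose all coins have probability of heads at most $\frac{1}{2}$; the other case is symmetric. First I would lower-bound the optimal adaptive strategy. Since every $p_i \le \frac12$, we have $\prod_i p_i \le \prod_i \pbar_i$, so no matter what the first coin is, it is at least as likely that all coins come up tails as all heads. This means the ``hardest'' outcome to resolve is the all-tails event, and I claim the optimal adaptive cost is bounded below by $2 + \sum_{j=2}^{n-1}\prod_{i=1}^{j}p_{i}$, where $p_1 \le \cdots \le p_n$: this is the cost of the strategy that simply flips coins in increasing order of $p_i$ until a head appears or all coins are used, which only tries to witness one head and is therefore a relaxation of the true problem (witnessing both a head and a tail). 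I would make this rigorous by noting that the adaptive Unanimous Vote problem requires, along the branch where the first flip is tails, flipping until a head is seen, and the best such strategy flips in increasing order of heads probability; see the description of the adaptive optimum in \cite{gkenosis2018stochastic}. Call this lower bound $\mathrm{ADA} \ge 2 + \sum_{j=2}^{n-1}\prod_{i=1}^{j}p_{i}$.

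Next I would exhibit a single non-adaptive ordering whose cost is at most $1.2\cdot\mathrm{ADA}$ — in fact I will use the ordering that sorts coins in \emph{increasing} probability of heads, i.e.\ $a(i) = i$. Its cost is $\mathrm{cost}(a) = 2 + \sum_{j\ge 3}\bigl(z_j^1(a) + z_j^0(a)\bigr) = 2 + \sum_{j=2}^{n-1}\prod_{i=1}^{j} p_i + \sum_{j=2}^{n-1}\prod_{i=1}^{j}\pbar_i$. The first sum is exactly the quantity appearing in $\mathrm{ADA}$, so I only need to control the second sum $\sum_{j=2}^{n-1}\prod_{i=1}^{j}\pbar_i$ relative to $\mathrm{ADA}$. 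Here I would split into the subcase where $p_n = 0$ (all coins are tails-with-certainty), which forces every ordering to flip all $n$ coins, both adaptive and non-adaptive, giving gap $1 \le 1.2$; and the subcase $p_n > 0$. In the latter, the issue is that $\prod_i \pbar_i$ could be close to $1$ while $\prod_i p_i$ is $0$ (e.g.\ when $p_1 = 0$), which is exactly the extremal family from \Cref{claim:adaptivitygap_lowerbound}.

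The main obstacle, and the technical heart, is bounding $\sum_{j=2}^{n-1}\prod_{i=1}^{j}\pbar_i$ against $2 + \sum_{j=2}^{n-1}\prod_{i=1}^{j}p_i$. I would handle this by a case analysis on $p_{n}$ (the largest heads probability), paralleling the heuristic in \Cref{sec:adaptivity_gap_intuition}: if $p_n \le \frac12$ then each $\pbar_i \ge \frac12$ but also each $\pbar_i = 1 - p_i$, and one gets a telescoping/geometric bound. Concretely, write $x_i \defeq p_i \le \frac12$; then $\prod_{i=1}^j \pbar_i \le \prod_{i=1}^j (1-x_i) \le e^{-\sum x_i}$, and one compares the ``tails tail sum'' with the constant $2$ using the fact that $\prod_{i=1}^j\pbar_i \le \prod_{i=1}^{j} (1 - \tfrac12 \mathbbm{1}[x_i > 0]) \cdot(\text{stuff})$ — but the cleanest route is: since every non-adaptive ordering here has cost $2 + \sum_{j=2}^{n-1}\prod p_i + \sum_{j=2}^{n-1}\prod\pbar_i$, and every \emph{adaptive} strategy has cost at least $\max\bigl(2 + \sum_{j=2}^{n-1}\prod p_i,\ 2 + \sum_{j=2}^{n-1}\prod \pbar_i - C\bigr)$ for a small constant $C$ coming from the first flip being not-quite-optimal on the tails branch, the ratio is at most $\frac{A + B}{\max(A,B) - C}$ where $A,B \ge 2$; one then checks $\frac{A+B}{\max(A,B)} \le \frac{3}{2}$ in general but $\le \frac{6}{5}$ once we also use $\min(A,B) \ge 2$ together with the structural bound $\min(A,B) \le \frac{2}{3}\max(A,B)$ forced by $p_n \le \tfrac12$ (which is what makes the geometric tail on the smaller side decay fast relative to the $+2$). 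I expect the bookkeeping of this last inequality — pinning down the exact constant $C$ and verifying $\frac{A+B}{\max(A,B)-C} \le 1.2$ in every regime of $(A,B)$ — to be the delicate part; it is essentially a one-variable optimization once the reduction is set up, and the worst case should be exactly the $\{0,1-x,1-x,\dots\}$ family evaluated at $x = \tfrac12$, matching \Cref{claim:adaptivitygap_lowerbound}.
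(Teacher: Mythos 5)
Your proposal has a genuine gap: you picked the wrong ordering for the case you set up. You take the branch where all $p_i \le \tfrac12$ and use $a(i)=i$, i.e., sorting by \emph{increasing} heads probability. But when every coin is biased toward tails, that ordering puts the most tails-biased coins first, so the tails sum $B = \sum_{j=2}^{n-1}\prod_{i=1}^j \pbar_i$ in your cost expression uses the \emph{largest} $\pbar$'s and can be $\Omega(n)$. Concretely, take $p_1=\cdots=p_k=0$ and $p_{k+1}=\cdots=p_n=\tfrac12$ with $n=2k$: your ordering flips all $k$ zero-coins first and has cost $\approx 2+k$, while both the adaptive cost and the true optimal non-adaptive cost are $\approx 2.5$. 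So with $A=\sum_j\prod_{i\le j}p_i=0$ and $B\approx k$, the quantity $\frac{2+A+B}{2+A}$ that you propose to bound is unbounded; this is not a bookkeeping delicacy for the final paragraph but a fatal choice. Relatedly, your claim that every adaptive strategy costs at least $2 + \sum \prod \pbar_i - C$ for a ``small constant $C$'' is false — the slack between $\sum\prod\pbar_i$ (largest $\pbar$'s first) and the correct lower-bound term $\sum\prod\pbar_{n+1-i}$ (smallest $\pbar$'s first) is $\Omega(n)$ on this same family. The concluding arithmetic (``$\min(A,B)\le\tfrac23\max(A,B)$ gives $\tfrac65$'') also does not follow: $1 + \tfrac23 = \tfrac53$, not $\tfrac65$.

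The fix mirrors what the paper does. The paper works in the opposite branch, all $p_i\ge\tfrac12$, with $a(i)=i$; the symmetric argument for your branch must use $a(i)=n+1-i$, i.e., decreasing $p$. Then the cost is $2 + \sum_j\prod_{i\le j}p_{n+1-i} + \sum_j\prod_{i\le j}\pbar_{n+1-i}$. The $\pbar$-sum is now the ``large'' term: since each $\pbar_{n+1-i}\ge\tfrac12$ it is at least $\tfrac12-\tfrac1{2^{n-1}}$, and it coincides exactly with the $\pbar$ portion of the paper's adaptive lower bound, so it appears in both numerator and denominator. The $p$-sum is the ``small'' term: since each $p_{n+1-i}\le\tfrac12$ it is at most $\tfrac12-\tfrac1{2^{n-1}}$, and it sits only in the numerator. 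The resulting ratio is at most $\frac{(2.5-\epsilon)+(0.5-\epsilon)}{2.5-\epsilon}\le 1.2$ with $\epsilon = 2^{-(n-1)}$, which is the paper's calculation with the roles of $p$ and $\pbar$ exchanged. Your high-level plan (reduce to a sum-versus-max comparison anchored at $2.5$, with the worst case at the uniform-$\tfrac12$ family) is the right shape, but it only closes with the matching ordering.
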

From here, we assume $p_1 \le \frac{1}{2} \le p_n$. We break the analysis into \emph{levels} $\ell$, for $\ell \in \{2,3,4,5\}$. 
At each level $\ell$, we construct two orderings (as in \Cref{claim:adaptivitygap_1.5}).
In each ordering, the first $\ell$ coins are chosen ``greedily'' as in \Cref{alg:greedy}: The first two coins have the smallest and largest probability of heads, and for each following coin, we choose the coin that maximizes the probability of termination. After these coins, each ordering uses one of the below strategies:
\begin{itemize}
    \item For every subsequent position, if the ordering is $0$-biased, use the coin with \emph{smallest} probability of heads at \emph{least} $\frac{1}{2}$; otherwise, use the coin with \emph{smallest} probability of heads.
    \item For every subsequent position, if the ordering is $1$-biased, use the coin with \emph{largest} probability of heads at \emph{most} $\frac{1}{2}$; otherwise, use the coin with \emph{largest} probability of heads.
\end{itemize}
As in \Cref{claim:adaptivitygap_1.5}, at least one of these strategies successfully creates an ordering. For this analysis, we assume that we use the ordering $a$ that chooses the coins with smallest probability of heads; the argument in the other case is symmetric.
Let's say the crossover point of this ordering is $t$,\footnote{Sometimes we refer to this as $t_{\ell}$, but we drop the subscript when the level $\ell$ is clear from context.} and assume that $t \ge \ell$. Then the cost of this ordering is
\begin{align*}
    cost(a) \le 2 + \sum_{j=3}^{\ell} (z_j^1(a) + z_j^0(a)) +  \sum_{j=\ell+1}^{t+1} (z_j^1(a) + z_j^0(a)) + \sum_{j=t+2}^{n} (z_j^1(a) + z_j^0(a))\,.
\end{align*}
As in \Cref{claim:adaptivitygap_1.5}, since all coins after position $t$ have probability at least $\frac{1}{2}$, we have $z_{j+1}^0(a) \le \frac{1}{2} z_j^0(a)$ for all $j > t$, and so $\sum_{j=t+2}^n z_j^0(a) \le z_{t+1}^0(a)$. 
Meanwhile, the first region has probability of termination at least $\frac{1}{2}$, so the cost is at most
\begin{align*}
    cost(a) &\le 2 + \sum_{j=3}^{\ell} \left(z_j^1(a) + z_j^0(a)\right) +  (z_{\ell+1}^1(a) + z_{\ell+1}^0(a))(1 + \frac{1}{2} + \dots + \frac{1}{2^{t-\ell}}) + z_{t+1}^0(a) + \sum_{j=t+2}^{n} z_j^1(a)
    \\
    &\le 2 + \sum_{j=3}^{\ell} \left(z_j^1(a) + z_j^0(a)\right) +  2(z_{\ell+1}^1(a) + z_{\ell+1}^0(a)) - z_{t+1}^1(a) +  \sum_{j=t+2}^{n} z_j^1(a)\,.
\end{align*}
Meanwhile, we write down the cost of the adaptive strategy $\mathbf{a}_{\ks}$. Let $S = [1,2,\dots,n] \setminus \{\ks\}$ be the ordered list of numbers from $1$ to $n$ without $\ks$.  Then 
\begin{align*}
    cost(\mathbf{a}_{\ks}) &= 2 
    + (\pbar_{\ks}\pbar_{S[n-1]} + \pbar_{\ks}\pbar_{S[n-1]} \pbar_{S[n-2]} + \dots)
    +
    (p_{\ks}p_{S[1]} + p_{\ks}p_{S[1]} p_{S[2]} + \dots)
    \\
    &= 2 + \pbar_{\ks} \sum_{j=1}^{n-2} \prod_{i=1}^j \pbar_{S[n-i]} +  p_{\ks} \sum_{j=1}^{n-2} \prod_{i=1}^j p_{S[i]}\,.
\end{align*}
So assuming $t \ge \ell$, the adaptivity gap is at most
\begin{align}
\label{eqn:adaptivitygap_arbitrary_ell}
    \frac{
     2 + \sum_{j=3}^{\ell} \left(z_j^1(a) + z_j^0(a)\right) +  2(z_{\ell+1}^1(a) + z_{\ell+1}^0(a)) - z_{t+1}^1(a) +  \sum_{j=t+2}^{n} z_j^1(a)
    }
    {
    2 + \pbar_{\ks} \sum_{j=1}^{n-2} \prod_{i=1}^j \pbar_{S[n-i]} +  p_{\ks} \sum_{j=1}^{n-2} \prod_{i=1}^j p_{S[i]}
    }\,.
\end{align}
For each level $\ell$, we will analyze this expression using a ``rescaling trick''. For arbitrary positive $\alpha,\beta,\gamma,\delta$, we have $\frac{\alpha+\beta}{\gamma+\delta} \le \max(\frac{\alpha + \beta \cdot x}{\gamma + \delta \cdot x}, \frac{\beta}{\delta})$, for any $0 \le x \le 1$. We use this fact to rescale the last summation of the numerator and denominator in the expression above. Precisely, we identify a term $\beta$ in the numerator and $\delta$ in the denominator, and show that $\frac{\beta}{\delta} \le r$. Then we rescale the terms to $\beta \cdot x$ and $\delta \cdot x$ for some $0 \le x \le 1$ in the above expression. The adaptivity gap is then at most the maximum of the new expression and $r$. In general, we want to rescale small enough so that the numerator can simplify, but not too small for the denominator to simplify.

We describe how we can analyze instances with orderings at progressively larger $\ell$. 
As in \Cref{claim:adaptivitygap_1.5}, if $\prod_{i=1}^n p_i \ge \prod_{i=1}^n \pbar_i$, the first ordering (using \emph{smallest} probability of heads) will successfully create an ordering for all $\ell$; otherwise, the second ordering (using \emph{largest} probability of heads) will successfully create an ordering for all $\ell$. Now, if an ordering created with a level-$\ell$ ordering has crossover point $t_{\ell} > \ell$, then there is at least one more coin we can greedily assign. So there must be a level-$(\ell+1)$ ordering, created with the same ordering, with crossover point $t_{\ell+1} > \ell$. Thus, if we show that an instance obeys $t_{\ell} > \ell$, we may analyze the adaptivity gap from the level-$(\ell+1)$ ordering, with crossover point $t_{\ell+1} \ge \ell + 1$, created via the same ordering. And to start the analysis, since $p_1 \le \frac{1}{2} \le p_n$, the crossover point of a level $\ell=2$ ordering is at least $2$. 

Let $A_2 \defeq [0,0.2] \times [0.5,0.76]$ and $B_2 \defeq [0.24,0.5] \times [0.8,1]$. Then this ordering gives adaptivity gap at most $1.2 + o(1)$ for a large swath of instances:
\begin{restatable}{lemma}{adaptivityellequalstwo}
\label{lemma:adaptivity_ell_2}
    The level-$2$ ordering has adaptivity gap of at most $1.2 + O(\frac{1}{2^n})$ when $(p_1, p_n) \in ([0, 0.5] \times [0.5,1]) / A_2 \cup B_2$ holds, and when both $t_{\ell=2} = 2$ and $(p_1, p_n) \in ([0, 0.5] \times [0.5,1])$ hold.
\end{restatable}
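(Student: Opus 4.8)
The plan is to instantiate the general estimate \eqref{eqn:adaptivitygap_arbitrary_ell} at $\ell=2$ and reduce the statement to a handful of explicit inequalities in $p_1,p_n$ (together with the crossover index $t=t_{\ell=2}$ and the rank of $\ks$). The first thing to check is that the hypothesis $t\ge\ell$ needed for \eqref{eqn:adaptivitygap_arbitrary_ell} is automatic here: since $p_1\le\frac12\le p_n$, the probability that the level-$2$ ordering terminates at position $2$ equals $p_1\pbar_n+\pbar_1 p_n=\frac12+2\left(\frac12-p_1\right)\left(p_n-\frac12\right)\ge\frac12$, so $t\ge 2$. Substituting $\ell=2$ (the sum $\sum_{j=3}^{\ell}$ is empty) and using $z_3^1(a)=p_1p_n$, $z_3^0(a)=\pbar_1\pbar_n$, because the first two coins of the level-$2$ ordering are the two extreme coins, \eqref{eqn:adaptivitygap_arbitrary_ell} gives
\[
\text{(adaptivity gap)} \;\le\; \frac{2+2\left(p_1p_n+\pbar_1\pbar_n\right)-z_{t+1}^1(a)+\sum_{j=t+2}^{n}z_j^1(a)}{2+\pbar_{\ks}\sum_{j=1}^{n-2}\prod_{i=1}^{j}\pbar_{S[n-i]}+p_{\ks}\sum_{j=1}^{n-2}\prod_{i=1}^{j}p_{S[i]}}\,.
\]

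Next I would control the two tails. In the denominator I would simply discard the nonnegative term $\pbar_{\ks}\sum_{j}(\cdots)$. For the numerator I would use the structure of the level-$2$ ordering established in the proof of \Cref{claim:adaptivitygap_1.5} — coins placed after the crossover point all have probability of heads $>\frac12$, while the small coins are front-loaded — to rewrite $-z_{t+1}^1(a)+\sum_{j=t+2}^{n}z_j^1(a)$ as a negative leading term plus a sum of partial products of the smallest coins. I would then apply the rescaling inequality $\frac{\alpha+\beta}{\gamma+\delta}\le\max\!\left(\frac{\alpha+\beta x}{\gamma+\delta x},\frac{\beta}{\delta}\right)$, $0\le x\le 1$, with $\beta$ the numerator tail and $\delta$ the surviving denominator tail $p_{\ks}\sum_{j}\prod_{i=1}^{j}p_{S[i]}$, choosing $x$ small when $p_{\ks}$ is large and close to $1$ when $p_{\ks}$ is small, so that both the residual ratio $\beta/\delta$ and the rescaled ``head'' ratio come out at most $1.2+O(2^{-n})$. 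The residual bound $\beta/\delta\le 1.2+O(2^{-n})$ is exactly where the three sub-cases ``small / medium / large $p_{\ks}$'' enter: for a large coin $\ks$ the tails match term by term ($\prod_{i=1}^{j}p_{S[i]}=\prod_{i=1}^{j}p_i$) and $p_{\ks}\approx 1$; for $\ks$ the smallest coin, $p_{\ks}\prod_{i=1}^{j}p_{S[i]}=\prod_{i=1}^{j+1}p_i$, so the denominator tail dominates the numerator tail (whose smallest index is $\ge t+1\ge 3$); and the intermediate range is handled by a rearrangement/monotonicity comparison of the two partial-product sums.

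What remains is the rescaled head ratio, now a closed-form expression in $p_1,p_n$ and $t$. For the second clause of the lemma I would set $t=2$: then the subtracted term is exactly $z_3^1(a)=p_1p_n$, the numerator collapses to $2+p_1p_n+2\pbar_1\pbar_n+(\text{rescaled tail})$, and I would verify the resulting ratio is $\le 1.2+O(2^{-n})$ for every $(p_1,p_n)\in[0,\frac12]\times[\frac12,1]$ — an explicit two-variable estimate of the kind collected in \Cref{sec:deferredmathfacts}. For the first clause I would take $t\ge 3$; then $z_{t+1}^1(a)=p_1p_n\prod_{i=3}^{t}p_{a(i)}$ carries at least one more factor $p_{a(3)}$, and the requirement that position $3$ has termination probability $\ge\frac12$ (needed for $t\ge 3$) forces the bias there, hence $p_{a(3)}\ge\frac12$ or $p_{a(3)}=p_2$; either way one can lower-bound this product and shrink the numerator. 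The resulting head-ratio bound is then a function of $(p_1,p_n)$ alone, and it is $\le 1.2$ precisely on $[0,\frac12]\times[\frac12,1]\setminus(A_2\cup B_2)$ while it may exceed $1.2$ on $A_2\cup B_2$ — which is why those rectangles are excised here and picked up at level $3$.

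The step I expect to be the main obstacle is the residual tail comparison $\beta/\delta\le 1.2+O(2^{-n})$ in the medium-$p_{\ks}$ regime: the denominator tail $p_{\ks}\sum_{j}\prod_{i=1}^{j}p_{S[i]}$ depends delicately on where $\ks$ falls in the sorted list, so matching it against the numerator's partial products $\prod_{i=1}^{j}p_i$ requires a careful rearrangement argument, and one has to track the geometric tails precisely to land the sharp $O(2^{-n})$ error rather than a mere $o(1)$. A secondary difficulty is the two-variable optimization that pins down the exact rectangles $A_2$ and $B_2$: the head-ratio estimate must be verified to stay below $1.2$ right up to their boundaries, which is elementary but fussy.
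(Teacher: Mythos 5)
The overall case structure you propose---a separate $t_{\ell=2}=2$ analysis, three sub-regimes of $p_{\ks}$ for $t\ge 3$, and a reduction to explicit low-dimensional inequalities---is the paper's structure, and your observation that $t\ge 2$ is automatic from $p_1\pbar_n+\pbar_1 p_n\ge\tfrac12$ is correct. But one concrete step does not work: you propose to \emph{discard} the denominator term $\pbar_{\ks}\sum_{j=1}^{n-2}\prod_{i=1}^{j}\pbar_{S[n-i]}$. The paper does not discard it; it retains the lower bound $\pbar_{\ks}\,\pbar_n/p_n-O(2^{-n})$ for this sum, and that contribution is load-bearing. After your rescaling, the $t=2$ head expression becomes $\frac{2+2(p_1 p_n+\pbar_1\pbar_n)}{2+p_1}$, and this exceeds $1.2$ inside the region the lemma must cover. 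Take $p_1=0.45$, $p_n=0.6$, $p_2=\dots=p_{n-1}=0.55$: then $(p_1,p_n)\notin A_2\cup B_2$, the level-$2$ ordering is $1$-biased at position $3$ with $p_2>\tfrac12$, so $t=2$, and the ratio is $\frac{2+2(0.27+0.22)}{2.45}=\frac{2.98}{2.45}\approx 1.22>1.2$. Restoring $\pbar_{\ks}\pbar_n/p_n\ge\pbar_n^2/p_n\approx 0.27$ brings the denominator to $\approx 2.72$ and the ratio under $1.2$, which is exactly what \Cref{claim:adaptivity_gap_t_is_2} certifies; the same $\pbar_{\ks}\pbar_n/p_n$ lower bound is needed in all three $p_{\ks}$ sub-cases for $t\ge 3$, feeding \Cref{claim:2coin_1.2} and \Cref{claim:2coin_pk_midregion}.

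Two smaller mismatches with the paper. You frame the rescaling as choosing a scalar $x$ differently by $p_{\ks}$-regime; the paper instead always rescales by sending $p_i\mapsto\tfrac12$ beyond the crossover (justified because $p_t\ge\tfrac12$), and the three regimes differ in which term-by-term comparison $\beta_j/\delta_j\le r$ is established (ratio $1$ via matching indices, ratio $1.2$ when $p_n\le 1.2p_{\ks}$, or an index shift using $p_{\ks}\prod_{i=1}^{j}p_{S[i]}\ge\prod_{i=1}^{j+1}p_i$ when $p_{\ks}\le\tfrac12$), not in the choice of $x$. Also, in the medium-$p_{\ks}$ sub-case the resulting head bound is not a function of $(p_1,p_n)$ alone: it retains a term $p_n\left(p_{\ks}-\tfrac12\right)p_1^{\,t-1}/\pbar_1$ depending on $p_{\ks}$ and $t$ (handled by taking the worst case $t=3$), so \Cref{claim:2coin_pk_midregion} is a genuine three-variable estimate.
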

We now move to level $\ell = 3$ to handle some cases that the previous level could not handle.  Note that we can safely assume $t_{\ell = 3} \ge 3$, since the previous level handled the case $t_{\ell=2} = 2$.  

Let $A_3 \defeq  [0,0.2] \times [0.5,0.65]$ and  $B_3 \defeq  [0.35,0.5] \times [0.8,1]$. 
Then the level $\ell=3$ ordering gives adaptivity gap at most $1.2 + o(1)$ for another swath of instances:
\begin{restatable}{lemma}{adaptivityellequalsthree}
\label{lemma:adaptivity_ell_3}
    The level-$3$ ordering has adaptivity gap of at most $1.2 + O(\frac{1}{2^n})$ when $(p_1, p_n) \in A_2\setminus A_3 \cup B_2\setminus B_3$ holds, and when both $t_{\ell=3} = 3$ and $(p_1, p_n) \in A_2 \cup B_2$ hold, and when both $\frac{1}{2} \le p_{\ks} \le \frac{p_n}{1.2}$ and $(p_1, p_n) \in A_2 \cup B_2$ hold.
\end{restatable}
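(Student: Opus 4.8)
The plan is to instantiate the general level-$\ell$ bound \eqref{eqn:adaptivitygap_arbitrary_ell} at $\ell=3$ and then dispatch each of the three regimes by the rescaling trick together with an explicit bounded-variable inequality pushed to \Cref{sec:deferredmathfacts}. First I would note that we may assume $t_{\ell=3}\ge 3$: \Cref{lemma:adaptivity_ell_2} already handles $t_{\ell=2}=2$, and otherwise $t_{\ell=2}\ge 3$, which forces $t_{\ell=3}\ge 3$ because a coin assigned greedily at level $\ell$ is still assignable greedily at level $\ell+1$. Working, as permitted, with the ordering $a$ that always takes the smallest admissible probability of heads (the other choice being symmetric), the first two coins carry $p_1$ and $p_n$, so $z_3^1(a)+z_3^0(a)=1$; since $p_1\le\tfrac12$, position $3$ is $0$-biased, so the third (greedy) coin carries the largest remaining probability, and $z_4^1(a)$, $z_4^0(a)$ are determined up to the estimates $p_{n-1}\le p_n$ and $\pbar_{n-1}\le 1$. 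Substituting $\ell=3$ into \eqref{eqn:adaptivitygap_arbitrary_ell} then yields an upper bound on the gap whose numerator is $3+2(z_4^1(a)+z_4^0(a))-z_{t+1}^1(a)+\sum_{j=t+2}^{n}z_j^1(a)$ and whose denominator is the adaptive cost $2+\pbar_{\ks}\sum_j\prod_i\pbar_{S[n-i]}+p_{\ks}\sum_j\prod_i p_{S[i]}$.

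The next step is to match tails. By construction, for every $j\ge t$ the first $j$ coins of $a$ are exactly the $j$ smallest coins, so $z_{j+1}^1(a)=\prod_{i=1}^j p_i$, and hence the numerator tail $\sum_{j=t+2}^{n}z_j^1(a)$ agrees, up to omitting the single factor $p_{\ks}$ and a shift of index, with the heads-geometric term $p_{\ks}\sum_j\prod_i p_{S[i]}$ in the denominator. Applying $\tfrac{\alpha+\beta}{\gamma+\delta}\le\max\!\bigl(\tfrac{\alpha+\beta x}{\gamma+\delta x},\tfrac{\beta}{\delta}\bigr)$ with these two quantities rescaled, the gap is at most the larger of (i) the ratio $\beta/\delta$ of the two tails, which is at most $\tfrac{1}{p_{\ks}}$ times a ratio of truncated geometric sums and is controlled using $t\ge 3$ (and, in the medium regime, using $p_{\ks}\ge\tfrac12$), and (ii) a ratio of low-order polynomials in $p_1,p_n,p_{\ks}$ only. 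One slightly delicate point is that $\ks$ need not be a ``large'' coin, so its position in $a$, and the index at which it is skipped in the denominator's products, can shift; this only strengthens the denominator, but it has to be accounted for carefully.

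It then remains to verify (ii) in each regime. When $t_{\ell=3}=3$, the tail in (ii) begins at $j=5$ and the numerator collapses to a short polynomial in $p_1,p_n$; combined with the $\pbar_{\ks}$-geometric term of the denominator this reduces to an inequality of the shape $\text{(poly)}\le 1.2\cdot\text{(poly)}$ for $(p_1,p_n)$ over the small boxes $A_2\cup B_2$ and $p_{\ks}$ in its admissible range, which I would isolate as a lemma for \Cref{sec:deferredmathfacts}. When $t_{\ell=3}\ge 4$, the same reduction applies but $t+1\ge5$, so the numerator carries the extra negative term $-z_{t+1}^1(a)$ and a shorter tail; the point is that on the narrow bands $A_2\setminus A_3$ (where $p_n>0.65$) and $B_2\setminus B_3$ (where $p_1<0.35$), or whenever $p_{\ks}\le p_n/1.2$, the residual polynomial inequality still holds, whereas on the leftover sliver $A_3\cup B_3$ with extreme $p_{\ks}$ and $t\ge 4$ it need not, which is precisely why that sliver is deferred to level $4$ (\Cref{lemma:adaptivity_ell_4}). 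Each residual inequality is again a finite-variable polynomial inequality over an explicit box.

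I expect the main obstacle to be the case analysis itself and the verification of the resulting explicit multivariable inequalities: the probabilistic content is entirely contained in \eqref{eqn:adaptivitygap_arbitrary_ell} and the rescaling identity, both already available, so the real work is, in each of the small/medium/large-$p_{\ks}$ sub-cases, to choose a rescaling factor $x$ that lets the numerator simplify without letting the denominator degenerate, and then to check the leftover inequality on the prescribed box — together with the bookkeeping needed to handle the possible positions of $\ks$.
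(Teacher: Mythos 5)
Your high-level plan — instantiate \eqref{eqn:adaptivitygap_arbitrary_ell} at $\ell=3$, reduce to the case $t_{\ell=3}\ge3$, apply the rescaling trick, and defer explicit box-constrained polynomial inequalities to \Cref{sec:deferredmathfacts} — does match the paper's approach. But several concrete steps in the sketch are incorrect, and the errors are of a kind that would prevent the bound $1.2$ from being reached.

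First, the bias of position $3$ is \emph{not} determined by $p_1\le\tfrac12$. It is $0$-biased iff $\pbar_1\pbar_n>p_1p_n$, i.e.\ iff $p_1+p_n<1$, which holds on $A_2$ but \emph{fails} on $B_2$ (there $p_1+p_n\ge1.04$). Consequently the greedy third coin is $p_{n-1}$ in the $A_2$ branch but $p_2$ in the $B_2$ branch. These are genuinely different orderings; the paper carries them as two separate branches throughout, with different tail expressions and different target inequalities (\Cref{claim:t_equals_3_branch1} versus \Cref{claim:t_equals_3_branch2}, and a separate handling of the monotone-across-blocks structure in each). Your sketch treats only the $0$-biased possibility, so the $B_2$ half of the lemma is not covered.

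Second, the tail identity you use, $z_{j+1}^1(a)=\prod_{i=1}^j p_i$ for $j\ge t$, is false: the first few coins of $a$ are $p_1$, $p_n$, and then $p_{n-1}$ (in the $A_2$ branch) or $p_2$ (in the $B_2$ branch), so the largest coin (and possibly the second largest) sits at the front. The correct tail is $\sum_{j=t-1}^{n-3}p_np_{n-1}\prod_{i=1}^{j}p_i$ in the first branch and $\sum_{j=t}^{n-2}p_n\prod_{i=1}^{j}p_i$ in the second. The factor $p_n$ (resp.\ $p_np_{n-1}$) is precisely what forces the three-way split over $p_{\ks}$ (large/small/medium) in the paper, since the natural termwise comparison with $p_{\ks}\sum_j\prod_i p_{S[i]}$ picks up that factor; your reduction to ``$\tfrac1{p_{\ks}}$ times a ratio of truncated geometric sums'' glosses over it.

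Third, the estimate $z_3^1(a)+z_3^0(a)=1$ is wrong: that quantity equals $p_1p_n+\pbar_1\pbar_n=1-(p_1\pbar_n+\pbar_1p_n)$. While replacing it by $1$ is a valid over-estimate of the numerator, it is too lossy to get $1.2$: at the extremal instance ($p_1=0$, all other coins $\tfrac12$) one already has $z_3^1+z_3^0=\tfrac12$ and the gap is exactly $1.2-o(1)$, so adding anything to the numerator overshoots. Similarly, replacing $p_{n-1}$ by $p_n$ and $\pbar_{n-1}$ by $1$ as you suggest cannot work; the paper instead keeps $p_{n-1}$ (or $p_2$) as a genuine free variable and proves a three- or four-variable polynomial inequality over the relevant box, using in particular the constraint that position $4$ has the opposite bias to position $3$. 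These pieces of structure — the two branches, the $p_n$ (or $p_np_{n-1}$) factor in the tail, and keeping $z_3^1+z_3^0$ and $p_{n-1}$ exact — are the load-bearing ingredients that your sketch omits.
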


We now move to level $\ell = 4$ to handle some cases that the previous levels could not handle. Note that we can assume $t_{\ell = 4} \ge 4$, since the previous level handled the case $t_{\ell=3} = 3$. 
Moreover, we can assume $p_{\ks} \le \frac{1}{2}$ or $p_n \ge 1.2 p_{\ks}$.

Let  $A_4 \defeq  [0,0.2] \times [0.5,0.58]$ and $B_4 \defeq  [0.42,0.5] \times [0.8,1]$. Then the level $\ell=4$ ordering gives adaptivity gap of at most $1.2 + o(1)$ for yet another swath of instances:
\begin{restatable}{lemma}{adaptivityellequalsfour}
\label{lemma:adaptivity_ell_4}
    The level-$4$ ordering has adaptivity gap of at most $1.2 + O(\frac{1}{2^n})$ when both $t_{\ell=4} = 4$ and $(p_1, p_n) \in A_3 \cup B_3$ hold, and when both $p_{\ks} \in [p_1,\frac{1}{2}] \cup [\frac{p_n}{1.2},p_n]$ and $(p_1, p_n) \in A_3\setminus A_4 \cup B_3\setminus B_4$ hold.
\end{restatable}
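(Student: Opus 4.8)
The plan is to specialize the generic reduction behind \eqref{eqn:adaptivitygap_arbitrary_ell} to $\ell=4$ and then discharge the resulting scalar inequalities rectangle by rectangle. First I would record the reductions already in force: by \Cref{lemma:adaptivitygap_1.2_onesideofhalf} we may assume $p_1 \le \frac12 \le p_n$; since \Cref{lemma:adaptivity_ell_2} handled $t_{\ell=2}=2$ and \Cref{lemma:adaptivity_ell_3} handled $t_{\ell=3}=3$, the remark ``$t_\ell \ge \ell+1 \Rightarrow t_{\ell+1}\ge\ell+1$'' lets us assume $t \defeq t_{\ell=4}\ge 4$; and since \Cref{lemma:adaptivity_ell_3} also handled medium $p_{\ks}$ on $A_2\cup B_2 \supseteq A_3\cup B_3$, we may assume $p_{\ks}\in[p_1,\frac12]\cup[\frac{p_n}{1.2},p_n]$. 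Exactly as in \Cref{claim:adaptivitygap_1.5}, we may assume the ``smallest probability of heads'' ordering $a$ was the one successfully created, the other case being symmetric.

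Next I would evaluate the greedy prefix. The first two coins of $a$ are $p_1$ and $p_n$, so $z_3^1(a)=p_1p_n$ and $z_3^0(a)=\pbar_1\pbar_n$; for $(p_1,p_n)\in A_3\cup B_3$ one checks directly which of the two is larger, pinning down the bias at position $3$ and hence the greedy choices $a(3),a(4)$. Because $t\ge4$, the termination probability is at least $\frac12$ at positions $2,3,4$, so $z_{j+1}^1(a)+z_{j+1}^0(a)\le\frac12\big(z_j^1(a)+z_j^0(a)\big)$ for $j=2,3,4$; this controls $z_4^1(a)+z_4^0(a)$ and $z_5^1(a)+z_5^0(a)$ in terms of $p_1p_n+\pbar_1\pbar_n$, and a one-step refinement (using which side of $\frac12$ the greedy choices lie on, via \Cref{claim:nonfinal_probatheasthalf}) separates the $z^1$ from the $z^0$ contribution. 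I would then split into the sub-cases $t=4$ and $t\ge5$ and plug into \eqref{eqn:adaptivitygap_arbitrary_ell}: when $t=4$ the ``middle'' contribution is a single term, and when $t\ge5$ it telescopes into at most $2\big(z_5^1(a)+z_5^0(a)\big)$ while the construction guarantees that past the crossover the ordering mirrors the adaptive tail (all remaining coins have probability at least $\frac12$ and, from position $t$ on, appear in sorted order), so the tail $\sum_{j=t+2}^n z_j^1(a)$ is a sum of products of the smallest probabilities.

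Then comes the rescaling step described after \eqref{eqn:adaptivitygap_arbitrary_ell}: I would apply $\frac{\alpha+\beta}{\gamma+\delta}\le\max\!\big(\frac{\alpha+\beta x}{\gamma+\delta x},\frac{\beta}{\delta}\big)$ with $\beta=\sum_{j=t+2}^n z_j^1(a)$ in the numerator and $\delta=p_{\ks}\sum_{j=1}^{n-2}\prod_{i=1}^j p_{S[i]}$ in the denominator, choosing $x$ proportional to $z_{t+1}^1(a)$ (a product of the smallest probabilities) so that after rescaling the numerator collapses to a closed form in $p_1,p_n$ while the denominator retains enough structure to be bounded below; the ratio $\frac{\beta}{\delta}\le1.2$ follows from the termwise comparison of the two tails noted above, since every post-crossover coin has probability $\ge\frac12\ge p_1$. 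What remains is an explicit inequality in $p_1,p_n,p_{\ks}$ (and the rescaling parameter), which I would verify on $A_3\cup B_3$ for the $t=4$ branch and on $(A_3\setminus A_4)\cup(B_3\setminus B_4)$ for the $t\ge5$ branch, in each case separately over the ``small'' regime $p_{\ks}\le\frac12$ and the ``large'' regime $p_{\ks}\ge\frac{p_n}{1.2}$; these bounded-rectangle estimates are precisely the deferred facts of \Cref{sec:deferredmathfacts}. The $O(1/2^n)$ slack is carried along unchanged from the truncated sums $1+\frac12+\dots+\frac1{2^{t-\ell}}\le2$ and $\sum_{j=t+2}^n z_j^0(a)\le z_{t+1}^0(a)$.

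The main obstacle I expect is this final step: the rectangles $A_3,A_4,B_3,B_4$ and the cut points $\frac12,\frac{p_n}{1.2}$ are tuned so that the post-rescaling expression meets the $1.2$ threshold only by a narrow margin, so picking the rescaling parameter correctly (small enough that the numerator simplifies, large enough that the denominator does not degenerate) and then certifying the resulting low-dimensional inequality on each rectangle is where essentially all of the difficulty lies. A secondary subtlety is that the intermediate coin probabilities $a(3),a(4),\dots$ are never known, only $p_1$ and $p_n$, so every bound on $z_3,\dots,z_5$ and on both tails must be phrased purely through $p_1,p_n,p_{\ks}$ and the crossover structure, and keeping these bounds tight enough to survive the $1.2$ threshold is the crux.
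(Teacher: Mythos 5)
Your framework is right — start from \eqref{eqn:adaptivitygap_arbitrary_ell} with $\ell=4$, split on $t=4$ versus $t\ge5$ and on the $p_{\ks}$ regime, rescale the tails, and discharge the residual inequalities on the stated rectangles. The case bookkeeping (inheriting $t\ge4$ and the restriction to small/large $p_{\ks}$ from the earlier levels) also matches the paper. But the middle of your argument would not survive the $1.2$ threshold, for a concrete reason.

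The gap is your handling of $z_4$ and $z_5$. You propose to bound $z_4^1+z_4^0$ and $z_5^1+z_5^0$ ``in terms of $p_1p_n+\pbar_1\pbar_n$'' via the halving $z_{j+1}^1+z_{j+1}^0\le\tfrac12\bigl(z_j^1+z_j^0\bigr)$, and in your last paragraph you assert that every bound on $z_3,\dots,z_5$ and on both tails ``must be phrased purely through $p_1,p_n,p_{\ks}$ and the crossover structure.'' That is not what the paper does, and the looser bound fails numerically on the very rectangles in play. Take branch~1 ($A_3$) with $p_1=0.1$, $p_n=p_{n-1}=p_{n-2}=0.6$: the exact numerator is $2+0.42+0.18+2\cdot0.0792\approx2.758$, while replacing $z_4,z_5$ by the halving bound yields $2+2\cdot0.42=2.84$. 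Against the paper's denominator lower bound $2+\pbar_n\pbar_{n-1}\bigl(1+\tfrac{\pbar_{n-2}}{p_{n-2}}\bigr)+p_{n-1}\tfrac{p_1}{\pbar_1}\approx2.333$, the exact numerator gives ratio $\approx1.182$ but the halved numerator gives $\approx1.217>1.2$. So the paper keeps $p_{n-1},p_{n-2}$ (resp.\ $p_2,p_3$ in branch~2) as free variables all the way into the deferred facts — \Cref{claim:adaptivitygap_t_is_4_branch1}, \Cref{claim:adaptivitygap_t_is_4_branch2}, and \Cref{claim:4coin_1.2} are four-variable inequalities with $v=p_1$, $s=p_{n-2}$, $x=p_{n-1}$, $u=p_n$ constrained to $0.5\le s\le x\le u$ — and it expresses $z_3,z_4,z_5$ exactly rather than amortizing them by crossover halving. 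The crossover/half bound is already ``spent'' once in deriving \eqref{eqn:adaptivitygap_arbitrary_ell}; using it again to collapse $z_4,z_5$ double-counts and destroys the slack.

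Two smaller points. First, the rescaling the paper actually uses is the substitution $p_i\mapsto\tfrac12$ on the middle indices, applied termwise to the parallel numerator/denominator tail sums; it collapses the numerator tail to exactly $z_{t+1}^1(a)$ (cancelling the $-z_{t+1}^1(a)$ in \eqref{eqn:adaptivitygap_arbitrary_ell}) while the remaining numerator still depends on $p_{n-1},p_{n-2}$ — so ``the numerator collapses to a closed form in $p_1,p_n$'' is not accurate, and ``$x$ proportional to $z_{t+1}^1(a)$'' is not the right description of the scalar in the $\tfrac{\alpha+\beta x}{\gamma+\delta x}$ trick. Second, identifying $a(3),a(4)$ requires pinning down the bias at position~4, not just position~3; the paper verifies this directly from the $(p_1,p_n)$ rectangle (e.g.\ $p_1p_np_{n-1}/(\pbar_1\pbar_n\pbar_{n-1})\le 0.2\cdot0.65^2/(0.8\cdot0.35^2)<1$ on $A_3$), and your argument should do the same rather than infer $a(4)$ from the bias at~3 alone.
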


Finally, we use level $\ell = 5$ to handle the remaining cases that all previous levels could not handle. We assume $t_{\ell = 5} \ge 5$, since the previous level handled the case $t_{\ell=4} = 4$. We can assume $p_{\ks} \le \frac{1}{2}$ or $p_n \ge 1.2 p_{\ks}$.

\begin{restatable}{lemma}{adaptivityellequalsfive}
\label{lemma:adaptivity_ell_5}
    The level-$5$ ordering has adaptivity gap of at most $1.2 + O(\frac{1}{2^n})$ when both $p_{\ks} \in [p_1,\frac{1}{2}] \cup [\frac{p_n}{1.2},p_n]$ and $(p_1, p_n) \in A_4 \cup B_4$ hold.
\end{restatable}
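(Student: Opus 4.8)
The plan is to run the template from the proof outline (\Cref{sub:proofoutline}) one last time, at level $\ell=5$, and check that the master bound \eqref{eqn:adaptivitygap_arbitrary_ell} together with the rescaling trick closes exactly the instances left open by \Cref{lemma:adaptivity_ell_2,lemma:adaptivity_ell_3,lemma:adaptivity_ell_4}: those with $(p_1,p_n)\in A_4\cup B_4$ and $p_{\ks}$ either in the ``small'' range $[p_1,\frac{1}{2}]$ or the ``large'' range $[\frac{p_n}{1.2},p_n]$ (the ``medium'' range was dispatched at level $3$) and for which $t_{\ell=5}\ge 5$ (the case $t_{\ell=4}=4$ was dispatched at level $4$). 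Since $(p_1,p_n)\in A_4\cup B_4$ forces $p_1\le\frac{1}{2}\le p_n$, exactly one of the two candidate orderings of \Cref{claim:adaptivitygap_1.5} can be built; by the usual symmetry assume it is the ordering $a$ using smallest probabilities, so $a(1)=p_1$, $a(2)=p_n$, and $a(3),a(4),a(5)$ are the termination-maximizing greedy picks. On $A_4=[0,0.2]\times[0.5,0.58]$ one checks $z_3^0(a)>z_3^1(a)$, and (since $t\ge 5$ forces termination probability $\ge\frac{1}{2}$ at positions $3,4,5$, which in a $0$-biased position means the chosen coin has $p\ge\frac{1}{2}$) the ordering stays $0$-biased at positions $3$ through $5$; on $B_4=[0.42,0.5]\times[0.8,1]$ the symmetric argument gives $1$-biased. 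This pins down the shape of the greedy choices on the head, which is precisely why these strips were isolated at the earlier levels.

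With $\ell=5$ and $t\ge 5$, the numerator of \eqref{eqn:adaptivitygap_arbitrary_ell} splits into a ``head'' $2+\sum_{j=3}^{5}(z_j^1(a)+z_j^0(a))+2(z_6^1(a)+z_6^0(a))$ and a ``tail'' $-z_{t+1}^1(a)+\sum_{j=t+2}^n z_j^1(a)$. First I would bound the head by an explicit function of $(p_1,p_n)$: on $A_4$ the quantity $z_j^0(a)$ shrinks by at least a factor $\frac{1}{2}$ per position past $a(2)$ (because each greedy pick there has $p\ge\frac{1}{2}$), while $z_j^1(a)$ is dominated by a product of small-side probabilities controlled by $p_1$; the roles of $z^0$ and $z^1$ and of $p_1$ and $\pbar_n$ are swapped on $B_4$. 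For the tail I would invoke the structural fact from \Cref{claim:adaptivitygap_1.5} that for every $s\ge t$ positions $1,\dots,s$ contain the $s$ smallest coins, so $z_{j+1}^1(a)=p_1p_2\cdots p_j$ for $j\ge t$; then $\sum_{j=t+2}^n z_j^1(a)$ and the summand $p_{\ks}\sum_{j=1}^{n-2}\prod_{i=1}^{j}p_{S[i]}$ of the denominator have the same geometric-in-the-large-coins form. Applying $\frac{\alpha+\beta}{\gamma+\delta}\le\max(\frac{\alpha+\beta x}{\gamma+\delta x},\frac{\beta}{\delta})$ with $\beta,\delta$ the leading coefficients of those two tail sums, I would verify $\frac{\beta}{\delta}\le 1.2$ --- using $p_{\ks}\ge\frac{p_n}{1.2}$ in the large-$p_{\ks}$ subcase, and $p_{\ks}\le\frac{1}{2}$ together with the box constraints on $(p_1,p_n)$ in the small-$p_{\ks}$ subcase --- and then replace both tails by $\beta x$ and $\delta x$ for a free parameter $x\in[0,1]$.

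After this reduction the adaptivity gap is at most $\max(1.2,\,E)$, where $E=E(p_1,p_n,p_{\ks},x)$ is an explicit rational function of four variables ranging over a compact box: $A_4$ (or $B_4$) times the relevant $p_{\ks}$-interval times $[0,1]$. All $n$-dependence has been pushed into $O(2^{-n})$ errors from truncating the infinite geometric tails at their limits, so it remains to prove $E\le 1.2$ on each box. That inequality is the explicit multivariable fact I would state and defer to \Cref{sec:deferredmathfacts}; it is handled in the routine way, namely bounding the partial derivatives of $E$ to reduce to a finite grid of points, or clearing denominators to get a low-degree polynomial inequality on the box.

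The main obstacle is this last verification, because the constant is genuinely tight: the extremal instance of \Cref{claim:adaptivitygap_lowerbound} has $(p_1,p_n)=(0,\frac{1}{2})\in A_4$ and $p_{\ks}=\frac{1}{2}$, and its gap is $1.2-O(2^{-n})$, so $E$ actually attains $1.2$ at that corner of the box. There is therefore no slack to absorb lossy estimates: the choice of the rescaling parameter, of how much of the head to keep symbolic versus crudely bounded by $\frac{1}{2}$, and of how to group the tail terms must all be made so that $E$ is maximized exactly at $(0,\frac{1}{2},\frac{1}{2},1)$ and nowhere exceeds $1.2$. A secondary difficulty is confirming the claimed bias type at positions $3$--$5$ throughout $A_4$ and $B_4$ (including on the boundary faces $p_n=0.58$ and $p_1=0.42$), so that the head bound is valid; this is why the strips must be kept narrow and is the point at which a looser earlier choice of $A_\ell,B_\ell$ would propagate into failure here.
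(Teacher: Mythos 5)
Your plan correctly identifies the overall shape of the paper's argument: run the outline template at level $\ell=5$ on the surviving instances $(p_1,p_n)\in A_4\cup B_4$ with $t_{\ell=5}\ge 5$ and the small/large $p_{\ks}$ ranges, start from Eq.~\eqref{eqn:adaptivitygap_arbitrary_ell}, apply the rescaling trick, and close with a multivariable inequality that is tight at the extremal instance of \Cref{claim:adaptivitygap_lowerbound}. However, two steps as written do not go through.

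First, the claim that after the rescaling trick the remaining ratio $E$ depends only on $(p_1,p_n,p_{\ks},x)$ is incorrect. The head $2+\sum_{j=3}^5(z_j^1(a)+z_j^0(a))+2(z_6^1(a)+z_6^0(a))$ contains $z_6^1(a)=p_1p_np_{n-1}p_{n-2}p_{n-3}$, and the denominator lower bound the rescaling trick leaves behind is $2+\pbar_{\ks}\pbar_{S[n-1]}(1+\pbar_{S[n-2]})+\pbar_{\ks}\pbar_{S[n-1]}\pbar_{S[n-2]}\pbar_{S[n-3]}/p_{n-3}+p_{\ks}p_1/\pbar_1$; after minimizing over $\ks$ this still involves $p_{n-1},p_{n-2},p_{n-3}$ in an essential way. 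Crudely replacing the middle coins by $p_n$ or by $\frac12$ moves numerator and denominator in opposite directions and costs slack; since the extremal corner $(p_1,p_n)=(0,\tfrac12)$ of $A_4$ already achieves gap $1.2-o(1)$, there is none to give. The inequality that actually needs to be proved (\Cref{claim:5coin_1.2}) has five free variables $v\le 0.2$ and $\tfrac12\le r\le s\le t\le u\le 0.58$, and its verification is a multi-page sign-and-monotonicity argument, not a routine grid check. Relatedly, the rescaling parameter $x$ is not a free fifth variable in the final inequality: the paper uses it to substitute $p_i\mapsto\tfrac12$ at the positions inside the tail, which both collapses the infinite sums into geometric closed forms and makes the ratio $\beta/\delta\le 1.2$ comparison legitimate; keeping $x$ free does not produce a well-posed finite-variable bound.

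Second, your justification that the ordering stays $0$-biased at positions $3$--$5$ on $A_4$ does not hold: ``termination probability $\ge\frac12$ at a $0$-biased position'' only tells you the chosen coin has $p\ge\frac12$, which by itself pushes the ratio $z^1/z^0$ \emph{towards} $1$-bias, not away from it. The bias type must be pinned by the numeric estimate
$\frac{z_5^1(a)}{z_5^0(a)}\le\frac{p_1p_np_{n-1}p_{n-2}}{\pbar_1\pbar_n\pbar_{n-1}\pbar_{n-2}}\le\frac{0.2\cdot 0.58^3}{0.8\cdot 0.42^3}<1$, which relies on the narrowness $p_n\le 0.58$; bias at position $5$ then propagates back to positions $3,4$ because each multiplier $p_{a(j)}/\pbar_{a(j)}\ge 1$. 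You correctly sense that narrow strips are essential, but the quantitative check cannot be skipped. Aside from these two points the strategy matches the paper's.
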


Combining all of this, we may state the result:
\begin{theorem}[Upper bound of \Cref{thm:adaptivitygap_intro}]
\label{thm:adaptivitygap_body}
    The adaptivity gap is at most $1.2 + O(\frac{1}{2^n})$.
\end{theorem}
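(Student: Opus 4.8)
The plan is to combine \Cref{lemma:adaptivitygap_1.2_onesideofhalf,lemma:adaptivity_ell_2,lemma:adaptivity_ell_3,lemma:adaptivity_ell_4,lemma:adaptivity_ell_5} according to the case analysis summarized in \Cref{fig:adaptivitygap_outline}. First, if all coins lie on one side of $\tfrac12$, \Cref{lemma:adaptivitygap_1.2_onesideofhalf} gives a gap of at most $1.2$ outright, so we may assume $p_1 \le \tfrac12 \le p_n$. In that regime the crossover point of any level-$2$ ordering is automatically at least $2$, so the generic bound \eqref{eqn:adaptivitygap_arbitrary_ell} is available starting from $\ell=2$.

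Next, fix a level $\ell$. I would construct the two candidate orderings described after \Cref{lemma:adaptivitygap_1.2_onesideofhalf}: the first $\ell$ coins chosen by the greedy (termination-maximizing) rule, and thereafter either always the smallest head-probability coin (staying $\ge\tfrac12$ when the ordering is $0$-biased) or always the largest (staying $\le\tfrac12$ when $1$-biased). As in \Cref{claim:adaptivitygap_1.5}, the sign of $\prod_i p_i - \prod_i \pbar_i$ determines which of the two can be completed, and we take that one; without loss of generality it is the ``smallest'' ordering and the other case is symmetric. With crossover point $t \ge \ell$, the cost bound on $cost(a)$ together with the adaptive lower bound $cost(\mathbf a_{\ks})$ yields \eqref{eqn:adaptivitygap_arbitrary_ell}, and then I apply the rescaling identity $\frac{\alpha+\beta}{\gamma+\delta}\le\max\!\bigl(\frac{\alpha+\beta x}{\gamma+\delta x},\frac{\beta}{\delta}\bigr)$ to the tail sums $\sum_{j\ge t+2} z_j^1(a)$ in the numerator and $\sum_j\prod_{i\le j} p_{S[i]}$ in the denominator, choosing $x$ so the residual numerator telescopes (against $-z_{t+1}^1(a)$) while the denominator keeps enough mass. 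This collapses each case to an explicit inequality in the few free parameters $p_1,p_n,p_{\ks},t$.

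The escalation between levels is the glue: if a level-$\ell$ ordering has $t_\ell > \ell$, then one more coin can be assigned greedily, so the same ordering viewed at level $\ell+1$ has $t_{\ell+1}\ge\ell+1$; hence any instance with $t_\ell>\ell$ is pushed up to level $\ell+1$. I then partition the square $[0,\tfrac12]\times[\tfrac12,1]$ of $(p_1,p_n)$ using the nested rectangles $A_2\supseteq A_3\supseteq A_4$ and $B_2\supseteq B_3\supseteq B_4$ from \Cref{sub:proofoutline}, crossed with the split of $p_{\ks}$ into ``small'' ($\le\tfrac12$), ``medium'' ($\tfrac12\le p_{\ks}\le p_n/1.2$), and ``large'' ($\ge p_n/1.2$): \Cref{lemma:adaptivity_ell_2} clears everything outside $A_2\cup B_2$ and the subcase $t_{\ell=2}=2$; \Cref{lemma:adaptivity_ell_3} clears $A_2\setminus A_3$, $B_2\setminus B_3$, $t_{\ell=3}=3$, and all medium $p_{\ks}$ inside $A_2\cup B_2$; \Cref{lemma:adaptivity_ell_4} clears $A_3\setminus A_4$, $B_3\setminus B_4$, and $t_{\ell=4}=4$; \Cref{lemma:adaptivity_ell_5} finishes $A_4\cup B_4$. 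Since every choice of $(p_1,p_n,p_{\ks})$ falls into one of these regions and each lemma certifies a gap of at most $1.2+O(2^{-n})$, the theorem follows.

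The main obstacle is not this combinatorial scaffolding but the family of explicit multivariable inequalities that the rescaling trick produces, deferred to \Cref{sec:deferredlemmas,sec:deferredmathfacts}. The delicate parts are: (i) choosing the rescaling parameter $x$ at each level so the numerator genuinely simplifies without shrinking the denominator below what is needed; (ii) the ``medium $p_{\ks}$'' regime, where neither $p_{\ks}\le\tfrac12$ nor $p_n\ge 1.2\,p_{\ks}$ is available and one must exploit the adaptive cost more carefully; and (iii) tuning the rectangles $A_\ell,B_\ell$ so that level $\ell$ closes on them yet level $\ell-1$ has already reached their complements — this is what forces the specific numerical corners $0.2,0.76,0.65,0.58,0.24,0.8,\dots$. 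Throughout, the $O(2^{-n})$ error terms arise from truncating geometric series such as $\sum_j 2^{-j}$ and $\sum_j z_j^0(a)\le z_{t+1}^0(a)$, and must be tracked uniformly across all cases; this is bookkeeping rather than a conceptual difficulty.
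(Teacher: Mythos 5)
Your proposal is correct and follows the paper's own strategy essentially verbatim: handle the one-sided case with \Cref{lemma:adaptivitygap_1.2_onesideofhalf}, then for $p_1\le\tfrac12\le p_n$ escalate through levels $\ell=2,\dots,5$ via \Cref{lemma:adaptivity_ell_2,lemma:adaptivity_ell_3,lemma:adaptivity_ell_4,lemma:adaptivity_ell_5}, using the crossover-point escalation $t_\ell>\ell\Rightarrow t_{\ell+1}\ge\ell+1$, the rescaling trick, and the nested rectangles $A_\ell,B_\ell$ crossed with the small/medium/large split of $p_{\ks}$. The paper's stated proof of \Cref{thm:adaptivitygap_body} is just this combination of the lemmas, and your extra commentary on the mechanics accurately mirrors the paper's proof-outline discussion.
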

\begin{proof}
The theorem follows by combining \Cref{lemma:adaptivity_ell_2,lemma:adaptivity_ell_3,lemma:adaptivity_ell_4,lemma:adaptivity_ell_5}, which cover all instances with $p_1 \le \frac{1}{2} \le p_n$, and \Cref{lemma:adaptivitygap_1.2_onesideofhalf}, which covers the remaining instances.
\end{proof}


\subsection{Deferred lemmas}
\label{sec:deferredlemmas}
As we use \Cref{eqn:adaptivitygap_arbitrary_ell} so often in the appendix, we restate it here for convenience:
\begin{align*}
    \frac{
     2 + \sum_{j=3}^{\ell} \left(z_j^1(a) + z_j^0(a)\right) +  2(z_{\ell+1}^1(a) + z_{\ell+1}^0(a)) - z_{t+1}^1(a) +  \sum_{j=t+2}^{n} z_j^1(a)
    }
    {
    2 + \pbar_{\ks} \sum_{j=1}^{n-2} \prod_{i=1}^j \pbar_{S[n-i]} +  p_{\ks} \sum_{j=1}^{n-2} \prod_{i=1}^j p_{S[i]}
    }\,.
\end{align*}
Again, for this analysis, we assume that we use the ordering $a$ that chooses the coins with smallest probability of heads; the argument in the other case is symmetric. Each of these proofs has roughly the same structure and uses elementary techniques, but the number of variables increases each time.

\subsubsection{Easy case}
\adaptivityeasycase*
\begin{proof}
  \emph{Any} strategy to flip the coins has cost at least $2 + \sum_{j=2}^{n-1} \prod_{i=1}^j p_i + \sum_{j=2}^{n-1} \prod_{i=1}^j \pbar_{n+1-i}$. This is because we always need 2 coins; the minimum probability of getting two heads is $p_1 p_2$; the minimum probability of getting two tails is $\pbar_n \pbar_{n-1}$; and so on. So this lower-bounds the cost of the adaptive strategy.

Suppose all coins have probability of heads at least $\frac{1}{2}$. (A symmetric argument works if all coins have probability of heads at \emph{most} $\frac{1}{2}$.) Let $a$ be the ordering that chooses $a(i) \defeq i$. Then the adaptivity gap is at most 
\begin{align*}
    \frac{2 +  \sum_{j=2}^{n-1} \prod_{i=1}^j p_i + \sum_{j=2}^{n-1} \prod_{i=1}^j \pbar_i}
    {2 + \sum_{j=2}^{n-1} \prod_{i=1}^j p_i + \sum_{j=2}^{n-1} \prod_{i=1}^j \pbar_{n+1-i}}\,.
\end{align*}
Let's inspect each of these terms. The first summation is on the numerator and denominator, and it has value at least $\frac{1}{2} - \frac{1}{2^{n-1}}$. The second summation in the numerator is at most $\frac{1}{2} - \frac{1}{2^{n-1}}$. So the adaptivity gap is at most 
\begin{align*}
    \frac{2.5 - \frac{1}{2^{n-1}}+ x}{2.5 - \frac{1}{2^{n-1}}}
\end{align*}
for some $x \le \frac{1}{2} - \frac{1}{2^{n-1}}$. So this ratio is maximized at the largest $x$, i.e. $\frac{3 - 0.5^{n-2}}{2.5 - 0.5^{n-1}} = \frac{3}{2.5} - \frac{0.5^{n-2} - \frac{3}{2.5} \cdot 0.5^{n-1}}{2.5 - 0.5^{n-1}}$. So the adaptivity gap is at most $1.2$ for all positive integer $n$.  
\end{proof}

\subsubsection{Level \texorpdfstring{$\ell=2$}{2}}

\adaptivityellequalstwo*
\begin{proof}
    We start from \Cref{eqn:adaptivitygap_arbitrary_ell}.
    We use the rescaling trick for the summation $\sum_{j=t+2}^n z_j^1(a)$ in the numerator and $p_{\ks} \sum_{j=1}^{n-2} \prod_{i=1}^j p_{S[i]}$ in the denominator. In this level, for all $s \ge t$, the smallest $s-1$ coins \emph{and} the largest coin are in the first $s$ positions. So $\sum_{j=t+2}^n z_j^1(a) = \sum_{j = t}^{n-2} p_n \prod_{i=1}^j p_i$.

We first analyze what happens when $t = 2$. Since the ordering chooses the coins with smallest probability of heads, $p_2 \ge \frac{1}{2}$. We lower-bound the denominator sum of \Cref{eqn:adaptivitygap_arbitrary_ell} as $\sum_{j=2}^{n-1} \prod_{i=1}^j p_i$. By construction, the $j^{\text{th}}$ term in the numerator sum is at most the $j^{\text{th}}$ term in this sum. We use the rescaling trick to send $p_i \mapsto \frac{1}{2}$ for all $2 \le i < n$. Since $p_2 \ge \frac{1}{2}$, this does not increase the value of either term. In fact, the new numerator sum is $p_n p_1(\frac{1}{2} + \frac{1}{4} + \dots) = p_1 p_n  - O(\frac{1}{2^n})$, and the new denominator sum is $p_1(\frac{1}{2} + \frac{1}{4} + \dots) = p_1 - O(\frac{1}{2^n})$. So then the adaptivity gap is at most the maximum of $1$, and 
\begin{align*}
    &\frac{
    2 + 2(z^1_3(a) + z^0_3(a)) - z^1_{t+1}(a) + p_1 p_n
    }{
     2 + \pbar_{\ks} \sum_{j=1}^{n-2} \prod_{i=1}^j \pbar_{S[n-i]} +  p_1
    } + O(\frac{1}{2^n})
    \le \frac{ 2 + 2(p_1 p_n + \pbar_1 \pbar_n) 
    }{
     2 + \pbar_{\ks} \frac{\pbar_n}{p_n} +  p_1
    } + O(\frac{1}{2^n})\,.
\end{align*}
Here, we are $1$-biased at position $3$, which occurs exactly when $p_1 p_n > \pbar_1 \pbar_n$; i.e. when $p_1 > \pbar_n$. We optimize the expression above in the range $\pbar_n < p_1 \le 0.5 \le p_n \le 1$. By \Cref{claim:adaptivity_gap_t_is_2}, the expression is at most $1.2$, and so the adaptivity gap when $t = 2$ is at most $1.2  + O(\frac{1}{2^n})$.

We now assume $t \ge 3$, and split into three cases:
\begin{enumerate}
    \item When $p_{n} \le 1.2 p_{\ks}$, we lower-bound the denominator sum as $p_{\ks} \sum_{j=1}^{n-2} \prod_{i=1}^j p_i$. Then the $j^{\text{th}}$ term in the numerator sum is at most $1.2$ times the $j^{\text{th}}$ term in the denominator sum. We use the rescaling trick to send $p_i \mapsto \frac{1}{2}$ for all $t \le i < n$; since $p_t \ge \frac{1}{2}$, this does not increase the value of either term. In fact, the new numerator sum is $(\frac{1}{2} + \frac{1}{4} + \dots) p_n \prod_{i=1}^{t-1} p_i$, and the new denominator sum is $p_{\ks} \sum_{j=1}^{t-1} \prod_{i=1}^j p_i + (\frac{1}{2} + \frac{1}{4} + \dots) p_{\ks} \prod_{i=1}^{t-1} p_i$. So then the adaptivity gap is at most the maximum of $1.2$, and 
\begin{align*}
    &\frac{
    2 + 2(z_3^1(a) + z_3^0(a)) - z_{t+1}^1(a) + p_n \prod_{i=1}^{t-1} p_i
    }{
     2 + \pbar_{\ks} \sum_{j=1}^{n-2} \prod_{i=1}^j \pbar_{S[n-i]} +  p_{\ks} \sum_{j=1}^{t-1} \prod_{i=1}^j p_i + p_{\ks} \prod_{i=1}^{t-1} p_i
    } + O(\frac{1}{2^n})
    \\
&\le 
\frac{ 2 + 2(p_1 p_n + \pbar_1 \pbar_n) 
    }{
     2 + \pbar_{\ks} \frac{\pbar_n}{p_n} +  p_{\ks} \frac{p_1}{\pbar_1}
    } + O(\frac{1}{2^n})\,,
\end{align*}
where the inequality follows because $z^1_{t+1}(a) = p_n \prod_{i=1}^{t-1} p_i$, and $p_1 \le \frac{1}{2}$.

By \Cref{claim:2coin_1.2}, this expression is at most $1.2$ except when $0 \le p_1 \le 0.2$ and $0.5 \le p_n \le 0.76$, or when $0.24 \le p_1 \le 0.5$ and $0.8 \le p_n \le 1$. So except for these regions of $(p_1, p_n)$, the adaptivity gap is at most $1.2  + O(\frac{1}{2^n})$.

\item Otherwise, it is at least true that the $j^{\text{th}}$ term in the numerator sum is at most the $(j-1)^{\text{th}}$ term in the denominator sum. We first upper bound the numerator sum as $p_n p_{\ks} \sum_{j=t-1}^{n-3} \prod_{i=1}^j p_{S[i]}$. Then we use the rescaling trick to send $p_{S[i]} \mapsto \frac{1}{2}$ for all $i \ge t$ and $p_{S[i]} \to p_i$ otherwise; since $p_t \ge \frac{1}{2}$ and $p_{S[i]} \ge p_i$, this does not increase the value of either term. 
In fact, the new numerator sum is $p_n p_{\ks} (1 +\frac{1}{2} + \frac{1}{4} + \dots) \prod_{i=1}^{t-1} p_i$, and the new denominator sum is $p_{\ks} (1 + \frac{1}{2} + \frac{1}{4} + \dots) \prod_{i=1}^{t-1} p_i$. So then the adaptivity gap is at most the maximum of $1$, and 
\begin{align*}
    &\quad \ \frac{
 2 + 2(z_3^1(a) + z_3^0(a)) - z_{t+1}^1(a) + 2 p_n p_{\ks} \prod_{i=1}^{t-1} p_i
    }{
     2 + \pbar_{\ks} \sum_{j=1}^{n-2} \prod_{i=1}^j \pbar_{S[n-i]} +  p_{\ks} \sum_{j=1}^{t-2} \prod_{i=1}^j p_i + 2 p_{\ks} \prod_{i=1}^{t-1} p_i
    } + O(\frac{1}{2^n})
    \\
    &=
    \frac{
 2 + 2(z_3^1(a) + z_3^0(a)) + 2 p_n (p_{\ks}-\frac{1}{2}) \prod_{i=1}^{t-1} p_i
    }{
     2 + \pbar_{\ks} \sum_{j=1}^{n-2} \prod_{i=1}^j \pbar_{S[n-i]} +  p_{\ks} \sum_{j=1}^{t-2} \prod_{i=1}^j p_i + 2 p_{\ks} \prod_{i=1}^{t-1} p_i
    }  + O(\frac{1}{2^n})\,.
\end{align*}
When $p_{\ks} \le \frac{1}{2}$, the last numerator term is negative, and so the ratio can be upper-bounded as in the previous case (i.e. when $p_n \le 1.2 p_{\ks}$). 

\item The final case is if $\frac{1}{2} < p_{\ks} < \frac{p_n}{1.2}$. We apply the rescaling trick to the last numerator term and last denominator term. Since the numerator term is at most the denominator term, we send $2 \prod_{i=1}^{t-1} p_i \mapsto \frac{p_1^{t-1}}{\pbar_1}$; since $p_1 \le \frac{1}{2}$, this does not increase the value of either term. So the adaptivity gap is at most the maximum of $1$ and 
\begin{align*}
    &\quad \ \frac{
 2 + 2(z_3^1(a) + z_3^0(a)) + p_n (p_{\ks}-\frac{1}{2}) \frac{p_1^{t-1}}{\pbar_1}
    }{
     2 + \pbar_{\ks} \sum_{j=1}^{n-2} \prod_{i=1}^j \pbar_{S[n-i]} +  p_{\ks} \sum_{j=1}^{t-2} \prod_{i=1}^j p_i +  p_{\ks} \frac{p_1^{t-1}}{\pbar_1}  }  + O(\frac{1}{2^n})
     \\
    &\le 
      \frac{
 2 + 2(p_1 p_n + \pbar_1 \pbar_n) + p_n (p_{\ks}-\frac{1}{2}) \frac{p_1^{t-1}}{\pbar_1}
    }{
   2 + \pbar_{\ks} \frac{\pbar_n}{p_n} +  p_{\ks} \frac{p_1}{\pbar_1} }  + O(\frac{1}{2^n})\,.
\end{align*}
We optimize this expression in the region $\frac{1}{2} \le p_{\ks} \le \frac{1}{1.2} p_n$, since the other cases are handled.
By \Cref{claim:2coin_pk_midregion}, this expression is at most $1.2$ when $t \ge 3$, except in the regions specified before ($0 \le p_1 \le 0.2$ and $0.5 \le p_n \le 0.76$, or $0.24 \le p_1 \le 0.5$ and $0.8 \le p_n \le 1$). So except for these regions of $(p_1, p_n)$, the adaptivity gap is at most $1.2 + O(\frac{1}{2^n})$.\qedhere
\end{enumerate}
\end{proof}

\subsubsection{Level \texorpdfstring{$\ell=3$}{3}}
\adaptivityellequalsthree*
\begin{proof}
    We start from \Cref{eqn:adaptivitygap_arbitrary_ell}.  There are two branches: either the ordering is $0$-biased at position $3$ (i.e. when $0 \le p_1 \le 0.2$ and $0.5 \le p_n \le 0.76$), or the ordering is $1$-biased at position $3$ (i.e. $0.24 \le p_1 \le 0.5$ and $0.8 \le p_n \le 1$).
    
    We again use the rescaling trick for the summation $\sum_{j=t+2}^n z_j^1(a)$ in the numerator and $p_{\ks} \sum_{j=1}^{n-2} \prod_{i=1}^j p_{S[i]}$ in the denominator. However, the analysis depends on the choice of third coin: $n-1$ if it is $0$-biased at position $3$, and $2$ if it is $1$-biased at position $3$. In the first branch,  $\sum_{j=t+2}^n z_j^1(a) = \sum_{j=t-1}^{n-3} p_n p_{n-1} \prod_{i=1}^j p_i$, while in the second branch, $\sum_{j=t+2}^n z_j^1(a) = \sum_{j=t}^{n-2} p_n \prod_{i=1}^j p_i$ as before. 
    
    We first analyze what happens when $t = 3$. We lower-bound the denominator sum as  $\sum_{j=2}^{n-1} \prod_{i=1}^j p_i$ in the first branch and $p_{\ks} p_{S[1]} + \sum_{j=3}^{n-1} \prod_{i=1}^j p_i$ in the second branch. By construction, the $j^{\text{th}}$ term in the numerator sum is at most the $j^{\text{th}}$ term in the denominator sum. We use the rescaling trick to send $p_i \mapsto \frac{1}{2}$: in the first branch, for all $2 \le i < n - 1$, and in the second branch, for all $3 \le i < n$. The new numerator sums are  $p_n p_{n-1} p_1 (\frac{1}{2} + \frac{1}{4} + \dots) = p_n p_{n-1} p_1 - O(\frac{1}{2^n})$ and $p_n p_1 p_2 (\frac{1}{2} + \frac{1}{4} + \dots) = p_n p_1 p_2 - O(\frac{1}{2^n})$, while the new denominator sums are  $ p_1 (\frac{1}{2} + \frac{1}{4} + \dots) = p_1 - O(\frac{1}{2^n})$ and $p_{\ks} p_{S[1]} + p_1 p_2 (\frac{1}{2} + \frac{1}{4} + \dots)$. In the first branch, the adaptivity gap is at most the maximum of $1$, and
\begin{align*}
        &\quad \ \frac{
    2 + (z_3^1(a) + z_3^0(a)) + 2(z_4^1(a) + z_4^0(a)) - z_4^1(a) + p_n p_{n-1} p_1
    }{
      2 + \pbar_{\ks} \sum_{j=1}^{n-2} \prod_{i=1}^j \pbar_{S[n-i]} +  p_1
    } + O(\frac{1}{2^n})
    \\
    &\le
        \frac{
    2 + (p_1 p_n + \pbar_1 \pbar_n) + 2(p_1 p_n p_{n-1} + \pbar_1 \pbar_n \pbar_{n-1})}{
      2 + \pbar_{\ks} \frac{\pbar_{S[n-1]}}{p_{n-1}} + p_1
    } +  O(\frac{1}{2^n})\,.
\end{align*}
Similarly, in the second branch, the adaptivity gap is at most the maximum of $1$, and
\begin{align*}
    &\quad \ \frac{
    2 + (z_3^1(a) + z_3^0(a)) + 2(z_4^1(a) + z_4^0(a)) - z_4^1(a) + p_n p_1 p_2
    }{
      2 + \pbar_{\ks} \sum_{j=1}^{n-2} \prod_{i=1}^j \pbar_{S[n-i]} +  p_{\ks} p_{S[1]} + p_1 p_2
    } +  O(\frac{1}{2^n})
    \\
    &\le
        \frac{
    2 + (p_1 p_n + \pbar_1 \pbar_n) + 2(p_1 p_n p_2 + \pbar_1 \pbar_n \pbar_2)}{
      2 + \pbar_{\ks} \frac{\pbar_n}{p_n} + p_{1} p_{\max(2,\ks)} + p_1 p_2
    } +  O(\frac{1}{2^n})\,.
\end{align*}
We optimize each multivariable expression.
\begin{itemize}
    \item In the first branch, $\pbar_{\ks} \pbar_{S[n-1]} = \pbar_{n}\pbar_{\min(\ks,n-1)} \ge \pbar_n \pbar_{n-1}$. Also, since $t = 3$, we have $p_{n-1} \ge 0.5$. By our assumption of using the ordering with \emph{smallest} probability of heads, we must be $1$-biased at position $4$. By \Cref{claim:t_equals_3_branch1}, this expression is at most $1.2$ in the range $0 \le p_1 \le 0.2$ and $0.5 \le p_{n-1} \le p_n \le 0.76$, subject to this $1$-biased condition ($p_1 p_n p_{n-1} > \pbar_1 \pbar_n \pbar_{n-1}$).
    \item In the second branch, the denominator only decreases when changing $\ks$ from $1$ to $2$. Since $p_1 \ge 0.24 > \frac{0.2}{0.8} \ge \frac{\pbar_n}{p_n}$, the denominator increases when $\ks$ increases above $2$. Also, since $t = 3$, we have $p_{2} \le 0.5$. So the denominator is at least $2 + \pbar_2 \frac{\pbar_n}{p_n} + 2 p_1 p_2$. By \Cref{claim:t_equals_3_branch2}, the second expression is at most $1.2$ in the range $0.24 \le p_1 \le p_2 \le 0.5$ and $0.8 \le p_n \le 1$.
\end{itemize}
 Altogether, the adaptivity gap when $t = 3$ is at most $1.2 +  O(\frac{1}{2^n})$.

 We now assume $t \ge 4$, and split into three cases:
 \begin{enumerate}
     \item We start when $p_{\ks}$ is large. In the first branch, suppose $p_n p_{n-1} \le 1.2 p_{\ks}$; in the second branch, suppose $p_n \le 1.2 p_{\ks}$. We lower-bound the denominator sum as $p_{\ks} p_{S[1]} \sum_{j=1}^{n-2} \prod_{i=2}^j p_i$. Then the $j^{\text{th}}$ term in the numerator sum is at most $1.2$ times the $j^{\text{th}}$ term in the denominator sum. We use the rescaling trick to send $p_i \mapsto \frac{1}{2}$: in the first branch, for all $t-1 \le i < n-1$ (since $p_{t-1} \ge \frac{1}{2}$), and in the second branch, for all $t \le i < n$ (since $p_{t} \ge \frac{1}{2}$).

The new numerator sums are $(\frac{1}{2} + \frac{1}{4} + \dots) p_n p_{n-1} \prod_{i=1}^{t-2} p_i$ and $(\frac{1}{2} + \frac{1}{4} + \dots) p_n \prod_{i=1}^{t-1} p_i$; the new denominator sums are $p_{\ks} p_{S[1]} \sum_{j=1}^{t-2} \prod_{i=2}^j p_i + (\frac{1}{2} + \frac{1}{4} + \dots) p_{\ks} p_{S[1]} \prod_{i=2}^{t-2} p_i$ and $p_{\ks} p_{S[1]} \sum_{j=1}^{t-1} \prod_{i=2}^j p_i + (\frac{1}{2} + \frac{1}{4} + \dots) p_{\ks} p_{S[1]} \prod_{i=2}^{t-1} p_i$. In the first branch, the adaptivity gap is at most the maximum of $1.2$, and 
\begin{align*}
&\quad \ \frac{
     2 + \left(z_3^1(a) + z_3^0(a)\right) +  2(z_{4}^1(a) + z_{4}^0(a)) - z_{t+1}^1(a) + p_n p_{n-1} \prod_{i=1}^{t-2} p_i
    }
    {
    2 + \pbar_{\ks} \sum_{j=1}^{n-2} \prod_{i=1}^j \pbar_{S[n-i]} +  p_{\ks} p_{S[1]} \sum_{j=1}^{t-2} \prod_{i=2}^j p_i + p_{\ks} p_{S[1]} \prod_{i=2}^{t-2} p_i
    } +  O(\frac{1}{2^n})
    \\
    &\le 
    \frac{
     2 + \left(z_3^1(a) + z_3^0(a)\right) +  2(z_{4}^1(a) + z_{4}^0(a))
    }
    {
    2 + \pbar_{\ks} \frac{\pbar_{S[n-1]}}{p_{n-1}} +  p_{\ks} \frac{p_1}{\pbar_1}} +  O(\frac{1}{2^n})
    \,.
\end{align*}
In the second branch, the adaptivity gap is at most the maximum of $1.2$, and
\begin{align*}
  &\quad \ \frac{
      2 + \left(z_3^1(a) + z_3^0(a)\right) +  2(z_{4}^1(a) + z_{4}^0(a)) - z_{t+1}^1(a) + p_n \prod_{i=1}^{t-1} p_i
    }
    {
    2 + \pbar_{\ks} \sum_{j=1}^{n-2} \prod_{i=1}^j \pbar_{S[n-i]} +  p_{\ks} p_{S[1]} \sum_{j=1}^{t-1} \prod_{i=2}^j p_i + p_{\ks}  p_{S[1]}  \prod_{i=2}^{t-1} p_i
    } +  O(\frac{1}{2^n})
    \\
    &\le 
    \frac{
     2 + \left(z_3^1(a) + z_3^0(a)\right) +  2(z_{4}^1(a) + z_{4}^0(a))
    }
    {
    2 + \pbar_{\ks} \frac{\pbar_{n}}{p_{n}} +  p_{\ks} \frac{p_{S[1]}}{\pbar_2}} +  O(\frac{1}{2^n})
    \,.
\end{align*}
Notice that these expressions are equivalent under the transformation $p_i \mapsto \pbar_{n+1-i}$. So we may optimize one of them. Let's optimize the first expression in the range $0 \le p_1 \le 0.2$ and $0.5 \le p_n \le 0.76$ (and $t \ge 4$). The denominator of the expression has four variables but the numerator  has only three. To maximize this expression, we minimize the denominator over the extra variable. \begin{itemize}
    \item Suppose $1 \le \ks \le n-1$. Then the denominator is at least
    \begin{align*}
    2+\pbar_{\ks} \frac{\pbar_{n}}{p_{n-1}}
    +p_{\ks} \frac{p_{1}}{\pbar_{1}} \ge 2+\pbar_{n-1} \frac{\pbar_{n}}{p_{n-1}}
    +p_{n-1} \frac{p_{1}}{\pbar_{1}}
    \,,
    \end{align*}
    where the last inequality follows because $\frac{p_1}{\pbar_1} \le \frac{\pbar_n}{p_n}\le \frac{\pbar_n}{p_{n-1}}$ in the region $0 \le p_1 \le 0.2$, $0.5 \le p_n \le 0.76$.
    \item Suppose $\ks = n$. Then the denominator is at least
    \begin{align*}
       2+\pbar_{n} \frac{\pbar_{n-1}}{p_{n-1}}
    +p_{n} \frac{p_{1}}{\pbar_{1}}
    =
    2+\pbar_{n-1} \frac{\pbar_{n}}{p_{n-1}}
    +p_{n-1} \frac{p_{1}}{\pbar_{1}}
    \,.
    \end{align*}
\end{itemize}
So the expression is at most 
\begin{align*}
 \frac{
    2+\left(z_3^1(a) + z_3^0(a)\right) + 2(z_4^1(a) + z_4^0(a))
    }
    {
    2+\pbar_{n-1} \frac{\pbar_{n}}{p_{n-1}}
    +p_{n-1} \frac{p_{1}}{\pbar_{1}}
    } 
    =
    \frac{
    2+p_1p_n + \pbar_1 \pbar_n + 2 (p_1 p_n p_{n-1} 
    + \pbar_1 \pbar_n \pbar_{n-1})
    }
    {
    2+\pbar_{n-1} \frac{\pbar_{n}}{p_{n-1}}
    +p_{n-1} \frac{p_{1}}{\pbar_{1}}
    }
\end{align*}
By \Cref{claim:3coin_1.2}, when $0 \le p_1 \le 0.2$ and $0.65 \le p_n \le 0.76$, this expression is at most $1.2$. So the adaptivity gap here is at most $1.2 +  O(\frac{1}{2^n})$ except in the $(p_1, p_n)$ region $0 \le p_1 \le 0.2$ and $0.5 \le p_n \le 0.65$, or its complement $0.35 \le p_1 \le 0.5$ and $0.8 \le p_n \le 1$.

\item Otherwise, it is at least true that the $j^{\text{th}}$ term in the numerator sum is at most the $(j-1)^{\text{th}}$ term in the denominator sum. This works for all terms because the denominator sum starts from $1 \le t-2$. We first upper bound the numerator sum as $p_n p_{n-1} p_{\ks} \sum_{j=t-2}^{n-4} \prod_{i=1}^j p_{S[i]}$ in the first branch and $p_n p_{\ks} \sum_{j=t-1}^{n-3} \prod_{i=1}^j p_{S[i]}$ in the second branch. Then we use the rescaling trick in the first branch (second branch) to send $p_{S[i]} \mapsto \frac{1}{2}$ for all $i \ge t-1$ ($i \ge t$) and $p_{S[i]} \to p_i$ for $i < t-1$ ($2 \le i < t$); since $p_{t-1} \ge \frac{1}{2}$ ($p_t \ge \frac{1}{2}$) and $p_{S[i]} \ge p_i$, this does not increase the value of either term. 

The new numerator sums are $p_n p_{n-1} p_{\ks} (1 +\frac{1}{2} + \frac{1}{4} + \dots) \prod_{i=1}^{t-2} p_i$ and $p_n p_{\ks} p_{S[1]} (1 +\frac{1}{2} + \frac{1}{4} + \dots) \prod_{i=2}^{t-1} p_i$; the new denominator sums are $p_{\ks} \sum_{j=1}^{t-3} \prod_{i=1}^j p_i + p_{\ks} (1 + \frac{1}{2} + \frac{1}{4} + \dots) \prod_{i=1}^{t-2} p_i$ and $p_{\ks} p_{S[1]} \sum_{j=1}^{t-2} \prod_{i=2}^j p_i + p_{\ks} p_{S[1]}  (1 + \frac{1}{2} + \frac{1}{4} + \dots) \prod_{i=2}^{t-1} p_i$. In the first branch, the adaptivity gap is at most the maximum of $1$, and 
\begin{align*}
&\quad \ \frac{
     2 + \left(z_3^1(a) + z_3^0(a)\right) +  2(z_{4}^1(a) + z_{4}^0(a)) - z_{t+1}^1(a) + 2 p_n p_{n-1} p_{\ks} \prod_{i=1}^{t-2} p_i
    }
    {
    2 + \pbar_{\ks} \sum_{j=1}^{n-2} \prod_{i=1}^j \pbar_{S[n-i]} +  p_{\ks} \sum_{j=1}^{t-3} \prod_{i=1}^j p_i + 2 p_{\ks} \prod_{i=1}^{t-2} p_i
    } +  O(\frac{1}{2^n})
    \\
    &\le 
    \frac{
     2 + \left(z_3^1(a) + z_3^0(a)\right) +  2(z_{4}^1(a) + z_{4}^0(a)) + 2 p_n p_{n-1} (p_{\ks}  - \frac{1}{2}) \prod_{i=1}^{t-2} p_i
    }
    {
    2 + \pbar_{\ks} \frac{\pbar_{S[n-1]}}{p_{n-1}}
    + p_{\ks} \sum_{j=1}^{t-3} \prod_{i=1}^j p_i + 2 p_{\ks} \prod_{i=1}^{t-2} p_i
   } +  O(\frac{1}{2^n})
    \,.
\end{align*}
Suppose the previous case does not hold; i.e. $p_n p_{n-1} > 1.2 p_{\ks}$. Then $p_{\ks} < \frac{1}{1.2}p_n p_{n-1} \le \frac{0.76^2}{1.2} < 0.49$. Then the last numerator term is negative, and so the ratio can be upper-bounded as in the case when $p_n p_{n-1} \le 1.2 p_{\ks}$.

In the second branch, the adaptivity gap is at most the maximum of $1$, and 
\begin{align*}
&\quad \ \frac{
     2 + \left(z_3^1(a) + z_3^0(a)\right) +  2(z_{4}^1(a) + z_{4}^0(a)) - z_{t+1}^1(a) + 2 p_n p_{\ks} p_{S[1]} \prod_{i=2}^{t-1} p_i
    }
    {
    2 + \pbar_{\ks} \sum_{j=1}^{n-2} \prod_{i=1}^j \pbar_{S[n-i]} +  p_{\ks} p_{S[1]} \sum_{j=1}^{t-2} \prod_{i=2}^j p_i + 2 p_{\ks}  p_{S[1]} \prod_{i=2}^{t-1} p_i
    } +  O(\frac{1}{2^n})
    \\
    &\le 
    \frac{
     2 + \left(z_3^1(a) + z_3^0(a)\right) +  2(z_{4}^1(a) + z_{4}^0(a)) + 2 p_{n} (p_{\ks}  - \frac{1}{2}) p_{S[1]} \prod_{i=2}^{t-1} p_i
    }
    {
    2 + \pbar_{\ks} \frac{\pbar_{n}}{p_{n}}
    + p_{\ks} p_{S[1]} \sum_{j=1}^{t-2} \prod_{i=2}^j p_i + 2 p_{\ks} p_{S[1]} \prod_{i=2}^{t-1} p_i
   } +  O(\frac{1}{2^n})
    \,.
\end{align*}
When $p_{\ks} \le \frac{1}{2}$, the last numerator term is negative, and so again the ratio can be upper-bounded as in the previous case (i.e. when $p_n \le 1.2 p_{\ks}$).

\item The case $\frac{1}{2} \le p_{\ks} \le \frac{p_n}{1.2}$ only occurs in the second branch. Here, we apply the rescaling trick to the last numerator term and last denominator term. Since the numerator term is at most the denominator term, we send $2 \prod_{i=2}^{t-1} p_i \mapsto \frac{p_2^{t-2}}{\pbar_2}$; since $p_2 \le \frac{1}{2}$ in this branch, this does not increase the value of either term. So the adaptivity gap is at most the maximum of $1$ and 
\begin{align*}
&\quad \ 
    \frac{
     2 + \left(z_3^1(a) + z_3^0(a)\right) +  2(z_{4}^1(a) + z_{4}^0(a)) + p_n p_{S[1]} (p_{\ks}  - \frac{1}{2}) \frac{p_2^{t-2}}{\pbar_2}
    }
    {
    2 + \pbar_{\ks} \frac{\pbar_{n}}{p_{n}} 
    + p_{\ks} p_{S[1]} \sum_{j=1}^{t-2} \prod_{i=2}^j p_i + p_{\ks}p_{S[1]}  \frac{p_2^{t-2}}{\pbar_2}
    } +  O(\frac{1}{2^n})
    \\
    &\le \frac{
     2 + \left(p_1 p_n + \pbar_1 \pbar_n \right) +  2(p_1 p_n p_{2} + \pbar_1 \pbar_n \pbar_{2}) + p_n p_{S[1]} (p_{\ks}  - \frac{1}{2}) \frac{p_2^{t-2}}{\pbar_2}
    }
    {
    2 + \pbar_{\ks} \frac{\pbar_{n}}{p_{n}} 
    + p_{\ks} \frac{p_{S[1]}}{\pbar_2} 
    } +  O(\frac{1}{2^n})
    \,.
\end{align*}
In the region $\frac{1}{2} < p_{\ks} \le \frac{p_n}{1.2}$, it must be that $\ks \ne 1$ (since $p_1 \le \frac{1}{2}$), so $p_{S[1]} = p_1$. By \Cref{claim:3coin_pk_midregion}, the multivariable expression is at most $1.2$ for $t \ge 4$ in the range $0.24 \le p_1 \le p_2 \le 0.5$ and $0.8 \le p_n \le 1$ and $\frac{1}{2} < p_{\ks} \le \frac{p_n}{1.2}$. So the adaptivity gap is at most $1.2 +  O(\frac{1}{2^n})$ here.\qedhere
 \end{enumerate}
\end{proof}

\subsubsection{Level \texorpdfstring{$\ell=4$}{4}}

\adaptivityellequalsfour*
\begin{proof}
We start from \Cref{eqn:adaptivitygap_arbitrary_ell}.
There are two ranges to handle: $0 \le p_1 \le 0.2$ and $0.5 \le p_n \le 0.65$, or $0.35 \le p_1 \le 0.5$ and $0.8 \le p_n \le 1$. We use the rescaling trick for the summation $\sum_{j=t+2}^n z_j^1(a)$ in the numerator and $p_{\ks} \sum_{j=1}^{n-2} \prod_{i=1}^j p_{S[i]}$ in the denominator. However, the analysis depends on the choice of third and fourth coin. There are only two branches to consider:
\begin{itemize}
    \item If $0 \le p_1 \le 0.2$ and $0.5 \le p_n \le 0.65$, the ordering is $0$-biased at position $4$, since 
    \begin{align*}
         \frac{z^1_4(a)}{z^0_4(a)} \le  \frac{p_1 p_n p_{n-1}}{\pbar_1 \pbar_n \pbar_{n-1}} \le \frac{0.2 \cdot 0.65^2}{0.8 \cdot 0.35^2} 
    =
    \frac{0.65^2}{0.7^2} < 1\,.
    \end{align*}
    So, the first four coins are $(1, n,n-1,n-2)$, and numerator sum $\sum_{j=t+2}^n z_j^1(a)$ which equals $\sum_{j=t-2}^{n-4} p_n p_{n-1} p_{n-2} \sum_{i=1}^j p_i$.
    \item By symmetry, if $0.35 \le p_1 \le 0.5$ and $0.8 \le p_n \le 1$, the ordering is $1$-biased at position $4$, and so the first four coins are $(1, n, 2,3)$. The numerator sum in this branch is $\sum_{j=t+2}^n z_j^1(a) = \sum_{j=t}^{n-2} p_n \sum_{i=1}^j p_i$.
\end{itemize}

We analyze when $t = 4$, when $p_{\ks}$ is large enough, and when $p_{\ks} \le \frac{1}{2}$.
Each branch has a symmetric analysis. We do change the denominator lower bound depending on the branch:
\begin{align*}
    \pbar_{\ks} \sum_{j=1}^{n-2} \prod_{i=1}^j \pbar_{S[n-i]} 
\ge \pbar_{\ks} \pbar_{S[n-1]} + \pbar_{S[n-1]}\frac{\pbar_{S[n-2]}}{p_{n-2}} 
\ge \pbar_{\ks} \frac{\pbar_{n}}{p_{n}}
\end{align*}
\begin{enumerate}
    \item We first analyze what happens when $t = 4$. 
\begin{itemize}
    \item In the first branch, the numerator sum is $\sum_{j=t-2}^{n-4} p_n p_{n-1} p_{n-2} \sum_{i=1}^j p_i$. We lower-bound the denominator sum as $\sum_{j=2}^{n-1} \prod_{i=1}^j p_i$. The $j^{\text{th}}$ term of the numerator sum is at most the $j^{\text{th}}$ term of the denominator sum. We use the rescaling trick to send all $p_i \mapsto \frac{1}{2}$ for all $2 \le i < n-2$. The new sums are (up to $O(\frac{1}{2^n})$ factors) $p_n p_{n-1} p_{n-2} p_1$ and $p_1$. The new numerator sum is exactly $z_5^1(a)$, and cancels out. So the adaptivity gap is at most the maximum of $1$, and  
$$
\frac{
2 + \sum_{j=3}^4 \left( z_j^1(a) + z_j^0(a) \right) + 2(z_5^1(a) + z_5^0(a))
}{
2 + \pbar_{\ks} \pbar_{S[n-1]} + \pbar_{\ks} \pbar_{S[n-1]}\frac{\pbar_{S[n-2]}}{p_{n-2}} + p_1
} + O(\frac{1}{2^n})\,.
$$
\item In the other branch, the numerator sum is $\sum_{j=t}^{n-2} p_n \sum_{i=1}^j p_i$. We lower-bound the denominator sum as $p_{\ks} p_{S[1]} + p_{\ks} p_{S[1]} p_{S[2]} + \sum_{j=4}^{n-1} \prod_{i=1}^j p_i$. The $j^{\text{th}}$ term of the numerator sum is at most the $j^{\text{th}}$ term of the denominator sum. We use the rescaling trick to send all $p_i \mapsto \frac{1}{2}$ for all $4 \le i < n$. The new sums are (up to $O(\frac{1}{2^n})$ factors) $p_n p_1 p_2 p_3$ and $p_{\ks} p_{S[1]} + p_{\ks} p_{S[1]} p_{S[2]} + p_1 p_2 p_3$. The new numerator sum is exactly $z_5^1(a)$, and cancels out. So the adaptivity gap is at most the maximum of $1$, and 
$$
\frac{
2 + \sum_{j=3}^4 \left( z_j^1(a) + z_j^0(a) \right) + 2(z_5^1(a) + z_5^0(a))
}{
2 + \pbar_{\ks} \frac{\pbar_{n}}{p_{n}} + p_1 p_{\max(2,\ks)} + p_1 p_2 p_{\max(3,\ks)} + p_1 p_2 p_3
} + O(\frac{1}{2^n})\,.
$$
\end{itemize}
We optimize each multivariable expression.
\begin{itemize}
    \item For the first branch, the term $\pbar_{\ks}\pbar_{S[n-1]} = \pbar_n \pbar_{\min(\ks,n-1)}$ and $\pbar_{\ks}\pbar_{S[n-1]}\pbar_{S[n-2]} = \pbar_n\pbar_{n-1} \pbar_{\min(\ks,n-2)}$.  As a result, the choice of $\ks$ that minimizes the denominator is $\ks = n-1$. So the adaptivity gap is at most
\begin{align*}
    \frac{
    2 + p_1 p_n(1 + p_{n-1}) + \pbar_1 \pbar_n(1 + \pbar_{n-1}) + 2(p_1 p_n p_{n-1} p_{n-2} + \pbar_1 \pbar_n \pbar_{n-1} \pbar_{n-2})
    }{
    2 + \pbar_n \pbar_{n-1} + \pbar_n \pbar_{n-1} \frac{\pbar_{n-2}}{p_{n-2}} + p_1
    }\,.
\end{align*}
By our choice of ordering algorithm (\emph{smallest} probability of heads), we are $1$-biased at position $5$. This means that $p_1 p_n p_{n-1} p_{n-2} > \pbar_1 \pbar_n \pbar_{n-1} \pbar_{n-2}$. As a result, this means $\frac{p_1}{\pbar_1} > \frac{\pbar_n \pbar_{n-1} \pbar_{n-2}}{p_n p_{n-1} p_{n-2}} > \frac{0.35^3}{0.65^3} > 0.15$, and so $p_1 > \frac{0.15}{1.15} > 0.13$. So we can restrict to the parameter range $p_1 \in [0.13, 0.2]$. In this branch, since $t \ge 4$, we have $p_{n-2} \ge 0.5$.
By \Cref{claim:adaptivitygap_t_is_4_branch1}, the multivariable expression is at most $1.2$ when  $p_1 \in [0.13,0.2]$ and $0.5 \le p_{n-2} \le p_{n-1} \le p_n \le 0.65$. 
\item For the second branch, since $p_1 \ge 0.35 > \frac{0.2}{0.8} \ge \frac{\pbar_n}{p_n}$, the denominator increases when increasing $\ks$ above $2$. So the adaptivity gap is at its largest when $\ks = 2$. In this branch, since $t \ge 4$, we have $p_{3} \le 0.5$. By \Cref{claim:adaptivitygap_t_is_4_branch2}, the multivariable expression is at most $1.2$ when $p_n \in [0.8,1]$ and $0.35 \le p_1 \le p_2 \le p_3 \le 0.5$.
\end{itemize}
Altogether, the adaptivity gap when $t = 4$ is at most $1.2 + O(\frac{1}{2^n})$.

\item Otherwise, we assume $t \ge 5$. We consider when $p_{\ks}$ is large enough. 
This means $p_n p_{n-1} \le 1.2 p_{\ks}$ (in the first branch) or $p_n \le 1.2 p_{\ks}$ (in the second branch). Actually, this will not leave out any instances, since in the first branch, if $p_{\ks} > \frac{1}{2}$, then $p_n p_{n-1} \le 0.65^2 \le 1.2 \cdot 0.5 \le 1.2 p_{\ks}$.

\begin{itemize}
    \item In the first branch,  the numerator sum is $\sum_{j=t-2}^{n-4} p_n p_{n-1} p_{n-2} \prod_{i=1}^j p_i$. We lower-bound the denominator sum as $p_{\ks} \sum_{j=1}^{n-2} \prod_{i=1}^j p_i$.  The $j^{\text{th}}$ term in the numerator sum is at most $1.2$ times the $j^{\text{th}}$ term in the denominator sum. We use the rescaling trick to send $p_i \mapsto \frac{1}{2}$ for all $2 \le i < n-2$. The new sums are (up to $O(\frac{1}{2^n})$ factors) $p_n p_{n-1} p_{n-2} \prod_{i=1}^{t-3} p_i$ and $p_{\ks} \sum_{j=1}^{t-3} \prod_{i=1}^j p_i + p_{\ks} \prod_{i=1}^{t-3} p_i$. The new numerator sum is exactly $z_{t+1}^1(a)$, and cancels out. And the denominator sum is at least $p_{\ks} \frac{p_1}{\pbar_1}$ since $p_1 \le \frac{1}{2}$. So the adaptivity gap is at most the maximum of $1.2$, and  
\begin{align*}
\frac{
2 + \sum_{j=3}^4 \left( z_j^1(a) + z_j^0(a) \right) + 2(z_5^1(a) + z_5^0(a))
}{
2 + \pbar_{\ks} \pbar_{S[n-1]} + \pbar_{\ks} \pbar_{S[n-1]}\frac{\pbar_{S[n-2]}}{p_{n-2}} + p_{\ks} \frac{p_1}{\pbar_1}
} + O(\frac{1}{2^n})\,.
\end{align*}
\item In the other branch, the numerator sum is $\sum_{j=t}^{n-2} p_n \prod_{i=1}^j p_i$. We lower-bound the denominator sum as $p_{\ks} p_{S[1]} + p_{\ks} p_{S[1]} p_{S[2]} \sum_{j=2}^{n-2} \prod_{i=3}^j p_i$.  The $j^{\text{th}}$ term in the numerator sum is at most $1.2$ times the $j^{\text{th}}$ term in the denominator sum. We use the rescaling trick to send $p_i \mapsto \frac{1}{2}$ for all $4 \le i < n$. The new sums are (up to $O(\frac{1}{2^n})$ factors) $p_n \prod_{i=1}^{t-1} p_i$ and $p_{\ks} p_{S[1]} + p_{\ks} p_{S[1]} p_{S[2]} \sum_{j=3}^{t-1} \prod_{i=3}^j p_i + p_{\ks} p_{S[1]} p_{S[2]} \prod_{i=3}^{t-1} p_i$. The new numerator sum is exactly $z_{t+1}^1(a)$, and cancels out. And the denominator sum is at least $p_{\ks} p_{S[1]} + p_{\ks} p_{S[1]} \frac{p_{S[2]}}{p_3}$ since $p_3 \le \frac{1}{2}$ in this branch. So the adaptivity gap is at most the maximum of $1.2$, and 
\begin{align*}
\frac{
2 + \sum_{j=3}^4 \left( z_j^1(a) + z_j^0(a) \right) + 2(z_5^1(a) + z_5^0(a))
}{
2 + \pbar_{\ks} \frac{\pbar_n}{p_n} + p_{\ks} p_{S[1]} +  p_{\ks} p_{S[1]}  \frac{p_{S[2]}}{\pbar_3}
} + O(\frac{1}{2^n})\,.
\end{align*}
\end{itemize}
Note that these expressions are equivalent under the transformation $p_i \mapsto \pbar_{n+1-i}$. So we may optimize one of them. Let's optimize the first expression in the range $0 \le p_1 \le 0.2$ and $0.5 \le p_n \le 0.65$ (and $t \ge 5$).
The denominator has five variables but the numerator has only four. To maximize this expression, we minimize the denominator over the extra variable.
\begin{itemize}
        \item Suppose $1 \le \ks \le n-2$. Then the denominator is at least 
    \begin{align*}
        &2 + \pbar_{\ks} \pbar_n ( 1 + \frac{\pbar_{n-1}}{p_{n-2}}) + p_{\ks}\frac{p_1}{\pbar_1}
        \\
        &\ge
        2 + \pbar_{\ks} \pbar_n ( 1 + \frac{\pbar_{n-1}}{p_{n-2}}) + p_{\ks}\frac{p_1}{\pbar_1}
        \\
        &\ge
           2 + \pbar_{n-2} \pbar_n ( 1 + \frac{\pbar_{n-1}}{p_{n-2}}) + p_{n-2}\frac{p_1}{\pbar_1}     
        \,,
    \end{align*}
    where the last inequality follows because $\frac{p_1}{\pbar_1} \le \pbar_n$ in the region $0 \le p_1 \le 0.2$, $0.5 \le p_n \le 0.65$.
    \item Suppose $n-1 \le \ks \le n$. Then the denominator is at least
    \begin{align*}
        2 + \pbar_{\ks} \pbar_{S[n-1]} ( 1 + \frac{\pbar_{n-2}}{p_{n-2}}) + p_{\ks}\frac{p_1}{\pbar_1} \ge 2 + \pbar_n \pbar_{n-1} ( 1 + \frac{\pbar_{n-2}}{p_{n-2}}) + p_{n-1}\frac{p_1}{\pbar_1}\,.
    \end{align*}
    This bound is smaller than in the previous case, since $\frac{p_1}{\pbar_1} \le \pbar_n$ in the region $0 \le p_1 \le 0.2$, $0.5 \le p_n \le 0.65$.
\end{itemize}
So, the denominator is always at least $2 + \pbar_n \pbar_{n-1} ( 1 + \frac{\pbar_{n-2}}{p_{n-2}}) + p_{n-1}\frac{p_1}{\pbar_1} = 2 + \pbar_{n-1} \frac{\pbar_n}{p_{n-2}} + p_{n-1} \frac{p_1}{\pbar_1}$. This means the expression is at most
\begin{align*}
    \frac
    {
    2
    + p_1 p_n(1 + p_{n-1}) + \pbar_1 \pbar_n(1 + \pbar_{n-1}) + 2(p_1 p_n p_{n-1}p_{n-2} + \pbar_1 \pbar_n \pbar_{n-1}\pbar_{n-2})
    }
    {
    2 + \pbar_{n-1} \frac{ \pbar_{n}}{p_{n-2}} + p_{n-1}\frac{p_1}{\pbar_1}
    }\,.
\end{align*}
Since $t \ge 5$, we know that $p_{n-1} \ge p_{n-2} \ge \frac{1}{2}$. By \Cref{claim:4coin_1.2}, when $0 \le p_1 \le 0.2$ and $\frac{1}{2} \le p_{n-2} \le p_{n-1} \le 0.65$ and $0.58 \le p_n \le 0.65$, this is at most $1.2$. So the adaptivity gap is at most $1.2 + O(\frac{1}{2^n})$, except in the $(p_1, p_n)$ region $0 \le p_1 \le 0.2$ and $0.5 \le p_n \le 0.58$, or its complement $0.42 \le p_1 \le 0.5$ and $0.8 \le p_n \le 1$.

\item Otherwise, we know that $p_{\ks} \le \frac{1}{2}$. For each branch, the $j^{\text{th}}$ term in the numerator sum is at most the $(j-1)^{\text{th}}$ term in the denominator sum.
\begin{itemize}
    \item In the first branch, we first upper-bound the numerator sum as $p_n p_{n-1} p_{n-2} p_{\ks} \sum_{j=t-3}^{n-5} p_{S[i]}$. Then we use the rescaling trick to send $p_{S[i]} \mapsto \frac{1}{2}$ for all $i \ge t - 2$ and $p_{S[i]}$ for all $i < t- 2$. The new sums are (up to $O(\frac{1}{2^n})$ factors) $2 p_n p_{n-1} p_{n-2} p_{\ks}  \prod_{i=1}^{t-3} p_i$ in the numerator and $p_{\ks} \sum_{j=1}^{t-4} \prod_{i=1}^j p_i + 2 p_{\ks} \prod_{i=1}^{t-3} p_i$ in the denominator. Since $p_{\ks} \le \frac{1}{2}$, the new numerator sum is at most $z_{t+1}^1(a)$, and cancels out. And the denominator sum is at least $p_{\ks} \frac{p_1}{\pbar_1}$ since $p_1 \le \frac{1}{2}$. So the adaptivity gap is at most the maximum of $1$, and
\begin{align*}
\frac{
2 + \sum_{j=3}^4 \left( z_j^1(a) + z_j^0(a) \right) + 2(z_5^1(a) + z_5^0(a))
}{
2 + \pbar_{\ks} \pbar_{S[n-1]} + \pbar_{\ks} \pbar_{S[n-1]}\frac{\pbar_{S[n-2]}}{p_{n-2}} + p_{\ks} \frac{p_1}{\pbar_1}
} + O(\frac{1}{2^n})\,.
\end{align*}
\item In the second branch, we first upper-bound the numerator sum as $p_n p_{\ks} \sum_{j=t-1}^{n-3} p_{S[i]}$. Then we use the rescaling trick to send $p_{S[i]} \mapsto \frac{1}{2}$ for all $i \ge t$ and $p_{S[i]}$ for all $3 \le i < t$.  The new sums are (up to $O(\frac{1}{2^n})$ factors) $2 p_n p_{\ks} p_{S[1]} p_{S[2]} \prod_{i=3}^{t-1} p_i$ in the numerator and $p_{\ks} p_{S[1]} + p_{\ks} p_{S[1]} p_{S[2]} \sum_{j=2}^{t-2} \prod_{i=3}^j p_i + 2 p_{\ks} p_{S[1]} p_{S[2]} \prod_{i=3}^{t-1} p_i$ in the denominator. Since $p_{\ks} \le \frac{1}{2}$, the new numerator sum is at most $z_{t+1}^1(a)$, and cancels out. And the denominator sum is at least $p_{\ks} p_{S[1]} + p_{\ks} p_{S[1]} \frac{p_{S[2]}}{\pbar_3}$ since $p_3 \le \frac{1}{2}$. So the adaptivity gap is at most the maximum of $1$, and
\begin{align*}
\frac{
2 + \sum_{j=3}^4 \left( z_j^1(a) + z_j^0(a) \right) + 2(z_5^1(a) + z_5^0(a))
}{
2 + \pbar_{\ks} \frac{\pbar_n}{p_n} + p_{\ks} p_{S[1]} +  p_{\ks} p_{S[1]}  \frac{p_{S[2]}}{\pbar_3}
} + O(\frac{1}{2^n})\,.
\end{align*}
\end{itemize}
These are exactly the same expressions as in the previous case (i.e. when $p_{\ks}$ is large enough). So the adaptivity gap is at most $1.2 + O(\frac{1}{2^n})$ except in the same regions specified in the previous case.\qedhere
\end{enumerate}
\end{proof}

\subsubsection{Level \texorpdfstring{$\ell=5$}{5}}
\adaptivityellequalsfive*

\begin{proof}
    We start from \Cref{eqn:adaptivitygap_arbitrary_ell}. There are two ranges to handle:  $0 \le p_1 \le 0.2$ and $0.5 \le p_n \le 0.58$, or $0.42 \le p_1 \le 0.5$ and $0.8 \le p_n \le 1$.

We use the rescaling trick for the summation $\sum_{j=t+2}^n z_j^1(a)$ in the numerator and $p_{\ks} \sum_{j=1}^{n-2} \prod_{i=1}^j p_{S[i]}$ in the denominator. The analysis depends on the choice of third, fourth \emph{and} fifth coin. But there are again only two branches to consider:
\begin{itemize}
    \item If $0 \le p_1 \le 0.2$ and $0.5 \le p_n \le 0.58$, the ordering is $0$-biased at position $5$, since 
    \begin{align*}
         \frac{z^1_5(a)}{z^0_5(a)} \le     
         \frac{p_1 p_n p_{n-1} p_{n-2}}{\pbar_1 \pbar_n \pbar_{n-1} \pbar_{n-2}} \le \frac{0.2 \cdot 0.58^3}{0.8 \cdot 0.42^3}
    <
    \frac{0.58^3}{(1.5\cdot 0.42)^3}
    =
    \frac{0.58^3}{0.63^3} < 1\,.
    \end{align*}
    So, the first five coins are $(1, n,n-1,n-2,n-3)$. For this branch, the numerator sum $\sum_{j=t+2}^n z_j^1(a)$ is equal to $\sum_{j=t-3}^{n-5} p_n p_{n-1} p_{n-2} p_{n-3} \sum_{i=1}^j p_i$.
    \item By symmetry, if $0.42 \le p_1 \le 0.5$ and $0.8 \le p_n \le 1$, the ordering is $1$-biased at position $4$, and so the first five coins are $(1, n, 2,3,4)$. The numerator sum in this branch is $\sum_{j=t+2}^n z_j^1(a) = \sum_{j=t}^{n-2} p_n \sum_{i=1}^j p_i$.
\end{itemize}
Recall that the remaining cases only have $p_{\ks}$ large enough or  $p_{\ks} \le \frac{1}{2}$. Each branch has a symmetric analysis, which is very similar to the previous levels. We do change the denominator lower bound depending on the branch:
\begin{align*}
    \pbar_{\ks} \sum_{j=1}^{n-2} \prod_{i=1}^j \pbar_{S[n-i]} 
\ge \pbar_{\ks} \pbar_{S[n-1]}(1 + \pbar_{S[n-2]}) + \pbar_{\ks} \pbar_{S[n-1]} \pbar_{S[n-2]} \frac{\pbar_{S[n-3]}}{p_{n-3}}
\ge \pbar_{\ks} \frac{\pbar_{n}}{p_{n}}\,.
\end{align*}
\begin{enumerate}
    \item We first consider when $p_{\ks}$ is large enough. As in the proof of \Cref{lemma:adaptivity_ell_4}, this means $p_n p_{n-1} \le 1.2 p_{\ks}$ (in the first branch) or $p_n \le 1.2 p_{\ks}$ (in the second branch). Actually, this will not leave out any instances, since in the first branch, if $p_{\ks} > \frac{1}{2}$, then $p_n p_{n-1} \le 0.65^2 \le 1.2 \cdot 0.5 \le 1.2 p_{\ks}$.
\begin{itemize}
    \item In the first branch,  the numerator sum is $\sum_{j=t-3}^{n-5} p_n p_{n-1} p_{n-2} p_{n-3} \prod_{i=1}^j p_i$. We lower-bound the denominator sum as $p_{\ks} \sum_{j=1}^{n-2} \prod_{i=1}^j p_i$.  The $j^{\text{th}}$ term in the numerator sum is at most $1.2$ times the $j^{\text{th}}$ term in the denominator sum. We use the rescaling trick to send $p_i \mapsto \frac{1}{2}$ for all $2 \le i < n-3$. The new sums are (up to $O(\frac{1}{2^n})$ factors) $p_n p_{n-1} p_{n-2} p_{n-3} \prod_{i=1}^{t-4} p_i$ and $p_{\ks} \sum_{j=1}^{t-4} \prod_{i=1}^j p_i + p_{\ks} \prod_{i=1}^{t-4} p_i$. The new numerator sum is exactly $z_{t+1}^1(a)$, and cancels out. And the denominator sum is at least $p_{\ks} \frac{p_1}{\pbar_1}$ since $p_1 \le \frac{1}{2}$. So the adaptivity gap is at most the maximum of $1.2$, and  
\begin{align*}
\frac{
2 + \sum_{j=3}^5 \left( z_j^1(a) + z_j^0(a) \right) + 2(z_6^1(a) + z_6^0(a))
}{
2 + \pbar_{\ks} \pbar_{S[n-1]}(1 + \pbar_{S[n-2]}) + \pbar_{\ks} \pbar_{S[n-1]} \pbar_{S[n-2]} \frac{\pbar_{S[n-3]}}{p_{n-3}} + p_{\ks} \frac{p_1}{\pbar_1}
} + O(\frac{1}{2^n})\,.
\end{align*}
\item In the other branch, the numerator sum is $\sum_{j=t}^{n-2} p_n \prod_{i=1}^j p_i$. We lower-bound the denominator sum as $p_{\ks} p_{S[1]} (1 + p_{S[2]}) + p_{\ks} p_{S[1]} p_{S[2]} p_{S[3]} \sum_{j=2}^{n-2} \prod_{i=4}^j p_i$.  The $j^{\text{th}}$ term in the numerator sum is at most $1.2$ times the $j^{\text{th}}$ term in the denominator sum. We use the rescaling trick to send $p_i \mapsto \frac{1}{2}$ for all $5 \le i < n$. The new sums are (up to $O(\frac{1}{2^n})$ factors) $p_n \prod_{i=1}^{t-1} p_i$ and $p_{\ks} p_{S[1]}(1 + p_{S[2]}) + p_{\ks} p_{S[1]} p_{S[2]} p_{S[3]} \sum_{j=4}^{t-1} \prod_{i=3}^j p_i + p_{\ks} p_{S[1]} p_{S[2]} p_{S[3]} \prod_{i=4}^{t-1} p_i$. The new numerator sum is exactly $z_{t+1}^1(a)$, and cancels out. And the denominator sum is at least $p_{\ks} p_{S[1]}(1 + p_{S[2]}) + p_{\ks} p_{S[1]} p_{S[2]} \frac{p_{S[3]}}{p_4}$ since $p_4 \le \frac{1}{2}$ in this branch. So the adaptivity gap is at most the maximum of $1.2$, and 
\begin{align*}
\frac{
2 + \sum_{j=3}^5 \left( z_j^1(a) + z_j^0(a) \right) + 2(z_6^1(a) + z_6^0(a))
}{
2 + \pbar_{\ks} \frac{\pbar_n}{p_n} + p_{\ks} p_{S[1]}(1 + p_{S[2]}) +   p_{\ks} p_{S[1]} p_{S[2]}\frac{p_{S[3]}}{\pbar_4}
} + O(\frac{1}{2^n})\,.
\end{align*}
\end{itemize}
Note that these expressions are equivalent under the transformation $p_i \mapsto \pbar_{n+1-i}$. So we may optimize one of them. Let's optimize the first expression in the range $0 \le p_1 \le 0.2$ and $0.5 \le p_n \le 0.58$ (and $t \ge 5$). Here, the denominator has six variables but the numerator has only five. To maximize this expression, we minimize the denominator over the extra variable.
\begin{itemize}
    \item Suppose $1 \le \ks \le n-3$. Then the denominator is at least 
 \begin{align*}
    2
    +
    \pbar_{\ks}
    \pbar_n
    (
    1+\pbar_{n-1}
    (
    1+\frac{\pbar_{n-2}}{p_{n-3}}
    ))
    +
    p_{\ks}
    \frac{p_1}{\pbar_1}
    \ge
       2
    +
    \pbar_{n-3}
    \pbar_n
    (
    1+\pbar_{n-1}
    (
    1+\frac{\pbar_{n-2}}{p_{n-3}}
    ))
    +
    p_{n-3}
    \frac{p_1}{\pbar_1}
    \end{align*}
    where the inequality follows because $\frac{p_1}{\pbar_1}\le \pbar_n$ in the region $0 \le p_1 \le 0.2$, $0.5 \le p_n \le 0.58$.
    \item Suppose $\ks = n-2$. Then the denominator is at least
     \begin{align*}
    2
    +
    \pbar_{n-2}
    \pbar_n
    (
    1+\pbar_{n-1}
    (
    1+\frac{\pbar_{n-3}}{p_{n-3}}
    ))
    +
    p_{n-2}
    \frac{p_1}{\pbar_1}
    \ge
        2
    +
    \pbar_{n-2}
    \pbar_n
    (
    1+\pbar_{n-1}
    (
    1+\frac{\pbar_{n-3}}{p_{n-3}}
    ))
    +
    p_{n-2}
    \frac{p_1}{\pbar_1}
    \end{align*}
    This is smaller than previous case, since $\frac{p_1}{\pbar_1} \le \pbar_n$ in region $0 \le p_1 \le 0.2$, $0.5 \le p_n \le 0.58$.
    \item Suppose $n-1 \le \ks \le n$. Then the denominator is at least
\begin{align*}
    2
    +
    \pbar_{\ks}
    \pbar_{S[n-1]}
    (
    1+\pbar_{n-2}
    (
    1+\frac{\pbar_{n-3}}{p_{n-3}}
    ))
    +
    p_{\ks}
    \frac{p_1}{\pbar_1}
    \ge
        2
    +
    \pbar_{n-1}
    \pbar_n
    (
    1+\pbar_{n-2}
    (
    1+\frac{\pbar_{n-3}}{p_{n-3}}
    ))
    +
    p_{n-1}
    \frac{p_1}{\pbar_1}
    \end{align*}
        This is smaller than previous case, since $\frac{p_1}{\pbar_1} \le \pbar_n$ in the region $0 \le p_1 \le 0.2$, $0.5 \le p_n \le 0.58$.
\end{itemize}
So, the denominator is at least $2
    +
    \pbar_{n-1}
    \pbar_n
    (
    1+\pbar_{n-2}
    (
    1+\frac{\pbar_{n-3}}{p_{n-3}}
    ))
    +
    p_{n-1}\frac{p_1}{\pbar_1}
    =
    2 + \pbar_{n-1}\pbar_n (1 + \frac{\pbar_{n-2}}{p_{n-3}}) + p_{n-1} \frac{p_1}{\pbar_1}$.
So the expression is at most
\begin{align*}
    \frac{
    2
    +p_1 p_n(1 + p_{n-1}(1 + p_{n-2}(1+2p_{n-3})))
    +\pbar_1 \pbar_n(1 + \pbar_{n-1}(1 + \pbar_{n-2}(1+2\pbar_{n-3})))
    }
    {2
    +
    \pbar_n
    \pbar_{n-1}
    (
    1+\frac{\pbar_{n-2}}{p_{n-3}})
    +
    p_{n-1}
    \frac{p_1}{\pbar_1}
    }\,.
\end{align*}
Since $t \ge 5$, we have $p_{n-3} \ge 0.5$. We may also assume $p_n > 0.5$, since otherwise all coins have probability of heads at most $\frac{1}{2}$ and the adaptivity gap is at most $1.2$ in this case. So by \Cref{claim:5coin_1.2}, when $0 \le p_1 \le 0.2$ and $0.5 \le p_{n-3} \le p_{n-2} \le p_{n-1} \le p_n \le 0.58$ and $p_n > 0.5$, the expression is at most $1.2$, and so the adaptivity gap is at most $1.2 + O(\frac{1}{2^n})$ here.

\item Otherwise, we know that $p_{\ks} \le \frac{1}{2}$. For each branch, the $j^{\text{th}}$ term in the numerator sum is at most the $(j-1)^{\text{th}}$ term in the denominator sum.
\begin{itemize}
  \item In the first branch, we first upper-bound the numerator sum as $p_n p_{n-1} p_{n-2} p_{n-3} p_{\ks} \sum_{j=t-4}^{n-6} p_{S[i]}$. Then we use the rescaling trick to send $p_{S[i]} \mapsto \frac{1}{2}$ for all $i \ge t - 3$ and $p_{S[i]}$ for all $i < t- 3$. The new sums are (up to $O(\frac{1}{2^n})$ factors) $2 p_n p_{n-1} p_{n-2} p_{n-3} p_{\ks}  \prod_{i=1}^{t-4} p_i$ in the numerator and $p_{\ks} \sum_{j=1}^{t-5} \prod_{i=1}^j p_i + 2 p_{\ks} \prod_{i=1}^{t-4} p_i$ in the denominator. Since $p_{\ks} \le \frac{1}{2}$, the new numerator sum is at most $z_{t+1}^1(a)$, and cancels out. And the denominator sum is at least $p_{\ks} \frac{p_1}{\pbar_1}$ since $p_1 \le \frac{1}{2}$. So the adaptivity gap is at most the maximum of $1$, and
\begin{align*}
\frac{
2 + \sum_{j=3}^5 \left( z_j^1(a) + z_j^0(a) \right) + 2(z_6^1(a) + z_6^0(a))
}{
2 + \pbar_{\ks} \pbar_{S[n-1]}(1 + \pbar_{S[n-2]}) + \pbar_{\ks} \pbar_{S[n-1]} \pbar_{S[n-2]} \frac{\pbar_{S[n-3]}}{p_{n-3}} + p_{\ks} \frac{p_1}{\pbar_1}
} + O(\frac{1}{2^n})\,.
\end{align*}
\item In the second branch, we first upper-bound the numerator sum as $p_n p_{\ks} \sum_{j=t-1}^{n-3} p_{S[i]}$. Then we use the rescaling trick to send $p_{S[i]} \mapsto \frac{1}{2}$ for all $i \ge t$ and $p_{S[i]}$ for all $4 \le i < t$.  The new sums are (up to $O(\frac{1}{2^n})$ factors) $2 p_n p_{\ks} p_{S[1]} p_{S[2]} p_{S[3]} \prod_{i=4}^{t-1} p_i$ in the numerator and $p_{\ks} p_{S[1]}(1 + p_{S[2]}) + p_{\ks} p_{S[1]} p_{S[2]} p_{S[3]} \sum_{j=3}^{t-2} \prod_{i=4}^j p_i + 2 p_{\ks} p_{S[1]} p_{S[2]} p_{S[3]} \prod_{i=4}^{t-1} p_i$ in the denominator. Since $p_{\ks} \le \frac{1}{2}$, the new numerator sum is at most $z_{t+1}^1(a)$, and cancels out. And the denominator sum is at least $p_{\ks} p_{S[1]}(1 + p_{S[2]}) + p_{\ks} p_{S[1]} p_{S[2]} \frac{p_{S[3]}}{\pbar_4}$ since $p_4 \le \frac{1}{2}$ in this branch. So the adaptivity gap is at most the maximum of $1$, and
\begin{align*}
\frac{
2 + \sum_{j=3}^5 \left( z_j^1(a) + z_j^0(a) \right) + 2(z_6^1(a) + z_6^0(a))
}{
2 + \pbar_{\ks} \frac{\pbar_n}{p_n} + p_{\ks} p_{S[1]} (1 + p_{S[2]}) +  p_{\ks} p_{S[1]}  p_{S[2]} \frac{p_{S[3]}}{\pbar_4}
} + O(\frac{1}{2^n})\,.
\end{align*}
\end{itemize}
These are exactly the same expressions as in the previous case (i.e. when $p_{\ks}$ is large enough), and so by \Cref{claim:5coin_1.2}, the adaptivity gap is at most $1.2 + O(\frac{1}{2^n})$ here as well.\qedhere
\end{enumerate}
\end{proof}


\subsection{Deferred math facts}
\label{sec:deferredmathfacts}

\subsubsection{Level \texorpdfstring{$\ell=2$}{2}}

\begin{claim}
\label{claim:adaptivity_gap_t_is_2}
    Consider variables $v,u$ such that $1-u \le v \le 0.5 \le u \le 1$. Then the function
    \begin{align*}
        f(v,u) \defeq \frac{2 + 2\left(uv + (1-u)(1-v)\right)}{2 + \frac{(1-u)^2}{u} + v}
    \end{align*}
    has maximum value equal to $1.2$.
\end{claim}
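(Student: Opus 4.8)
The plan is to clear denominators and reduce the inequality to a polynomial statement on the compact region $R = \{(v,u) : 1-u \le v \le \tfrac12 \le u \le 1\}$, then exploit the fact that the resulting expression is \emph{affine in $v$} for each fixed $u$. Since the denominator $2 + \tfrac{(1-u)^2}{u} + v$ is positive on $R$ (note $u \ge 1-v \ge \tfrac12 > 0$ there), multiplying through by it and by $u$ shows that $f(v,u) \le 1.2$ is equivalent to
\[
g(u,v) \;:=\; 1.2 - 4u + 3.2\,u^2 + u v\,(3.2 - 4u) \;\ge\; 0 .
\]
I would also record the single candidate equality point right away: at $(v,u) = (\tfrac12, 1)$ the numerator is $3$ and the denominator is $2.5$, so $f(\tfrac12,1) = 1.2$ and $g = 0$ there. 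Hence it suffices to prove $g \ge 0$ on $R$.

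For fixed $u$, $g(u,\cdot)$ is affine in $v$ with slope $u(3.2 - 4u)$, so over $v \in [1-u,\tfrac12]$ its minimum is attained at an endpoint dictated by the sign of $3.2 - 4u$; this cleanly splits the proof at $u = 0.8$. For $u \in [0.8,1]$ the slope is $\le 0$, the minimum is at $v = \tfrac12$, and a one-line computation gives $g(u,\tfrac12) = 1.2(1-u)^2 \ge 0$. For $u \in [\tfrac12, 0.8]$ the slope is $\ge 0$, the minimum is at $v = 1-u$, and expanding yields $g(u,1-u) = h(u)$, where $h(u) := 4u^3 - 4u^2 - 0.8u + 1.2$; it then remains to show $h \ge 0$ on $[\tfrac12, 0.8]$.

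For this last step I would use convexity: $h''(u) = 24u - 8 > 0$ on $[\tfrac12,0.8]$, and $h'(\tfrac12) = -1.8 < 0 < 0.48 = h'(0.8)$, so $h$ attains its minimum on this interval at the unique root $u^\star = \tfrac{8 + \sqrt{102.4}}{24}$ of $h'(u) = 12u^2 - 8u - 0.8$. Substituting the relation $12(u^\star)^2 = 8u^\star + 0.8$ twice to eliminate the quadratic and cubic powers reduces $h(u^\star)$ to the affine quantity $\tfrac{10 - 12.8\,u^\star}{9}$, which is positive iff $u^\star < 0.78125$; and $u^\star < 0.78125 \iff \sqrt{102.4} < 10.75 \iff 102.4 < 115.5625$, which holds. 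Combining the two cases gives $g \ge 0$ on $R$, hence $f \le 1.2$ with equality only at $(\tfrac12,1)$, so $\max_R f = 1.2$.

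The main obstacle is the cubic inequality $h \ge 0$ on $[\tfrac12, 0.8]$: the minimum of $h$ there is only about $0.037$, so coarse bounds fail --- merely checking the endpoints ($h(\tfrac12) = 0.3$, $h(0.8) = 0.048$) or dropping the $uv$-term in $g$ is insufficient, and one really needs convexity together with the exact value of $h$ at its critical point. Everything else is bookkeeping algebra, and no results from earlier in the paper are needed.
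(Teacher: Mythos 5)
Your proof is correct and follows essentially the same route as the paper's: both proofs reduce the problem to a one-dimensional check over the boundary of the $v$-interval (you by noting that after clearing denominators the inequality becomes affine in $v$; the paper by noting $f$ is a ratio of affine functions of $v$ and hence monotone), and both then land on the same cubic inequality $4u^3-4u^2-0.8u+1.2\ge 0$. Your closed-form evaluation of the cubic at its critical point, $h(u^\star)=\tfrac{10-12.8u^\star}{9}>0 \iff \sqrt{102.4}<10.75$, is actually a welcome tightening of the paper's treatment, which asserts ``by inspection'' that the cubic at the critical point is at least $0.37375$ --- this appears to be a misplaced decimal: the true minimum of $h$ on the relevant interval is $\approx 0.0374$ (your value), not $0.37$, though of course the conclusion is unaffected.
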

\begin{proof}
    Any expression $\frac{ax+b}{cx+d}$ is piecewise monotonic in $(-\infty, -\frac{d}{c})$ and $(-\frac{d}{c}, \infty)$. So $f$ is monotonic in $v$ for all values in $(0, \infty)$. As a result, $f$ is maximized at an boundary point of $v$ (either $v = 1-u$ or $v = 0.5$).

    When $v = 0.5$, the function simplifies to 
    \begin{align*}
        f(0.5, u) = \frac{3}{2.5 + \frac{(1-u)^2}{u}}\,.
    \end{align*}
    This expression is maximized at the largest value of $u$; i.e. $u = 1$. So $f(0.5, u) \le f(0.5, 1) = \frac{3}{2.5} = 1.2$. 

    When $v = 1-u$, the function simplifies to 
    \begin{align*}
        f(1-u,u) = \frac{2 + 4u(1-u)}{2 + \frac{(1-u)^2}{u} + 1-u}
        =
        \frac{2 + 4u(1-u)}{\frac{1 + u}{u}}\,.
    \end{align*}
    We want to see when $f(1-u,u) \stackrel{?}\le c$ for $c \defeq 1.2$. This is true if 
    \begin{align*}
        2u + 4u^2(1-u) \stackrel{?}\le c(1 + u)\,.
    \end{align*}
    Rearranging terms, this is true exactly when
\begin{align*}
4 u^3 - 4u^2 + (c-2)u + c \stackrel{?}\ge 0\,.
\end{align*}
The extremal points of this polynomial occur when the derivative is equal to $0$; i.e. when $12u^{2}-8u+c-2 = 0$. This occurs when $c = 1.2$ at $u \approx -0.08$ and $u \approx 0.75$. Since the cubic coefficient is positive, it is decreasing only between these points. So if the inequality is true at the location of the second extremal point, it is true for all $0 \le u \le 1$. By inspection, the cubic evaluated at its second extremal point is at least $0.37375 > 0$. So the inequality is true.

Thus, for all $1-u \le v \le 0.5 \le u \le 1$, we have $f(v,u) \le 1.2$.
\end{proof}

\begin{restatable}{claim}{twocoinonepointtwo}
\label{claim:2coin_1.2}
    Consider variables $v,x,u$ such that $0 \le v \le 0.5 \le u \le 1$, $v \le x \le u$, and $v + u \le 1$. Then the function
 \begin{align*}
        f(v,x,u) \defeq \frac{2 + 2\left(uv + (1-u)(1-v)\right)}{2 + \frac{xuv + (1-x)(1-u)(1-v)}{u(1-v)}}
    \end{align*}
    is at most $1.2$, unless $0 \le v \le 0.2$ and $0.5 \le u \le 0.76$.
\end{restatable}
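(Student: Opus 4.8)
The first move is to eliminate $x$. The numerator of $f$ is independent of $x$, and the term $xuv + (1-x)(1-u)(1-v)$ in the denominator is affine in $x$ with slope $uv - (1-u)(1-v)$. The hypothesis $u+v \le 1$ gives $(1-u)(1-v) = 1 - u - v + uv \ge uv$, so this slope is nonpositive: the denominator is nonincreasing in $x$. Hence $f(v,x,u) \le f(v,u,u)$ (the choice $x=u$ is admissible since $u \ge \tfrac12 \ge 1-u \ge v$), and it suffices to bound the two-variable function
\[
   g(v,u) \defeq f(v,u,u) = \frac{2 + 2\bigl(uv + (1-u)(1-v)\bigr)}{2 + \frac{uv}{1-v} + \frac{(1-u)^2}{u}}\,,
\]
on the region $R = \{\,0 \le v \le \tfrac12 \le u \le 1,\ u+v\le 1\,\}$; note $1-v>0$ and $u>0$ there. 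Abbreviate $\bar u \defeq 1-u$, so the relevant range of $v$ is $[0,\bar u]$ (since $u \ge \tfrac12$).

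Clearing the positive denominator $u(1-v)$, the inequality $g(v,u) \le \tfrac65$ is equivalent, after expanding and collecting powers of $v$, to
\[
   P(v,u) \defeq 10u(2u-1)\,v^2 \;-\; 6\,(5u^2 - 5u + 1)\,v \;+\; 2\,(8u^2 - 10u + 3) \;\ge\; 0\,.
\]
For fixed $u \in [\tfrac12,1]$ this is a convex quadratic in $v$ (leading coefficient $10u(2u-1)\ge 0$; at $u=\tfrac12$ it degenerates to $\tfrac32 v \ge 0$), so on $[0,\bar u]$ its minimum is attained at an endpoint or at the vertex $v^\star = \tfrac{3(5u^2-5u+1)}{10u(2u-1)}$. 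At the endpoints, $P(0,u) = 2(8u^2-10u+3) = 16(u-\tfrac12)(u-\tfrac34)$, which is negative exactly for $u \in (\tfrac12,\tfrac34)$ — and all those points $(0,u)$ lie inside the exceptional box $[0,0.2]\times[0.5,0.76]$; while $P(\bar u, u) = 2u\,(10u^3 - 10u^2 - 2u + 3)$, which is $\ge 0$ on $[\tfrac12,1]$, since $10u^3-10u^2-2u+3 \ge 0$ there is (up to the scalar $2.5$) exactly the cubic nonnegativity already established in the proof of \Cref{claim:adaptivity_gap_t_is_2}.

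It remains to verify $P \ge 0$ on $R$ outside the exceptional box, i.e. (using $v\ge 0$, $u\ge \tfrac12$ on $R$) on the part with $v \ge 0.2$ and on the part with $u \ge 0.76$. For $v \ge 0.2$ one has $u \le 1-v < 0.8$; a short computation gives $P(0.2,u) = 10.8\,u^2 - 14.4\,u + 4.8 = 10.8\,(u - \tfrac23)^2 \ge 0$, and since $v^\star \le 0.2$ throughout $u \in [\tfrac12,0.8]$ (equivalent to $11u^2 - 13u + 3 \le 0$), the convex function $P(\cdot,u)$ is nondecreasing on $[0.2,\bar u]$, so $P(v,u) \ge P(0.2,u) \ge 0$. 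For $u \ge 0.76$ (hence $\bar u \le 0.24$), the case $v \in (0.2,\bar u]$ is already covered, and for $v \in [0,0.2]$ the endpoints are handled by $P(0,u)\ge 0$ (as $u\ge\tfrac34$) and by $P(0.2,u)$ or $P(\bar u,u)$; the only remaining case is when the vertex $v^\star$ falls inside the $v$-interval, where $P(v^\star,u) \ge 0$ is equivalent to the discriminant condition
\[
   9\,(5u^2 - 5u + 1)^2 \;\le\; 20\,u\,(2u-1)\,(8u^2 - 10u + 3)\,, \qquad u \in [0.76, 1]\,.
\]
I expect this last degree-four polynomial inequality to be the main obstacle: it is nearly tight, in fact it fails just below $u = 0.76$ (which is precisely why the exceptional box is drawn at $0.76$ rather than $0.75$), so a crude bound will not do — one needs either an exact factorization or a careful enclosure of the two sides over $[0.76,1]$ (e.g. bounding $5u^2-5u+1$, $8u^2-10u+3$, and $u(2u-1)$ monotonically on that interval and comparing). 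Everything else reduces to the routine quadratic algebra above.
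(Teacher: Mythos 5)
Your reduction is sound up to a point, and in one respect cleaner than the paper's: you observe that under $u + v \le 1$ the quantity $xuv + (1-x)(1-u)(1-v)$ is nonincreasing in $x$ (slope $uv - (1-u)(1-v) \le 0$), so $f$ is nondecreasing in $x$ and only the boundary $x=u$ need be checked; the paper has to consider both $x=u$ and $x=v$ and then dispatch the latter via the symmetry $f(v,x,u) = f(1-u,1-x,1-v)$ and the constraint $u+v \le 1$. Your quadratic $P(v,u)=10u(2u-1)v^2 - 6(5u^2-5u+1)v + 2(8u^2-10u+3)$ is correct (the paper works with $P/5$), your $v\ge 0.2$ analysis via the vertex location and $P(0.2,u)=10.8(u-\tfrac23)^2$ is sound, and you correctly reduce the $u\ge 0.76$ region to the discriminant condition $9(5u^2-5u+1)^2 \le 20u(2u-1)(8u^2-10u+3)$.

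That final quartic, however, is a genuine gap and not a routine detail you may defer. Your suggested remedy — bounding the three quadratic factors monotonically on $[0.76,1]$ and comparing — does not close: both sides of the inequality are increasing on $[0.76,1]$, so minimizing the left side (at $u=0.76$, giving roughly $0.16$) and maximizing the right (at $u=1$, giving $9$) yields nothing; one would have to subdivide the interval finely, which becomes a finite numerical check rather than an argument. The paper finishes exactly this step by writing the discriminant $-15.2u^4 + 17.6u^3 - 0.8u^2 - 4.8u + 1.44$ as a polynomial in $(u-0.5)$ with coefficients $-15.2,\,-12.8,\,2.8,\,0,\,0.09$, applying Descartes' rule of signs to see that it has exactly one root with $u>0.5$, and then evaluating at $u=0.5$ (positive) and $u=0.76$ (negative) to locate that root inside $(0.5,0.76)$; since the leading coefficient is negative, the discriminant is then negative for all $u>0.76$. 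Without some such certificate — Descartes' rule with two point evaluations, Sturm's theorem, or a subdivided numerical enclosure — your proof does not establish the claim.
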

\begin{proof}
    Since $f \ge 0$, the maximum of $f$ is the minimum of $\frac{1}{f}$. Since $\frac{1}{f}$ is linear in $x$, it is minimized at one of the boundary points $x = u$ or $x = v$.

    We assume $x = u$; at the end of the proof we make comments about what happens when $x = v$. When $x = u$, 
\begin{align*}
    f(v,u,u) = \frac{2 + 2\left(uv + (1-u)(1-v)\right)}{2 + \frac{u^2 v + (1-u)^2 (1-v)}{u(1-v)}} = \frac{2u (1-v) (2 - u -v + 2uv)}{1 - v + u^2}\,.
\end{align*}
We would like to prove that $f(v,u,u) \stackrel{?}\le c$ for $c \defeq 1.2$. This is true if 
\begin{align*}
        2u(1-v)(2-u-v+2uv) \stackrel{?}\le c(1-v+u^2)\,.
    \end{align*}
    Rearranging terms by powers of $u$, we get 
    \begin{align*}
        u^2(c - 2(1-v)(2v-1)) + u(2(1-v)(v-2)) + c(1-v) \stackrel{?}\ge 0\,.
    \end{align*}
    The coefficient on $u^2$ is positive for all $0 \le v \le 0.5$. Then the inequality is satisfied if the quadratic has no real roots. This happens when the discriminant is negative, i.e.
    \begin{align*}
        (2(1-v)(v-2))^2 - 4(c - 2(1-v)(2v-1))c(1-v)   \stackrel{?}< 0\,.
    \end{align*}
    Since $v < 1$, we can divide out by one factor of $(1-v)$:
    \begin{align*}
        4(1-v)(v-2)^2 - 4c(c - 2(1-v)(2v-1)) \stackrel{?}< 0\,.
    \end{align*}
     Since $c = 1.2$,  this can be exactly factored as
    \begin{align*}
    0 &\stackrel{?}> -4v^3 + (20-16c)v^2 + (24c-32)v +(-4c^2-8c+16)
    \\
    &=-4v^3 + 0.8 v^2 -3.2v +0.64 
        \\
        &= -4(v-0.2)(v^2+0.8)\,.
    \end{align*}
    So when $v > 0.2$, the function is at most $1.2$.

Alternatively, rearranging terms by powers of $v$, we get 
    \begin{align*}
        v^2(4u^2-2u) + v(-6u^2+6u-c) + ((2+c)u^2-4u+c) \stackrel{?}\ge 0\,.
    \end{align*}
    Since $u \ge 0.5$ by assumption, the coefficient on $v^2$ is positive. Then the inequality is satisfied if the quadratic has no real roots; i.e. when the discriminant is negative:
    \begin{align*}
        (-6u^2+6u-c)^2 - 4(4u^2-2u)((2+c)u^2-4u+c) \stackrel{?}< 0\,.
    \end{align*}
    Since $c = 1.2$, this can be written exactly as 
    \begin{align*}
        0 &\stackrel{?}> - 15.2 u^4  +17.6 u^3 - 0.8u^2 - 4.8u + 1.44
        \\
        &= -15.2 (u-0.5)^4 - 12.8(u-0.5)^3 + 2.8 (u-0.5)^2 + 0.09
        \,.
    \end{align*}
    By Descartes' rule of signs, the second equation has one real positive root where $u > 0.5$, which occurs below $0.76$ (since the value at $u = 0.76$ is negative, but the value at $u =0.5$ is positive). Since the leading coefficient is negative, the function is at most $1.2$ whenever $u > 0.76$.

We finally remark about the $x = v$ boundary case. Note that $f(v,x,u) = f(1-u,1-x,1-v)$, which flips the boundary case. Thus, when $x = v$, the function is at most $1.2$ except when $0 \le 1-u \le 0.2$ and $0.5 \le 1 - v \le 0.76$. But in this region, $u+v \ge 0.8 + 0.24 = 1.16$ which does not satisfy $u+v \le 1$.
\end{proof}

\begin{claim}
\label{claim:2coin_pk_midregion}
    Consider variables $v,x,u$ such that $0 \le v \le 0.5 \le u \le 1$, $\frac{1}{2} \le x \le \frac{u}{1.2}$. Then the function
 \begin{align*}
        f(v,x,u) \defeq \frac{2 + 2\left(uv + (1-u)(1-v)\right) + u(x-\frac{1}{2})\frac{v^2}{1-v}}{2 + \frac{xuv + (1-x)(1-u)(1-v)}{u(1-v)}}
    \end{align*}
    is at most $1.2$, unless $0 \le v \le 0.2$ and $0.5 \le u \le 0.76$, or $0.24 \le v \le 0.5$ and $0.8 \le u \le 1$.
\end{claim}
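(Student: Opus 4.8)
The plan is to reduce the interval bound on $x$ to its two endpoints and treat each separately. If $u<0.6$ the interval $[\tfrac12,\tfrac{u}{1.2}]$ is empty and there is nothing to prove, so assume $u\ge 0.6$. Writing $f(v,x,u)=\frac{a(v,u)+b(v,u)x}{c(v,u)+d(v,u)x}$, note that the denominator is exactly $2+\frac{xuv+(1-x)(1-u)(1-v)}{u(1-v)}\ge 2>0$ for every admissible $x$, so $f$ is a Möbius function of $x$ with no pole on $[\tfrac12,\tfrac{u}{1.2}]$; hence $f$ is monotone in $x$ there (the sign of $\partial_x f$ is the constant sign of $bc-ad$), and its maximum over $x$ is attained at $x=\tfrac12$ or at $x=\tfrac{u}{1.2}$.

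For $x=\tfrac12$ the extra numerator term $u(x-\tfrac12)\frac{v^2}{1-v}$ vanishes, and $f(v,\tfrac12,u)$ equals the function $g$ of \Cref{claim:2coin_1.2} evaluated at the interior point $x=\tfrac12$. Since $1/g$ is linear in $x$ and $\tfrac12\in[v,u]$, the value $g(v,\tfrac12,u)$ lies between $g(v,v,u)$ and $g(v,u,u)$, so $f(v,\tfrac12,u)\le\max\bigl(g(v,v,u),g(v,u,u)\bigr)$. By the $x=u$ computation in the proof of \Cref{claim:2coin_1.2}, $g(v,u,u)\le 1.2$ unless $0\le v\le 0.2$ and $0.5\le u\le 0.76$; applying the same computation to $g(v,v,u)=g(1-u,1-v,1-v)$, via the symmetry $g(v,x,u)=g(1-u,1-x,1-v)$ recorded there, gives $g(v,v,u)\le 1.2$ unless $0.8\le u\le 1$ and $0.24\le v\le 0.5$. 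Hence $f(v,\tfrac12,u)\le 1.2$ outside the two stated rectangles.

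It remains to treat $x=\tfrac{u}{1.2}=\tfrac{5u}{6}$. Substituting this value, multiplying numerator and denominator by the positive factor $1-v$, and clearing the positive quantity $u$ together with the constant $1.2$, the inequality $f(v,\tfrac{5u}{6},u)\le 1.2$ becomes $Q(v,u)\le 0$, where for each fixed $u$ the map $v\mapsto Q(v,u)$ is a quadratic whose leading coefficient equals $2-\tfrac{9u}{2}+\tfrac{5u^2}{6}$, which is negative on $[0.6,1]$. A downward parabola is $\le 0$ on $[0,\tfrac12]$ precisely when $[0,\tfrac12]$ avoids the open interval between its real roots (or when it has no real roots), so the verification comes down to tracking the roots of $Q(\cdot,u)$ as $u$ varies: mirroring \Cref{claim:2coin_1.2} and \Cref{claim:adaptivity_gap_t_is_2}, one expects the relevant discriminant to factor cleanly once the constant is fixed at $1.2$, producing a left root near $v=0.2$ when $u\le 0.76$ and a right root that enters $[0,\tfrac12]$ only once $u\ge 0.8$, so that $Q>0$ somewhere in $[0,\tfrac12]$ exactly on the two exceptional rectangles. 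Where such a factorization is not transparent, one can instead partition the strip $\{0\le v\le 0.5,\ 0.6\le u\le 1\}$ into the finitely many sub-rectangles that make up the complement of the two exceptional regions and verify $Q\le 0$ on each.

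The main obstacle is this last step. Unlike the earlier two-variable facts, the numerator now carries the term $u(x-\tfrac12)\frac{v^2}{1-v}$, which raises the degree of $Q$ after clearing $1-v$, so the convenient ``everything factors at $c=1.2$'' behaviour exploited in \Cref{claim:2coin_1.2} is no longer automatic; pinning down precisely the two exceptional rectangles (rather than a slightly larger set) may require combining a discriminant estimate in $v$ with one in $u$, or a short explicit case analysis on the complement of the rectangles.
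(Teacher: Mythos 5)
Your boundary reduction (Möbius-in-$x$, so the maximum is attained at $x=\tfrac12$ or $x=\tfrac{u}{1.2}$, with $u\ge0.6$ needed for the interval to be nonempty) matches the paper exactly. Your handling of the $x=\tfrac12$ endpoint is in fact more careful than the paper's one-line invocation of \Cref{claim:2coin_1.2}: you correctly notice that the hypothesis $v+u\le1$ of that claim is not assumed here, and you repair this by noting that $g(v,\tfrac12,u)\le\max\bigl(g(v,v,u),g(v,u,u)\bigr)$ (via linearity of $1/g$ in $x$ and $v\le\tfrac12\le u$) and then importing the $x=u$ computation twice, once directly and once through the symmetry $g(v,x,u)=g(1-u,1-x,1-v)$. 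That correctly produces the two exceptional rectangles and is a genuine improvement in rigour over the paper's phrasing.

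The gap is at $x=\tfrac{u}{1.2}$, which is the substantive endpoint and the part you explicitly flag as unfinished. Your setup is right: clearing $u(1-v)$ and the constant $1.2$ gives a polynomial $Q(v,u)\le0$ that, for fixed $u$, is quadratic in $v$ with leading coefficient $2-\tfrac{9u}{2}+\tfrac{5u^2}{6}<0$ for $u\ge0.6$ (so, equivalently, the paper's negated quadratic in $v$ has positive leading coefficient). But your guess about how to finish --- ``one expects the relevant discriminant to factor cleanly $\ldots$ producing a left root near $v=0.2$ when $u\le0.76$ and a right root that enters $[0,\tfrac12]$ only once $u\ge0.8$'' --- does not match what actually happens, and there is no clean factorization to hope for. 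The paper splits into two cases. For $u\in[\tfrac23,1]$ it verifies, by direct inspection of a single univariate polynomial in $u$ at $a=\tfrac56$, $c=1.2$, that the discriminant of the quadratic in $v$ is nonpositive, so there are \emph{no} real roots at all and the quadratic is nonnegative for every $v$; in particular neither exceptional rectangle arises from this branch. For $u\in[\tfrac12,\tfrac23]$ it observes that the $v$-coefficient is nonnegative there, so the vertex lies at $v\le0$ and the quadratic is increasing for $v\ge0$; evaluating at $v=0.1$ then covers everything except the small box $[0,0.1]\times[\tfrac12,\tfrac23]$, which sits inside the first allowed exceptional rectangle. Your fallback of ``partition the strip and verify $Q\le0$ on each sub-rectangle'' is not itself an argument --- a bivariate polynomial inequality on a rectangle still needs a discriminant or vertex/monotonicity argument of exactly the kind the paper uses, or an interval-arithmetic certificate. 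So the proposal has the right skeleton and the right reduction, but the heart of the claim (the sign verification at $x=\tfrac{u}{1.2}$) is left as a conjecture, and the conjectured mechanism is not the one that makes the bound work.
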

\begin{proof}
Both the numerator and denominator are linear in $x$. Since any expression
$\frac{ax+b}{cx+d}$ is affine to $\frac{1}{cx+d}$, this is a piecewise monotone function in $x$ on either side of the pole $x = \frac{-d}{c}$.
Since the denominator is positive for all $0 \le x \le 1$, the function is monotone in the region we are interested in. So it is maximized at a boundary point of $x$, either $x = \frac{1}{2}$ or $x = \frac{u}{1.2}$. If $x = \frac{1}{2}$, the last term in the numerator cancels out and so the claim is true by \Cref{claim:2coin_1.2}.

We assume $x = \frac{u}{1.2} = a u$ where $a \defeq \frac{1}{1.2} = \frac{5}{6}$. Then
\begin{align*}
    f(v,au,u) &= \frac{2 + 2\left(uv + (1-u)(1-v)\right) + u(au-0.5)\frac{v^2}{1-v}}{2 + \frac{au^2 v + (1-u)(1-au) (1-v)}{u(1-v)}} 
    \\
    &= \frac{4 u - 2 u^2 - 6 u v + 6 u^2 v + 2 u v^2 - 4.5 u^2 v^2 + a u^3 v^2}{a u^2 + a u v - a u - u v + u - v + 1}\,.
\end{align*}
We would like to prove that $f(v,u,u) \stackrel{?}\le c$ for $c \defeq 1.2$. This is true if 
\begin{align*}
    4 u - 2 u^2 - 6 u v + 6 u^2 v + 2 u v^2 - 4.5 u^2 v^2 + a u^3 v^2 \stackrel{?}\le c(a u^2 + a u v - a u - u v + u - v + 1)\,.
\end{align*}
Rearranging terms by powers of $v$, we get 
\begin{align*}
    v^2 (-au^3 + 4.5 u^2 - 2u) 
    + v (cau - cu - c - 6u^2 + 6u)
    + (cau^2 - cau + cu + c + 2u^2 - 4u)
    \stackrel{?}\ge 0\,.
\end{align*}
This is true when the minimum value achieved by the quadratic in $v$ is at least $0$; i.e.
\begin{align*}
    (cau^2 - cau + cu + c + 2u^2 - 4u)  \stackrel{?}\ge \frac{(cau - cu - c - 6u^2 + 6u)^2}{4(-au^3 + 4.5 u^2 - 2u)}\,.
\end{align*}
In other words, when
\begin{align*}
    4(cau^2 - cau + cu + c + 2u^2 - 4u)(-au^3 + 4.5 u^2 - 2u) -   (cau - cu - c - 6u^2 + 6u)^2 \stackrel{?}\ge 0\,.
\end{align*}
By inspection, at $a = \frac{5}{6}$ and $c = 1.2$, this univariate polynomial is at least $0$ in the range $u \in [\frac{2}{3},1]$. 

In the range $u \in [\frac{1}{2}, \frac{2}{3}]$, the minimum value of the quadratic in $v$ occurs at $v \le 0$, since the $v^2$ coefficient is positive and the $v$ coefficient is non-negative. So we may evaluate the quadratic at $v = 0.1$; if it non-negative here, then it is non-negative for all $v \ge 0.1$. This would be true when
\begin{align*}
     0.01 (-au^3 + 4.5 u^2 - 2u) 
    + 0.1 (cau - cu - c - 6u^2 + 6u)
    + (cau^2 - cau + cu + c + 2u^2 - 4u)
    \stackrel{?}\ge 0\,.
\end{align*}
By inspection, at $a = \frac{5}{6}$ and $c = 1.2$, this univariate polynomial is at least $0$ for all values of $u$. So then $f(v,au,u) \le 1.2$ except in the region $0 \le v \le 0.1$ and $\frac{1}{2} \le u \le \frac{2}{3}$.
\end{proof}

\subsubsection{Level \texorpdfstring{$\ell=3$}{3}}

\begin{claim}
\label{claim:t_equals_3_branch1}
    Consider variables $v,x,u$, where $0 \le v \le 0.2$, and $0.5 \le x \le u \le 0.76$, and $vxu > (1-v)(1-x)(1-u)$. Then the function
    \begin{align*}
        f(v,x,u) \defeq \frac{
2 + uv(1+2x) + (1-u)(1-v)(1+2(1-x))
        }
        {
        2+(1-x)\frac{1-u}{x}+v
        }\,
    \end{align*}
    is at most $1.2$.
\end{claim}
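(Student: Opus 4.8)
The first step is to clear denominators. Since the denominator of $f$ is at least $2>0$ and $x\ge\tfrac12>0$, the inequality $f(v,x,u)\le 1.2$ is equivalent to
\begin{equation*}
P(v,x,u):=2(1-u-v)x^{2}+\bigl(4.2u+4.2v-3.8-4uv\bigr)x+1.2(1-u)\ \ge\ 0 .
\end{equation*}
On the stated region $u+v\le 0.96<1$, so the leading coefficient is positive; moreover $P$ is \emph{affine} in $v$ (and in $u$) with the other variables fixed. Write $P=P_{0}(x,u)+v\,c(x,u)$, where $P_{0}(x,u)=2(1-u)x^{2}+(4.2u-3.8)x+1.2(1-u)$ and $c(x,u)=x(4.2-4u-2x)$. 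The hypothesis $vxu>(1-v)(1-x)(1-u)$ is exactly $v>v_{\ast}(x,u):=\frac{(1-x)(1-u)}{xu+(1-x)(1-u)}$; when $v_{\ast}(x,u)>0.2$ the claim is vacuous, and otherwise, because $P$ is affine in $v$, it suffices to verify $P\ge 0$ at the two endpoints $v=0.2$ and $v=v_{\ast}(x,u)$.

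For $v=0.2$ one has $P(0.2,x,u)=(1.6-2u)x^{2}+(3.4u-2.96)x+1.2(1-u)$, an upward parabola in $x$ whose vertex lies to the right of $u$ exactly when $4u^{2}-6.6u+2.96\ge 0$; this discriminant is negative, so the inequality always holds, hence $P(0.2,\cdot,u)$ is decreasing on $[0.5,u]$ and minimized at $x=u$. There it equals $g(u):=-2u^{3}+5u^{2}-4.16u+1.2$, and since $g'(u)=-6u^{2}+10u-4.16$ has no root in $[0.5,0.76]$ while $g'(0.5)<0$, $g$ is decreasing on that interval, so $g(u)\ge g(0.76)>0$. This closes the $v=0.2$ branch.

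For $v=v_{\ast}(x,u)$ I would first note that this branch only needs to be treated where $c(x,u)\ge 0$ (otherwise $P$ is decreasing in $v$ and minimized at $v=0.2$, already handled) and where $v_{\ast}(x,u)\le 0.2$; the conditions $x+2u\le 2.1$ and $4(1-x)(1-u)\le xu$, together with $x\le u$, force $u\ge\tfrac23$, so $(x,u)$ lies in a small region with $u$ near $\tfrac23$. Substituting $v=v_{\ast}(x,u)$ and clearing the positive factor $xu+(1-x)(1-u)$ turns $P\ge 0$ into $Q(x,u)\ge 0$ for an explicit polynomial $Q$ of degree $3$ in $x$ and $2$ in $u$, which I would verify on that small region by checking that $Q$ is monotone in $x$ there, reducing to a univariate polynomial in $u\in[\tfrac23,0.76]$, and confirming nonnegativity by inspection, exactly as in the companion claims (e.g.\ \Cref{claim:2coin_1.2}). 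Putting the two branches together gives $P\ge 0$ throughout, i.e.\ $f\le 1.2$.

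The main obstacle is the $v=v_{\ast}$ branch: one cannot here take the shortcut of showing the quadratic $P$ has negative discriminant, because the inequality is false without the hypothesis (near $u\approx 0.55$, $v\approx 0$, and $x$ just above $0.5$ one has $P_{0}(x,u)<0$). So the argument must genuinely use $vxu>(1-v)(1-x)(1-u)$ to excise that region; the point that makes it tractable is that this hypothesis, combined with $v\le 0.2$, pins $u$ close to $\tfrac23$, leaving only a small bivariate verification.
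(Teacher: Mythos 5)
Your route differs from the paper's: you clear denominators and exploit that $P$ is affine in $v$, reducing the claim to the two endpoints $v=0.2$ and $v=v_*$; the paper instead uses that $f$ is a ratio of expressions linear in $u$, hence monotone in $u$, and reduces to $u=0.76$ and the constraint boundary $u/(1-u)=(1-v)(1-x)/(vx)$. Both routes are legitimate, and your $v=0.2$ branch is fully and correctly executed (I checked: $P(0.2,x,u)=(1.6-2u)x^2+(3.4u-2.96)x+1.2(1-u)$, the vertex lies to the right of $u$ because $4u^2-6.6u+2.96$ has discriminant $43.56-47.36<0$, $g'(u)=-6u^2+10u-4.16$ has roots at $0.8$ and $\approx 0.867$ so is negative on $[0.5,0.76]$, and $g(0.76)\approx 0.0484>0$). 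Your observation that the constraint is genuinely needed is also correct: e.g.\ $P_0(0.55,0.55)\approx -0.007<0$, so the claim fails without $v>v_*$. Your localization of the $v=v_*$ branch to $u\ge 2/3$ (from $4(1-u)^2\le 4(1-x)(1-u)\le xu\le u^2$, using $x\le u$) is also correct.

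However, there is a genuine gap: the $v=v_*$ branch, which you yourself identify as ``the main obstacle,'' is never actually verified. You say you ``would'' form $Q(x,u)=P_0(x,u)\bigl(xu+(1-x)(1-u)\bigr)+(1-x)(1-u)\,c(x,u)$, show it is monotone in $x$ on the small region, and then check a univariate polynomial — but none of this is carried out. This is precisely the part the paper spends most of its proof on (grouping terms by $x$, analyzing the sign of the quadratic coefficient, and checking the derivative at $x=0.76$). Since this is the branch where the hypothesis $vxu>(1-v)(1-x)(1-u)$ actually does work, the proof cannot be considered complete without it. The plan is plausible — the region $u\in[2/3,0.76]$, $x\in[0.5,\min(u,2.1-2u)]$ is small and $Q$ is a cubic in $x$ and quadratic in $u$ — but you need to write out $Q$ explicitly, establish the claimed monotonicity in $x$ (or some other reduction), and verify the resulting univariate inequality, just as the companion claims do.
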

\begin{proof}
Both the numerator and denominator are linear in $u$. Since any expression
$\frac{au+b}{cu+d}$ is affine to $\frac{1}{cu+d}$, this is a piecewise monotone function in $u$ on either side of the pole $u = \frac{-d}{c}$.
Since the denominator is positive for all $0 \le u \le 1$, the function is monotone in the region we are interested in. So it is maximized at a boundary point of $u$, either $u = 0.76$ or $\frac{u}{1-u} = \frac{(1-v)(1-x)}{vx}$.

Let's start by assuming $u = 0.76$. Then the remaining expression has numerator and denominator linear in $v$. Since the denominator is positive for all $0 \le v \le 1$, the function is monotone in $v$, and so maximized at a boundary point in $v$; i.e. $v = 0$ or $v = 0.2$. We try both. 

We know that $f(v,x,u) \stackrel{?}\le c$ for $c \defeq 1.2$ if
\begin{align*}
   2x + uvx(1 + 2x) + x(1-u)(1-v)(3-2x) \stackrel{?} \le c ( 2x + (1-x)(1-u) + vx )\,.
\end{align*}
So, $f(0,x,0.76) \stackrel{?}\le c$ for $c \defeq 1.2$ if
\begin{align*}
     2x + 0.24x(3-2x) \stackrel{?} \le c ( 2x + 0.24(1-x) )\,.
\end{align*}
This is equivalent to $-0.288 + 0.608 x - 0.48 x^2 \stackrel{?} \le 0$, which is true by inspection.

Similarly, $f(0.2,x,0.76) \stackrel{?}\le c$ for $c \defeq 1.2$ if
\begin{align*}
     2x + 0.2 \cdot 0.76x(1 + 2x)  + 0.8\cdot 0.24x(3-2x) \stackrel{?} \le c ( 2x + 0.24(1-x) + 0.2x)\,.
\end{align*}
This is equivalent to $-0.288 + 0.376 x - 0.08 x^2 \stackrel{?} \le 0$. By inspection, this quadratic has roots at $x \approx 0.9$ and $x \approx 3.7$; since the quadratic coefficient is negative, the inequality holds for all $x \in [0.5, 0.76]$.

Now we assume  $\frac{u}{1-u} = \frac{(1-v)(1-x)}{vx}$. This can be rewritten as $u = \frac{(1-v)(1-x)}{(1-v)(1-x)+vx}$. At this value of $u$, $f(v,x,u) \le c$ if 
\begin{align*}
    \frac{x}{2 v x - v - x + 1} \cdot (-2 c v^2 x + c v^2 - 2 c v x + 2 c x - 2 c + 4 v^2 x^2 - 4 v^2 x - v^2 - 4 v x^2 + 8 v x - v - 2 x + 2)\stackrel{?}\le 0\,.
\end{align*}
By inspection, for $0 \le v \le 0.2$ and $0.5 \le x \le 0.76$, the fraction is positive. So we must decide if the expression in the parenthetical is at most $0$. It can be regrouped as
\begin{align*}
    x^2 \cdot (4v^2 - 4v)
+    x \cdot (-2cv^2 - 2cv + 2c - 4v^2 + 8v - 2)
+ cv^2 - 2c - v^2 - v + 2
\stackrel{?}\le 0\,.
\end{align*}
When $v = 0$, this is a linear function in $x$; i.e. $(2c-2)x - 2c + 2 \stackrel{?} \le 0$, which is true for all $0 \le x \le 1$. Otherwise, this is a quadratic in $x$ with negative quadratic coefficient. Consider the derivative of the expression at $x = 0.76$:
\begin{align*}
 2 \cdot 0.76 \cdot (4v^2 - 4v) +
 (-2cv^2 - 2cv + 2c - 4v^2 + 8v - 2)
\end{align*}
By inspection, at $c = 1.2$, this is positive for all $0 \le v \le 0.2$. So the maximum value of the quadratic in $x$ (if it exists) occurs at $x \ge 0.76$. So, if the equation is true at $x = 0.76$, then it is true for all $x \in [0.5, 0.76]$. This occurs when
\begin{align*}
       0.76^2 \cdot (4v^2 - 4v)
+    0.76 \cdot (-2cv^2 - 2cv + 2c - 4v^2 + 8v - 2)
+ cv^2 - 2c - v^2 - v + 2
\stackrel{?}\le 0\,.
\end{align*}
By inspection, at $c = 1.2$, this is true for all $v$.
\end{proof}

\begin{claim}
\label{claim:t_equals_3_branch2}
    Consider variables $v,x,u$, where $0.24 \le v \le x \le 0.5$, and $0.8 \le u \le 1$. Then the function
    \begin{align*}
        f(v,x,u) \defeq \frac{
2 + uv(1+2x) + (1-u)(1-v)(1+2(1-x))
        }
        {
        2+(1-x)\frac{1-u}{u}+2xv
        }\,
    \end{align*}
    is at most $1.2$.
\end{claim}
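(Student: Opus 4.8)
The statement is the ``second branch, crossover point $t=3$'' counterpart of \Cref{claim:t_equals_3_branch1}. A direct mimicry of that proof is not available, because there one first reduces in $u$ against the constraint boundary $vxu=(1-v)(1-x)(1-u)$, and here there is no such boundary and the denominator $2+(1-x)\frac{1-u}{u}+2vx$ is not even affine in $u$. The plan instead is to exploit the \emph{multilinear} structure in the other two variables. Writing $N$ and $D$ for numerator and denominator of $f$, one checks that $N$ is affine in each of $v,x,u$ separately, while $D$ is affine in $v$ and in $x$ (the $\frac{1-u}{u}$ term spoils affineness only in $u$). Since $D$ is a sum of three nonnegative terms on the region $0.24\le v\le x\le 0.5$, $0.8\le u\le1$, we have $D\ge 2>0$ there, so $f$ has no pole: holding any two variables fixed, $f$ is a genuine fractional-linear function of $v$ and of $x$, hence \emph{monotone in $v$ for fixed $(x,u)$ and monotone in $x$ for fixed $(v,u)$}, with the direction of monotonicity independent of the moved variable. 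This is exactly the reduction principle already used in \Cref{claim:2coin_1.2} and \Cref{claim:t_equals_3_branch1}.

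First I would use monotonicity in $v$: since $v\in[0.24,x]$, the value $f(v,x,u)$ is at most $\max\bigl(f(0.24,x,u),\,f(x,x,u)\bigr)$, so it suffices to prove $f(0.24,x,u)\le 1.2$ and $f(x,x,u)\le 1.2$ for all $x\in[0.24,0.5]$, $u\in[0.8,1]$. In the first case, fixing $v=0.24$ keeps $f$ fractional-linear in $x$, so a second appeal to monotonicity (now in $x$) reduces the task to the two one-variable inequalities $f(0.24,0.24,u)\le 1.2$ and $f(0.24,0.5,u)\le 1.2$ on $u\in[0.8,1]$. Multiplying through by the positive quantity $u\cdot D$ (which is affine in $u$, while $u\cdot N$ is quadratic in $u$) turns each into a quadratic inequality $g(u)\le 0$ whose leading coefficient $4v+2x-3$ is negative; the vertex of $g$ lies below $u=0.8$, so $g$ is monotone on $[0.8,1]$ and the bound follows from a single evaluation at $u=0.8$, which yields explicit positive slack.

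The case $f(x,x,u)\le 1.2$ is the crux. On the diagonal $v=x$, the numerator $N(x,x,u)=(5-3u)+(6u-5)x+2x^2$ and the denominator $D(x,x,u)=2+(1-x)\frac{1-u}{u}+2x^2$ are quadratics in $x$ with the \emph{same} leading coefficient $2$, so $Q_u(x)\defeq 1.2\,D(x,x,u)-N(x,x,u)$ is a quadratic in $x$ with \emph{positive} leading coefficient $0.4$; hence $f(x,x,u)\le 1.2$ is equivalent to the convex inequality $Q_u(x)\ge 0$. A convex quadratic is nonnegative on $[0.24,0.5]$ exactly when its minimum over that interval is, so I would locate the vertex $x^\star(u)$ of $Q_u$ — an increasing function of $u$ that crosses $0.24$ near $u\approx 0.82$ and $0.5$ near $u\approx 0.87$ — and split $u\in[0.8,1]$ into three ranges. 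For small $u$ the minimum is at $x=0.24$, which is precisely the already-verified check $f(0.24,0.24,u)\le 1.2$; for large $u$ the minimum is at $x=0.5$, where $Q_u(0.5)$ collapses to $\frac{0.6(1-u)}{u}\ge 0$; and for the middle range one shows the discriminant of $Q_u$ in $x$ is nonpositive, a single quartic inequality in $u$ over a short interval. Each of these final checks is a univariate polynomial inequality on a compact interval, dispatched by the same elementary tools — endpoint evaluation, vertex location, Descartes' rule of signs — used throughout \Cref{sec:deferredmathfacts}.

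The only real obstacle is organizational: there is no conceptual difficulty, but the argument requires (i) checking that $D$ is bounded away from $0$ so that the monotonicity reductions in $v$ and $x$ are valid, and (ii) carrying out the vertex-and-discriminant bookkeeping for the diagonal case $v=x$ cleanly. I expect the discriminant inequality for the intermediate range of $u$ in that case to be the most computation-heavy step, though it remains a routine finite verification.
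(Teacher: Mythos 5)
Your proposal is correct and shares the paper's first step: using fractional-linear monotonicity to reduce to $v\in\{0.24,x\}$, and then at $v=0.24$ to $x\in\{0.24,0.5\}$. On the diagonal $v=x$, which is the crux, the two approaches diverge. The paper clears denominators and treats the result as a concave quadratic in $u$ (leading coefficient $6x-3\le 0$), splitting the $x$-range into $x=0.5$ (linear in $u$), $x\in[0.24,0.43]$ (vertex value check), and $x\in[0.43,0.5]$ (show the vertex is at $u\ge 1$, then evaluate at $u=1$). You instead observe that $N(x,x,u)$ and $D(x,x,u)$ have the same $x^2$ coefficient, so $Q_u(x)\defeq 1.2\,D(x,x,u)-N(x,x,u)$ is a \emph{convex} quadratic in $x$ with leading coefficient $0.4$, and you split the $u$-range by where its vertex $x^\star(u)$ lands relative to $[0.24,0.5]$. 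This dual variable order buys the clean collapse $Q_u(0.5)=\frac{0.6(1-u)}{u}\ge 0$ with no further computation and reduces the remaining middle-$u$ case to a single quartic discriminant inequality in $u$; the cost is that you must locate, or at least overestimate, the middle window (verifying the discriminant inequality on all of $[0.8,0.9]$ suffices, since $x^\star$ is increasing on $[0.8,1]$ and $x^\star(0.9)>0.5$). Your observation that at $v=0.24$, $x\in\{0.24,0.5\}$, the quantity $g(u)=uN-1.2\,uD$ is a concave quadratic in $u$ with leading coefficient $4v+2x-3<0$ and vertex below $u=0.8$ is correct and is a tidier version of the paper's ``by inspection for all $u$.'' The steps you leave as ``routine finite verification'' (the quartic discriminant inequality, the vertex locations) are at the same level of rigor as the rest of \Cref{sec:deferredmathfacts} and do hold numerically; I see no gaps.
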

\begin{proof}
We have that $f(v,x,u) \stackrel{?}\le c$ exactly when
\begin{align*}
    2u + u^2 v(1 + 2x) + u(1-u)(1-v)(3-2x) \stackrel{?}\le 
    2cu + c(1-x)(1-u) + 2cuxv\,.
\end{align*}
Rearranging terms by $u$:
\begin{align}
\label{eqn:t_3_branch2_quadratic_in_u}
    u^2 (v(1+2x) - (1-v)(3-2x))
    + u(2 + (1-v)(3-2x) - 2c - 2cxv + c(1-x))
    - c(1-x)
    \stackrel{?} \le 0\,.
\end{align}
Both the numerator and denominator of $f$ are linear in $v$. Since any expression
$\frac{av+b}{cv+d}$ is affine to $\frac{1}{cv+d}$, this is a piecewise monotone function in $v$ on either side of the pole $v = \frac{-d}{c}$.
Since the denominator is positive for all $0 \le v \le 1$, the function is monotone in the region we are interested in. So it is maximized at a boundary point of $u$, either $v = 0.24$ or $v = x$.

Let's first assume $v = 0.24$. Then the numerator and denominator are linear in $x$. For the same reason, the function is maximized at a boundary point of $x$, either $x = 0.24$ or $x = 0.5$. The $x = 0.24$ case will be covered by the $v = x$ case, so we consider $x = 0.5$. We have $f(0.24, 0.5, u) \stackrel{?} \le c$ exactly when $- 1.04 u^2 + (3.52 - 1.74c) u - 0.5c \stackrel{?} \le 0$.
By inspection, this holds for all $u$ when $c = 1.2$.

Now we assume $v= x$.
When $x = 0.5$, \cref{eqn:t_3_branch2_quadratic_in_u} is linear in $u$; i.e. $ u(3-2c) - 0.5c \stackrel{?}\le 0$. At $c = 1.2$, this holds for all $u \le 1$. Otherwise, \cref{eqn:t_3_branch2_quadratic_in_u} is a quadratic in $u$ with negative coefficient. It has maximum value
\begin{align*}
    -c(1-x) - \frac{(2 + (1-x)(3-2x) - 2c - 2cx^2 + c(1-x))^2}{4(x(1+2x) - (1-x)(3-2x))}\,.
\end{align*}
By inspection, this is non-negative at $c =1.2$ for all $0.24 \le x \le 0.43$. So \cref{eqn:t_3_branch2_quadratic_in_u} holds in this region. 

Consider the derivative \cref{eqn:t_3_branch2_quadratic_in_u} with respect to $u$ at $u = 1$:
\begin{align*}
    2(x(1+2x) - (1-x)(3-2x)) + (2 + (1-x)(3-2x) - 2c - 2cx^2 + c(1-x))\,.
\end{align*}
By inspection, this is positive for $0.43 \le x \le 0.5$. So the maximum of \cref{eqn:t_3_branch2_quadratic_in_u} occurs at value $u \ge 1$. Thus, if \cref{eqn:t_3_branch2_quadratic_in_u} holds at $u = 1$, it holds for all $0.8 \le u \le 1$. Plugging in $u = 1$, we get
\begin{align*}
    (x(1+2x) - (1-x)(3-2x)) + (2 + (1-x)(3-2x) - 2c - 2cx^2 + c(1-x)) - c(1-x) \stackrel{?}\le 0\,.
\end{align*}
By inspection, this is non-negative for all $x \le 0.5$. 
\end{proof}

\begin{restatable}{claim}{threecoinonepointtwo}
\label{claim:3coin_1.2}
    Consider variables $v,x,u$, where $0 \le v \le 0.2$, and $v \le x \le u$, and $0.65 \le u \le 0.76$. Then the function
    \begin{align*}
        f(v,x,u) \defeq \frac{
2 + uv(1+2x) + (1-u)(1-v)(1+2(1-x))
        }
        {
        2+(1-x)\frac{1-u}{x}+x\frac{v}{1-v}
        }\,
    \end{align*}
    is at most $1.2$.
\end{restatable}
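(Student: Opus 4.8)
The plan is to clear denominators and reduce the claim to an elementary polynomial inequality, exactly as in \Cref{claim:adaptivity_gap_t_is_2,claim:2coin_1.2,claim:t_equals_3_branch1,claim:t_equals_3_branch2}. Write $N \defeq 2 + uv(1+2x) + (1-u)(1-v)(1+2(1-x))$ and $D \defeq 2 + (1-x)\frac{1-u}{x} + x\frac{v}{1-v}$ for the numerator and denominator of $f$. On the stated region $D > 0$ (both added terms are nonnegative, since $u,x \le 1$), and the degenerate case $x=0$ forces $v=0$ and $f=0$, so assume $x>0$. Then $f \le 1.2$ is equivalent to $N - 1.2\,D \le 0$, and since $x(1-v)>0$, to $P(v,x,u) \le 0$ where $P \defeq x(1-v)\,(N - 1.2\,D)$. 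Using $1+2(1-x)=3-2x$ and the identity $uv-(1-u)(1-v)=u+v-1$, one computes $x(1-v)N = (1-v)\big(n_0 x + n_1 x^2\big)$ with $n_0 = 2 + uv + 3(1-u)(1-v)$ and $n_1 = 2(u+v-1)$, and $x(1-v)D = (1-u)(1-v) + (1-v)(1+u)x + v x^2$; hence $P$ is a quadratic in $x$, namely $P(v,x,u) = A x^2 + B x + C$ with $A = 2(1-v)(u+v-1) - 1.2 v$, $B = (1-v)\big(n_0 - 1.2(1+u)\big)$, and $C = -1.2(1-u)(1-v)$.

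Two structural facts drive the argument. First, $A,B,C$ are each \emph{affine} in $u$, so $P$ is affine in $u$; hence for each fixed $(v,x)$ the maximum of $P$ over $u\in[0.65,0.76]$ is attained at an endpoint, and it suffices to prove $P(v,x,u_0)\le 0$ for $u_0\in\{0.65,0.76\}$, for all $v\in[0,0.2]$ and \emph{all} $x$ --- at which point the constraints $v\le x\le u$ become vacuous. Second, for such a $u_0$ the leading coefficient $A = 2(1-v)(u_0+v-1) - 1.2 v$ is strictly negative on $v\in[0,0.2]$, because $u_0+v\le 0.96<1$ makes both summands nonpositive (the first strictly). A downward-opening quadratic in $x$ is $\le 0$ for all $x$ exactly when its discriminant is nonpositive, so the whole claim reduces to showing
\begin{align*}
\Delta(v) \defeq B(v)^2 - 4\,A(v)\,C(v) \;\le\; 0, \qquad v \in [0,0.2], \quad u_0 \in \{0.65,\,0.76\}.
\end{align*}

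The verification of this last inequality is the crux and the main obstacle. For each $u_0$, $\Delta(v)$ is a quartic in $v$, and the bound is genuinely tight: at $u_0=0.65$ the maximum of $\Delta$ over $[0,0.2]$ is only about $-0.0025$, attained near $v\approx 0.11$, so merely checking the endpoints $v=0$ and $v=0.2$ is not enough. I would handle it as in \Cref{claim:t_equals_3_branch2}: locate the maximizer of $\Delta$ on $[0,0.2]$ by solving the cubic $\Delta'(v)=0$ (which has a single root in the interval, isolated by a sign change), and check that $\Delta$ is negative there; equivalently, dominate $\Delta$ from above by a convenient quadratic or cubic valid on $[0,0.2]$ and verify that bound is $\le 0$ by inspection. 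Since $f$ is monotone only in $u$ and \emph{not} in $v$ or $x$, this discriminant route --- rather than a corner-of-the-box check --- is the one that works; carrying out the two cases $u_0\in\{0.65,0.76\}$ is then routine algebra. As a safety net, if $\Delta \le 0$ failed at some sub-corner one could instead use that $A<0$ forces the $x$-maximizer of $P$ to be $x^{\ast} = -B/(2A)$ and argue either $x^{\ast}\ge u_0$ (reducing to the quadratic-in-$v$ inequality $P(v,u_0,u_0)\le 0$) or $P(v,x^{\ast},u_0)\le 0$ directly, mirroring the endpoint-derivative technique of \Cref{claim:t_equals_3_branch1,claim:t_equals_3_branch2}.
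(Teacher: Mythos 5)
Your reduction is correct and your route diverges from the paper's in one meaningful place. Both proofs clear denominators, obtain a quadratic in $x$ whose leading coefficient has a fixed sign on the region, and convert ``$P\le0$ for all relevant $x$'' into a discriminant inequality. The divergence is in how the variable $u$ is eliminated. You observe that $P$ (equivalently $N-1.2D$) is \emph{affine} in $u$, so it suffices to verify the endpoints $u_0\in\{0.65,0.76\}$; this is clean and elementary, but it produces \emph{two} univariate quartics in $v$ to check. The paper instead first takes the discriminant (before touching $u$), factors out one copy of $1-v$ so the resulting expression is a \emph{quadratic} in $u$ with negative leading coefficient, then shows its vertex lies at $u\ge 0.775$, hence the expression is increasing on $[0.65,0.76]$, reducing everything to a \emph{single cubic} check at $u=0.65$. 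Your two-endpoint-plus-quartic route is logically sound (your formulas for $A,B,C$ are correct, $A<0$ since $u+v\le 0.96$, and your computed tightness $\Delta\approx-0.0025$ near $v\approx 0.11$ at $u_0=0.65$ is accurate --- indeed your $\Delta$ equals $-(1-v)$ times the paper's cubic, so the two polynomial checks are compatible), but it is slightly bulkier, and its margin at $u_0=0.65$ is as thin as the paper's. The one actual shortfall of the proposal is that you stop short of carrying out the final polynomial verification and leave it as ``routine algebra'': given the $\sim 10^{-3}$ margin, a proof needs the explicit quartic (or, in the paper's factored form, cubic) inequality and a concrete argument --- e.g., the paper argues the cubic has a single real root near $v\approx-15$ --- rather than a gesture at bounding $\Delta$ by some dominating polynomial. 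Also note your degenerate case $x=0$ is really the statement $\lim_{x\to0^+}f=0$ (the expression $\tfrac{(1-x)(1-u)}{x}$ blows up), which is fine but worth saying precisely.
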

\begin{proof}
    The function is at most $c \defeq 1.2$ exactly when
\begin{align*}
c\left(2\bar{v}x
+
\bar{v}\bar{u}(1-x)
+x^2v
\right)
\stackrel{?}\ge
\bar{v}x
\left(
2
+
uv(1+2x)
+
\bar{u}\bar{v}(3-2x)
\right)
\,.
\end{align*}
We factor terms by $x$: 
\begin{align*}
x^2
\left(
cv-2uv\bar{v}+2\bar{u}\bar{v}^2
\right)
+
x
\left(
2c\bar{v}-2\bar{v}-c\bar{v}\bar{u}-uv\bar{v}-3\bar{u}\bar{v}^2
\right)
+
c\bar{v}\bar{u}
\stackrel{?}\ge 0\,.
\end{align*}
We argue that the quadratic coefficient $q \defeq cv-2uv\bar{v}+2\bar{u}\bar{v}^2$ is positive. Since $0 \le v \le 0.2$, we have $v\bar{v} \le \bar{v}^2$; so $q$ is minimized at the largest value of $u$, i.e. $u = 0.76$. So $q \ge 1.2v-1.52v\bar{v}+0.48\bar{v}^2 = 0.48 - 1.28 v + 2 v^2$. This takes value at least\footnote{The quadratic $ax^2 + bx + c$ with $a > 0$ takes minimum value at $x = -\frac{b}{2a}$ and has minimum value $c - \frac{b^2}{4a}$.} $0.48 - \frac{(-1.28)^2}{4 \cdot 2} = 0.2752 > 0$. So $q > 0$.

As a result, the function is at most $1.2$ when the minimum value of the quadratic is at least $0$. This occurs when
\begin{align*}
    c\bar{v}\bar{u} - \frac{\left(
2c\bar{v}-2\bar{v}-c\bar{v}\bar{u}-uv\bar{v}-3\bar{u}\bar{v}^2
\right)^2}{4\left(
cv-2uv\bar{v}+2\bar{u}\bar{v}^2
\right)}
\stackrel{?}\ge 0\,.
\end{align*}
Since the quadratic coefficient $q$ is positive, this occurs exactly when
\begin{align*}
    c\bar{v}\bar{u} \cdot 4\left(
cv-2uv\bar{v}+2\bar{u}\bar{v}^2
\right) \stackrel{?}\ge \left(
2c\bar{v}-2\bar{v}-c\bar{v}\bar{u}-uv\bar{v}-3\bar{u}\bar{v}^2
\right)^2 \,.
\end{align*}
Since $v < 1$, we can divide out one factor of $\bar{v}$:
\begin{align*}
    c\bar{u} \cdot 4\left(
cv-2uv\bar{v}+2\bar{u}\bar{v}^2
\right) \stackrel{?}\ge \bar{v}\left(
2c-2-c\bar{u}-uv-3\bar{u}\bar{v}
\right)^2 \,.
\end{align*}
We factor terms by $u$ and set $c = 1.2$:
\begin{align}
\label{eqn:3coin_uv}
    u^2 (16 v^3 - 49.6 v^2 + 41.64 v - 8.04) 
+ u (-24 v^3 + 70 v^2 - 64.48 v + 12.72) 
+ (9 v^3 - 22.2 v^2 + 23.8 v - 4.84)
\stackrel{?}\ge 0\,.
\end{align}
The quadratic coefficient in \cref{eqn:3coin_uv} has roots at $v \approx 0.27$, $v = 1$, and $v \approx 1.83$, so it is negative for all  $v \le 0.2$. So the quadratic in \cref{eqn:3coin_uv} takes maximum value at 
\begin{align*}
    u = \frac{-1}{2} \cdot \frac{-24 v^3 + 70 v^2 - 64.48 v + 12.72}{16 v^3 - 49.6 v^2 + 41.64 v - 8.04}\,.
\end{align*}
We prove that this location is at least $u \ge 0.775$. Since the denominator (quadratic coefficient in \cref{eqn:3coin_uv}) is negative for all $v \le 0.2$, we have $u \ge 0.775$ when 
\begin{align*}
    -24 v^3 + 70 v^2 - 64.48 v + 12.72 \stackrel{?}\ge 0.775 (-2)(16 v^3 - 49.6 v^2 + 41.64 v - 8.04)\,.
\end{align*}
Rearranging terms, this occurs when $0.258 + 0.062 v - 6.88 v^2 + 0.8 v^3 \ge 0$. The roots of this cubic are $v \approx -0.187216$, $v \approx 0.200616$, and $v \approx 8.5$; so this is true for all $0 \le v \le 0.2$. 

Since \cref{eqn:3coin_uv} is a quadratic with negative quadratic coefficient and with maximum value at $u \ge 0.775$, it is increasing for all $u \le 0.775$. Thus, if \cref{eqn:3coin_uv} is positive at $u = 0.65$, then it is satisfied for all $0.65 \le u \le 0.76$ and so the claim is proved. We set $u = 0.65$:
\begin{align*}
    0.0311 - 0.5191 v + 2.344 v^2 + 0.16 v^3 \stackrel{?}\ge 0\,.
\end{align*}
This has only one root at $v \approx -15$, and thus is satisfied for all $v \ge 0$.
\end{proof}

\begin{claim}
\label{claim:3coin_pk_midregion}
    Consider variables $w,v,x,u$, where $0.24 \le v  \le x \le 0.5$, and $0.8 \le u \le 1$, and $\frac{1}{2} \le w \le \frac{u}{1.2}$. Then the function
    \begin{align*}
        f(w;v,x,u) \defeq \frac{
2 + uv(1+2x) + (1-u)(1-v)(1+2(1-x)) + uv(w - \frac{1}{2}) \frac{x^2}{1-x}
        }
        {
        2+(1-w)\frac{1-u}{u}+w\frac{v}{1-x}
        }\,
    \end{align*}
    is at most $1.2$.
\end{claim}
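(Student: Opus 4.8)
The plan is to mirror the structure of the analogous facts \Cref{claim:2coin_pk_midregion,claim:t_equals_3_branch2,claim:3coin_1.2}: first exploit that $f$ is a fractional-linear function of the ``adaptive'' variable $w$ to push $w$ to an endpoint of $[\tfrac12,\tfrac{u}{1.2}]$, then reduce the resulting three-variable rational inequality to boundary faces of the box using the affine dependence on $v$, and finally dispatch the remaining two-variable residue by a completing-the-square / discriminant argument.

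First I would record the monotonicity in $w$. Writing $f(w;v,x,u)=\frac{aw+b}{cw+d}$ with $a=uv\tfrac{x^2}{1-x}\ge 0$, $c=\tfrac{v}{1-x}-\tfrac{1-u}{u}$, and $d=2+\tfrac{1-u}{u}>0$, the only pole occurs at $w=-d/c$: this is nonpositive when $c\ge 0$, while if $c<0$ then $|c|=\tfrac{1-u}{u}-\tfrac{v}{1-x}\le\tfrac{1-u}{u}\le\tfrac14$ (using $u\ge 0.8$), so the pole sits at $w=d/|c|\ge 8$. Since also $D=2+(1-w)\tfrac{1-u}{u}+w\tfrac{v}{1-x}\ge 2>0$ on $[0,1]$, the map $w\mapsto f$ is monotone on $[\tfrac12,\tfrac{u}{1.2}]\subseteq[0,1]$, so it suffices to bound $f$ at the two endpoints $w=\tfrac12$ and $w=\tfrac{u}{1.2}$.

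At $w=\tfrac12$ the last numerator term vanishes, so $N(v,x,u)=2+uv(1+2x)+(1-u)(1-v)(3-2x)$ is affine in $v$ and in $x$, and the denominator is $2+\tfrac12\tfrac{1-u}{u}+\tfrac{v}{2(1-x)}$. Substituting $y=1-x\in[0.5,0.76]$ and clearing the factor $y$ turns $f\le 1.2$ into a quadratic inequality in $y$ whose coefficients are affine in $v$ and low-degree in $u$; as in \Cref{claim:t_equals_3_branch2} I would push $v$ to $0.24$ or to $x$ and then establish positivity of the quadratic in $y$ on its interval by checking the sign of the leading coefficient and of $4(\mathrm{lead})(\mathrm{const})-(\mathrm{mid})^2$, splitting the $u$-range if the discriminant is only conditionally negative. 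This step is essentially \Cref{claim:t_equals_3_branch2} with the denominator perturbed by the $w=\tfrac12$ substitution, and I expect no real obstacle here.

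The main obstacle is the endpoint $w=\tfrac{u}{1.2}$. Setting $w=au$ with $a=\tfrac56$ makes $f$ a genuine three-variable rational function, with an $a u^2 v\tfrac{x^2}{1-x}$ term in the numerator and a $(1-au)\tfrac{1-u}{u}$ term in the denominator, so there is no clean monotonicity in $u$. However, both numerator and denominator remain \emph{affine in $v$} for fixed $w,x,u$, and the relevant pole again lies at negative $v$, so $f$ is monotone in $v$ on $[0.24,x]$ and it suffices to take $v\in\{0.24,x\}$. Each such choice leaves a rational inequality in $(x,u)$; clearing the factor $(1-x)$ gives a ratio of a quadratic in $x$ over a linear function of $x$, so $f\le 1.2$ becomes a degree-two inequality in $x$ with coefficients polynomial in $u$, which I expect to settle by the same vertex/discriminant bookkeeping used in \Cref{claim:2coin_pk_midregion} and \Cref{claim:t_equals_3_branch2}. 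The delicate point is the corner $v=x=\tfrac12$, $u=0.8$, where this family of bounds converges to exactly $1.2$ (cf.\ the heuristic of \Cref{fig:heuristicplot}), so I anticipate having to choose carefully whether to eliminate $x$ or $u$ first and, as in the earlier facts, possibly splitting the $u$-range (or $x$-range) into two subintervals so that the discriminant test succeeds on each piece.
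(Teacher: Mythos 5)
Your plan coincides with the paper's proof: both push $w$ to the endpoints $\{\tfrac12,\tfrac{u}{1.2}\}$ via fractional‑linear monotonicity, then push $v$ to $\{0.24,x\}$ the same way, and settle the four resulting two‑variable cases by reducing to a low‑degree polynomial inequality in a single variable and verifying its sign via discriminant/vertex/derivative checks with occasional subinterval splits. One small correction: $f$ does not converge to $1.2$ at $v=x=\tfrac12,\,u=0.8$ (the value there is roughly $1.1$), but your anticipation that case splits are needed is still correct and matches what the paper does.
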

\begin{proof}
Both the numerator and denominator of $f$ are linear in $w$. Since any expression
$\frac{aw+b}{cw+d}$ is affine to $\frac{1}{cw+d}$, this is a piecewise monotone function in $w$ on either side of the pole $w = \frac{-d}{c}$.
Since the denominator is positive for all $0 \le w \le 1$, the function is monotone in the region we are interested in. So it is maximized at a boundary point of $w$, either $w = \frac{1}{2}$ or $w = \frac{u}{1.2}$. At each boundary point of $w$, the same is true of $v$; so it is maximized at a boundary point (either $v = 0.24$ or $v = x$).

We have $f(w;v,x,u) \le c$ exactly when
\begin{align*}
    u \bar{x}(2 + uv(1 + 2x) + \bar{u}\bar{v}(1+2\bar{x})) + u^2v(w - \frac{1}{2})x^2 \stackrel{?} \le
    2cu\bar{x} + c\bar{w}\bar{u}\bar{x} + cwvu\,.
\end{align*}
By the above, we only need to try $w = \frac{1}{2}$ or $w = \frac{u}{1.2}$, and $v = 0.24$ or $v = x$.
\begin{itemize}
    \item When $v = 0.24$ and $w = \frac{1}{2}$, the equation reads
\begin{align*}
    u(1-x)(2 + 0.24u(1 + 2x) + 0.76(1-u)(1 + 2(1-x))) \stackrel{?} \le 2cu (1-x) + 0.5c(1-u)(1-x)+0.12cu\,.
\end{align*}
Regrouping terms, we get
\begin{align*}
x^2(1.52 u - 2 u^2)
+ x(0.5 c - 5.8 u + 1.5 c u + 4.04 u^2)
    -0.5 c + 4.28 u - 1.62 c u - 2.04 u^2  \stackrel{?}\le 0\,.
\end{align*}
The quadratic coefficient of $x$ is negative for $u \le 0.76$. So the maximum value of this expression has value
\begin{align*}
    -0.5 c + 4.28 u - 1.62 c u - 2.04 u^2 - \frac{(0.5 c - 5.8 u + 1.5 c u + 4.04 u^2)^2}{4 (1.52 u - 2 u^2)}\,.
\end{align*}
By inspection, this value is negative at $c = 1.2$ for all $u \ge 0.8$. So the inequality holds in this case.
\item When $v = 0.24$ and $w = \frac{u}{1.2}$, the equation reads
\begin{align*}
    u\bar{x}(2 + 0.24u(1 + 2x) + 0.76\bar{u}(1 + 2\bar{x})) 
    + 0.76u^2(\frac{u}{1.2} - \frac{1}{2})x^2 \stackrel{?} \le 2cu \bar{x} + c(1-\frac{u}{1.2})\bar{u}\bar{x}+u^2\frac{c}{1.2}\,.
\end{align*}
Regrouping terms, we get
\begin{align*}
x^2(1.52 u - 2.38 u^2 + \frac{19}{30} u^3)
+ x(c- 5.8 u + \frac{1}{6} c u + 4.04 u^2 + \frac{5}{6} c u^2)
-c + 4.28 u - \frac{1}{6} c u - 2.04 u^2 - \frac{5}{3} c u^2 
 \stackrel{?} \le 0\,.
\end{align*}
The quadratic coefficient of $x$ is equal to $0$ at $u^\star \defeq \frac{2.38-\sqrt{2.38^2 - 4 \cdot 1.52 \cdot \frac{19}{30}}}{2 \cdot \frac{19}{30}} \approx 0.8157$. By inspection at $c = 1.2$, the inequality holds here for all $x \ge 0$.

Consider the derivative at the boundary points $x = 0.5$ and $x = 0.24$:
\begin{align*}
&2(0.5)(1.52 u - 2.38 u^2 + \frac{19}{30} u^3) + (c- 5.8 u + \frac{1}{6} c u + 4.04 u^2 + \frac{5}{6} c u^2) 
\\
&2(0.24)(1.52 u - 2.38 u^2 + \frac{19}{30} u^3) + (c- 5.8 u + \frac{1}{6} c u + 4.04 u^2 + \frac{5}{6} c u^2) 
\end{align*}
By inspection at $c = 1.2$, the first is negative for all $u \in [0.8, 0.8249]$ and positive for all $u \in [0.825, 1]$. The second is negative for all $[0.8, 0.8224]$ and positive for all $u \in [0.8226, 1]$.

When both derivatives have the same sign, the expression is monotonic in $x$. So we can evaluate the boundary points $x = 0.24$ and $x = 0.5$. Plugging in each point $x = 0.24$ and $x = 0.5$, the expression holds by inspection at $c = 1.2$ for all $u$.

Otherwise, $u \in [0.8224, 0.825]$. The quadratic coefficient of $x$ is negative, so we calculate the maximum value:
\begin{align*}
    -c + 4.28 u - \frac{1}{6} c u - 2.04 u^2 - \frac{5}{3} c u^2  - \frac{(c- 5.8 u + \frac{1}{6} c u + 4.04 u^2 + \frac{5}{6} c u^2)^2}{4(1.52 u - 2.38 u^2 + \frac{19}{30} u^3)}
\end{align*}
By inspection, this is negative for $u \in [0.82, 0.83]$ and so the expression holds in this region.
\item When $v = x$ and $w = 0.5$, the equation reads
\begin{align*}
    u(1-x)(2 + xu(1 + 2x) + (1-x)(1-u)(1 + 2(1-x))) \stackrel{?} \le 2cu (1-x) + 0.5c(1-u)(1-x)+0.5xcu\,.
\end{align*}
Regrouping terms, we get
\begin{align*}
u^2 (-3 + 9x - 6x^2)
+ u(5 - 1.5c - 10x + cx + 7x^2 - 2x^3)
-0.5 c + 0.5 c x  \stackrel{?}\le 0\,.
\end{align*}
When $x = 0.5$, this expression is linear in $u$, and holds for all $u \le 1$ at $c = 1.2$.

For $x \in [0.24, 0.5)$, the expression is a negative quadratic. We consider the maximum value
\begin{align*}
    -0.5 c + 0.5 c x - \frac{(5 - 1.5c - 10x + cx + 7x^2 - 2x^3)^2}{4 (-3 + 9x - 6x^2)}
\end{align*}
By inspection, at $c = 1.2$, this is negative for all $x \in [0.24, 0.44]$.

For $x \in [0.44, 0.5)$, we consider the derivative at $u = 1$. It takes value
\begin{align*}
    2(-3 + 9x - 6x^2)
+(5 - 1.5c - 10x + cx + 7x^2 - 2x^3)
\end{align*}
This is positive for $x \in [0.41,0.5]$. So the maximum value of the expression when $u \le 1$ occurs at $u = 1$. Plugging in $u =1$, we see that 
\begin{align*}
(-3 + 9x - 6x^2)
+ (5 - 1.5c - 10x + cx + 7x^2 - 2x^3)
-0.5 c + 0.5 c x  \le 0\,.
\end{align*}
for all $x\le 0.5$.
\item When $v = x$ and $w = \frac{u}{1.2}$, the equation reads
\begin{align*}
    u\bar{x}(2 + xu(1 + 2x) + \bar{x}\bar{u}(1 + 2\bar{x})) 
    +u^2x(\frac{u}{1.2}-\frac{1}{2})x^2
    \stackrel{?} \le 2cu \bar{x} + c(1-\frac{u}{1.2})\bar{u}\bar{x}+\frac{u}{1.2}xcu\,.
\end{align*}
Regrouping terms and plugging in $c = 1.2$, we get
\begin{align*}
u^3 (\frac{5}{6} x^3)
+
u^2(-0.5x^3-6x^2+9x-4)
+
u(-2x^3+7x^2-9.8x+4.8)
-1.2 + 1.2 x
 \stackrel{?}\le 0\,.
\end{align*}
This is a cubic in $u$, with positive leading coefficient for $x \ge 0$. Consider the derivative evaluated at $u = 0.8$ and $u = 1$. These values are 
\begin{align*}
    &3\cdot 0.8^2  (\frac{5}{6} x^3)
    +
    2 \cdot 0.8(-0.5x^3-6x^2+9x-4)
    +
    (-2x^3+7x^2-9.8x+4.8)
    \\
    &3\cdot(\frac{5}{6} x^3)
    +
    2 \cdot (-0.5x^3-6x^2+9x-4)
    +
    (-2x^3+7x^2-9.8x+4.8)
\end{align*}
By inspection, both are negative for all $x \in [0, 1]$. So the expression is decreasing in the range $u \in [0.8, 1]$. So the maximum occurs at $u = 0.8$.  This takes value
\begin{align*}
    0.8^3 (\frac{5}{6} x^3)
+
0.8^2(-0.5x^3-6x^2+9x-4)
+
0.8(-2x^3+7x^2-9.8x+4.8)
-1.2 + 1.2 x
\end{align*}
By inspection, this is negative for all $x \in [0.24, 0.5]$.\qedhere
\end{itemize}
\end{proof}

\subsubsection{Level \texorpdfstring{$\ell=4$}{4}}

\begin{claim}
    \label{claim:adaptivitygap_t_is_4_branch1}
    Consider variables $v,s,x,u$, where $0.01 \le v \le 0.2$, and $0.5 \le s \le x \le u \le 0.65$. Then the function
    \begin{align*}
        f(v;s,x,u) \defeq \frac{
2 + uv(1+x+2xs) + (1-u)(1-v)(2-x+2(1-x)(1-s))
        }
        {
        2+\frac{(1-u)(1-x)}{s}+v
        }\,
    \end{align*}
    is at most $1.2$.
\end{claim}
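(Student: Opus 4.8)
The plan is to reduce this four–variable inequality to a short list of one– or two–variable polynomial inequalities on tiny intervals, exactly in the spirit of the other deferred math facts such as \Cref{claim:t_equals_3_branch1} and \Cref{claim:3coin_1.2}. The structural observation is that both the numerator $N$ and the denominator $D$ of $f$ are \emph{affine} in each of $v$, $x$, and $u$ separately, while $s$ enters $D$ only through $1/s$. Since $v\ge 0.01>0$ and $0<u,x<1$, the denominator $D=2+\frac{(1-u)(1-x)}{s}+v$ is always strictly larger than $2$ on the feasible region, so $f$ has no pole there; in each of the coordinates $v$, $x$, $u$ the pole of $f$ sits at a value $<0$ (for $v$) or $>1$ (for $x$ and for $u$). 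A Möbius function avoiding its pole on an interval is monotone there, so $f$ is monotone in each of $v$, $x$, $u$ over the feasible box. Hence $\max f$ is attained with $v\in\{0.01,0.2\}$ and, respecting $0.5\le s\le x\le u\le 0.65$, with $u$ at an endpoint of $[x,0.65]$ and (when $u=0.65$) $x$ at an endpoint of $[s,0.65]$.

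First I would fix $v$ at one of its two endpoints, then push $u$ to an endpoint of $[x,0.65]$. \textbf{Case $u=0.65$.} Push $x$ to an endpoint of $[s,0.65]$. If $x=0.65$ then $f$ is a fixed rational function of $s$; multiplying through by $s>0$ turns $f(s)\le 1.2$ into a quadratic inequality in $s$ on $[0.5,0.65]$. If $x=s$ then $N$ becomes quadratic in the single variable $s$ and $D$ is affine in $1/s$, so clearing denominators gives a cubic inequality in $s$ on $[0.5,0.65]$. \textbf{Case $u=x$.} Here two coordinates merge and the Möbius structure is lost: $N$ and $D$ are each quadratic in $x$ ($N$ also affine in $s$, $D$ affine in $1/s$), so $f$ does not collapse to one variable. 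I would clear denominators, reducing $f\le 1.2$ to the polynomial inequality $1.2\,(sD)-(sN)\ge 0$ in the two variables $(s,x)$ over the triangle $0.5\le s\le x\le 0.65$, for $v\in\{0.01,0.2\}$. In each of these finitely many cases the resulting inequality has degree at most four over a very small interval (or triangle) and can be verified as in \Cref{claim:3coin_1.2}: read off the sign of the top–degree coefficient, then either bound the vertex of the resulting parabola/cubic or compare against the endpoint values.

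The step I expect to be the main obstacle is the case $u=x$, which leaves an honest bivariate inequality. Two routes out: (i) check it directly — the coefficient of the highest power of $x$ in $1.2D-N$ equals $\tfrac{1.2}{s}-v(1+2s)-(1-v)(3-2s)$, whose sign over $0.5\le s\le 0.65$, $0.01\le v\le 0.2$ is easily determined, after which it suffices to compare $1.2D-N$ against its values on the boundary slices $s=0.5$, $s=x$, and $x=0.65$; or (ii) first establish monotonicity of $f$ in $s$ on $[0.5,x]$ — writing $f(s)=\frac{Bs^2+As}{Cs+E}$ with $A=N|_{s=0}>0$, $C=2+v>0$, $E=(1-u)(1-x)>0$, and $B=2uvx-2(1-u)(1-v)(1-x)$, the derivative in $s$ has the sign of $BCs^2+2BEs+AE$, which is positive when $B\ge 0$ and unimodal (positive at $s=0$, one sign change) when $B<0$ — so the maximum over $s$ occurs at $s=0.5$, $s=x$, or a unique interior critical point, leaving $x$ as the only free variable and again reducing to one–variable polynomial inequalities on $[0.5,0.65]$. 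Either way the remaining computations are routine; the only real content is the (small) case bookkeeping and the low–degree sign checks, mirroring the treatment of \Cref{claim:adaptivity_gap_t_is_2} and the other facts in this subsection.
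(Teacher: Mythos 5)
Your plan matches the paper's proof essentially step for step: both exploit that $f$ is a pole-free Möbius function of each of $v$, $u$, $x$ separately to push $v$ to $\{0.01,0.2\}$, then $u$ to $\{x,0.65\}$, then (when $u=0.65$) $x$ to $\{s,0.65\}$, leaving univariate polynomial checks in $s$; and both treat $u=x$ by clearing the $1/s$ to get a quadratic in $x$ with negative leading coefficient, locate the vertex via the derivative at a boundary point of $[s,0.65]$, and finish with a one-variable check in $s$. Your route (i) is precisely what the paper does, so the only difference is that the paper carries out the low-degree sign checks explicitly while you leave them as routine.
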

\begin{proof}
This function is at most $c \defeq 1.2$ when
\begin{align*}
    2s + usv(1 + x + 2xs) + s\bar{u}\bar{v}(1 + \bar{x} + 2\bar{x}\bar{s}) \stackrel{?}\le c(2s + \bar{u}\bar{x} + sv)\,.
\end{align*}
    Both the numerator and denominator of $f$ are linear in $v$. Since any expression
$\frac{av+b}{cv+d}$ is affine to $\frac{1}{cv+d}$, this is a piecewise monotone function in $v$ on either side of the pole $v = \frac{-d}{c}$.
Since the denominator is positive for all $0 \le v \le 1$, the function is monotone in the region we are interested in. So it is maximized at a boundary point of $v$, either $v = 0.01$ or $v = 0.2$. At each boundary point of $v$, the same is true of $u$; so it is maximized at a boundary point (either $u = 0.65$ or $u = x$). We try all four cases:
\begin{itemize}
    \item Suppose $u = 0.65$ and $v = 0.01$. Then the function is monotone in $x$, so it is maximized at a boundary point ($x = u = 0.65$ or $x = s$). This gives two expressions:
    \begin{align*}
        f(0.01;s,0.65,0.65) = \frac{
        2.72105 - 0.2341 s
        }{
        2.01 + \frac{0.1225}{s}
        }
    \end{align*}
        \begin{align*}
        f(0.01;s,s,0.65) = \frac{
        3.3925 - 1.726 s + 0.706 s^2
        }{
        1.66 + \frac{0.35}{s}
        }
    \end{align*}
    By inspection, both expressions are at most $1.2$ for all $s \in [0,1]$.
    \item Suppose $u = 0.65$ and $v = 0.2$. Then the function is monotone in $x$, so it is maximized at a boundary point ($x = u = 0.65$ or $x = s$). This gives two expressions:
    \begin{align*}
        f(0.2;s,0.65,0.65) = \frac{
        2.7885 - 0.027 s
        }{
        2.2 + \frac{0.1225}{s}
        }
    \end{align*}
        \begin{align*}
        f(0.2;s,s,0.65) = \frac{
        3.25 - 1.27 s + 0.82 s^2
        }{
        1.85 + \frac{0.35}{s}
        }
    \end{align*}
    By inspection, both expressions are at most $1.2$ for all $s \in [0.5, 0.65]$.
    \item Suppose $u = x$ and $v = 0.01$. Then the function is at most $1.2$ exactly when
    \begin{align*}
        -1.2 + 3.548 s - 1.98 s^2 
        +x (2.4 - 6.92 s + 3.96 s^2)
        + x^2 (-1.2 + 2.98 s - 1.96 s^2)
        \stackrel{?} \le 0\,.
    \end{align*}
    For all $s$, this is a negative quadratic in $x$. Consider the derivative at $x = s$:
    \begin{align*}
        (2.4 - 6.92 s + 3.96 s^2) + 2s(-1.2 + 2.98 s - 1.96 s^2)\,,
    \end{align*}
    which is negative for all $s > 0.4$. So the maximum value of the function when $x \in [s, 0.65]$ is at $x = s$. This gives value
    \begin{align*}
        -1.2 + 3.548 s - 1.98 s^2 
        +s (2.4 - 6.92 s + 3.96 s^2)
        + s^2 (-1.2 + 2.98 s - 1.96 s^2)\,.
    \end{align*}
    By inspection, this is $0$ for all $s$. So the inequality holds here.
    \item Suppose $u = x$ and $v = 0.2$. Then the function is at most $1.2$ exactly when
    \begin{align*}
        -1.2 + 2.56 s - 1.6 s^2 
        + x (2.4 - 5.4 s + 3.2 s^2)
        + x^2 (-1.2 + 2.6 s - 1.2 s^2) \stackrel{?}\le 0\,.
    \end{align*}
    For $s \in [0.5, 0.65]$, this is a negative quadratic in $x$. Consider the derivative at $x = 0.65$:
    \begin{align*}
         (2.4 - 5.4 s + 3.2 s^2)
        + 2(0.65) (-1.2 + 2.6 s - 1.2 s^2)\,,
    \end{align*}
    which is positive for all $s$. So the maximium value of the function when $x \in [s,0.65]$ is at $x = 0.65$. This gives value
    \begin{align*}
        -1.2 + 2.56 s - 1.6 s^2 
        + 0.65 (2.4 - 5.4 s + 3.2 s^2)
        + 0.65^2 (-1.2 + 2.6 s - 1.2 s^2) \,,
    \end{align*}
    which is negative for all $s \le 1$. So the inequality holds here.
\end{itemize}
\end{proof}

\begin{claim}
     \label{claim:adaptivitygap_t_is_4_branch2}
    Consider variables $v,x,w,u$, where $0.8 \le u \le 1$, and $0.35 \le v \le x \le w \le 0.5$. Then the function
    \begin{align*}
        f(v,x,w;u) \defeq \frac{
2 + uv(1+x+2xw) + (1-u)(1-v)(2-x+2(1-x)(1-w))
        }
        {
        2+\frac{(1-u)(1-x)}{u}+vx + 2vxw
        }\,
    \end{align*}
    is at most $1.2$.
\end{claim}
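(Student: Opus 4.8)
The plan is to remove the only genuinely awkward feature of $f$ — the term $\frac{(1-u)(1-x)}{u}$ in the denominator, which is not affine in $u$ — by replacing the denominator with an affine-in-$u$ lower bound. Since the numerator of $f$ is manifestly positive, $f\le 1.2$ is equivalent to the numerator being at most $1.2$ times the denominator, so any lower bound on the denominator suffices; and because $u\le 1$ we have $\frac{(1-u)(1-x)}{u}\ge (1-u)(1-x)$. Hence it is enough to show
\[
\widetilde{f}(v,x,w;u) \defeq \frac{2 + uv(1+x+2xw) + (1-u)(1-v)\bigl(2-x+2(1-x)(1-w)\bigr)}{2 + (1-u)(1-x) + vx(1+2w)} \le 1.2 .
\]
Now both the numerator and the denominator of $\widetilde{f}$ are affine in $u$, and the denominator is at least $2$, so $\widetilde{f}$ is monotone in $u$ on $[0.8,1]$ and attains its maximum at $u=1$ or $u=0.8$. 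This mirrors the way \Cref{claim:adaptivitygap_t_is_4_branch1} pins a variable to an endpoint; the twist is that here the clean variable must be manufactured by the lower bound above rather than being present already.

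For $u=1$ the inequality essentially collapses: $\widetilde{f}(v,x,w;1) = 1 + \frac{v}{2+vx(1+2w)}$, so the claim becomes $v\bigl(5-x(1+2w)\bigr)\le 2$. The left side is increasing in $v$, so it is largest at $v=x$; it is then decreasing in $w$, so largest at $w=x$, reducing to $5x - x^2 - 2x^3 \le 2$ on $[0.35,0.5]$, which holds since the cubic is increasing there and equals $2$ at $x=0.5$. For $u=0.8$, clearing the (positive) denominator turns $\widetilde{f}\le 1.2$ into a polynomial inequality $P(v,x,w)\ge 0$ that is affine in each of $v,x,w$. As in \Cref{claim:adaptivitygap_t_is_4_branch1}, I would push $v$ to its two extreme values $0.35$ and $x$: at $v=0.35$ the resulting expression is increasing in both $x$ and $w$, so its minimum over $\{0.35\le x\le w\le 0.5\}$ is at $x=w=0.35$, where it is positive; at $v=x$ the expression is increasing in $w$ (the coefficient of $w$ is a quadratic in $x$ that is everywhere positive), so its minimum is at $w=x$, leaving a single cubic in $x$ that is increasing on $[0.35,0.5]$ and positive at $x=0.35$. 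In each case the value stays below $1.2$, and combining the two endpoints proves the claim.

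The main obstacle is conceptual rather than computational: the $\frac{1}{u}$ term blocks the usual $\frac{au+b}{cu+d}$ monotonicity argument in $u$, and absorbing it via $u\le 1$ is the move that makes everything else routine. The alternative of clearing the $u$ by multiplication leads to a quadratic in $u$ whose leading coefficient $v(1+x+2xw)-(1-v)\bigl(2-x+2(1-x)(1-w)\bigr)$ changes sign over the region — for instance it is negative near $v=x=w=0.35$ — which would force a discriminant analysis of an interior maximum; the lower-bound trick avoids that entirely. After the reduction the remaining inequalities carry comfortable slack, so the boundary-pushing goes through exactly as in the level-$3$ and level-$4$ lemmas.
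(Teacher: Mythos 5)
Your proof is correct, and it takes a genuinely different and somewhat cleaner route than the paper's. The paper clears the denominator (multiplying through by $u$), then applies the M\"obius-in-$w$ and M\"obius-in-$v$ monotonicity arguments directly to $f$, pinning $w\in\{x,0.5\}$ and $v\in\{0.35,x\}$; this yields four cases, each of which is then resolved by quadratic/discriminant analysis in $u$ (and sometimes a further reduction in $x$). You instead eliminate the awkward $u$-dependence first, via the lower bound $\frac{(1-u)(1-x)}{u}\ge (1-u)(1-x)$ (valid since $u\le 1$ and the quantity is nonnegative), which makes both numerator and denominator affine in $u$; then M\"obius monotonicity pins $u\in\{0.8,1\}$. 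The payoff is substantial: $u=1$ collapses to the one-line inequality $v\bigl(5-x(1+2w)\bigr)\le 2$, and $u=0.8$ reduces to $P(v,x,w)=-0.16+0.36x+0.4w-0.2vx-0.4vw-0.4xw+1.2vxw\ge 0$, which is multilinear and dispatched by pushing $v$, then $w$, then $x$ to extremes, each monotonicity check being elementary (I verified the coefficient of $w$ at $v=x$ is $1.2x^2-0.8x+0.4>0$, and the final cubic $-0.16+0.76x-x^2+1.2x^3$ has derivative $0.76-2x+3.6x^2$ with negative discriminant). Your approach avoids the four-way case split and the quadratic-in-$u$ analysis entirely, and the lower bound is tight exactly where the inequality is tight ($u=1$, $v=x=w=0.5$), so no slack is wasted. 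One minor inaccuracy in your motivational aside: after clearing $u$, the leading coefficient $v(1+x+2xw)-(1-v)\bigl(2-x+2(1-x)(1-w)\bigr)$ does not actually change sign on the region --- it is nonpositive throughout (increasing in $v$ and $w$, so its maximum is at $v=x$, $w=0.5$, where it is $x(1+2x)-(1-x)(3-2x)\le 0$ for $x\le 0.5$, vanishing only at $x=0.5$), so the cleared quadratic in $u$ is concave and could also be handled by endpoint evaluation. This does not affect the correctness of your proof, which nowhere relies on that remark.
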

\begin{proof}
This function is at most $c \defeq 1.2$ when
\begin{align*}
    2u + u^2v(1 + x + 2xw) + u\bar{u}\bar{v}(1 + \bar{x} + 2\bar{x}\bar{w}) \stackrel{?}\le c(2u + \bar{u}\bar{x} + uvx + 2uvxw)\,.
\end{align*}

Both the numerator and denominator of $f$ are linear in $w$. Since any expression
$\frac{aw+b}{cw+d}$ is affine to $\frac{1}{cw+d}$, this is a piecewise monotone function in $w$ on either side of the pole $w = \frac{-d}{c}$.
Since the denominator is positive for all $0 \le w \le 1$, the function is monotone in the region we are interested in. So it is maximized at a boundary point of $w$, either $w = x$ or $w = \frac{1}{2}$. At each boundary point of $w$, the same is true of $v$; so it is maximized at a boundary point (either $v = 0.35$ or $v = x$). We try all four cases:
\begin{itemize}
    \item Suppose $v = 0.35$ and $w = 0.5$. Then the function is monotonic in $x$, so it is maximized at a boundary point (either $x = v = 0.35$ or $x = w = 0.5$). We consider each expression:
    \begin{align*}
        f(0.35,0.35,0.5;u) = 
        \frac{3.495-0.9u }{
        1.595 + \frac{0.65}{u}
        }
    \end{align*}
    \begin{align*}
        f(0.35,0.5,0.5;u) = \frac{3.3 - 0.6u}{
        1.85 + \frac{1}{2u}
        }
    \end{align*}
    By inspection, both are at most $c = 1.2$ for all $u \ge 0$.
    \item Suppose $v = x = w$. The function is at most $c = 1.2$ when
    \begin{align*}
        -1.2 + 1.2 x + u (4.8 - 10.2 x + 5.8 x^2 - 4.4 x^3) + u^2 (-4 + 10 x - 6 x^2 + 4 x^3) \stackrel{?} \le 0\,.
    \end{align*}
    At $x = 0.5$, this is the expression $0.6 u - 0.6 \stackrel{?} \le 0$, which is true for all $u \le 1$.

    For $x < 0.5$, the expression is a negative quadratic in $u$. The maximum value is
    \begin{align*}
        -1.2 + 1.2 x  - \frac{(4.8 - 10.2 x + 5.8 x^2 - 4.4 x^3)^2}{4(-4 + 10 x - 6 x^2 + 4 x^3)}\,,
    \end{align*}
    which is negative for $x \in [0.35, 0.44]$. So the inequality holds here.

    Now we handle $x \in [0.44, 0.5)$. Consider the derivative at $u = 1$:
    \begin{align*}
        (4.8 - 10.2 x + 5.8 x^2 - 4.4 x^3) + 2(1) (-4 + 10 x - 6 x^2 + 4 x^3)\,,
    \end{align*}
    which is positive when $x \ge 0.41$. So the maximum value of the function when $u \in [0.8, 1]$ occurs at $u = 1$. This gives value
    \begin{align*}
         -1.2 + 1.2 x + (4.8 - 10.2 x + 5.8 x^2 - 4.4 x^3) +  (-4 + 10 x - 6 x^2 + 4 x^3)\,,
    \end{align*}
    which is at most $0$ for $x \in [0.44, 0.5]$. So the inequality holds here as well.
    \item Suppose $v = 0.35$ and $w = x$. The function is at most $1.2$ when
    \begin{align*}
    x^2( 0.46 u - 0.6 u^2)
    + x(1.2 - 4.87 u  + 3.6 u^2)
        -1.2 + 3.4 u - 2.25 u^2  \stackrel{?} \le 0\,.
    \end{align*}
    For $u \ge 0.8$, this is a negative quadratic in $x$. Consider the derivative at $x = 0.35$:
    \begin{align*}
        2(0.35)(0.46u-0.6u^2) + (1.2-4.87u+3.6u^2)\,.
    \end{align*}
    This is negative for all $u \in [0.8,1]$. So the function is maximized in the region $x \in [0.35, 0.5]$ at $x = 0.35$. Here, $v = x = w$, and we have already handled this case.
    \item Suppose $v = x$ and $w = 0.5$. The function is at most $c = 1.2$ when
    \begin{align*}
    x^2(-0.4u)
    + x(1.2 - 6.2u + 6u^2)
    -1.2 + 3.8u - 3u^2
    \stackrel{?} \le 0\,.
    \end{align*}
    For $u \ge 0$, this is a negative quadratic in $x$. The maximum value is
    \begin{align*}
        -1.2 + 3.8u - 3u^2 - \frac{(1.2 - 6.2u + 6u^2)^2}{4(-0.4u)}\,,
    \end{align*}
    which is negative for all $u \in [0.8,0.9]$. So the inequality holds here.
    
    Now we handle $u \in [0.9, 1]$. Consider the derivative at $x = 0.5$:
    \begin{align*}
        2(0.5)(-0.4u) + (1.2 - 6.2u + 6u^2)\,.
    \end{align*}
    When $u \ge 0.871$, this is positive, so in this case, the function is maximized in the region $x \in [0.35, 0.5]$ at $x = 0.5$. Here, $v = x = w$, and we have already handled this case.\qedhere
\end{itemize}
\end{proof}

\begin{restatable}{claim}{fourcoinonepointtwo}
\label{claim:4coin_1.2}
Consider variables $v,s,x,u$, where $0 \le v \le 0.2$, $0.5 \le s \le x \le u \le 0.65$, and $0.58 \le u \le 0.65$. Then the function
    \begin{align*}
        f(v,s,x,u) \defeq \frac{
2 + uv(1+x+2xs) + (1-u)(1-v)(2-x+2(1-x)(1-s))
        }
        {
        2+(1-x)\frac{1-u}{s}+x\frac{v}{1-v}
        }\,
    \end{align*}
    is at most $1.2$.
\end{restatable}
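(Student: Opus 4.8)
The plan is to follow the template used for the earlier ``math facts'', e.g.\ \Cref{claim:adaptivitygap_t_is_4_branch1,claim:3coin_1.2}: clear denominators to turn ``$f\le 1.2$'' into a polynomial inequality, then repeatedly use that this inequality is \emph{affine} in certain variables to push those variables to the endpoints of their ranges, and finally dispatch the remaining low-dimensional polynomial inequalities with the ``sign of the leading coefficient, then check the vertex or an endpoint'' routine.

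Since $s(1-v)>0$ throughout the region, $f(v,s,x,u)\le 1.2$ is equivalent to
\begin{align*}
Q(v,s,x,u)\defeq 1.2\Bigl(2s(1-v)+(1-x)(1-u)(1-v)+sxv\Bigr)-s(1-v)\,N(v,s,x,u)\ \ge\ 0,
\end{align*}
where $N$ is the numerator of $f$. A direct inspection shows $N$ is affine in $x$ (every term of $N$ involving $x$ is linear in $x$), hence so are $s(1-v)N$ and $Q$. Therefore, with $(v,s,u)$ fixed, the inequality $Q\ge 0$ on $x\in[s,u]$ holds iff it holds at $x=s$ and at $x=u$, which splits the claim into two branches.

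In the branch $x=s$ one has $N|_{x=s}=2+uv(1+s+2s^2)+(1-u)(1-v)\bigl(2-s+2(1-s)^2\bigr)$, which is affine in $u$; thus $Q|_{x=s}$ is affine in $u$ and it suffices to check the endpoints of $u$'s range, namely $u=0.65$ and $u=\max(s,0.58)$ (using $u\ge x=s$ and $u\ge 0.58$). Each resulting inequality is a polynomial in $(v,s)$ on a rectangle that is quadratic in $v$ with a positive $v^2$-coefficient, so it is enough to verify it at $v=0$, at $v=0.2$, and at the vertex, obtaining in each case a univariate polynomial inequality in $s$ on a subinterval of $[0.5,0.65]$ to be confirmed by inspection. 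In the branch $x=u$ one cannot linearize $u$ away, but $Q|_{x=u}$ is quadratic in each of $s$, $v$, $u$ separately; its $s^2$-coefficient equals $2(1-v)\bigl((1-u)^2(1-v)-u^2v\bigr)$, which is positive on the region (because $u\ge 0.58$ and $v\le 0.2$), so $Q|_{x=u}$ is a convex parabola in $s$ on $[0.5,u]$, and, using the sign of its $s$-derivative at $s=0.5$, its minimum is attained at $s=0.5$, at $s=u$, or at the vertex; substituting each reduces to an inequality in $(v,u)$ handled the same way (convex in $v$, degree two in $u$), ending in univariate inequalities on $v\in[0,0.2]$ and $u\in[0.58,0.65]$ to be verified by inspection.

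I expect the main obstacle to be the bookkeeping in the $x=u$ branch: three genuinely quadratic variables together with the order constraint $0.5\le s\le u\le 0.65$ force several nested sign analyses, and — as in the companion claims — the leftover univariate inequalities are tight (the constants $0.2$, $0.58$, $0.65$ are precisely those supplied by \Cref{lemma:adaptivity_ell_4}), so each must be checked on exactly the right subinterval. Choosing the case split correctly, rather than any single computation, is where the effort goes.
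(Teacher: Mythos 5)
Your opening step coincides with the paper's: both observe that $f$ is a ratio of two expressions affine in $x$, hence piecewise monotone in $x$, so it suffices to check $x=s$ and $x=u$. From there the two proofs diverge in the order in which the remaining variables are eliminated, and this is where your plan has a gap.

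The paper clears denominators, groups the resulting polynomial inequality by powers of $\bar v = 1-v$, and observes that the $\bar v^2$-coefficient is negative (this is exactly the fact you record as ``positive $v^2$-coefficient of $Q$''). It then uses the \emph{discriminant-in-$\bar v$} sufficient condition: if the discriminant is $\le 0$, the downward parabola in $\bar v$ is non-positive for \emph{all} $\bar v$, so $v$ disappears at once. What remains is a polynomial in $(s,u)$ only: quadratic in $u$ with a perfect-square leading term when $x=s$, and quartic in $s$ when $x=u$. The quartic is handled by showing (via the quartic's own discriminant in $u$) that it has exactly two real roots in $s$, and then checking negativity at $s=0.5$ and $s=0.65$.

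Your plan instead minimizes $Q\rvert_{x=u}$ over $s$ first and then over $v$. The issue is the $s$-vertex case. Writing $Q\rvert_{x=u} = a(v,u)s^2 + b(v,u)s + c(v,u)$ with $a>0$, the vertex value is $c - b^2/(4a)$, and after clearing the positive denominator $4a$ the condition becomes $4ac - b^2 \ge 0$. Since $a$, $b$, $c$ are each quadratic in $v$, this is \emph{quartic} in $v$, not quadratic, so the claim that the resulting inequality is ``convex in $v$, degree two in $u$'' and ``handled the same way'' does not hold in that subcase. (Note that $4ac-b^2\ge 0$ is precisely a discriminant condition, now with respect to $s$ rather than $\bar v$; you would end up needing essentially the same quartic analysis the paper performs, just in a different variable, plus the extra case split on whether the $s$-vertex lies in $[0.5,u]$.) A minor further point: in the $x=s$ branch the $v$-vertex substitution likewise produces a rational rather than polynomial expression in $s$, so ``a univariate polynomial inequality in $s$'' is not quite what you get there either, although clearing the denominator repairs it. The cleanest fix is to adopt the paper's ordering: eliminate $v$ first via the $\bar v$-discriminant, then face a single polynomial in $(s,u)$.
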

\begin{proof}
Both the numerator and denominator are linear in $x$. Since any expression
$\frac{ax+b}{cx+d}$ is affine to $\frac{1}{cx+d}$, this is a piecewise monotone function in $x$ on either side of the pole $x = \frac{-d}{c}$.
Since the denominator is positive for all $0 \le x \le 1$, the function is monotone in the region we are interested in. So it is maximized at a boundary point of $x$, either $x = s$ or $x = u$.

We would like to show that the function is at most $c \defeq 1.2$. This is true exactly when
\begin{align}
\label{eqn:4coin_inequality}
    s\bar{v}\left(2 + uv(1+x+2xs) + \bar{u}\bar{v}(1+\bar{x}+2\bar{x}\bar{s}\right)
    \stackrel{?}\leq c \left(
    2s\bar{v}+\bar{x}\bar{u}\bar{v}+xvs
    \right)\,.
\end{align}
Grouping terms by $\bar{v}$, we get
\begin{align*}
    \bar{v}^2
    \left(
s\bar{u}(1+\bar{x}+2\bar{x}\bar{s})
-
su(1+x+2xs)
    \right)
    +\bar{v}
    \left(
    2s+us+usx+2uxs^2
    -2cs
    +cxs
    -c\bar{x}\bar{u}
    \right)
-cxs
\stackrel{?}\le 0\,.
\end{align*}
This is quadratic in $\bar{v}$ even at boundary conditions $x = s$ and $x = u$.
Recall that $u,x,s \ge 0.5$. Since $u \ge 0.58$, the quadratic coefficient is negative. So this inequality is satisfied when the discriminant is negative, i.e.
\begin{align*}
    \left(
    2s+us+usx+2uxs^2
    -2cs
    +cxs
    -c\bar{x}\bar{u}
    \right)^2 - 4 \left(
s\bar{u}(1+\bar{x}+2\bar{x}\bar{s})
-
su(1+x+2xs)
    \right)(-cxs) \stackrel{?}\leq 0\,.
\end{align*}
We try both boundary conditions $x = s$ and $x = u$.
In the first case, the term is sextic in $s$ and quadratic in $u$; in the second, it is quartic in $s$ and quartic in $u$.

Let's first set $x = s$. Grouping terms by $u$, we get 
\begin{align}
\label{eqn:4coin_byu_quadratic}
\left(
u(c+(1-c)s+s^2+2s^3)
+
cs^2+2s-cs-c
\right)^2
+
4cs^3
\left(
\bar{u}(1+\bar{s}+2\bar{s}^2)
-
u(1+s+2s^2)
\right)
\stackrel{?} \leq 0\,.
\end{align}
This is quadratic in $u$, and the quadratic coefficient is a square (so is non-negative). In fact, it is positive since $c > 0$.

We would like to check that the quadratic \cref{eqn:4coin_byu_quadratic} is negative for all $0.58 \le u \le 0.65$. Since it has a positive quadratic coefficient, we can just check that it is negative at the boundary points $u =  0.58$ and $u = 0.65$ are negative to ensure it is negative for the whole region. If we set $u = 0.58$ and $u = 0.65$ (and $c = 1.2$), we get that
\begin{align*}
0.254016-0.689472s-1.32638s^{2}+6.54576s^{3}-8.10872s^{4}+2.5936s^{5}+1.3456s^{6} &\stackrel{?} \le 0\,.
\\
0.1764 - 0.5628 s - 1.1051 s^2 + 4.987 s^3 - 6.3555 s^4 + 1.93 s^5 + 1.69 s^6
&\stackrel{?}\leq 0
\end{align*}
By inspection, both are negative within the region $0.5 \le s \le 0.65$, with no roots in this region.

Now we set $x = u$. Grouping terms in \cref{eqn:4coin_inequality} by $s$, we get a quartic:
\begin{align*}
    0 \stackrel{?} \ge&\ s^4( 4u^4)
    \\
    &+
    s^3(-12cu^3 + 8cu^2 -8cu + 4u^4 + 4u^3 + 8u^2)
    \\
    &+
    s^2
    (c^2u^2 - 4c^2u + 4c^2 - 4cu^4 + 18cu^3 - 38cu^2 + 16cu - 8c + u^4 + 2u^3 + 5u^2 + 4u + 4)
    \\
    &+
    s(-2c^2u^3 + 8c^2u^2 - 10c^2u + 4c^2 - 2cu^4 + 2cu^3 - 2cu^2 + 6cu - 4c)
    \\
    &+
c^2u^4 - 4c^2u^3 + 6c^2u^2 - 4c^2u + c^2 \,.
\end{align*}
We calculate at the discriminant of the quartic as a polynomial of $u$ when $c = 1.2$.
By inspection, the discriminant is negative for all $0.56 \le u \le 0.82$. So the quartic has only two real roots.

Notice that the quartic has a positive quartic coefficient. 
We calculate the value of the quartic at the boundary points $s = 0.5$ and $s = 0.65$. If the quartic is negative at both points, then the two roots $r_1, r_2$ must be outside the region $0.5 \le s \le 0.65$. At these boundary points, the quartic evaluates to
\begin{align*}
    1.96 - 6.2 u + 5.61 u^2 - 1.4 u^3 + 0.04 u^4 &\stackrel{?} \leq 0
    \\
     2.1316 - 5.708 u + 2.8563 u^2 + 1.0429 u^3 + 0.087025 u^4 &\stackrel{?}\leq 0
\end{align*}
By inspection, both are negative in the region $0.58 \le u \le 0.8$. So the claim holds for $0.58 \le u \le 0.65$.
\end{proof}

\subsubsection{Level \texorpdfstring{$\ell=5$}{5}}

\begin{restatable}{claim}{fivecoinonepointtwo}
\label{claim:5coin_1.2}
    Consider variables $v,r,s,t,u$, where $0 \le v \le 0.2$, $0.5 \le r \le s \le t \le u \le 0.58$, and $u > 0.5$. Then the function
    \begin{align*}
        f(v,r,s,t,u) \defeq \frac{
2 + uv(1+t+ts(1+2r)) + (1-u)(1-v)(1+(1-t)+(1-t)(1-s)(1+2(1-r)))
        }
        {
        2+(1-t)(1-u)(1+\frac{1-s}{r})+t\frac{v}{1-v}
        }\,
    \end{align*}
    is at most $1.2$.
\end{restatable}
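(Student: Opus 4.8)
I would follow the template of the proofs of \Cref{claim:2coin_1.2,claim:3coin_1.2,claim:4coin_1.2}, now with one additional variable to eliminate. Write $c \defeq 1.2$ and abbreviate $\bar v \defeq 1-v$, $\bar r \defeq 1-r$, and likewise for $s,t,u$. Clearing the denominators $r$ and $\bar v$, the inequality $f \le c$ becomes the polynomial inequality $G(v,r,s,t,u) \le 0$, where
$$G \defeq r\bar v\bigl(2 + uvA + \bar u\bar v B\bigr) - c\bigl(2r\bar v + \bar t\bar u\bar v(r+\bar s) + trv\bigr),$$
with $A \defeq 1 + t\bigl(1 + s(1+2r)\bigr)$ and $B \defeq 1 + \bar t\bigl(1 + \bar s(1+2\bar r)\bigr)$. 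The first reduction is that the numerator and denominator of $f$ are affine in $t$ (the numerator enters $t$ through $A$ and $\bar t B$, the denominator through $\bar t$), and the denominator is positive on $[0,1]$; hence $f$ is monotone in $t$ on $[s,u]$, so it is maximized at $t=s$ or at $t=u$. This splits the argument into two cases.

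In each case I would eliminate one more variable by the same device. When $t=u$, the numerator and denominator of $f$ are affine in $s$, so it suffices to take $s=r$ or $s=u$; when $t=s$, they are affine in $u$, so it suffices to take $u=s$ or $u=0.58$. What remains is a short list of polynomial inequalities $G\le 0$ in three variables — essentially $G(v,r,m)$ with $m\defeq s=t=u$ and $0.5<m\le 0.58$; $G(v,r,s)$ with $u=0.58$ and $t=s$; and $G(v,r,u)$ with $s=r$ and $t=u$. In each of these, $G$ is a downward-opening quadratic in $v$: grouping by $v$, the coefficient of $v^2$ equals $r(\bar u B - uA)$, which is negative because $uA \ge 2u > 1 > 2\bar u \ge \bar u B$ on the relevant box (using $A\ge 2\ge B$ there, which holds since $0.5\le r,s,t$, together with $u>\frac{1}{2}$). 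Consequently $G\le 0$ follows as soon as the discriminant of this quadratic — now a polynomial in the two remaining variables — is nonpositive; in the few sub-regions where that discriminant is positive I would instead argue directly that both roots of the $v$-quadratic lie outside $[0,0.2]$, by checking the signs of $G$ and of $\partial_v G$ at $v=0$ and $v=0.2$.

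Each discriminant condition is then a polynomial inequality over a small box in $(r,s)$ or $(r,u)$. I would dispatch these exactly as in the proof of \Cref{claim:4coin_1.2} for its quartic-in-$u$ subcase: evaluate at the corners of the box, and for the leftover one-variable inequalities compute the discriminant in that variable to bound the number of its real roots, then verify the finitely many univariate polynomial inequalities on their sub-intervals by locating roots and bounding coefficients. Combining over the finitely many cases gives $f\le 1.2$ throughout the stated region.

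\textbf{Main obstacle.} The argument uses nothing beyond elementary algebra, but the bookkeeping is where the difficulty lies: there are several nested layers of boundary-case splitting, the polynomials involved reach degree six or more after the discriminant steps, and — crucially — the bound is \emph{tight}, with $f\to 1.2$ as $v\to 0$ and $r,s,t,u\to\frac{1}{2}$ from above. Near that corner the sign analyses have vanishing margin and must be carried out exactly rather than with any slack. Getting the discriminant-versus-root-location dichotomy right in every sub-region, and certifying each of the resulting univariate polynomial signs, is essentially all of the work.
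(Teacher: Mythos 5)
Your proposal follows the paper's structure closely, but with one detour the paper avoids. The core moves are the same: use multilinearity of numerator and denominator in $s,t,u$ to push those variables to the boundary (you eliminate $t$, then $s$ or $u$; the paper eliminates $s$ and $u$ simultaneously and keeps $r,t$ free — both are valid and leave equivalent reduced systems), then observe that after clearing denominators the inequality $G\le 0$ is a downward-opening quadratic in $v$ (equivalently in $\bar v$) because the leading coefficient $r(\bar u B - u A)$ is negative when $u>\tfrac12$, and reduce to sign checks of low-degree polynomials on boxes.

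The place where your route diverges — and where it would waste effort — is leading with the discriminant of the $v$-quadratic. As you yourself observe, the bound is tight: at $v=0$, $r=s=t=u\to\tfrac12^+$ we have $f\to 1.2$, which means the quadratic in $\bar v$ has a root exactly at $\bar v=1$; near that corner the leading coefficient $r(\bar u B - u A)\to 0^-$ and the quadratic degenerates toward a linear function, so its discriminant is nonnegative throughout a neighborhood of the entire face $u\to\tfrac12$. The ``few sub-regions where the discriminant is positive'' are therefore not few: your discriminant criterion fails precisely where the inequality is tightest, and you would have to fall back to the root-location argument essentially everywhere that matters. That fallback — checking the value and sign of $\partial_v G$ at $v=0$ — is exactly the paper's argument, stated upfront: since the quadratic in $\bar v$ is downward-opening, it suffices to show $G|_{\bar v=1}\le 0$ and $\partial_{\bar v}G|_{\bar v=1}>0$, which places the apex at $\bar v>1$ and hence $G\le 0$ for all $\bar v\le 1$. (You also do not need the check at $v=0.2$ once the $v=0$ value-and-derivative check succeeds: it already implies the parabola is decreasing on all of $v\ge 0$.) So I would drop the discriminant branch entirely and go straight to the endpoint check; it is both necessary and sufficient, and it keeps the resulting polynomial degrees lower. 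The remaining certification of univariate polynomial signs over intervals is the same bookkeeping the paper carries out, and your plan for that part is sound.
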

\begin{proof}
Both the numerator and denominator are linear in $s$ and $u$. Since any expression
$\frac{ax+b}{cx+d}$ is affine to $\frac{1}{cx+d}$, this is a piecewise monotone function in $s$ and in $u$ on either side of the pole $x = \frac{-d}{c}$.
Since the denominator is positive for all $0 \le s,u \le 1$, the function is monotone in the region we are interested in. So, both $s$ and $u$ are maximized at a boundary point. In the proof, we fix $s$ and $u$ at each boundary point, creating four options from $r \le s \le t \le u \le 0.58$. 

We would like to show that the function is at most $c \defeq 1.2$. This is true exactly when
\begin{align*}
    r\bar{v}\bigg(
    2+uv(1+t(1+s(1+2r)))
    +\bar{u}\bar{v}(1+\bar{t}(1+\bar{s}(1+2\bar{r})))
    \bigg)
     \stackrel{?}\leq 
    c\left(
2r\bar{v}+\bar{v}\bar{t}\bar{u}(r+\bar{s})+rtv
    \right)
   \,.
\end{align*}
Grouping terms by $\bar{v}$, we get
\begin{align}
\label{eqn:5coin_main}
0 \stackrel{?}\ge &\ \bar{v}^2
\big(
r\bar{u}(1+\bar{t}(1+\bar{s}(1+2\bar{r})))
-
ru(1+t(1+s(1+2r)))
\big) \nonumber
\\
&+ \bar{v}
\big(
2r
+ru(1+t(1+s(1+2r)))
-2cr
-c\bar{t}\bar{u}(r+\bar{s})
+crt
\big)
-crt\,.
\end{align}
Recall that $r,s,t,u \ge 0.5$, so the quadratic coefficient is at most $0$. Since $u > 0.5$ it is in fact negative.

We will prove that \cref{eqn:5coin_main} is satisfied for all $0 \le \bar{v} \le 1$ in the following way: We show that the value of \cref{eqn:5coin_main} at $\bar{v} = 1$ is at most $0$, and that it is increasing at $\bar{v} = 1$. Since the quadratic coefficient is negative, the root of the quadratic is at $\bar{v} > 1$, and so \cref{eqn:5coin_main} is at most $0$ for the whole region $0 \le \bar{v} \le 1$.

Note that the equation is quadratic in $\bar{v}$ even if we plug in boundary conditions for $s$ and $u$; we consider each case.

Let's first verify \cref{eqn:5coin_main} is at most $0$ at $\bar{v} = 1$. Here, it takes value
\begin{align}
\label{eqn:5coin_case1_byr}
0 \stackrel{?}\ge&\  r\bar{u}(1+\bar{t}(1+\bar{s}(1+2\bar{r})))
+
2(1-c)r
-c\bar{t}\bar{u}(r+\bar{s}) \nonumber
\\
&= r^2(-2\bar{u}\bar{t}\bar{s})
    +r(\bar{u}(1+\bar{t}(1+3\bar{s}))+2(1-c)-c\bar{t}\bar{u})
    -c\bar{t}\bar{u}\bar{s}\,.
\end{align}
\begin{itemize}
    \item We first consider the boundary when $s = t$. Then \cref{eqn:5coin_case1_byr} is quadratic in $r$, with negative quadratic coefficient. We check that its value at the uppermost point of $r$ in the region $0.5 \le r \le s=t$ is at most $0$, and that it is increasing at said point. If so, then \cref{eqn:5coin_case1_byr} is at most $0$ for all $r$ in the region. Plugging in the point $r = t$ at $c = 1.2$, we see
\begin{align*}
0 \stackrel{?} \ge &\    t^2(-2\bar{u}\bar{t}^2)
    +t(\bar{u}(1+\bar{t}(1+3\bar{t}))+2(1-c)-c\bar{t}^2)
    -c\bar{t}^2\bar{u}
    \\
    &=
    -1.2+1.2u + t(5.8-6.2u)+9t^{2}(-1+u)-7t^{3}(-1+u)+2t^{4}(-1+u)
\end{align*}
By inspection, this is at most $0$ for all $0.5 \le t \le 0.7$ both when $u = t$ and $u = 0.58$ (the boundary points of $u$).

Now we check that it is increasing  at $r = t$. The derivative is 
\begin{align*}
    2r(-2\bar{u}\bar{t}\bar{s})
    +(\bar{u}(1+\bar{t}(1+3\bar{s}))+2(1-c)-c\bar{t}\bar{u})\,,
\end{align*}
which at $r = t$ and $c = 1.2$ evaluates to 
\begin{align*}
    3.4 - 11 t^2 (-1 + u) + 4 t^3 (-1 + u) - 3.8 u + t (-9.8 + 9.8 u)\,,
\end{align*}
which by inspection is positive for all $0.5 \le t \le 0.6$ both when $u = t$ and $u = 0.58$ (the boundary points of $u$).
\item Now let's consider the boundary when $s = r$. Then \cref{eqn:5coin_case1_byr} is cubic in $r$, i.e.
\begin{align}
\label{eqn:5coin_case2_byr}
    r^3(2\bar{u}\bar{t})
    +
    r^2(-5\bar{u}\bar{t})
    +
    r(\bar{u}(1+4\bar{t})+2(1-c))
    -c\bar{t}\bar{u}
    \stackrel{?}\leq 0\,.
\end{align}
We first show that \cref{eqn:5coin_case2_byr} is increasing for all $0.5 \le r \le 0.6$. The derivative has value
\begin{align*}
    3r^2(2\bar{u}\bar{t})
    +
    2r(-5\bar{u}\bar{t})
    +
    (\bar{u}(1+4\bar{t})+2(1-c)) \stackrel{?}> 0\,.
\end{align*}
We will show \emph{this} expression (the derivative of \cref{eqn:5coin_case2_byr}) has positive value and is decreasing at $r = 0.6$. Since the quadratic coefficient is positive, this implies \emph{this} expression is positive for $0.5 \le r \le 0.6$. 
\begin{itemize}
    \item Plugging in $r = 0.6$ and $c = 1.2$, this has value $0.76 + t (-0.16 + 0.16 u) - 1.16 u$, which by inspection is positive for $0.5 \le t \le 0.6$ both when $u = t$ and $u = 0.58$ (boundary points of $u$).
    \item The derivative of the expression (e.g. second derivative of \cref{eqn:5coin_case2_byr}) at $r = 0.6$ and $c = 1.2$ is exactly $6\cdot0.6\cdot(2(1-u)(1-t))+2(-5(1-u)(1-t))$, which by inspection is negative for $0.5 \le t \le 0.6$ both when $u = t$ and $u = 0.58$ (the boundary points of $u$).
\end{itemize}
Since \cref{eqn:5coin_case2_byr} is increasing for all $0.5 \le r \le 0.6$, we can inspect the upper-most value of $r$ (which equals $t$). Again, this evaluates to
\begin{align*}
    -1.2 + 1.2 u + t (5.8 - 6.2 u) + 9 t^2 (-1 + u) - 7 t^3 (-1 + u) + 2 t^4 (-1 + u)\,,
\end{align*}
which by inspection is at most $0$ for all $0.5 \le t \le 0.7$ both when $u = t$ and $u = 0.58$ (the boundary points of $u$). So \cref{eqn:5coin_case2_byr} is always satisfied.
\end{itemize}
Now let's verify \cref{eqn:5coin_main} is increasing at $\bar{v} = 1$. Here, the derivative takes value
\begin{align}
\label{eqn:5coin_secondpart}
 0 \stackrel{?} < &\  
2r\bar{u}(1+\bar{t}(1+\bar{s}(1+2\bar{r})))
-
ru(1+t(1+s(1+2r)))
+
2(1-c)r
-c\bar{t}\bar{u}(r+\bar{s})
+crt
\nonumber \\
&= 
r^2
(-2uts-4\bar{u}\bar{t}\bar{s})
+
r(
2\bar{u}(1+\bar{t}(1+3\bar{s}))
-u(1+t(1+3s))
+2(1-c)
+ct
-c\bar{t}\bar{u}
)
-c\bar{t}\bar{u}\bar{s}
\,.
\end{align}
\begin{itemize}
    \item We first consider the boundary when $s = t$. Then  \cref{eqn:5coin_secondpart} is quadratic in $r$, with negative quadratic coefficient. Since we want to test whether it is positive in the region $0.5 \le r \le 0.6$, we can evaluate this at both $r = 0.5$ and $r = 0.6$; if they both are positive, then it is positive in the region. Evaluating $s = t$ and $c = 1.2$, we get
    \begin{align*}
        0 \stackrel{?}<&\ 2 + t^2 (0.8 - 1.8 u) - 2.7 u + t (-1.4 + 1.5 u) \\
        0 \stackrel{?}<&\ 2.4 + t^2 (0.96 - 2.28 u) - 3.24 u + t (-1.68 + 1.8 u)
    \end{align*}
    By inspection, both are positive for $0.5 \le t \le 0.585$, when $u = t$ and $u = 0.58$ (boundary points of $u$).
    \item Now let's consider the boundary when $s = r$. Then \cref{eqn:5coin_secondpart} is cubic in $r$, i.e.
    \begin{align*}
        r^3(-2ut+4\bar{u}\bar{t})
        +
        r^2(
        -10\bar{u}\bar{t}
        -ut
        )
        +
        r(
        2\bar{u}(1+4\bar{t})
        -
        u(1+t)
        +2(1-c)
        +ct
)
        -c\bar{t}\bar{u}
        \stackrel{?}>0\,.
    \end{align*}
    We verify by inspection that the cubic term is positive for all $0.5 \le t,u \le 0.58$. 
    So we can evaluate the cubic at $r = 0.5$, $r = 0.6$, and $r = 1$. If it is negative at $r = 1$ but positive at $r \in \{0.5,0.6\}$, then it must be positive for all $0.5 \le r \le 0.6$.

    Evaluating at $r \in \{0.5,0.6,1\}$ and $c = 1.2$, we get
    \begin{align*}
    0 \stackrel{?}<&\ 1.6 + t (-0.2 - 0.2 u) - 2.3 u
    \\
    0 \stackrel{?}<&\ 1.824 + t (-0.144 - 0.528 u) - 2.664 u
    \\
       0 \stackrel{?}>&\  2.4 + t (0.4 - 3.2 u) - 3.8 u \,.
    \end{align*}
By inspection, the first two expressions are positive for $0.5 \le t \le 0.585$ both when $u = t$ and $u = 0.58$ (the boundary points of $u$). Similarly, the last  expression is negative for $0.5 \le t$ at both boundary points of $u$.   
\end{itemize}
Altogether, the quadratic coefficient of \cref{eqn:5coin_main} is negative, and the root of the quadratic is at $\bar{v} \ge 1$ (since \cref{eqn:5coin_main} is at most $0$ and increasing at $\bar{v} = 1$). So \cref{eqn:5coin_main} is at most $0$ for the whole region $0 \le \bar{v} \le 1$, assuming $0.5 \le r \le s \le t \le u \le 0.58$. This proves the claim.
\end{proof}

\end{document}